\newcommand{\rotsupset}{\rotatebox[origin=c]{90}{$\supset$}}
\newcommand{\rotsimeq}{\rotatebox[origin=c]{90}{$\simeq$}}
\newcommand{\tr}{\textnormal{tr}}
\newcommand{\braket}[2]{\langle #1 , #2 \rangle}
\newcommand{\id}{\ensuremath{\mathds{1}}}
\newcommand{\cM}{\mathcal{M}}
\newcommand{\gC}{\mathsf{C}}
\newcommand{\gL}{\mathsf{L}}
\newcommand{\gK}{\mathsf{K}}
\def\beq{\begin{equation}}
\def\eeq{\end{equation}}
\def\bq{\begin{quote}}
\def\eq{\end{quote}}
\def\ben{\begin{enumerate}}
\def\een{\end{enumerate}}
\def\bit{\begin{itemize}}
\def\eit{\end{itemize}}
\def\ra{\rightarrow}
\def\lb{\left(}
\def\rb{\right)}
\def\lset{\lbrace}
\def\rset{\rbrace}
\def\r|{\right|}
\def\lbr{\left[}
\def\rbr{\right]}
\def\ident{\textnormal{id}}
\def\one{\id}
\newcommand\A{\mathbf{A}}
\newcommand\C{\mathbf{C}}
\newcommand\R{\mathbf{R}}
\newcommand\qua{\mathbf{H}}
\newcommand\oc{\mathbf{O}}
\newcommand\N{\mathbf{N}}
\newcommand\M{\mathcal{M}}
\newcommand{\ketbra}[1]{|#1\rangle\langle#1|}
\DeclareMathOperator{\Tr}{Tr}
\DeclareMathOperator{\Aut}{Aut}
\DeclareMathOperator{\inter}{int}
\DeclareMathOperator{\bounda}{bd}
\DeclareMathOperator{\inv}{inv}
\DeclareMathOperator{\PSD}{\mathsf{PSD}}
\DeclareMathOperator{\Iso}{Iso}
\newcommand{\tmin}{\otimes_{\min}}
\newcommand{\tmax}{\otimes_{\max}}
\newcommand{\st}{~:~}
\DeclareMathOperator{\conv}{\mathrm{conv}}
\renewcommand{\geq}{\geqslant}
\renewcommand{\leq}{\leqslant}
\newcommand{\e}{\varepsilon}
\theoremstyle{plain}
\newtheorem{thm}{Theorem}[section]
\newtheorem{lem}[thm]{Lemma}
\newtheorem{cor}[thm]{Corollary}
\newtheorem{prop}[thm]{Proposition}
\newtheorem{defn}[thm]{Definition}
\newtheorem{que}[thm]{Question}
\theoremstyle{definition}
\newtheorem{example}{Example}
\begin{document}

\title{{Annihilating Entanglement Between Cones}}

\author{Guillaume Aubrun}
\address{\small{Institut Camille Jordan, Universit\'{e} Claude Bernard Lyon 1, 43 boulevard du 11 novembre 1918, 69622 Villeurbanne cedex, France}}
\email{aubrun@math.univ-lyon1.fr}
\author{Alexander M\"uller-Hermes}
\address{\small{Institut Camille Jordan, Universit\'{e} Claude Bernard Lyon 1, 43 boulevard du 11 novembre 1918, 69622 Villeurbanne cedex, France}}
\address{\small{Department of Mathematics, University of Oslo, P.O. box 1053, Blindern, 0316 Oslo, Norway}}
\email{muellerh@math.uio.no, muellerh@posteo.net}

\maketitle
\date{\today}

\begin{abstract}
\vspace{-1cm} Every multipartite entangled quantum state becomes fully separable after an entanglement breaking quantum channel acted locally on each of its subsystems. Whether there are other quantum channels with this property has been an open problem with important implications for entanglement theory (e.g., for the distillation problem and the PPT squared conjecture). 

We cast this problem in the general setting of proper convex cones in finite-dimensional vector spaces. The entanglement annihilating maps transform the $k$-fold maximal tensor product of a cone $\gC_1$ into the $k$-fold minimal tensor product of a cone $\gC_2$, and the pair $(\gC_1,\gC_2)$ is called resilient if all entanglement annihilating maps are entanglement breaking. Our main result is that $(\gC_1,\gC_2)$ is resilient if either $\gC_1$ or $\gC_2$ is a Lorentz cone. Our proof exploits the symmetries of the Lorentz cones and applies two constructions resembling protocols for entanglement distillation: As a warm-up, we use the multiplication tensors of real composition algebras to construct a finite family of generalized distillation protocols for Lorentz cones, containing the distillation protocol for entangled qubit states by Bennett et al.~\cite{bennett1996purification} as a special case. Then, we construct an infinite family of protocols using solutions to the Hurwitz matrix equations. After proving these results, we focus on maps between cones of positive semidefinite matrices, where we derive necessary conditions for entanglement annihilation similar to the reduction criterion in entanglement distillation. Finally, we apply results from the theory of Banach space tensor norms to show that the Lorentz cones are the only cones with a symmetric base for which a certain stronger version of the resilience property is satisfied.       
\end{abstract}

\tableofcontents

\vspace{-1cm}
\section{Introduction}

Let $\M_d$ denote the set of complex $d\times d$ matrices and let $\PSD\lb \C^d\rb\subset \M_d$ denote the cone of positive semidefinite matrices with complex entries. There are two natural tensor products in the category of cones that can be specialized to the cone $\PSD(\C^d)$: The $k$-fold \emph{minimal tensor product} is given by 
\[
\PSD(\C^{d})^{\otimes_{\min} k} = \conv\lset x_1\otimes \cdots\otimes x_k~:~x_1,\ldots, x_k\in\PSD(\C^{d})\rset ,
\] 
and it is usually referred to as the set of (unnormalized) fully separable states. The $k$-fold \emph{maximal tensor product} is given by 
\[
\PSD(\C^{d})^{\otimes_{\max} k} = \lb \PSD(\C^{d})^{\otimes_{\min} k}\rb^*,
\]
with duality with respect to the Hilbert-Schmidt inner product $\braket{x}{y} = \Tr\lbr xy\rbr$ on the space of self-adjoint matrices. The maximal tensor product contains multipartite entanglement witnesses and is usually called the set of block-positive tensors. The following classes of linear maps will be central for our work: We call a linear map $P:\M_{d_A}\ra \M_{d_B}$ 
\begin{itemize}
\item \emph{entanglement breaking} if
\[
(\ident_n\otimes P)\lb \PSD\lb \C^{n}\otimes \C^{d_A}\rb\rb\subseteq \PSD(\C^{n})\otimes_{\min}\PSD(\C^{d_B}),
\]
for any $n\in\N$.
\item $k$-\emph{entanglement annihilating} if
\[
P^{\otimes k}\lb \PSD(\C^{d})^{\otimes_{\max} k}\rb\subseteq \PSD\lb\C^{d}\rb^{\otimes_{\min} k}.
\]
\item \emph{entanglement annihilating} if it is $k$-entanglement annihilating for all $k\in\N$.
\end{itemize} 
Every entanglement breaking map $P:\M_{d_A}\ra \M_{d_B}$ admits a decomposition $P(\cdot)=\sum^N_{i=1} y_i\Tr\lbr x_i \cdot\rbr$ with $x_i\in \PSD\lb \C^{d_A}\rb$ and $y_i\in \PSD\lb \C^{d_B}\rb$ (see~\cite{horodecki2003entanglement}). From this, it is easy to see that any entanglement breaking map is entanglement annihilating. However, it is unknown whether the converse holds as well:

\begin{que}\label{que:Main}
Are entanglement annihilating maps always entanglement breaking?
\end{que}
In this article, we study Question \ref{que:Main} in the general setting of convex cones. We identify an infinite family of cones where its answer is `Yes', and discuss implications for potential entanglement annihilating maps on $\PSD\lb \C^d\rb$. Finally, we identify a candidate where the analogue of Question \ref{que:Main} might have a negative answer, and we discuss implications from the theory of Banach space tensor norms.

\subsection{Motivation and history} 

For any $d\in \N$, let $\vartheta_d:\M_d\ra \M_d$ denote the transpose map in the computational basis. The distillation problem~\cite{horodecki1998mixed,divincenzo2000evidence} asks whether every quantum state $\rho\in \PSD(\C^{d_A}\otimes \C^{d_B})$ with non-positive partial transpose (NPPT), i.e., such that $(\ident_{d_A}\otimes \vartheta_{d_B})\lb \rho\rb\notin \PSD(\C^{d_A}\otimes \C^{d_B})$, can be transformed into the maximally entangled state
\[
\omega_2 = \frac{1}{2}\begin{pmatrix} 1 & 0 & 0 & 1 \\ 0 & 0 & 0 & 0 \\ 0 & 0 & 0 & 0 \\ 1 & 0 & 0 & 1\end{pmatrix}\in \PSD\lb \C^2\otimes \C^2\rb,
\]
by taking tensor powers $\rho\mapsto \rho^{\otimes k}$ and applying local quantum operations and classical communication (see~\cite{chitambar2014everything} for the precise definition of this class of linear maps). Entangled quantum states for which such a transformation is not possible are called \emph{bound entangled}. Recall that a linear map $P:\M_{d_A}\ra \M_{d_B}$ is called $2$-positive if $\ident_2\otimes P$ is positive, and it is called completely positive if $\ident_n\otimes P$ positive for every $n\in\N$. The distillation problem has been shown~\footnote{This was essentially shown in~\cite{divincenzo2000evidence} with a minor modification using the twirling techniques from~\cite{muller2016positivity,muller2018decomposability}.} to be equivalent to the following elementary question: Are all linear maps $P:\M_{d_A}\ra \M_{d_B}$ for which $P^{\otimes k}$ is $2$-positive for every $k\in\N$ necessarily completely positive? Recently, the second author established a one-way implication of a similar form: 

\begin{thm}[Theorem 4 in~\cite{muller2016positivity}]\label{thm:tsPosImpliesNPPTBE} 
The existence of bound entangled quantum states with non-positive partial transpose would follow from the existence of a positive map $P:\M_{d_A}\ra \M_{d_B}$ satisfying the following two conditions:
\begin{enumerate}
\item Neither $P$ nor $\vartheta_{d_B}\circ P$ are completely positive.
\item For every $k\in\N$ the map $P^{\otimes k}$ is positive.
\end{enumerate}
\end{thm}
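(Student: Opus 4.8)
The plan is to move to the Choi--Jamio\l{}kowski picture, symmetrize $P$ into a Werner/reduction-type map, and then read off NPPT-ness from condition (1) and undistillability from condition (2) via the DiVincenzo et al.\ distillability analysis. First I would fix the dictionary between maps and bipartite operators: writing $C_P=(\ident_{d_A}\otimes P)\lb\omega\rb$ for the Choi matrix of $P$, with $\omega$ the unnormalized maximally entangled operator, positivity of $P$ is block-positivity of $C_P$, complete positivity of $P$ is $C_P\geq 0$, and complete positivity of $\vartheta_{d_B}\circ P$ is $(\ident_{d_A}\otimes\vartheta_{d_B})(C_P)\geq 0$. On the state side I would recall the Schmidt-rank-two characterization of distillability: a state $\rho$ is distillable if and only if there are a $k$ and a vector $|\psi\rangle$ of Schmidt rank at most two across the $A^k\,|\,B^k$ cut with $\langle\psi|(\ident\otimes\vartheta)(\rho^{\otimes k})|\psi\rangle<0$. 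The goal is then to produce a genuine state that is NPPT yet passes this test nonnegatively for every $k$.

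Second, I would symmetrize. Replacing $P$ by its twirl $\widetilde{P}(\cdot)=\int_G U_g^\dagger\,P\lb U_g\,\cdot\,U_g^\dagger\rb U_g\,\d g$ over the unitary group (after reducing to the case $d_A=d_B=d$), I claim all three hypotheses survive. The delicate one is condition (2), and the key observation that makes the tensor-power hypothesis robust is that $\widetilde{P}^{\otimes k}$ is an average over $G^k$ of maps of the form $V^\dagger\,P^{\otimes k}\lb V\,\cdot\,V^\dagger\rb V$ with $V=U_{g_1}\otimes\cdots\otimes U_{g_k}$; each such map is $P^{\otimes k}$ conjugated by local unitaries and hence positive, and averaging preserves positivity, so $\widetilde{P}^{\otimes k}$ is positive for every $k$. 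The covariant map $\widetilde{P}$ is of reduction type, so up to partial transpose and normalization its Choi matrix is a Werner state $\rho$, and the two parts of condition (1) are what place $\rho$ simultaneously in the entangled and in the NPPT region of the Werner family.

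Third, I would invoke the DiVincenzo et al.\ reduction on the Werner family: thanks to the $U\otimes U$ symmetry, the Schmidt-rank-two distillability test on $\rho^{\otimes k}$ collapses to an ordinary, Schmidt-rank-one block-positivity test, which is exactly positivity of the tensor power $\widetilde{P}^{\otimes k}$. The second step supplies this positivity for all $k$, so $\rho$ is undistillable; combined with its NPPT-ness, $\rho$ is an NPPT bound entangled state, as desired.

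The main obstacle is the Schmidt-rank-two-to-one collapse of the third step: in general positivity of all tensor powers is strictly weaker than the $2$-positivity that undistillability demands, and it is only the $U\otimes U$ symmetry of the Werner family that closes this gap — this is the crux of the DiVincenzo et al.\ analysis and the reason symmetrization is forced on us. The accompanying bookkeeping — checking that the twirl keeps both $\widetilde{P}$ and $\vartheta_{d}\circ\widetilde{P}$ away from complete positivity, so that $\rho$ is genuinely entangled and genuinely NPPT and so that both hypotheses in (1) are actually used — is the ``minor modification'' alluded to in the footnote; by contrast, the tensor-power positivity inheritance is the clean averaging step above, and the remaining Choi-isomorphism identifications are routine.
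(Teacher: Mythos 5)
You should know at the outset that the paper never proves this statement---it imports it verbatim as Theorem 4 of~\cite{muller2016positivity}---so your proposal has to stand on its own, and it does not: both load-bearing steps fail, even though the one piece you call ``the key observation'' (that tensor-stable positivity survives averaging over local unitary conjugations, in the spirit of Lemma~\ref{lem:EntAnnUnderTrans}) is indeed correct. The first failure is that the plain twirl destroys hypothesis (1), provably and always. The twirled map $\widetilde{P}(\cdot)=\int U^{\dagger}P\lb U\cdot U^{\dagger}\rb U\,\mathrm{d}U$ commutes with every unitary conjugation, hence has the form $\widetilde{P}=\alpha\,\ident+\beta\,\Tr[\cdot]\,\one/d$. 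Positivity of such a map forces $\beta\geq 0$ and $\alpha+\beta/d\geq 0$. If $\alpha\geq 0$, its Choi matrix $\alpha\sum_{i,j}\proj{ii}{jj}+(\beta/d)\,\one\otimes\one$ is positive semidefinite, so $\widetilde{P}$ is completely positive; if $\alpha<0$, then $|\alpha|\leq\beta/d$ and the Choi matrix of $\vartheta_d\circ\widetilde{P}$, namely $\alpha F+(\beta/d)\,\one\otimes\one$ with $F$ the swap, is positive semidefinite, so $\vartheta_d\circ\widetilde{P}$ is completely positive. Thus \emph{every} positive map of this reduction/Werner type is CP or co-CP; there is no non-trivial map of this form for condition (1) to survive into. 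Indeed, for your $\rho$ (the normalized partial transpose of the Choi matrix of $\widetilde{P}$) to be a state at all, you need $\vartheta_d\circ\widetilde{P}$ to be completely positive, which is exactly half of what (1) forbids. This is not the ``minor bookkeeping'' you defer to the end: it is the reason~\cite{muller2016positivity} and the present paper (Lemma~\ref{lem:dualityEB}, Theorem~\ref{thm:TwirlingReduction}) only twirl \emph{after} composing with a filter chosen by trace duality, and only to preserve a twirl-invariant trace inequality---never the property ``neither CP nor co-CP'', which no twirl can preserve.

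The second, independent failure is the claimed collapse of the Schmidt-rank-two test to a Schmidt-rank-one test; it is false, and in fact the rank-one test is vacuous. For \emph{any} state $\rho$ and any $k$ one has block-positivity of $\lb\rho^{\otimes k}\rb^{\Gamma}$ automatically, since $\bracket{a\otimes b}{\lb\rho^{\otimes k}\rb^{\Gamma}}{a\otimes b}=\bracket{a\otimes \bar{b}}{\rho^{\otimes k}}{a\otimes\bar{b}}\geq 0$ for all product vectors. Since positivity of $\widetilde{P}^{\otimes k}$ is, via the Choi correspondence, exactly block-positivity of $\lb\rho^{\otimes k}\rb^{\Gamma}$ up to normalization, your undistillability certificate is satisfied by every state and certifies nothing. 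If the collapse were valid, it would prove that every NPPT Werner state is bound entangled, contradicting the distillability of entangled two-qubit Werner states by the very protocol of~\cite{bennett1996purification} cited in this paper. Nor can the $U\otimes U$ symmetry help: local unitaries preserve Schmidt rank, so no symmetry argument converts rank-two witnesses into rank-one ones. The entire content of the theorem is the passage from rank-one data (positivity of all tensor powers of $P$) to rank-two data (2-positivity of all tensor powers of an associated map, equivalently undistillability of an associated state); you correctly identify this gap as the crux, but then assert that symmetry closes it, which it provably does not. Any correct proof---in particular the one in~\cite{muller2016positivity}---must confront this gap with a genuinely different construction rather than a twirl-and-read-off argument.
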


Linear maps $P$ for which $P^{\otimes k}$ is positive for every $k\in\N$ are called \emph{tensor-stable positive}. In~\cite[Theorem 5]{muller2016positivity} it has been shown that an entanglement annihilating map $T:\M_{d_A}\ra \M_{d_B}$ which is not entanglement breaking could be used to construct a tensor-stable positive map $P:\M_{d^2_A}\ra \M_{d^2_B}$ such that neither $P$ iself nor $\vartheta_{d^2_B}\circ P$ are completely positive. By Theorem \ref{thm:tsPosImpliesNPPTBE} we have the following: 
 
\begin{thm}[Entanglement annihilation implies NPPT bound entanglement]\label{thm:ResImplNPPTBE}
The existence of bound entangled quantum states with non-positive partial transpose would follow from the existence of an entanglement annihilating map that is not entanglement breaking.
\end{thm}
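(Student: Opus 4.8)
The plan is to reduce the statement to Theorem~\ref{thm:tsPosImpliesNPPTBE}: starting from a hypothetical entanglement annihilating map $T:\M_{d_A}\ra\M_{d_B}$ that is not entanglement breaking, I would manufacture a positive map $P$ meeting the two hypotheses of that theorem, and then invoke it. The first hypothesis comes essentially for free. Any entanglement annihilating map is automatically tensor-stable positive, because the full positive cone is sandwiched between the two tensor products: on the input side $\PSD(\C^{d_A^k})\subseteq\PSD(\C^{d_A})^{\otimes_{\max} k}$ (every positive tensor is block positive), and on the output side $\PSD(\C^{d_B})^{\otimes_{\min} k}\subseteq\PSD(\C^{d_B^k})$ (every separable tensor is positive). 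Hence for each $k$ one gets $T^{\otimes k}(\PSD(\C^{d_A^k}))\subseteq T^{\otimes k}(\PSD(\C^{d_A})^{\otimes_{\max} k})\subseteq\PSD(\C^{d_B})^{\otimes_{\min} k}\subseteq\PSD(\C^{d_B^k})$, so $T^{\otimes k}$ is positive. The same computation, post-composed with the transpose, shows that $\vartheta_{d_B}\circ T$ is tensor-stable positive as well, since a composition of positive maps is positive. Thus condition~(2) of Theorem~\ref{thm:tsPosImpliesNPPTBE} is never the difficulty.

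The real work lies in condition~(1): producing a map that fails to be completely positive and fails to be completely copositive at the same time, all while remaining tensor-stable positive. This is exactly the content of \cite[Theorem 5]{muller2016positivity}, which upgrades $T$ to a map $P:\M_{d_A^2}\ra\M_{d_B^2}$ on the doubled systems enjoying all three properties. The idea I would pursue is to exploit the failure of the entanglement breaking property concretely: since $T$ is not entanglement breaking, there is some $n$ and some $\rho\in\PSD(\C^n\otimes\C^{d_A})$ whose image $(\ident_n\otimes T)(\rho)$ is entangled. One then designs $P$ on the doubled space so that its Choi matrix inherits non-positivity on one tensor factor and non-positivity of its partial transpose on the other, both certified by this entangled output, while its tensor-stable positivity descends from that of $T$ and of $\vartheta_{d_B}\circ T$. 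The passage to $\M_{d_A^2}\ra\M_{d_B^2}$ is what creates enough room to enforce both defects simultaneously; a naive choice such as $T\otimes\vartheta$ would not work, since one checks that such a tensor is positive only when $T$ is already entanglement breaking, which is precisely excluded.

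With such a $P$ in hand the conclusion is immediate: $P$ is positive, $P^{\otimes k}$ is positive for every $k\in\N$, and neither $P$ nor $\vartheta_{d_B^2}\circ P$ is completely positive, so Theorem~\ref{thm:tsPosImpliesNPPTBE} produces a bound entangled state with non-positive partial transpose. I expect the single genuine obstacle to be the middle construction, namely arranging the two simultaneous failures of complete positivity without destroying positivity under all tensor powers, whereas the sandwiching observation and the final invocation are routine.
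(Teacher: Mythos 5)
Your proposal is correct and takes essentially the same route as the paper: the paper also proves this theorem by citing \cite[Theorem 5]{muller2016positivity} to turn the hypothetical entanglement annihilating, non-entanglement-breaking map $T:\M_{d_A}\to\M_{d_B}$ into a tensor-stable positive map $P:\M_{d_A^2}\to\M_{d_B^2}$ for which neither $P$ nor $\vartheta_{d_B^2}\circ P$ is completely positive, and then applies Theorem~\ref{thm:tsPosImpliesNPPTBE}. Your preliminary sandwiching observation (entanglement annihilating $\Rightarrow$ tensor-stable positive) is correct but, as you yourself note, the substantive step is exactly the cited construction, so the two arguments coincide.
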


It should be noted that the existence of entanglement annihilating maps as in Theorem \ref{thm:ResImplNPPTBE} would have other important consequences in entanglement theory. For example it would also provide a counterexample to the so-called PPT squared conjecture~\cite{christandl2012PPT,christandl2019composed} by combining \cite[Theorem 5]{muller2016positivity} with \cite[Theorem 6.1]{muller2018decomposability}. From a more practical viewpoint it is important to study entanglement annihilation in order to gain understanding in the limitations of quantum technologies. While entanglement annihilating physical processes might be much less noisy than entanglement breaking processes, they will still destroy any entanglement in a quantum-many-body system when affecting each local site. 

Finally, we should emphasize that in this article we use a slightly different terminology than previous literature on entanglement annihilation, e.g.,~\cite{moravvcikova2010entanglement,filippov2012local,filippov2013bipartite,filippov2013dissociation}. Previously, a quantum channel $T:\M_{d_A}\ra \M_{d_B}$ was called $k$-(locally) entanglement annihilating if
\[
T^{\otimes k}\lb \PSD\lb (\C^{d_A})^{\otimes k}\rb\rb \subseteq \PSD(\C^{d_B})^{\otimes_{\min} k} .
\]
Question \ref{que:Main} was asked in this context in~\cite{moravvcikova2010entanglement}. We have changed the terminology to obtain a more general class of maps that can be defined for any pair of cones (see next section). Our definition is more restrictive since even non-physical forms of entanglement (cf.,~\cite{ALPP21}) are required to be annihilated, and it is the strongest form of entanglement annihilation in any generalized probabilistic theory (see~\cite{lami2018non} for an introduction to GPTs). Still, the answer of Question \ref{que:Main} is elusive even for this stronger notion. 

It should be noted that for every finite $k\in\N$ there are $k$-entanglement annihilating maps that are not entanglement breaking:

\begin{prop}\label{prop:kEA}
For every $k\in \N$ and any $d\geq 2$, there exists a linear map $P:\M_d\ra \M_d$ such that:
\begin{enumerate}
\item $P$ is not entanglement breaking.
\item $P$ is $k$-entanglement annihilating. 
\end{enumerate}
\end{prop}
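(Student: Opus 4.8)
The plan is to take the one-parameter family of depolarizing channels $\Delta_\lambda\colon\M_d\to\M_d$ given by $\Delta_\lambda(\rho)=\lambda\rho+(1-\lambda)\frac{\Tr[\rho]}{d}\id$, and to exhibit a value of $\lambda$ that is large enough to destroy the entanglement breaking property yet small enough to keep the map $k$-entanglement annihilating. For the first property I would invoke the standard computation of the Choi matrix of $\Delta_\lambda$, which shows that $\Delta_\lambda$ is entanglement breaking precisely for $\lambda\le\frac1{d+1}$, while for $\lambda\in(\frac1{d+1},1]$ it is completely positive but not entanglement breaking. Thus condition~(1) comes for free as soon as $\lambda>\frac1{d+1}$, and the whole task is to show that $\Delta_\lambda$ remains $k$-entanglement annihilating for some $\lambda$ strictly above this threshold. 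The strategy is to prove that at the threshold $\lambda^\ast=\frac1{d+1}$ the map $\Delta_{\lambda^\ast}^{\otimes k}$ already sends the maximal tensor product into the \emph{interior} of the minimal tensor product, and then to push $\lambda$ slightly past $\lambda^\ast$ by a compactness argument.

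The key is the measure-and-prepare representation of the threshold channel, $\Delta_{\lambda^\ast}(\rho)=d\int \bracket{\psi}{\rho}{\psi}\,\ketbra{\psi}\,\d{\psi}$ (Haar measure on pure states), whose $k$-th tensor power reads $\Delta_{\lambda^\ast}^{\otimes k}(X)=d^k\int \bracket{\psi_1\cdots\psi_k}{X}{\psi_1\cdots\psi_k}\,\ketbra{\psi_1\cdots\psi_k}\,\d{\vec{\psi}}$. For $X\in\PSD(\C^d)^{\otimes_{\max}k}$ the weight $g(\vec{\psi})=\bracket{\psi_1\cdots\psi_k}{X}{\psi_1\cdots\psi_k}$ is nonnegative by block positivity, so the output is manifestly a nonnegative mixture of product projectors, i.e.\ it lies in $\PSD(\C^d)^{\otimes_{\min}k}$; this recovers that entanglement breaking maps are entanglement annihilating. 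To upgrade this to membership in the interior I would use the dual description $\inter\bigl(\PSD(\C^d)^{\otimes_{\min}k}\bigr)=\{Y:\langle X',Y\rangle>0 \text{ for all } X'\in\PSD(\C^d)^{\otimes_{\max}k}\setminus\{0\}\}$ and compute $\langle X',\Delta_{\lambda^\ast}^{\otimes k}(X)\rangle=d^k\int g(\vec{\psi})\,g'(\vec{\psi})\,\d{\vec{\psi}}$, with $g'(\vec{\psi})=\bracket{\psi_1\cdots\psi_k}{X'}{\psi_1\cdots\psi_k}\ge0$. Here I would use that $g$ and $g'$ are restrictions of real-analytic functions on the connected product of spheres and are not identically zero whenever $X,X'\ne0$ (the product projectors span all Hermitian operators); hence their zero sets have measure zero and the integral is strictly positive. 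This shows $\Delta_{\lambda^\ast}^{\otimes k}(X)\in\inter\bigl(\PSD(\C^d)^{\otimes_{\min}k}\bigr)$ for \emph{every} nonzero $X$ in the maximal tensor product, including boundary witnesses.

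With the interior statement in hand, the threshold crossing is soft. The base $\{X\in\PSD(\C^d)^{\otimes_{\max}k}:\Tr X=1\}$ is compact, the distance of $\Delta_{\lambda^\ast}^{\otimes k}(X)$ to the boundary of the minimal tensor product is a positive continuous function on it, and therefore admits a uniform lower bound $\delta>0$. Since $\Delta_\lambda^{\otimes k}(X)$ is a degree-$k$ polynomial in $\lambda$ with coefficients bounded on the base, it is Lipschitz in $\lambda$ with some constant $L$ uniformly over the base; consequently, for $\lambda^\ast<\lambda<\lambda^\ast+\delta/L$ the image of the whole base still lies in $\PSD(\C^d)^{\otimes_{\min}k}$, so $\Delta_\lambda$ is $k$-entanglement annihilating while $\lambda>\frac1{d+1}$ forces it to be non-entanglement-breaking.

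The hard part is the interior claim of the second paragraph, and in particular the strict positivity of $\int g\,g'$: this is exactly where the geometry enters, through the fact that an optimal entanglement witness can vanish only on a measure-zero set of product states, so that no single block-positive direction is annihilated by the threshold channel. I would regard everything else as routine. It is worth noting that the margin $\delta$ necessarily depends on $k$ and degenerates as $k\to\infty$; this is consistent with Question~\ref{que:Main} remaining open, since the argument produces a strictly positive gap above $\frac1{d+1}$ for each fixed $k$ without controlling the intersection over all $k$.
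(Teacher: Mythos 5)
Your proposal is correct, but it establishes the result by a genuinely different route than the paper. Both arguments share the same skeleton---start from an entanglement breaking map whose $k$-th tensor power lands in the \emph{interior} of the relevant cone, then perturb off the entanglement breaking set using compactness---but the key technical ingredient differs. The paper (Appendix \ref{app:ProofOfPropkEA}) works abstractly: it introduces compatible sequences of mapping cones, defines the dual functional $\mu_k$ via trace duality, and proves the inequality $\mu_k(P^{\otimes k}) \geq \mu_1(P)^k$ (Lemma \ref{lem:Intermediate}) for \emph{every} entanglement breaking map $P$, which yields Theorem \ref{thm:FullyGenkEA} and then Theorem \ref{thm:ConcretekEA1} for arbitrary pairs of non-classical proper cones; Proposition \ref{prop:kEA} is the special case $\gC_1=\gC_2=\PSD(\C^d)$. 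You instead work with a concrete one-parameter family, the depolarizing channels, and prove the interior statement only for the threshold map $\Delta_{1/(d+1)}$, exploiting its Haar-measure (measure-and-prepare) representation together with a real-analyticity argument showing that the pairing $\langle X',\Delta_{\lambda^*}^{\otimes k}(X)\rangle = d^k\int g\,g'$ is strictly positive for nonzero block-positive $X,X'$; your threshold value, spanning argument, and Lipschitz-plus-compactness perturbation are all sound. What your approach buys is explicitness: a named family with an explicit threshold $\lambda^*=\tfrac{1}{d+1}$, and the bonus that the resulting examples are completely positive (genuine quantum channels), not merely positive maps. What it gives up is generality: the Haar-covariance and analyticity arguments are specific to $\PSD(\C^d)$, so your method cannot recover the paper's Theorem \ref{thm:ConcretekEA1} for arbitrary non-classical cones, nor the further consequences the paper draws from Theorem \ref{thm:FullyGenkEA} for other mapping cones ($n$-positive maps, decomposable maps), which are used elsewhere in the paper.
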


In Appendix~\ref{app:ProofOfPropkEA}, we will prove a general statement from which Proposition \ref{prop:kEA} will follow. Proposition \ref{prop:kEA} even shows the existence of $k$-entanglement breaking maps $P:\M_2\ra \M_2$ that are not entanglement breaking, but we know that such maps cannot be $k$-entanglement annihilating for every $k\in\N$. Therefore, Proposition \ref{prop:kEA} should \emph{not} be seen as evidence for a negative answer of Question \ref{que:Main}. 

In the special case of $k=2$, we also want to point out the following proposition (for a proof see Appendix~\ref{app:ProofOfProp}) that can be seen as an improved version of the PPT squared conjecture in dimension $d=3$ proved in~\cite{christandl2019composed,chen2019positive}:
\begin{prop}\label{prop:StrongPPT2}
Consider linear maps $T,S:\M_3\ra \M_3$. If the maps $T,S,\vartheta_3\circ T$ and $\vartheta_3\circ S$ are completely positive, then 
\[
(T\otimes S)\lb \PSD(\C^3)\otimes_{\max} \PSD(\C^3)\rb \subseteq \PSD(\C^3)\otimes_{\min} \PSD(\C^3) .
\]
\end{prop}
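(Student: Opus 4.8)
The plan is to pass to Choi matrices and reinterpret the statement as an entanglement-breaking property of a \emph{sandwiched} composition. Under the Choi--Jamio\l{}kowski isomorphism $\Phi\mapsto C_\Phi=(\ident_3\otimes\Phi)(\omega)$, with $\omega$ the unnormalized maximally entangled operator on $\C^3\otimes\C^3$, the block-positive cone $\PSD(\C^3)\otimes_{\max}\PSD(\C^3)$ is exactly the image of the cone of positive maps $\Lambda:\M_3\to\M_3$, while $\PSD(\C^3)\otimes_{\min}\PSD(\C^3)$ is the image of the cone of entanglement breaking maps. Moving $T$ from the reference leg onto the other factor via $(A\otimes\ident)\omega=(\ident\otimes A^{T})\omega$ yields
\[
(T\otimes S)(C_\Lambda)=C_{S\circ\Lambda\circ T'},
\]
where $T'$ is the transpose-conjugated version of $T$, which is again completely positive with $\vartheta_3\circ T'$ completely positive. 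Hence the proposition is equivalent to the claim that $S\circ\Lambda\circ T'$ is entanglement breaking for \emph{every} positive map $\Lambda$. The choice $\Lambda=\ident$ recovers exactly the statement that $S\circ T'$ is entanglement breaking, i.e.\ the $d=3$ PPT squared theorem, so we are proving a genuine strengthening.

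Next I would dispose of the cases that reduce directly to that theorem. If either $S$ or $T'$ is entanglement breaking we are done, since entanglement breaking maps absorb positive maps on either side: $S(\Psi(\cdot))=\sum_i\Tr[\Psi^\dagger(F_i)\,\cdot\,]\rho_i$ is again entanglement breaking whenever $\Psi$ is positive (because then $\Psi^\dagger$ is positive and $\Psi^\dagger(F_i)\geq0$), and symmetrically on the right. So I may assume $S$ and $T'$ are PPT but not entanglement breaking. Now the set $\mathcal G=\{\Lambda\ \text{positive}: S\circ\Lambda\circ T'\ \text{entanglement breaking}\}$ is a closed convex cone, so it suffices to test the extreme rays of the cone of positive maps. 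If $\Lambda$ is completely positive, then $\Lambda\circ T'$ is again PPT (complete positivity is immediate, and co-complete positivity follows from $\vartheta_3\circ\Lambda\circ\vartheta_3$ being completely positive together with the factorization $T'=\vartheta_3\circ T_1'$ for some completely positive $T_1'$); the same holds if $\Lambda$ is co-completely positive. In either case $S\circ(\Lambda\circ T')$ is a composition of two PPT maps and hence entanglement breaking by the $d=3$ PPT squared theorem \cite{christandl2019composed,chen2019positive}. Since $\mathcal G$ is a convex cone, it therefore contains every \emph{decomposable} positive map.

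The main obstacle is the remaining gap: in dimension $3$ there exist non-decomposable extreme positive maps $\Lambda$ (such as the Choi map), and for these the grouping $S\circ(\Lambda\circ T')$ no longer splits as a composition of two PPT maps, so the PPT squared theorem cannot be invoked as a black box. To close the gap I would work with the Choi matrix $C_{S\circ\Lambda\circ T'}=(T\otimes S)(C_\Lambda)$ directly. It is automatically block-positive with block-positive partial transpose, because $T,S,\vartheta_3\circ S,\vartheta_3\circ T$ are completely positive and a completely positive map acting on a single leg preserves block-positivity; the real content is to upgrade this to membership in $\PSD(\C^3)\otimes_{\min}\PSD(\C^3)$. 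Here I would bring in the finer structure theory behind the $d=3$ PPT squared result: assuming $S$ and $T'$ are PPT but not entanglement breaking, analyze the rank and range of $C_{S\circ\Lambda\circ T'}$ and appeal to the classification of PPT states in $\C^3\otimes\C^3$ (in particular that sufficiently low-rank PPT states are separable) to exhibit an explicit measure-and-prepare decomposition of $S\circ\Lambda\circ T'$. This passage from decomposable to arbitrary positive $\Lambda$ is where essentially all the difficulty concentrates, and it is the step that genuinely relies on the dimension being $3$.
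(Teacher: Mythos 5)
Your Choi--Jamio\l{}kowski reduction is the same first step as the paper's proof, and your treatment of the decomposable case is correct: the cone $\mathcal G$ is closed and convex, $\Lambda\circ T'$ is PPT whenever $\Lambda$ is completely positive or completely copositive, and the $d=3$ PPT squared theorem then handles every decomposable $\Lambda$. But the step you yourself flag as ``where essentially all the difficulty concentrates'' is a genuine gap, not a deferred routine verification. In dimension $3$ non-decomposable positive maps exist (Choi map), so your argument as written proves strictly less than the proposition. The repair plan you sketch does not work as stated: the operator $(T\otimes S)(C_\Lambda)$ is only block-positive with block-positive partial transpose -- it is in general not positive semidefinite -- so the classification of PPT \emph{states} in $\C^3\otimes\C^3$ and low-rank separability theorems have nothing to grab onto. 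The root of the obstruction is that you use PPT squared as a black box, and the PPT property does not absorb an arbitrary positive map sandwiched in the middle: $\Lambda\circ T'$ need not be PPT for merely positive $\Lambda$.

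The paper closes exactly this gap by invoking the finer results of \cite{christandl2019composed} rather than their PPT squared corollary. Theorem 3.1 there states that a map $\M_3\ra\M_3$ is $2$-entanglement breaking if and only if it and its composition with $\vartheta_3$ are $2$-positive; in particular your PPT maps $S$ and $T$ are $2$-entanglement breaking. The class of $2$-entanglement breaking maps, unlike the class of PPT maps, \emph{does} absorb arbitrary positive maps: if $S$ is $2$-entanglement breaking and $P$ is positive, then $P\circ S$ is again $2$-entanglement breaking, simply because a positive map applied to one leg of a separable operator on $\C^2\otimes\C^3$ preserves separability. Hence $T\circ P\circ S$ is a composition of two $2$-entanglement breaking maps, and Theorem 2.1 of \cite{christandl2019composed} (composition of two $2$-entanglement breaking maps on $\M_3$ is entanglement breaking) concludes -- with no case distinction between decomposable and non-decomposable $P$ at all. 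If you want to complete your proof, this absorption property of the $2$-entanglement breaking class is the missing idea; without it, or some substitute of comparable strength, the non-decomposable case remains open.
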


The previous proposition gives many examples of $2$-entanglement annihilating maps that are not entanglement breaking (e.g., by using the maps corresponding to the states constructed and referenced in~\cite{clarisse2006construction}). We will now recast Question \ref{que:Main} in the general setting of convex cones in finite-dimensional vector spaces.

\subsection{Entanglement annihilation on proper cones} 

A convex cone $\gC \subset V$ in a finite-dimensional real vector space $V$ is called \emph{proper} if it is closed and satisfies the relations $\gC-\gC=V$ and $\gC \cap (-\gC) = \{0\}$. When $\gC \subset V$ is a proper cone, we define its \emph{dual cone} as 
\[ \gC^* = \{ \phi \in V^* \st \phi(x) \geq 0 \textnormal{ for every } x \in \gC \}. \]
The dual cone $\gC^*$ is a proper cone in $V^*$ and when we identify the bidual $V^{**}$ with $V$, the relation $\gC^{**}=\gC$ holds\footnote{This fact is sometimes called the \emph{bipolar theorem}.}. Let $V_1,V_2$ denote finite-dimensional vector spaces and $\gC_1\subset V_1$ and $\gC_2\subset V_2$ proper cones. We define the minimal tensor product of $\gC_1$ and $\gC_2$ as 
\[
\gC_1\otimes_{\min} \gC_2 = \text{conv}\lset x\otimes y~:~x\in \gC_1, y\in \gC_2 \rset \subset V_1\otimes V_2,
\] 
and the maximal tensor product as  
\[
\gC_1\otimes_{\max} \gC_2 = (\gC_1^*\otimes_{\min} \gC_2^*)^* \subset V_1\otimes V_2 .
\]
It is easy to check that both $\gC_1\otimes_{\min} \gC_2$ and $\gC_1\otimes_{\max} \gC_2$ are proper cones whenever $\gC_1$ and $\gC_2$ are proper cones and by iterating these constructions, we also define inductively the minimal and maximal tensor powers of a proper cone $\gC$: if $k \geq 1$ is an integer, then 
\[ \gC^{\tmin (k +1)} = \gC \tmin \gC^{\tmin k} \textnormal{ and } \gC^{\tmax (k +1)} = \gC \tmax \gC^{\tmax k} \]
with the convention that $\gC ^{\tmin 1} = \gC^{\tmax 1}=\gC$. 

By analogy with the case of quantum mechanics, tensors which belong to $\gC_1\otimes_{\max} \gC_2$ but not to $\gC_1\otimes_{\min} \gC_2$ are called \emph{entangled}. The main results from \cite{ALPP21} characterizes the existence of entanglement: The equality $\gC_1\otimes_{\min} \gC_2 = \gC_1\otimes_{\max} \gC_2$ holds if and only if $\gC_1$ or $\gC_2$ is classical, i.e., it is isomorphic to the cone $\R_+^n$ for some $n\in\N$.

The following classes of linear maps can be defined naturally in the category of cones: For proper cones $\gC_1\subset V_1$ and $\gC_2\subset V_2$, a linear map $P:V_1\ra V_2$ is called
\begin{itemize}
\item \emph{$(\gC_1,\gC_2)$-entanglement breaking} if it can be written as a finite sum 
\[
P=\sum^k_{i=1}x_i \phi_i ,
\] 
for $x_1,\ldots ,x_k\in \gC_2$ and $\phi_1,\ldots ,\phi_k\in \gC^*_1$.
\item \emph{$(\gC_1,\gC_2)$-entanglement annihilating} if
\[
P^{\otimes k}\lb \gC_1^{\otimes_{\max} k}\rb\subseteq \gC_2^{\otimes_{\min} k}
\] for every $k\in \N$.
\end{itemize}
If it is clear which cones are being considered, we will omit the prefix $(\gC_1,\gC_2)$- in these definitions. Note that these classes of linear maps generalize the aforementioned classes of the cones $\PSD(\C^d)$. Motivated by Question \ref{que:Main} we define: 

\begin{defn}[Resilience]
Let $\gC_1 \subset V_1$, $\gC_2 \subset V_2$ be proper cones. The pair $(\gC_1,\gC_2)$ is called \emph{resilient} if every $(\gC_1,\gC_2)$-entanglement annihilating map $P:V_1\ra V_2$ is entanglement breaking. We say that $\gC$ is resilient if $(\gC,\gC)$ is resilient.
\end{defn}

Intuitively, a pair $(\gC_1,\gC_2)$ is resilient if for every positive map $P$ that is not entanglement breaking, there are tensors in $\gC^{\otimes_{\max} k}_1$ for some $k\in\N$ whose entanglement withstands the action of $P^{\otimes k}$. Classical cones are always resilient since any positive map is entanglement breaking. Whenever the cones $\gC_1\subset V_1$ and $\gC_2\subset V_2$ are not classical and for any fixed $k\in\N$ we show in Appendix \ref{app:ProofOfPropkEA} (see Theorem \ref{thm:ConcretekEA1}), that there are linear maps $P:V_1\ra V_2$ which are not entanglement breaking, but such that 
\[
P^{\otimes k}\lb \gC_1^{\otimes_{\max} k}\rb\subseteq \gC_2^{\otimes_{\min} k}.
\]
Therefore, the following question is non-trivial:
\begin{que}
Are there non-resilient pairs of proper cones? Equivalently, are there entanglement annihilating maps which are not entanglement breaking ?
\label{que:MainQuestion}
\end{que}

\subsection{Main results} Most previous results on entanglement annihilating maps and Question \ref{que:Main} exploit the theory of entanglement distillation, and in particular that all NPPT quantum states $\rho\in \PSD(\C^{d_A}\otimes \C^{d_B})$ with $\min(d_A,d_B)=2$ are distillable~\cite{divincenzo2000evidence,dur2000distillability}. As a consequence, the pairs $(\PSD(\C^{d_A}),\PSD(\C^{d_B}))$ are resilient when $\min(d_A,d_B)=2$ (see the proof of Lemma 3.2 in \cite{christandl2019composed}). It seems difficult to generalize this result to situations when $\min(d_A,d_B)>2$, but our approach suggests a different strategy: For $n\in\N$ consider the Lorentz cones $\gL_n\subset \R^{n+1}$ given by 
\[
\gL_n = \lset (t,x)\in \R\oplus \R^{n} ~:~\|x\|_2\leq t\rset
\]
where $\|\cdot\|_2$ is the standard Euclidean norm on $\R^n$. The cone $\PSD(\C^2)$ is isomorphic to the Lorentz cone $\gL_3$. This can be seen using the spinor representation (see~\cite[p.32]{aubrun2017alice}) or by realizing $\PSD(\C^2)$ as the cone over the Bloch ball. We have: 
\[
\begin{array}[t]{@{}c@{}}
\gL_2 \\
\rotsupset \\
\PSD\lb \C^2\rb \\
\rotsimeq \\
      \gL_3 \\
      \rotsupset \\
      \gL_4 \\
      \rotsupset \\
      \gL_5 \\
      \rotsupset \\
      \vdots 
  \end{array} 
  \begin{array}[t]{@{}c@{}}
 \\
 \\
  \subset  \\
\\
      \\
      \\
      \\
      \\
      \\
      \\
  \end{array}
\begin{array}[t]{@{}c@{}}
 \\
 \\
\PSD\lb \C^3\rb \\
\\
      \\
      \\
      \\
      \\
      \\
      \\
  \end{array}
    \begin{array}[t]{@{}c@{}}
 \\
 \\
\subset \\
\\
      \\
      \\
      \\
      \\
      \\
      \\
  \end{array}
\begin{array}[t]{@{}c@{}}
 \\
 \\
\PSD\lb \C^4\rb \\
\\
      \\
      \\
      \\
      \\
      \\
      \\
  \end{array}
    \begin{array}[t]{@{}c@{}}
 \\
 \\
\subset \\
\\
      \\
      \\
      \\
      \\
      \\
      \\
  \end{array}
\begin{array}[t]{@{}c@{}}
 \\
 \\
\PSD\lb \C^5\rb \\
\\
      \\
      \\
      \\
      \\
      \\
      \\
  \end{array}
  \begin{array}[t]{@{}c@{}}
 \\
 \\
\subset \cdots\\
\\
      \\
      \\
      \\
      \\
      \\
      \\
  \end{array}
\]
The following theorem is our main result:

\begin{thm}
The pairs $(\gL_n, \gC)$ and $(\gC, \gL_n)$ are resilient for every proper cone $\gC$ and every $n\in\N$. In particular, the Lorentz cone $\gL_n$ is resilient for every $n\in\N$.
\label{thm:Main1}
\end{thm}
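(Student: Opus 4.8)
The plan is to argue by contraposition and symmetry reduction. Since entanglement annihilation with $k=1$ already forces the map under study to send $\gC_1$ into $\gC_2$, I may assume throughout that it is positive, and I will show that a positive map which is \emph{not} entanglement breaking cannot be entanglement annihilating. First I would halve the work using duality. Writing $P^*$ for the adjoint, the decomposition $P=\sum_i x_i\phi_i$ dualizes to $P^*=\sum_i \phi_i x_i$, so $P$ is $(\gC_1,\gC_2)$-entanglement breaking iff $P^*$ is $(\gC_2^*,\gC_1^*)$-entanglement breaking; and applying $(\,\cdot\,)^*$ to the inclusions $P^{\otimes k}(\gC_1^{\otimes_{\max}k})\subseteq\gC_2^{\otimes_{\min}k}$, together with the identity $(\gC^{\otimes_{\max}k})^*=(\gC^*)^{\otimes_{\min}k}$, shows the same equivalence for entanglement annihilation. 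Hence $(\gC_1,\gC_2)$ is resilient iff $(\gC_2^*,\gC_1^*)$ is. As $\gL_n$ is self-dual and $\gC\mapsto\gC^*$ permutes the proper cones, it suffices to prove that $(\gC,\gL_n)$ is resilient for every proper cone $\gC$.

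Next I would extract a geometric witness from the hypothesis. Under the identification $\linops{V}\cong V^*\otimes\R^{n+1}$ of a map $P:V\to\R^{n+1}$ with its Choi tensor $C_P$, positivity of $P$ is equivalent to $C_P\in\gC^*\otimes_{\max}\gL_n$, while entanglement breaking is equivalent to $C_P\in\gC^*\otimes_{\min}\gL_n$. Thus a positive, non-entanglement-breaking $P$ yields a genuinely entangled bipartite tensor with one Lorentz factor, and separating it from the closed cone $\gC^*\otimes_{\min}\gL_n$ produces a witness $W\in\gC\otimes_{\max}\gL_n$ with $\langle W,C_P\rangle<0$. At this point I would invoke the symmetries of the Lorentz cone: its automorphism group is large, containing the orthochronous Lorentz group $O^+(1,n)$ up to scaling and acting transitively on the rays of $\gL_n$, and I would use this abundant symmetry to reduce the entangled configuration $C_P$ (equivalently the witness $W$) to a canonical, highly symmetric normal form, so that the subsequent construction can be carried out once and for all rather than case by case.

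The core of the proof is then a distillation step turning this one entangled copy into manifest multipartite entanglement over many copies. Concretely, I would build from the witness a $k$-partite resource $z\in\gC^{\otimes_{\max}k}$ and a family of local positive maps $R_1,\dots,R_k$ on the Lorentz outputs such that the post-processed tensor $(R_1\otimes\cdots\otimes R_k)(P^{\otimes k}(z))$ provably lies outside a minimal tensor product. Because local positive maps send $\gL_n^{\otimes_{\min}k}$ into a minimal tensor product, this certifies $P^{\otimes k}(z)\notin\gL_n^{\otimes_{\min}k}$ and refutes entanglement annihilation. The maps $R_j$ themselves I would assemble from the multiplication tensors of the real composition algebras $\R,\C,\qua,\oc$; the quaternionic case reproduces the qubit distillation protocol of Bennett et al.~\cite{bennett1996purification} through $\gL_3\cong\PSD(\C^2)$, and as a warm-up this already settles the Lorentz cones $\gL_0,\gL_1,\gL_3,\gL_7$ whose dimension matches a composition algebra.

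The main obstacle is precisely that composition algebras exist only in dimensions $1,2,4,8$, so the warm-up reaches only four values of $n$; the real work is to manufacture distillation protocols for \emph{every} $n$. For this I would discard exact norm-multiplicativity and instead use ``partial'' composition structures coming from solutions $A_1,\dots,A_r$ of the Hurwitz matrix equations $A_i^{\mathsf T}A_j+A_j^{\mathsf T}A_i=2\delta_{ij}\Id$, which exist in all dimensions and generate an infinite family of protocols. I expect the bulk of the effort to lie in two intertwined verifications that must hold uniformly in $n$: that the maps assembled from such matrix solutions are genuinely positive on $\gL_n$, and that the distilled output demonstrably escapes the minimal tensor product. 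Controlling these two requirements simultaneously is the crux on which the whole argument turns.
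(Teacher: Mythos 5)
Your outline follows the same overall strategy as the paper (duality reduction, a witness from trace duality, Lorentz symmetrization, then distillation protocols from composition algebras and from Hurwitz-type matrix solutions), but the two steps that actually carry the proof are not workable as you state them. The first gap is the symmetrization. You propose to bring $C_P\in\gC^*\otimes_{\max}\gL_n$ to a ``canonical, highly symmetric normal form'' using the symmetries of $\gL_n$. For an arbitrary proper cone $\gC$ this cannot work: the automorphisms you invoke act on the Lorentz factor only, and $\gC$ need not have any symmetry at all, so there is no normal form for the pair. The paper's device is to eliminate the $\gC$ factor \emph{before} symmetrizing: read as a map (Lemma \ref{lem:maps-vs-tensors}), your witness $W$ is exactly a $(\gL_n,\gC)$-positive map $Q$ with $\Tr[Q\circ P]<0$; the composition $P\circ Q$ is then a self-map of $\R^{n+1}$ that is still $(\gL_n,\gL_n)$-entanglement annihilating (Lemma \ref{lem:EntAnnUnderTrans} with $k=1$) and still not entanglement breaking (Lemma \ref{lem:dualityEB}). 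This is precisely how Lemma \ref{lem:resilienceImpliesResOfPair} reduces resilience of all pairs $(\gC,\gL_n)$ and $(\gL_n,\gC)$ to resilience of the single cone $\gL_n$; only then does twirling over the isometries of the Euclidean ball apply, and its output is the concrete normal form you are after, the isotropic maps $I_{\alpha,\beta}=\alpha\pi_1+\beta\pi_2$ (Proposition \ref{prop:TwirlAndIso}, Corollary \ref{cor:RedToIso}). You have the witness in hand but never compose with it; without that composition your reduction step fails.

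The second gap is that the distillation core is asserted rather than constructed, and the warm-up you do describe would fail. The multiplication tensor of a division algebra alone is \emph{not} positive on its Lorentz cone: in $\qua$, both factors of $(1+i)\otimes(1+i)$ lie in $\gL_3$, yet $(1+i)^2=2i\notin\gL_3$. This is why the paper couples each division algebra $\A_2\in\{\R,\C,\qua,\oc\}$ with a first factor $\A_1\in\{\R,\C'\}$ (split-complex) carrying the time-like coordinate (Lemma \ref{lem:L-positive-map}); the resulting cones are $\gL_N$ with $N=\dim\A_1+\dim\A_2-1\in\{1,2,3,4,5,8,9\}$, not $\{0,1,3,7\}$, and the Bennett et al.\ protocol is the case $(\C',\C)$ giving $\gL_3\simeq\PSD(\C^2)$, not the quaternionic case. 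More importantly, for general $n$ your ``two intertwined verifications'' restate the problem rather than solve it, and the paper's mechanism is different from the local post-processing maps $R_j$ you envisage: from an $n$-dimensional subspace of $\M_{N(n)}(\R)$ consisting of multiples of orthogonal matrices one builds a single tensor $z_{n,k}\in(\R^n)^{\otimes k}$ (Lemma \ref{lem:MagicalZ}) satisfying $\langle x_1\otimes\cdots\otimes x_k,z_{n,k}\rangle\leq\|x_1\|_2\cdots\|x_k\|_2$ and $\|z_{n,k}\|_2^2\geq n^k/N(n)$. The first property places $z^{\pm}=e_0^{\otimes k}\pm z_{n,k}$ in $\gL_n^{\otimes_{\max}k}$, and by self-duality the same tensors serve as witnesses: pairing $z^{\pm}$ against $I_{\alpha,\beta}^{\otimes k}(z^{+})\in\gL_n^{\otimes_{\min}k}$ forces $\alpha^k\geq|\beta|^k n^k/N(n)$. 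The decisive point missing from your sketch is that the loss $N(n)$ does not depend on $k$, so letting $k\to\infty$ yields $\alpha\geq n|\beta|$, which is entanglement breaking by Lemma \ref{lem:PropsIso}.
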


Theorem \ref{thm:Main1} is a natural generalization and strengthening of the aformentioned results on resilience of $(\PSD(\C^{d_A}),\PSD(\C^{d_B}))$ when $\min(d_A,d_B)=2$. Its proof uses first the symmetries of the Lorentz cones and second a series of ``distillation protocols'' on the Lorentz cones $\gL_n$. For $n\leq 9$ we construct a family of protocols from the multiplication tensors of certain split-algebras and the normed division algebras. These protocols contain the original distillation protocol for entangled quantum states $\rho\in\PSD\lb \C^2\otimes \C^2\rb$ introduced in~\cite{bennett1996purification} as a special case. Finally, we construct a different class of protocols based on solutions of the Hurwitz matrix equations showing resilience of $\gL_n$ for every $n\geq 10$ (and also for smaller $n$).  

In the context of entanglement distillation, the reduction criterion~\cite{horodecki1999reduction} gives a sufficient condition for quantum states to be distillable. It uses the so-called reduction map $R:\M_{d}\ra \M_d$
\begin{equation}
R(X) = \Tr(X)\one_d - X ,
\label{equ:Red}
\end{equation}
and quantum states $\rho\in\PSD\lb \C^d\otimes \C^d\rb$ satisfying $(\ident_d\otimes R)(\rho)\ngeq 0$ are distillable. Mathematically, this criterion is based on the fact that the map $\vartheta_d\circ R$ factors (completely positively) through the cone $\PSD\lb \C^2\rb$, which is closely related to the so-called Schmidt number of Werner states (see~\cite{terhal2000schmidt}). Again motivated by the equivalence $\gL_3\simeq \PSD(\C^2)$ we generalize this result (in a certain sense) to maps factoring through a Lorentz cone $\gL_n$. Examples of such maps include the Breuer--Hall map and projections onto spin factors (see Section \ref{sec:PosMapsFacThroughLor} for definitions and details). Each such map gives necessary conditions for positive maps $P:\M_{d_A}\ra \M_{d_B}$ to be entanglement annihilating. 

Finally, we consider resilience in the case of cones with a symmetric base, or equivalently cones $\gC_X$ associated to a finite-dimensional normed space $X$. Using recent results obtained by the authors~\cite{paperB} on regularizations of Banach space tensor norms, we demonstrate a partial version of resilience where a restricted form of entanglement coming from the Banach space structure is annihilated by certain maps that are not entanglement breaking. This result illuminates the limitations of the methods leading to resilience of Lorentz cones, and we show that they cannot show resilience of any other cone $\gC_X$ with symmetric base. Finally, we discuss the cone over the finite-dimensional $\ell_1$-spaces, which is a candidate for a non-resilient cone. 

Our article is structured as follows: 
\begin{itemize}
\item In Section \ref{sec:PrelAndNot} we review some preliminaries and notation. 
\item In Section \ref{sec:Symmetrization} we show how to exploit symmetries in order to simplify the study of resilience of cones. Specifically, we will develop techniques for cones with a symmetric base (Section \ref{sec:ConeWithSymBaseRed}) and for cones with enough symmetries (Section \ref{sec:EnoughSymm}).
\item In Section \ref{sec:lorentz} we prove Theorem \ref{thm:Main1} on the resilience of Lorentz cones.
\item In Section \ref{sec:fact} we study positive maps factoring through cones and how they can be used to study resilience of cones. In Section \ref{sec:GenFactTheory} we develop the general theory and in Section \ref{sec:PosMapsFacThroughLor} we specialize to positive maps between cones of positive semidefinite matrices factoring through Lorentz cones giving rise to generalized reduction criteria.  
\item In Section \ref{sec:ConeSymBase} we study cones with symmetric base and connections between resilience and the theory of Banach space tensor norms. We first review the results from~\cite{paperB} in Section \ref{sec:SummaryPaperB}, and then we study the annihilation of particular forms of entanglement related to Banach space tensor norms in Section \ref{sec:TensorProdFromNorms}. Finally, we discuss a potential candidate for a resilient cone in Section \ref{sec:CandidateForResilience}. 
\end{itemize}

\section{Preliminaries and notation}\label{sec:PrelAndNot}

Unless explicitly stated, all the vector spaces we consider are assumed to be finite-dimensional vector spaces over $\R$.

\subsection{Classes of linear maps and their correspondence to tensor products} \label{subsec:tensor-products}

Our main object of study are linear maps between vector spaces $V_1$, $V_2$. Consider two proper cones $\gC_1 \subset V_1$ and $\gC_2 \subset V_2$. A linear map $P : V_1 \to V_2$ is said to be $(\gC_1,\gC_2)$-\emph{positive} if it satisfies the relation $P(\gC_1) \subseteq \gC_2$. The class of $(\gC_1,\gC_2)$-positive maps forms itself a proper cone which we denote by  $\mathcal{P}(\gC_1,\gC_2)$. 

It is natural to identify a linear map $P:V_1\ra V_2$ with the tensor $\hat{P}\in V^*_1\otimes V_2$ satisfying the relation
\[
\phi\lb P(x)\rb = \lb x \otimes \phi\rb(\hat{P})
\]
for every $\phi\in V^*_2$ and every $x\in V_1$. In the right-hand side of that formula, $x$ is considered as an element of $V_1^{**}$.  This correspondence $P\leftrightarrow \hat{P}$ defines an isomorphism between linear maps $P:V_1\ra V_2$ and tensors $\hat{P}\in V^*_1\otimes V_2$. When $V_1=\M_{d_1}$ and $V_2=\M_{d_2}$ this isomorphism is sometimes called the Choi--Jamiolkowski isomorphism~\cite{choi1975completely,jamiolkowski1972linear}.  

\begin{lem} \label{lem:maps-vs-tensors}
Let $V_1$, $V_2$ be vector spaces, $\gC_1 \subset V_1$, $\gC_2 \subset V_2$ be proper cones and $P: V_1 \to V_2$ be a linear map. Then
\begin{enumerate}
    \item the map $P$ is $(\gC_1,\gC_2)$-positive if and only if $\hat{P} \in \gC_1^* \tmax \gC_2$,
    \item the map $P$ is $(\gC_1,\gC_2)$-entanglement breaking if and only if $\hat{P} \in \gC_1^* \tmin \gC_2$.
\end{enumerate}
\end{lem}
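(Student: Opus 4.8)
The plan is to prove both equivalences via the duality between the minimal and maximal tensor products, using the correspondence $P \leftrightarrow \hat P$ and the definition $\gC_1 \tmax \gC_2 = (\gC_1^* \tmin \gC_2^*)^*$.

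For part (1), I would first unfold the defining relation $\phi(P(x)) = (x \otimes \phi)(\hat P)$. The claim is that $P(\gC_1) \subseteq \gC_2$ if and only if $\hat P \in \gC_1^* \tmax \gC_2$. By the bipolar theorem and the definition of the maximal tensor product, membership $\hat P \in \gC_1^* \tmax \gC_2 = ((\gC_1^*)^* \tmin \gC_2^*)^* = (\gC_1 \tmin \gC_2^*)^*$ is equivalent to requiring $\psi(\hat P) \geq 0$ for every $\psi \in \gC_1 \tmin \gC_2^*$. Since a functional in the minimal tensor product cone $\gC_1 \tmin \gC_2^*$ is a convex combination of product functionals $x \otimes \phi$ with $x \in \gC_1$ and $\phi \in \gC_2^*$, and the cone is generated by such products, it suffices to test on these generators. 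Thus $\hat P \in \gC_1^* \tmax \gC_2$ is equivalent to $(x \otimes \phi)(\hat P) \geq 0$ for all $x \in \gC_1$, $\phi \in \gC_2^*$, which by the defining relation reads $\phi(P(x)) \geq 0$ for all such $x, \phi$. Finally, since $\gC_2^{**} = \gC_2$, the condition ``$\phi(P(x)) \geq 0$ for all $\phi \in \gC_2^*$'' is precisely $P(x) \in \gC_2$, and quantifying over $x \in \gC_1$ gives $(\gC_1,\gC_2)$-positivity.

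For part (2), I would argue directly from the definition of the minimal tensor product. A map $P$ is $(\gC_1,\gC_2)$-entanglement breaking iff it can be written $P = \sum_{i=1}^k x_i \phi_i$ with $x_i \in \gC_2$ and $\phi_i \in \gC_1^*$. Under the correspondence $P \leftrightarrow \hat P$, the rank-one map $x \mapsto x_i \phi_i(x)$ corresponds to the product tensor $\phi_i \otimes x_i \in V_1^* \otimes V_2$; one checks this against the defining relation $\psi(P(x)) = (x \otimes \psi)(\hat P)$, noting that $\psi(\phi_i(x) x_i) = \phi_i(x)\psi(x_i) = (x \otimes \psi)(\phi_i \otimes x_i)$. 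Hence $\hat P = \sum_i \phi_i \otimes x_i$ is a sum (equivalently, after rescaling, a convex combination up to a positive scalar) of product elements with $\phi_i \in \gC_1^*$ and $x_i \in \gC_2$, which is exactly the statement $\hat P \in \gC_1^* \tmin \gC_2$. Conversely, any element of $\gC_1^* \tmin \gC_2$ is such a sum, and reading the decomposition backward yields an entanglement breaking map.

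The argument is essentially a careful bookkeeping exercise, so I do not expect a serious obstacle; the main point requiring care is the tensor-leg convention. One must be consistent about which leg of $\hat P \in V_1^* \otimes V_2$ carries the functional and which carries the output vector, and about the identification $V_1^{**} \cong V_1$ used when evaluating $x \otimes \phi$ on $\hat P$. The only genuine mathematical input is the bipolar theorem $\gC^{**} = \gC$ (already recorded in the excerpt) together with the fact that the minimal tensor cone is generated as a cone by product elements, so that positivity tested against the full cone reduces to positivity against product generators.
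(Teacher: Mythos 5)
Your proof is correct and follows essentially the same route as the paper: part (1) via the bipolar theorem $\gC_2^{**}=\gC_2$ and the duality $\gC_1^* \tmax \gC_2 = (\gC_1 \tmin \gC_2^*)^*$ reducing to product-functional tests, and part (2) by matching the rank-one decomposition of an entanglement breaking map with a sum of product tensors. The paper merely states these steps more tersely (calling part (2) ``an easy consequence of the definitions''), so your write-up just supplies the bookkeeping the paper omits.
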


\begin{proof}
Since $\gC_2=\gC_2^{**}$, the positivity of $P$ is equivalent to the fact that $\phi(P(x)) \geq 0$ for every $x \in \gC_1$ and $\phi \in \gC_2^*$. Using the definition of the maximal tensor product, this is equivalent to the condition $\hat{P} \in \gC_1^* \tmax \gC_2$. The second statement is an easy consequence of the definitions.
\end{proof}

We now characterize entanglement breaking maps as the maps which destroy entanglement when applied to one part of a tensor product. This statement extends a well known fact in quantum information theory (\cite{horodecki2003entanglement}):

\begin{prop} \label{prop:EB}
Let $V_1$, $V_2$ be vector spaces, $\gC_1 \subset V_1$, $\gC_2 \subset V_2$ be proper cones and $P: V_1 \to V_2$ be a linear map. The following are equivalent
\begin{enumerate}
    \item the map $P$ is $(\gC_1,\gC_2)$-entanglement breaking,
    \item for every proper cone $\gC$ in a vector space $V$, we have
    \[ (\ident_V \otimes P) \left( \gC \tmax \gC_1 \right) \subseteq \gC \tmin \gC_2 .\]
\end{enumerate}
\end{prop}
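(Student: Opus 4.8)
The plan is to prove the two implications separately. For $(1)\Rightarrow(2)$, I would first isolate the following elementary fact about partial evaluations: if $\gC\subset V$ is a proper cone, $z\in\gC\tmax\gC_1$, and $\phi\in\gC_1^*$, then the partial contraction $(\ident_V\otimes\phi)(z)$ lies in $\gC$. To see this, I would test against an arbitrary $\psi\in\gC^*$ and observe that $\psi\big((\ident_V\otimes\phi)(z)\big)=(\psi\otimes\phi)(z)$. Since $\psi\otimes\phi\in\gC^*\tmin\gC_1^*$ and $z\in\gC\tmax\gC_1=(\gC^*\tmin\gC_1^*)^*$, this pairing is nonnegative, so the bipolar theorem $\gC^{**}=\gC$ forces $(\ident_V\otimes\phi)(z)\in\gC$. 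This is the natural cone-theoretic analogue of the statement that applying a positive functional to one leg of a block-positive operator returns a positive operator.

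Given this fact, I would assume $P=\sum_{i=1}^k x_i\phi_i$ with $x_i\in\gC_2$ and $\phi_i\in\gC_1^*$, fix a proper cone $\gC\subset V$ and an element $z\in\gC\tmax\gC_1$, and expand the partial action of $P$ as $(\ident_V\otimes P)(z)=\sum_{i=1}^k \big((\ident_V\otimes\phi_i)(z)\big)\otimes x_i$. By the fact above each $(\ident_V\otimes\phi_i)(z)$ lies in $\gC$, so each summand is of the form (an element of $\gC$) $\otimes$ (an element of $\gC_2$) and hence lies in $\gC\tmin\gC_2$; as this set is a convex cone, so does the whole sum. This proves $(2)$.

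For the converse $(2)\Rightarrow(1)$, I would apply the hypothesis to a single cleverly chosen test object playing the role of the maximally entangled state. Take $V=V_1^*$ and $\gC=\gC_1^*$ (a proper cone), and let $\widehat{\ident_{V_1}}\in V_1^*\otimes V_1$ be the tensor associated to the identity map on $V_1$. Since $\ident_{V_1}$ is $(\gC_1,\gC_1)$-positive, Lemma \ref{lem:maps-vs-tensors} gives $\widehat{\ident_{V_1}}\in\gC_1^*\tmax\gC_1=\gC\tmax\gC_1$. Hypothesis $(2)$ then yields $(\ident_{V_1^*}\otimes P)(\widehat{\ident_{V_1}})\in\gC_1^*\tmin\gC_2$. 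It remains to identify this image: working in a basis $(e_j)$ of $V_1$ with dual basis $(e_j^*)$ one has $\widehat{\ident_{V_1}}=\sum_j e_j^*\otimes e_j$, whence $(\ident_{V_1^*}\otimes P)(\widehat{\ident_{V_1}})=\sum_j e_j^*\otimes P(e_j)$, which is exactly the tensor $\hat{P}$ associated to $P$. Therefore $\hat{P}\in\gC_1^*\tmin\gC_2$, and the second part of Lemma \ref{lem:maps-vs-tensors} identifies this precisely with $P$ being $(\gC_1,\gC_2)$-entanglement breaking.

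The only real content lies in the reverse direction, and the one point demanding care is the choice of test data: the cone $\gC_1^*$ together with the identity tensor $\widehat{\ident_{V_1}}$ is the universal entangled element, and the crux is recognizing that contracting it against $\ident\otimes P$ reproduces $\hat{P}$. The forward direction and the basis identification are routine bookkeeping once the partial-evaluation lemma is in hand.
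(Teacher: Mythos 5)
Your proof is correct and follows essentially the same route as the paper's: the same decomposition argument for $(1)\Rightarrow(2)$ and the same choice of $\gC=\gC_1^*$ with the identity tensor $\widehat{\ident_{V_1}}$ for $(2)\Rightarrow(1)$. You merely spell out two steps the paper leaves implicit — the duality/bipolar justification of the partial-evaluation fact and the basis verification that $(\ident_{V_1^*}\otimes P)(\widehat{\ident_{V_1}})=\hat{P}$ — both of which are accurate.
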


\begin{proof}
Assume (1), so that $P$ can be decomposed as $P(\cdot)=\sum \phi_i(\cdot)x_i$
for some $x_i \in \gC_2$ and $\phi_i \in \gC^*_1$. Consider a tensor $z \in \gC \tmax \gC_1$. It follows from the definition of the maximal tensor product that $(\ident_V \otimes \phi_i)(z) \in \gC$ for every $i$, and therefore $(\ident_V \otimes P)(z) = \sum (\ident_V \otimes \phi_i)(z) \otimes x_i$ belongs to $\gC \tmin \gC_2$. This shows (2).

Conversely, assume (2) and choose $V=V_1^*$, $\gC = \gC_1^*$. By Lemma \ref{lem:maps-vs-tensors}, the tensor $\widehat{\ident_{V_1}}$ belongs to $\gC_1^* \tmax \gC_1$. Since $\hat{P} =  (\ident_{V_1^*} \otimes P)(\widehat{\ident_{V_1}})$ belongs to $\gC \tmin \gC_2$, using again Lemma \ref{lem:maps-vs-tensors} shows that $P$ is entanglement breaking.
\end{proof}

\subsection{Duality between classes of maps} 

Consider vector spaces $V_1$, $V_2$ and a linear map $P : V_1 \to V_2$. We denote by $P^* : V_2^* \to V_1^*$ its adjoint. The following lemma is easy to check.

\begin{lem} \label{lem:dual}
Let $\gC_1 \subset V_1$, $\gC_2 \subset V_2$ be proper cones and $P : V_1 \to V_2$ a linear map. Then
\begin{enumerate}
    \item $P$ is $(\gC_1,\gC_2)$-positive if and only if the map $P^*$ is $(\gC_2^*,\gC_1^*)$-positive.
    \item $P$ is $(\gC_1,\gC_2)$-entanglement breaking if and only if the map $P^*$ is $(\gC_2^*,\gC_1^*)$-entanglement breaking.
    \item $P$ is $(\gC_1,\gC_2)$-entanglement annihilating if and only if the map $P^*$ is $(\gC_2^*,\gC_1^*)$-entanglement annihilating.
    \item The pair $(\gC_1,\gC_2)$ is resilient if and only if the pair $(\gC_2^*,\gC_1^*)$ is resilient.
\end{enumerate}
\end{lem}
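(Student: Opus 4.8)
The plan is to derive all four statements from a single master observation about the duality of positivity, together with the biduality relations $V^{**}=V$ and $\gC^{**}=\gC$. First I would establish (1): the defining identity $\phi(P(x)) = (P^*\phi)(x)$ shows that $P(\gC_1)\subseteq\gC_2=\gC_2^{**}$ holds if and only if $(P^*\phi)(x)\geq 0$ for every $x\in\gC_1$ and every $\phi\in\gC_2^*$, which is exactly the condition $P^*(\gC_2^*)\subseteq\gC_1^*$. Thus $P$ is $(\gC_1,\gC_2)$-positive precisely when $P^*$ is $(\gC_2^*,\gC_1^*)$-positive. It is worth recording this in the slightly more general form ``$Q(K_1)\subseteq K_2 \iff Q^*(K_2^*)\subseteq K_1^*$'' for arbitrary proper cones $K_1,K_2$, since this general version will be reused for (3).

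For (2) I would use the rank-one decomposition directly. If $P(\cdot)=\sum_i \phi_i(\cdot)\,x_i$ with $x_i\in\gC_2$ and $\phi_i\in\gC_1^*$, then passing to the adjoint gives $P^*(\psi)=\sum_i \psi(x_i)\,\phi_i$ for $\psi\in V_2^*$; reading $x_i$ as an element of $(\gC_2^*)^*=\gC_2^{**}=\gC_2$ and $\phi_i$ as an element of the target cone $\gC_1^*$, this is exactly an entanglement-breaking decomposition of $P^*$ as a $(\gC_2^*,\gC_1^*)$-map. The converse is obtained by applying the same computation to $P^*$ and invoking $P^{**}=P$.

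The only step requiring genuine bookkeeping is (3). Here I would combine the general form of (1) with two standard facts: that the adjoint of a tensor power is the tensor power of the adjoint, $(P^{\otimes k})^* = (P^*)^{\otimes k}$, and that duality exchanges the two tensor products on powers, $(\gC^{\tmax k})^* = (\gC^*)^{\tmin k}$ and $(\gC^{\tmin k})^* = (\gC^*)^{\tmax k}$ (both following by induction from the definition $\gC_1\tmax\gC_2=(\gC_1^*\tmin\gC_2^*)^*$ together with the bipolar theorem). Applying (1) to the cones $K_1=\gC_1^{\tmax k}$ and $K_2=\gC_2^{\tmin k}$, the inclusion $P^{\otimes k}(\gC_1^{\tmax k})\subseteq\gC_2^{\tmin k}$ holds if and only if $(P^*)^{\otimes k}\big((\gC_2^*)^{\tmax k}\big)\subseteq(\gC_1^*)^{\tmin k}$. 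Quantifying over all $k$, this says precisely that $P$ is $(\gC_1,\gC_2)$-entanglement annihilating if and only if $P^*$ is $(\gC_2^*,\gC_1^*)$-entanglement annihilating. The main thing to watch is that the min/max roles and the domain/codomain cones swap simultaneously, so that the quantified statement closes up correctly; this is the place where a careless identification would break the argument.

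Finally (4) is a formal consequence of (2) and (3) together with the fact that $P\mapsto P^*$ is an involutive bijection between linear maps $V_1\to V_2$ and linear maps $V_2^*\to V_1^*$. If $(\gC_1,\gC_2)$ is resilient and $Q:V_2^*\to V_1^*$ is $(\gC_2^*,\gC_1^*)$-entanglement annihilating, then $P:=Q^*$ (so that $P^*=Q$ under the biduality identifications) is $(\gC_1,\gC_2)$-entanglement annihilating by (3), hence entanglement breaking by resilience, hence $Q=P^*$ is entanglement breaking by (2). This shows $(\gC_2^*,\gC_1^*)$ is resilient, and the reverse implication is identical by symmetry.
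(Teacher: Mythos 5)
Your proof is correct, and it carries out exactly the routine verification that the paper leaves to the reader (the paper states this lemma with only the remark that it ``is easy to check''). All four parts are handled soundly: the master duality $Q(K_1)\subseteq K_2 \iff Q^*(K_2^*)\subseteq K_1^*$, the adjoint of a rank-one decomposition, the induction giving $(\gC^{\tmax k})^* = (\gC^*)^{\tmin k}$ together with $(P^{\otimes k})^* = (P^*)^{\otimes k}$, and the formal derivation of (4) from (2), (3), and biduality are precisely the steps the authors had in mind.
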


Another useful lemma is obtained by considering trace duality.

\begin{lem} \label{lem:dualityEB}
Let $\gC_1 \subset V_1$ and $\gC_2 \subset V_2$ be proper cones, and $P: V_1 \to V_2$ a linear map. The following are equivalent
\begin{enumerate}
    \item $P$ is $(\gC_1,\gC_2)$-entanglement breaking,
    \item for every positive map $Q \in \mathcal{P}(\gC_2,\gC_1)$, we have $\Tr[Q \circ P] \geq 0$.
\end{enumerate}
\end{lem}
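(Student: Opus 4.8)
The plan is to recast both conditions as statements about the associated tensor $\hat{P} \in V_1^* \otimes V_2$ and then read off the equivalence from the duality between the minimal and maximal tensor products. By Lemma \ref{lem:maps-vs-tensors}(2), condition (1) is equivalent to $\hat{P} \in \gC_1^* \tmin \gC_2$. Since this is a proper, hence closed convex, cone, the bipolar theorem tells us that $\hat{P} \in \gC_1^* \tmin \gC_2$ if and only if $\langle \hat{P}, w \rangle \geq 0$ for every $w$ in the dual cone $(\gC_1^* \tmin \gC_2)^*$. So the whole proof reduces to understanding this dual cone and the pairing against it.

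The next step is to identify the dual cone. Applying the defining relation of the maximal tensor product to the cones $\gC_1$ and $\gC_2^*$, together with $\gC_2^{**} = \gC_2$, gives $\gC_1 \tmax \gC_2^* = (\gC_1^* \tmin \gC_2)^*$, a cone living in $(V_1^* \otimes V_2)^* \cong V_1 \otimes V_2^*$. Up to the swap isomorphism $V_1 \otimes V_2^* \cong V_2^* \otimes V_1$, and using the symmetry of $\tmax$, this cone is precisely the image of $\gC_2^* \tmax \gC_1$, which by Lemma \ref{lem:maps-vs-tensors}(1) is exactly the set of tensors $\hat{Q}$ of $(\gC_2,\gC_1)$-positive maps $Q : V_2 \to V_1$. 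Hence elements $w$ of the dual cone correspond bijectively to positive maps $Q \in \mathcal{P}(\gC_2,\gC_1)$.

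It then remains to check that under this correspondence the pairing $\langle \hat{P}, w \rangle$ equals $\Tr[Q \circ P]$. Writing $\hat{P} = \sum_i \phi_i \otimes y_i$ and $\hat{Q} = \sum_j \psi_j \otimes z_j$, so that $P(\cdot) = \sum_i \phi_i(\cdot)\,y_i$ and $Q(\cdot) = \sum_j \psi_j(\cdot)\,z_j$, I would compute directly that the endomorphism $Q \circ P$ of $V_1$ has trace $\Tr[Q \circ P] = \sum_{i,j} \phi_i(z_j)\,\psi_j(y_i)$, which is exactly the natural pairing of $\hat{P} \in V_1^* \otimes V_2$ with the swapped tensor $w = \sum_j z_j \otimes \psi_j \in V_1 \otimes V_2^*$. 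Combining the three steps, membership $\hat{P} \in \gC_1^* \tmin \gC_2$ becomes the requirement that $\Tr[Q \circ P] \geq 0$ for all $Q \in \mathcal{P}(\gC_2,\gC_1)$, i.e. condition (2).

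I expect the only delicate point to be the bookkeeping: tracking which space each tensor and dual cone inhabits, and confirming that the swap identification is simultaneously compatible with the duality of the tensor products and with the trace pairing. There is no substantive analytic obstacle here, since everything ultimately rests on the bipolar theorem and a finite-dimensional trace identity.
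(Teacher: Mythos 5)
Your proof is correct and follows essentially the same route as the paper's: both recast condition (1) as $\hat{P} \in \gC_1^* \tmin \gC_2$ via Lemma \ref{lem:maps-vs-tensors}, identify the dual cone $(\gC_1^* \tmin \gC_2)^*$ with $\gC_2^* \tmax \gC_1$ under the canonical (swap) identification, verify the trace-pairing identity $\langle \hat{P}, \hat{Q}\rangle = \Tr[Q \circ P]$ on rank-one tensors, and conclude by the bipolar theorem. The only difference is presentational: the paper packages the identification into a single map $\iota: V_2^* \otimes V_1 \to (V_1^* \otimes V_2)^*$, while you unfold the same bookkeeping explicitly.
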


\begin{proof}
Let $\iota: V_2^* \otimes V_1 \to (V_1^* \otimes V_2)^*$ be the canonical identification. If $Q : V_2 \to V_1$ is a linear map, then we have
\[ (\iota(\hat{Q}))(\hat{P}) = \Tr[Q \circ P] ,\]
which is easy to check when $P$ and $Q$ have rank $1$ and the general case follows from linearity. Lemma \ref{lem:maps-vs-tensors} shows that condition (1) is equivalent to $\hat{P} \in \gC_1^* \tmin \gC_2$, and condition (2) is equivalent to $\hat{P} \in (\iota(\gC_2^* \tmax \gC_1))^*$. The result follows since
\[ \iota(\gC_2^* \tmax \gC_1) = (\gC_1^* \tmin \gC_2)^*. \qedhere \]
\end{proof}

\subsection{Proper cones associated to convex bodies and normed spaces}
\label{sec:properConesNormedSpaces}

Let $B \subset V$ be a convex body, i.e., a compact convex set with non-empty interior. We define the \emph{cone over $B$} as
\[\gC_B = \lset (t,x)\in \R \oplus V ~:~ t\geq 0, \ x\in tB\rset ,\]
which is a proper cone in $\R \oplus V$. It is an elementary fact that any proper cone is isomorphic to the cone over some convex body. We will sometimes consider cones $\gC_B$ over a symmetric convex body $B$, i.e., such that $-B=B$. Since symmetric convex bodies are unit balls of norms and vice-versa, we may equivalently consider cones of the form
\[
\gC_X = \lset (t,x)\in \R\oplus X ~:~t\geq \|x\|_X\rset,
\]
for a normed space $X$. Important examples arise from the $\ell_p$-spaces $\ell^n_p = \lb\R^n,\|\cdot\|_p\rb$ and we note that the Lorentz cones are given by $\gL_n=\gC_{\ell^n_2}$. 

It is not surprising that $(\gC_X,\gC_Y)$-positivity of certain maps from $\R\oplus X$ to $\R\oplus Y$ can be characterized using the normed spaces $X$ and $Y$. We say that a linear map $Q : \R \oplus X \to \R \oplus Y$ is a \emph{central map} if it has the form
\[ Q = \alpha \oplus P : (t,x) \mapsto (\alpha t, P(x)) \]
for $\alpha \in \R$ and $P :X \to Y$ a linear map. It is easy to verify that the linear map $\alpha \oplus P$ is
\begin{itemize}
\item $(\gC_X,\gC_Y)$-positive if and only if $\|P\|_{X\ra Y}\leq \alpha$.
\item $(\gC_X,\gC_Y)$-entanglement breaking if and only if $\|P\|_{N(X\ra Y)}\leq \alpha$.
\end{itemize}
Here, we used the \emph{nuclear norm} given by 
\[ 
\|P\|_N = \|P\|_{N(X \to Y)} = \inf\, \sum^n_{i=1} \|y_i\|_Y\|x^*_i\|_{X^*} ,
\]
where the infimum is over $n\in\N$ and decompositions
\[
P=\sum^n_{i=1} y_{i}x^*_i , \text{ with } y_1,\ldots ,y_n\in Y \text{ and } x^*_1,\ldots , x^*_n\in X^* .
\] 

\subsection{Basic properties of resilient cones and entanglement annihilation} \label{sec:coding}

In this section, we establish some basic properties of entanglement annihilating maps and resilient cones. We start by characterizing entanglement annihilating maps as maps which stay positive under certain transformations resembling the encoding/decoding operations (specifically, separable operations~\cite{chitambar2014everything}) from quantum information theory. 

Consider a proper cone $\gC$ inside a vector space $V$ and an integer $k \geq 1$. We say that a linear map $E:V \to V^{\otimes k}$ is a $\gC$-\emph{encoder} if it is $(\gC,\gC^{\tmax k})$-positive, and that a linear map $D :V^{\otimes k} \to V$ is a $\gC$-\emph{decoder} if it is $(\gC^{\tmin k},\gC)$-positive. Observe that $E$ is a $\gC$-encoder if and only if $E^*$ is a $\gC^*$-decoder.

\begin{thm}[Characterization of entanglement annihilation]
\label{thm:EAunderCoding}
Let $V_1$, $V_2$ be vector spaces, $\gC_1 \subset V_1$, $\gC_2 \subset V_2$ be proper cones and $P : V_1 \to V_2$ a linear map. The following are equivalent.
\begin{enumerate}
    \item The map $P$ is entanglement annihilating.
    \item For every $k \in \N$, every $\gC_1$-encoder $E : V_1 \to V_1^{\otimes k}$ and $\gC_2$-decoder $D : V_2^{\otimes k} \to V_2$, the map $D \circ P^{\otimes k} \circ E$ is $(\gC_1,\gC_2)$-positive.
\end{enumerate}
\end{thm}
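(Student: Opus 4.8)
The plan is to prove the two implications separately. The implication $(1)\Rightarrow(2)$ is a routine unwinding of the definitions, so I would dispatch it first; the real content lies in $(2)\Rightarrow(1)$, which I intend to prove by contraposition using a separation argument together with \emph{rank-one} encoders and decoders.

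For $(1)\Rightarrow(2)$, fix $k\in\N$, a $\gC_1$-encoder $E:V_1\to V_1^{\otimes k}$, and a $\gC_2$-decoder $D:V_2^{\otimes k}\to V_2$. Taking any $x\in\gC_1$, I would chase it through the three maps: $E(x)\in\gC_1^{\tmax k}$ because $E$ is an encoder, hence $P^{\otimes k}(E(x))\in\gC_2^{\tmin k}$ because $P$ is $k$-entanglement annihilating, and finally $D(P^{\otimes k}(E(x)))\in\gC_2$ because $D$ is a decoder. Thus $D\circ P^{\otimes k}\circ E$ sends $\gC_1$ into $\gC_2$, i.e.\ it is $(\gC_1,\gC_2)$-positive.

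For $(2)\Rightarrow(1)$, I would argue by contraposition. Suppose $P$ is not entanglement annihilating; then there is some $k\in\N$ and some $w\in\gC_1^{\tmax k}$ with $z:=P^{\otimes k}(w)\notin\gC_2^{\tmin k}$. Since $\gC_2^{\tmin k}$ is a closed convex cone, the bipolar theorem yields a functional $\psi$ in its dual cone $(\gC_2^{\tmin k})^*$ (which equals $(\gC_2^*)^{\tmax k}$) with $\psi(z)<0$. The key idea is to promote the separating data $(w,\psi)$ into a single encoder/decoder pair. Fix $x_0$ in the interior of $\gC_1$ and $\phi_0\in\gC_1^*$ normalized so that $\phi_0(x_0)=1$, and fix some $y_0\in\gC_2\setminus\{0\}$. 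Define the rank-one maps $E(x)=\phi_0(x)\,w$ and $D(u)=\psi(u)\,y_0$. Then $E(\gC_1)=\R_+\,w\subseteq\gC_1^{\tmax k}$, so $E$ is a $\gC_1$-encoder, and $D(\gC_2^{\tmin k})\subseteq\R_+\,y_0\subseteq\gC_2$ (using $\psi\geq 0$ on $\gC_2^{\tmin k}$), so $D$ is a $\gC_2$-decoder. Evaluating the composition at $x_0$ gives $(D\circ P^{\otimes k}\circ E)(x_0)=\psi(z)\,y_0$, a strictly negative multiple of $y_0$, which therefore lies outside the proper cone $\gC_2$. Hence $D\circ P^{\otimes k}\circ E$ is not $(\gC_1,\gC_2)$-positive, contradicting $(2)$.

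The main obstacle is conceptual rather than computational: recognizing that the entire family of encoders and decoders is already ``tested'' by the simplest rank-one maps, so that a single separating hyperplane for $\gC_2^{\tmin k}$ can be converted into an explicit encoder/decoder pair witnessing non-positivity. Once this is seen, the only facts needed are that maximal and minimal tensor powers of proper cones are proper (in particular closed, so that separation applies), and the defining property $\gC_2\cap(-\gC_2)=\{0\}$, which guarantees that a strictly negative multiple of $y_0$ escapes $\gC_2$.
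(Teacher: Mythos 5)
Your proof is correct and follows essentially the same route as the paper: the forward direction is the straightforward chase through encoder, tensor power, and decoder (the paper phrases it via its lemma on separable operations, but the content is the same), and the reverse direction is the paper's exact argument by contraposition, building rank-one encoder $E=\phi_0(\cdot)\,w$ and decoder $D=\psi(\cdot)\,y_0$ from a separating functional $\psi\in(\gC_2^*)^{\otimes_{\max}k}$ and using pointedness of $\gC_2$ to conclude non-positivity. Your explicit invocation of the bipolar theorem to produce $\psi$ is a detail the paper leaves implicit, but it is the same construction.
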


Before proving Theorem \ref{thm:EAunderCoding}, we introduce a lemma which will be useful later about stability of the class of entanglement annihilating maps.

\begin{lem}[Entanglement annihilation is preserved under separable operations]
For $i \in \{1,2,3,4\}$, consider a proper cone $\gC_i$ in a vector space $V_i$. Consider $(\gC_1,\gC^{\otimes_{\max} k}_2)$-positive maps $E_1,\ldots ,E_N:V_1\ra V^{\otimes k}_2$ and $(\gC^{\otimes_{\min} k}_3,\gC_4)$-positive maps $D_1,\ldots ,D_N:V^{\otimes k}_3\ra V_4$. For any $(\gC_2,\gC_3)$-entanglement annihilating map $P:V_2\ra V_3$, the map $Q:V_1\ra V_4$ given by
\[
Q = \sum^N_{i=1} D_i\circ P^{\otimes k} \circ E_i
\] 
is $(\gC_1,\gC_4)$-entanglement annihilating. 
\label{lem:EntAnnUnderTrans}
\end{lem}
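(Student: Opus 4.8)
The plan is to unfold the definition of entanglement annihilation for $Q$ directly: I must show that for every $m\in\N$ one has $Q^{\otimes m}\lb\gC_1^{\otimes_{\max} m}\rb\subseteq \gC_4^{\otimes_{\min} m}$. Fixing $m$ and expanding the tensor power of the sum defining $Q$, I would write
\[
Q^{\otimes m} = \sum_{\vec{\imath}\in\{1,\ldots,N\}^m} D_{\vec{\imath}}\circ P^{\otimes km}\circ E_{\vec{\imath}},
\]
where $E_{\vec{\imath}}=E_{i_1}\otimes\cdots\otimes E_{i_m}$ and $D_{\vec{\imath}}=D_{i_1}\otimes\cdots\otimes D_{i_m}$, and where I identify $\lb P^{\otimes k}\rb^{\otimes m}$ with $P^{\otimes km}$ via the canonical identifications $\lb V_2^{\otimes k}\rb^{\otimes m}=V_2^{\otimes km}$ and $\lb V_3^{\otimes k}\rb^{\otimes m}=V_3^{\otimes km}$. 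Since $\gC_4^{\otimes_{\min} m}$ is a convex cone, it then suffices to prove that each summand $D_{\vec{\imath}}\circ P^{\otimes km}\circ E_{\vec{\imath}}$ sends $\gC_1^{\otimes_{\max} m}$ into $\gC_4^{\otimes_{\min} m}$, and I would trace an arbitrary $z\in\gC_1^{\otimes_{\max} m}$ through the three stages.

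Two auxiliary facts, not contained in the excerpt but elementary, drive the argument. The first is functoriality: a tensor product of positive maps is positive for both tensor products. For the minimal product this is immediate because the minimal product is the convex hull of pure tensors; for the maximal product it follows by dualizing via Lemma \ref{lem:dual}. Thus $E_{\vec{\imath}}$ is $\lb\gC_1^{\otimes_{\max} m},\lb\gC_2^{\otimes_{\max} k}\rb^{\otimes_{\max} m}\rb$-positive and $D_{\vec{\imath}}$ is $\lb\lb\gC_3^{\otimes_{\min} k}\rb^{\otimes_{\min} m},\gC_4^{\otimes_{\min} m}\rb$-positive. The second is a pair of comparisons between grouped and ungrouped powers,
\[
\lb\gC_2^{\otimes_{\max} k}\rb^{\otimes_{\max} m}\subseteq \gC_2^{\otimes_{\max} km},\qquad \gC_3^{\otimes_{\min} km}\subseteq \lb\gC_3^{\otimes_{\min} k}\rb^{\otimes_{\min} m}.
\]
The minimal inclusion is clear from the generating description: a pure tensor $x_1\otimes\cdots\otimes x_{km}$ of elements of $\gC_3$, read as $m$ consecutive blocks of $k$ factors, is a pure tensor of $m$ elements of $\gC_3^{\otimes_{\min} k}$. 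The maximal inclusion follows from the minimal one applied to $\gC_2^*$, using the defining duality $\lb\gC^{\otimes_{\max} r}\rb^*=\lb\gC^*\rb^{\otimes_{\min} r}$ for proper cones together with the order-reversal of cone duality.

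With these in hand the computation is a chain: for $z\in\gC_1^{\otimes_{\max} m}$, functoriality of $E_{\vec{\imath}}$ and the maximal inclusion give $E_{\vec{\imath}}(z)\in\gC_2^{\otimes_{\max} km}$; entanglement annihilation of $P$ at level $km$ gives $P^{\otimes km}\lb E_{\vec{\imath}}(z)\rb\in\gC_3^{\otimes_{\min} km}$; and the minimal inclusion together with functoriality of $D_{\vec{\imath}}$ gives $D_{\vec{\imath}}\lb P^{\otimes km}\lb E_{\vec{\imath}}(z)\rb\rb\in\gC_4^{\otimes_{\min} m}$. Summing over $\vec{\imath}$ yields $Q^{\otimes m}(z)\in\gC_4^{\otimes_{\min} m}$, which completes the proof.

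I do not expect a genuine obstacle here; the entire difficulty is organizational. The one point requiring care is the bookkeeping of tensor legs, namely checking that the $m$-fold power of the $k$-fold power of $P$ really is $P^{\otimes km}$ acting on the correctly grouped cones, so that the entanglement annihilation hypothesis for $P$ at level $km$ applies verbatim. Once the two auxiliary facts are isolated, the result is just their concatenation.
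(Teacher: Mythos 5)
Your proposal is correct and follows essentially the same route as the paper's proof: expand $Q^{\otimes m}$ into a sum over multi-indices, push each summand through the chain $\gC_1^{\otimes_{\max} m}\ra \gC_2^{\otimes_{\max} km}\ra \gC_3^{\otimes_{\min} km}\ra \gC_4^{\otimes_{\min} m}$ using positivity of the $E_i$, $D_i$ and entanglement annihilation of $P$ at level $km$, then sum using convexity. The only difference is one of exposition: the paper simply asserts the grouping identity $\lb\gC_2^{\otimes_{\max} k}\rb^{\otimes_{\max} m}=\gC_2^{\otimes_{\max} km}$ and the corresponding minimal-product step, while you isolate and justify these auxiliary facts (functoriality and the grouping inclusions, the maximal one via duality), which is a sound way to fill in the same details.
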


\begin{proof}
For $l\in\N$ and any $i_1,\ldots ,i_l\in \lset 1,\ldots ,N\rset$ we have 
\[
\lb E_{i_1}\otimes \cdots \otimes E_{i_l}\rb\lb \gC^{\otimes_{\max} l}_1\rb \subseteq \lb \gC^{\otimes_{\max} k}_2\rb^{\otimes_{\max} l} = \gC^{\otimes_{\max} kl}_2 ,
\]
and
\[
\lb D_{i_1}\otimes \cdots \otimes D_{i_l}\rb\lb \gC^{\otimes_{\min} kl}_3\rb \subseteq D_{i_1}\lb \gC^{\otimes_{\min} k}_3\rb\otimes_{\min} \cdots \otimes_{\min} D_{i_l}\lb \gC^{\otimes_{\min} k}_3\rb \subseteq \gC^{\otimes_{\min} l}_3.
\]
Now, note that 
\[
Q^{\otimes l} = \sum_{i_1,\ldots ,i_l} \lb D_{i_1}\otimes \cdots \otimes D_{i_l}\rb\circ P^{\otimes kl} \circ \lb E_{i_1}\otimes \cdots \otimes E_{i_l}\rb,
\]
and since $P$ is entanglement annihilating we conclude that 
\[
Q^{\otimes l}\lb \gC^{\otimes_{\max} l}_1\rb \subseteq \gC^{\otimes_{\min} l}_4 .
\]
Since $l\in\N$ was arbitrary, we have shown that $Q$ is entanglement annihilating. 
\end{proof}

\begin{proof}[Proof of Theorem \ref{thm:EAunderCoding}]
Assuming (1), it follows from Lemma \ref{lem:EntAnnUnderTrans} applied with $N=1$ that $D \circ P^{\otimes k} \circ E$ is entanglement annihilating, hence positive.

Conversely, consider a linear map $P:V_1\ra V_2$ that is not entanglement annihilating. Then, there exists a $k\in\N$, an $x\in \gC^{\otimes_{\max} k}_1$, and a $w\in (\gC^*_2)^{\otimes_{\max} k}$ such that $\braket{w}{ P^{\otimes k}(x)}<0$. Define $E:V_1\ra V^{\otimes k}_1$ by $E = x\braket{v}{\cdot}$ for some $v\in \gC^*_1\setminus\lset 0\rset$ and $D:V^{\otimes k}_2\ra V_2$ by $D = y\braket{w}{\cdot}$ for some $y\in \gC_2\setminus\lset 0\rset$. Note that $E$ is a $\gC_1$-encoder and $D$ is $\gC_2$-decoder. However, we have
\[
D\circ P^{\otimes k}\circ E = \braket{w}{P^{\otimes k}(x)} y\braket{v}{\cdot} ,
\]
which is a negative multiple of a non-zero positive map and hence not positive as all involved cones are proper. 
\end{proof}

We conclude this section with two important implications of the previous results for the resilience property of cones. The first one will show that resilience is closed under retracts, and the second that every pair of a resilient cone with any proper cone is resilient as well. 

Say that a cone $\gC'\subset V'$ is a \emph{retract} of a cone $\gC\subset V$ if there exists a $(\gC',\gC)$-positive map $R:V'\ra V$ and a $(\gC,\gC')$-positive map $S:V\ra V'$ such that $\ident_{V'} = S\circ R$. For example, it can be checked easily that $\PSD(\C^{d'})$ is a retract of $\PSD\lb \C^{d}\rb$ if and only if $d'\leq d$, and that $\gL_{n'}$ is a retract of $\gL_n$ if and only if $n' \leq n$. Note also that retracts dualize: If $\gC'$ is a retract of $\gC$, then $\gC'^*$ is a retract of $\gC^*$. We have the following lemma:

\begin{lem}[Resilience is closed under retracts]\label{lem:resilienceRetracts}
Let $\gK\subset W$ be a proper cone and $\gC'\subset V'$ a retract of a proper cone $\gC\subset V$. If the pair $(\gC,\gK)$ \textup{(}or $(\gK,\gC)$\textup{)} is resilient, then the pair $(\gC',\gK)$ \textup{(}or $(\gK,\gC')$\textup{)} is resilient as well. In particular, if $\gC$ is resilient, then $\gC'$ is resilient as well.
\end{lem}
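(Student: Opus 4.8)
The plan is to prove the first statement directly and then obtain the remaining two by duality. Fix the retraction data: a $(\gC',\gC)$-positive map $R:V'\ra V$ and a $(\gC,\gC')$-positive map $S:V\ra V'$ with $\ident_{V'}=S\circ R$. Suppose $(\gC,\gK)$ is resilient and let $P':V'\ra W$ be an arbitrary $(\gC',\gK)$-entanglement annihilating map; the goal is to show that $P'$ is $(\gC',\gK)$-entanglement breaking. The idea is a sandwich: lift $P'$ to the larger cone $\gC$, invoke resilience there, and push the resulting decomposition back down through $R$.

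First I would set $P=P'\circ S:V\ra W$ and check that $P$ is $(\gC,\gK)$-entanglement annihilating. This follows from Lemma \ref{lem:EntAnnUnderTrans} applied with $k=1$, $N=1$, the encoder $E_1=S$ (which is $(\gC,\gC')$-positive) and the trivial decoder $D_1=\ident_W$ (which is $(\gK,\gK)$-positive): the lemma yields that $\ident_W\circ P'\circ S=P$ is entanglement annihilating. Since $(\gC,\gK)$ is resilient, $P$ is therefore $(\gC,\gK)$-entanglement breaking, i.e.\ $P=\sum_{i}x_i\,\phi_i$ with $x_i\in\gK$ and $\phi_i\in\gC^*$.

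Next I would push this back to $\gC'$ using the relation $\ident_{V'}=S\circ R$, which gives $P'=P'\circ S\circ R=P\circ R$. Composing the above decomposition with $R$ on the right yields $P'=\sum_i x_i\,(\phi_i\circ R)$. Because $R$ is $(\gC',\gC)$-positive, each functional $\phi_i\circ R$ lies in $(\gC')^*$ (for $y\in\gC'$ we have $R(y)\in\gC$, hence $\phi_i(R(y))\geq 0$), so this exhibits $P'$ as a $(\gC',\gK)$-entanglement breaking map. This proves the $(\gC,\gK)\Rightarrow(\gC',\gK)$ direction.

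For the $(\gK,\gC)\Rightarrow(\gK,\gC')$ direction I would dualize: by Lemma \ref{lem:dual}, $(\gK,\gC)$ is resilient if and only if $(\gC^*,\gK^*)$ is resilient, and since retracts dualize, $(\gC')^*$ is a retract of $\gC^*$; applying the case already proved to the cone $\gC^*$ and its retract $(\gC')^*$ gives that $((\gC')^*,\gK^*)$ is resilient, which by Lemma \ref{lem:dual} is equivalent to resilience of $(\gK,\gC')$. Finally, the ``in particular'' statement follows by chaining the two cases: if $(\gC,\gC)$ is resilient then $(\gC',\gC)$ is resilient by the first part (with $\gK=\gC$), and then $(\gC',\gC')$ is resilient by the second part (with $\gK=\gC'$). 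I do not expect a genuine obstacle here: the only non-formal ingredient is the stability of entanglement breaking under right-composition with a positive map, which is immediate from the rank-one decomposition; the mild subtlety is simply recognizing that pre-composing with $S$ preserves entanglement annihilation (via Lemma \ref{lem:EntAnnUnderTrans}) while post-composing with $R$ preserves entanglement breaking, the two facts meeting exactly at $\ident_{V'}=S\circ R$.
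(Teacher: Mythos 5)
Your proposal is correct and follows essentially the same route as the paper's proof: pre-compose with $S$ and use Lemma \ref{lem:EntAnnUnderTrans} (with $k=N=1$) to transfer entanglement annihilation to $\gC$, invoke resilience of $(\gC,\gK)$, and then recover $P'=P\circ R$ via $\ident_{V'}=S\circ R$, with the remaining cases handled by duality exactly as in the paper. The only differences are cosmetic: you spell out the rank-one decomposition when composing with $R$ and chain the two cases explicitly for the ``in particular'' claim, where the paper simply notes that an entanglement breaking map stays entanglement breaking under composition with a positive map.
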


\begin{proof}
Using duality, it is enough to consider the case where $(\gC,\gK)$ is resilient. By definition, we have $\ident_{V'} = S\circ R$ for a $(\gC',\gC)$-positive map $R:V'\ra V$ and a $(\gC,\gC')$-positive map $S:V\ra V'$. Consider an $(\gC',\gK)$-entanglement annihilating map $P:V'\ra W$. By Lemma \ref{lem:EntAnnUnderTrans}, the map $P\circ S:V\ra W$ is $(\gC,\gK)$-entanglement annihilating and hence entanglement breaking by resilience of $(\gC,\gK)$. We conclude that $P = P\circ S\circ R$ is entanglement breaking as well, and thus the pair $(\gC',\gK)$ is resilient.
\end{proof}

As a consequence of Lemma \ref{lem:resilienceRetracts} we conclude that resilience of $\PSD(\C^{d})$ implies resilience of $\PSD(\C^{d'})$ when $d'\leq d$, and that resilience of $\gL_n$ implies resilience of $\gL_{n'}$ whenever $n'\leq n$. Moreover, it can be checked \cite[Proposition S7]{aubrun2019universal} that the Lorentz cone $\gL_n$ is a retract of $\PSD(\C^d)$ for $d=2^n$, and therefore resilience of $\PSD(\C^{2^n})$ would imply resilience of $\gL_n$. We will show the latter in a different way. 

We will finish this section with another basic property of resilient cones:

\begin{lem}[Resilience implies resilience of pairs]\label{lem:resilienceImpliesResOfPair}
Let $\gC \subset V$ be a proper cone. Then the following are equivalent:
\begin{enumerate}
\item The cone $\gC$ is resilient. 
\item The pair $(\gC,\gC')$ is resilient for every proper cone $\gC'\subset V'$.
\item The pair $(\gC',\gC)$ is resilient for every proper cone $\gC'\subset V'$
\end{enumerate}
\end{lem}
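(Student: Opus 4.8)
The plan is to prove the two easy implications $(2)\Rightarrow(1)$ and $(3)\Rightarrow(1)$ directly, establish the substantive implication $(1)\Rightarrow(2)$ by contraposition, and finally deduce $(1)\Rightarrow(3)$ from $(1)\Rightarrow(2)$ using the duality lemmas. The implications toward $(1)$ are immediate: specializing $\gC'=\gC$ in either $(2)$ or $(3)$ asserts that $(\gC,\gC)$ is resilient, which is by definition the statement that $\gC$ is resilient.

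The core of the argument is $(1)\Rightarrow(2)$, which I would handle by contraposition. Assume $\gC$ is resilient, fix an arbitrary proper cone $\gC'\subset V'$, let $P:V\to V'$ be $(\gC,\gC')$-entanglement annihilating, and suppose toward a contradiction that $P$ is not entanglement breaking. By Lemma \ref{lem:dualityEB} there exists a positive map $Q\in\mathcal{P}(\gC',\gC)$ with $\Tr[Q\circ P]<0$. The key observation is that $Q\circ P$ is $(\gC,\gC)$-entanglement annihilating: this is Lemma \ref{lem:EntAnnUnderTrans} applied with $k=N=1$, encoder $E=\ident_V$ (which is $(\gC,\gC)$-positive) and decoder $D=Q$ (which is $(\gC',\gC)$-positive), so that $Q\circ P = D\circ P\circ E$ inherits entanglement annihilation from $P$. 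Since $\gC$ is resilient, $Q\circ P$ must therefore be $(\gC,\gC)$-entanglement breaking.

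To close the contradiction I would evaluate the trace of this entanglement breaking endomorphism. Applying Lemma \ref{lem:dualityEB} to $Q\circ P$ and testing against the positive map $\ident_V\in\mathcal{P}(\gC,\gC)$ gives $\Tr[Q\circ P]=\Tr[\ident_V\circ(Q\circ P)]\geq 0$, contradicting $\Tr[Q\circ P]<0$. (Equivalently, writing the entanglement breaking map as $Q\circ P=\sum_j x_j\phi_j$ with $x_j\in\gC$ and $\phi_j\in\gC^*$ yields $\Tr[Q\circ P]=\sum_j\phi_j(x_j)\geq 0$.) Hence $P$ is entanglement breaking, proving that $(\gC,\gC')$ is resilient.

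Finally, for $(1)\Rightarrow(3)$ I would pass to duals. By Lemma \ref{lem:dual}(4), resilience of $(\gC,\gC)$ is equivalent to resilience of $(\gC^*,\gC^*)$, so $\gC^*$ is resilient; applying the already-established implication $(1)\Rightarrow(2)$ to the cone $\gC^*$ shows that $(\gC^*,(\gC')^*)$ is resilient for every proper cone $\gC'$, and invoking Lemma \ref{lem:dual}(4) once more converts this into resilience of $(\gC',\gC)$, as desired. The only genuinely substantive point in the whole argument is the stability fact that post-composing an entanglement annihilating map with a positive map preserves entanglement annihilation; this is exactly the content of Lemma \ref{lem:EntAnnUnderTrans}, so I do not anticipate any real obstacle, the remaining steps being routine bookkeeping with the duality lemmas.
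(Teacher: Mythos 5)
Your proof is correct and follows essentially the same route as the paper: the substantive implication rests on exactly the same two ingredients, namely Lemma \ref{lem:dualityEB} to produce a positive $Q$ with $\Tr[Q\circ P]<0$ and Lemma \ref{lem:EntAnnUnderTrans} (with $k=1$) to see that $Q\circ P$ is entanglement annihilating, with the contradiction then read off from the trace. The only cosmetic difference is that you obtain $(1)\Rightarrow(3)$ by dualizing via Lemma \ref{lem:dual}(4) and reusing $(1)\Rightarrow(2)$, whereas the paper runs the symmetric direct argument with $P\circ Q$; both are equally valid.
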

\begin{proof}
It is clear that the second statement implies the first. To show the other direction assume that $\gC$ is resilient and that for some proper cone $\gC'\subset V'$ the pair $(\gC,\gC')$ is not resilient. Then, there exists a $(\gC,\gC')$-entanglement annihilating map $P:V\ra V'$ that is not $(\gC,\gC')$-entanglement breaking. By Lemma \ref{lem:dualityEB} there is a $(\gC',\gC)$-positive map $Q:V'\ra V$ such that $\Tr[Q \circ P] < 0$. Again by Lemma \ref{lem:dualityEB}, we conclude that $Q \circ P:V\ra V$ is not entanglement breaking, but by Lemma \ref{lem:EntAnnUnderTrans} (for $k=1$) it is entanglement annihilating. This contradicts the assumption. Equivalence of the first and third statements follows in a similar way. 
\end{proof}

\section{Symmetrization of positive maps between cones}
\label{sec:Symmetrization}

To show that a pair of cones $(\gC_1,\gC_2)$ is resilient, it is, a priori, necessary to check whether every entanglement annihilating map is entanglement breaking. In this section, we present two types of cones for which the resilience question can be reduced to entanglement annihilating maps with particular properties.  

\subsection{Cones with a symmetric base}
\label{sec:ConeWithSymBaseRed}

Consider a cone $\gC_X\subset \R\oplus X$ associated with a finite-dimensional normed space $X$ (see Section~\ref{sec:properConesNormedSpaces}). We show that the resilience of $\gC_X$ can be checked using only central maps. Recall that a central map has the form $\alpha \oplus P$ for $P:X \to X$, and that $\alpha \oplus P$ is $\gC_X$-entanglement breaking if and only if $\|P\|_{N(X \to X)} \leq \alpha$.

\begin{thm}
\label{thm:resilienceVstildePalpha}
For a finite-dimensional normed space $X$ the following are equivalent:
\begin{enumerate}
\item The cone $\gC_X$ is resilient.
\item Every $\gC_X$-entanglement annihilating central map is entanglement breaking.
\end{enumerate} 
\end{thm}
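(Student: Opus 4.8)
The implication $(1) \Rightarrow (2)$ is immediate, since a central map that is $\gC_X$-entanglement annihilating is in particular entanglement annihilating, hence entanglement breaking by resilience. The plan is therefore to establish $(2) \Rightarrow (1)$ by a symmetrization argument. The cone $\gC_X$ carries the reflection automorphism $\sigma : (t,x) \mapsto (t,-x)$, which is a $(\gC_X,\gC_X)$-automorphism precisely because the unit ball of $X$ is symmetric. I would use it to define the symmetrization $\Phi(M) = \tfrac12(M + \sigma \circ M \circ \sigma)$ on linear maps $M : \R \oplus X \to \R \oplus X$. Writing $M$ in block form with respect to $\R \oplus X$, conjugation by $\sigma$ flips the sign of the two off-diagonal blocks, so $\Phi$ is exactly the linear projection onto central maps $\alpha \oplus P_0$. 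Moreover, since $\Phi(M) = \tfrac12\, \ident \circ M \circ \ident + \tfrac12\, \sigma \circ M \circ \sigma$ is a separable combination with $\ident$ and $\sigma$ both $(\gC_X,\gC_X)$-positive, Lemma \ref{lem:EntAnnUnderTrans} (with $k=1$) shows that $\Phi$ sends entanglement annihilating maps to entanglement annihilating maps.

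To prove $(2) \Rightarrow (1)$, I would take an arbitrary $\gC_X$-entanglement annihilating map $P$ and invoke Lemma \ref{lem:dualityEB}: it suffices to show $\Tr[Q \circ P] \geq 0$ for every $(\gC_X,\gC_X)$-positive map $Q$. Fix such a $Q$. The key step is to apply the symmetrization not to $P$ but to the composition $Q \circ P$: by Lemma \ref{lem:EntAnnUnderTrans} the map $Q \circ P$ is again entanglement annihilating, hence so is the central map $\Phi(Q \circ P)$. By hypothesis $(2)$ this central map is entanglement breaking. Writing $Q \circ P$ in block form as $\bigl(\begin{smallmatrix} c & \chi \\ w & M \end{smallmatrix}\bigr)$, one has $\Phi(Q \circ P) = c \oplus M$, and the characterization of entanglement breaking central maps recalled before the statement gives $\|M\|_{N(X \to X)} \leq c$. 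Finally, conjugation by $\sigma$ preserves the trace, so $\Tr[Q \circ P] = \Tr[\Phi(Q \circ P)] = c + \Tr M$, and the elementary bound $|\Tr M| \leq \|M\|_{N(X \to X)} \leq c$ yields $\Tr[Q \circ P] \geq 0$. As $Q$ was arbitrary, Lemma \ref{lem:dualityEB} shows $P$ is entanglement breaking, which is $(1)$.

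The step I expect to be the main obstacle — and the reason the argument must symmetrize $Q \circ P$ rather than $P$ itself — is that symmetrization does \emph{not} preserve the property of failing to be entanglement breaking. Indeed $\Phi$ maps the entanglement breaking cone into itself, so $\Phi(P)$ being entanglement breaking carries no information about $P$, and averaging a non-entanglement-breaking map can land inside the entanglement breaking cone. The device that circumvents this is to pair with the test map $Q$ first: symmetrizing $Q \circ P$ converts the single scalar datum supplied by hypothesis $(2)$, namely the nuclear-norm bound $\|M\|_N \le c$ on the diagonal block, into exactly the trace positivity $\Tr[Q \circ P] \ge 0$ demanded by the duality in Lemma \ref{lem:dualityEB}, through the inequality $|\Tr M| \le \|M\|_N$. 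I would verify two routine points in passing: that $\sigma$ is genuinely a $(\gC_X,\gC_X)$-automorphism, so that every invocation of Lemma \ref{lem:EntAnnUnderTrans} is legitimate, and that the trace of an endomorphism of $\R \oplus X$ is the sum of the traces of its diagonal blocks, so that the off-diagonal blocks $\chi$ and $w$ of $Q \circ P$ are indeed irrelevant to $\Tr[Q \circ P]$.
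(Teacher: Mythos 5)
Your proof is correct and takes essentially the same approach as the paper: both arguments symmetrize the composition $Q \circ P$ (not $P$ itself) with the reflection $(t,x)\mapsto(t,-x)$, invoke Lemma \ref{lem:EntAnnUnderTrans} to keep entanglement annihilation, observe that the symmetrization is central and trace-preserving, and close the loop with the duality of Lemma \ref{lem:dualityEB}. The only cosmetic difference is that the paper argues by contrapositive (producing a central map $S$ with $\Tr[S]<0$, hence not entanglement breaking), whereas you argue directly through the nuclear-norm characterization $\|M\|_{N(X\to X)}\leq c$ together with $|\Tr M|\leq \|M\|_{N(X\to X)}$ — the same mechanism in mirror image.
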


\begin{proof}
It is obvious that (1) implies (2). Conversely, assume that $\gC_X$ is not resilient and let $R : \R \oplus X \to \R \oplus X$ a map which is $\gC_X$-entanglement annihilating and not $\gC_X$-entanglement breaking. By Lemma \ref{lem:dualityEB}, there is a $\gC_X$-positive map $Q$ such that $\Tr\lbr Q\circ R\rbr<0$. Let $A : \R \oplus X \to \R \oplus X$ be the automorphism of $\gC_X$ defined by $A(t,x)=(t,-x)$ and 
\[S= \frac{1}{2} \left( Q \circ R + A \circ Q \circ R \circ A \right). \]
It is easy to check that $S$ is a central map. Since $A$ and $Q$ are $\gC_X$-positive, we conclude by Lemma \ref{lem:EntAnnUnderTrans} that $S$ is entanglement annihilating. Moreover, since
\[
\Tr\lbr S \rbr = \Tr\lbr Q\circ R\rbr<0 ,
\]
an application of Lemma \ref{lem:dualityEB} shows that $\tilde{P}_\alpha$ is not entanglement breaking.
\end{proof}

We will apply this theorem in Section \ref{sec:ConeSymBase} to relate resilience of the cone $\gC_X$ to properties of the normed space $X$. In the next section, we will consider cones with enough symmetries to reduce resilience to a much smaller class of maps. 

\subsection{Cones with enough symmetries and twirling to isotropic maps}

\label{sec:EnoughSymm}

Let $V$ be an $n$-dimensional Euclidean space, which we identify with $\R^n$. Given a convex body $B \subset V$, we say that an orthogonal map $g \in O_n$ is an isometry of $B$ if $g(B)=B$. The set of isometries of $B$, which we denote $\Iso(B)$, is a closed subgroup of $O_n$. 
We say that $B$ has \emph{enough symmetries} if $\Iso(B)' = \R 1$; here $G'$ denotes the commutant of $G$, i.e., the set of linear maps $S : V \to V$ such that $gS=Sg$ for every $g \in G$. Note that a convex body $B$ with enough symmetries has centroid at the origin; in particular $0 \in \inter(B)$. Slightly abusing notation, we will sometimes say that a cone $\gC$ has enough symmetries if there is a base $B$ with enough symmetries satisfying $\gC=\gC_B$. The family of cones with enough symmetries includes the cones $\gC_{\ell_p}$ and more generally the cones $\gC_X$ for normed spaces with enough symmetries~\cite{tomczak1989banach}. Moreover, it also contains the positive semidefinite cones $\PSD(\C^d)$, a fact that has often been exploited in entanglement distillation~\cite{werner1989quantum,divincenzo2000evidence} and which inspired the techniques developed here.

We denote by $\gC_B \subset \R^{n+1}$ the cone over $B$. If $g \in \Iso(B)$ is an isometry, we denote by $\tilde{g} : \R^{n+1} \to \R^{n+1}$ the automorphism of $\gC_B$ defined by $\tilde{g}(t,x)=(t,gx)$ for $(t,x) \in \R \oplus \R^n$. Denote by $\pi_1$ and $\pi_2$ the orthogonal projections defined as
\[ \pi_1(t,x)=(t,0), \ \ \ \pi_2(t,x)=(0,x) \]
for $(t,x) \in \R \oplus \R^n$.

Let $B \subset \R^n$ be a convex body with enough symmetries. We define the \emph{twirling operator} $\tau$ as follows: If $L: \R^{n+1} \ra \R^{n+1}$ is a linear map, then $\tau[L] : \R^{n+1} \ra \R^{n+1}$ if defined as
\[ \tau[L] := \int_{\Iso(B)} \tilde{g}^{-1} \circ L \circ \tilde{g} \, \mathrm{d}g \]
where the integral is with respect to the normalized Haar measure on $\Iso(B)$.

\begin{prop}[Twirling and isotropic maps]
Let $B \subset \R^n$ be a convex body with enough symmetries and $L: \R^{n+1} \ra \R^{n+1}$ a linear map. Then
\[ \tau[L] = \alpha \pi_1 + \beta \pi_2 ,\]
where $\alpha = \Tr(\pi_1 L \pi_1)$ and $\beta = \frac{1}{n} \Tr (\pi_2 L \pi_2).$
\label{prop:TwirlAndIso}
\end{prop}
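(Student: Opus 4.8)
The plan is to exploit the fact that the twirl $\tau$ acts blockwise with respect to the decomposition $\ident_{\R^{n+1}} = \pi_1 + \pi_2$, and then to identify each of the four resulting blocks separately. First I would record the elementary observation that every $\tilde{g}$ commutes with both $\pi_1$ and $\pi_2$: indeed $\tilde{g}(t,x)=(t,gx)$ acts as the identity on the first coordinate and leaves the splitting $\R \oplus \R^n$ invariant. Consequently $\tilde{g}^{-1}\pi_i = \pi_i \tilde{g}^{-1}$ and $\pi_j \tilde{g} = \tilde{g}\pi_j$, so that for all $i,j \in \{1,2\}$ one has $\pi_i \tau[L]\pi_j = \tau[\pi_i L \pi_j]$. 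Writing $L = \pi_1 L \pi_1 + \pi_1 L \pi_2 + \pi_2 L \pi_1 + \pi_2 L \pi_2$, it therefore suffices to twirl each block individually.

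For the diagonal blocks the computation is direct. The block $\pi_1 L \pi_1$ is supported on the one-dimensional first factor, hence equals $\alpha \pi_1$ with $\alpha = \Tr(\pi_1 L \pi_1)$, and it is already $\tilde{g}$-invariant, so it is fixed by $\tau$. For the $(2,2)$ block I would identify $\pi_2 L \pi_2$ with a matrix $M \in \Mat_n(\R)$ acting on the second factor; conjugation by $\tilde{g}$ then corresponds to conjugation of $M$ by $g$, so the twirl of this block is $\int_{\Iso(B)} g^{-1} M g \, \mathrm{d}g$. A standard averaging argument shows this integral commutes with every $h \in \Iso(B)$, hence lies in the commutant $\Iso(B)' = \R 1$; it is thus a scalar $\beta \cdot 1$, and taking traces (using $\Tr(g^{-1}Mg)=\Tr M$ and $\Tr(\beta 1) = \beta n$ on $\R^n$) yields $\beta = \frac{1}{n}\Tr M = \frac{1}{n}\Tr(\pi_2 L \pi_2)$. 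This block therefore twirls to $\beta \pi_2$.

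The crux is the vanishing of the two off-diagonal blocks. Both $\pi_1 L \pi_2$ and $\pi_2 L \pi_1$ are described by a single vector $v \in \R^n$ (a functional $x \mapsto \langle v,x\rangle$ in the first case, a map $t \mapsto tv$ in the second), and conjugation by $\tilde{g}$ transports $v$ to $g^{-1}v$. Their twirls are thus governed by $\int_{\Iso(B)} g\, \mathrm{d}g$, which is the orthogonal projection onto the subspace of $\Iso(B)$-invariant vectors. Enough symmetries forces this subspace to be trivial: the projection is an idempotent lying in $\Iso(B)' = \R 1$, hence equals $0$ or $\ident$, and the latter would make $\Iso(B)$ trivial, contradicting enough symmetries (this is the same mechanism behind the centroid-at-origin property). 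Hence $\int_{\Iso(B)} g\, \mathrm{d}g = 0$ and both off-diagonal blocks twirl to $0$. Summing the four contributions gives $\tau[L] = \alpha \pi_1 + \beta \pi_2$, as claimed. I expect the Schur-type reduction of the $(2,2)$ block and the vanishing of the off-diagonal blocks to be the only substantive points; everything else is block bookkeeping.
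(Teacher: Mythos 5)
Your proof is correct and takes essentially the same route as the paper's: both rest on the invariance of the Haar measure and on $\Iso(B)' = \R 1$ forcing the off-diagonal blocks to vanish and the lower-right block to be a scalar multiple of the identity. The only organizational difference is that the paper twirls $L$ as a whole and then exploits that $\tau[L]$ commutes with every $\tilde{g}$, whereas you twirl the four blocks $\pi_i L \pi_j$ separately (justifying the vanishing of the off-diagonal blocks via the averaged idempotent $\int_{\Iso(B)} g \, \mathrm{d}g$); the underlying Schur-type computations are identical.
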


\begin{proof}
Consider the block matrix $\tau[L]=\begin{pmatrix} \alpha & y^T \\ x & A \end{pmatrix}$ with $\alpha \in \R$, $x,y \in \R^n$ and $A \in \M_n(\R)$. For every $g \in \Iso(B)$, $\tau[L]$ commutes with $\tilde{g}$ by invariance of the Haar measure. It follows that $gx=x$, $y^T g =g$ and $Ag=gA$. Since $B$ has enough symmetries, we have $x=y=0$ and $A=\beta \id$ for some $\beta \in \R$. We proved that $\tau[L] = \alpha \pi_1 + \beta \pi_2$ and the values for $\alpha$ and $\beta$ are easily computed.
\end{proof}

For $\alpha,\beta\in \R$ and $n\in\N$ we define the \emph{isotropic map} $I_{\alpha,\beta} : \R^{n+1} \to \R^{n+1}$ by
\begin{equation}
I_{\alpha,\beta} = \alpha \pi_1 + \beta \pi_2 .
\label{equ:IsoMap}
\end{equation}
We will sometimes use the notation $I_\alpha := I_{\alpha,1}$ to denote a normalized isotropic maps. By Proposition \ref{prop:TwirlAndIso}, applying the twirling operator always produces an isotropic map. The following lemma characterizes some elementary properties of the isotropic maps: 

\begin{lem}[Properties of isotropic maps]
Let $B \subset \R^n$ be a convex body with enough symmetries. For $\alpha, \beta \in \R$, consider the isotropic map $I_{\alpha,\beta} = \alpha \pi_1 + \beta \pi_2$. We have the following: 
\begin{enumerate}
\item The isotropic map $I_{\alpha,\beta}$ is $\gC_B$-positive if and only if $\alpha \geq 0$ and $\beta B \subseteq \alpha B$. 
\item The map $I_{\alpha,\beta}$ is $\gC_B$-entanglement breaking if and only if $\alpha \geq 0$ and $\beta\gamma \geq - \alpha/n$ for every $\gamma\in\R$ such that $\gamma B \subseteq B$. In particular, $\beta \geq - \alpha/n$ whenever $I_{\alpha,\beta}$ is $\gC_B$-entanglement breaking.
\end{enumerate}
If in addition $-B=B$, then we have the following refinement:
\begin{enumerate}
\item[(3)] The map $I_{\alpha,\beta}$ is $\gC_B$-positive if and only if $|\beta|\leq \alpha$. 
\item[(4)] The map $I_{\alpha,\beta}$ is $\gC_B$-entanglement breaking if and only if $|\beta| \leq \alpha/n$. 
\end{enumerate}
\label{lem:PropsIso}
\end{lem}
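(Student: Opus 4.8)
The four statements split into two positivity claims (1), (3) and two entanglement-breaking claims (2), (4). The plan is to prove (1) first, deduce (3) from it, then prove (2), and deduce (4) from it.

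For (1), I would test positivity on the generators of $\gC_B$. Every nonzero element of $\gC_B$ is a nonnegative multiple of some $(1,b)$ with $b\in B$, and $I_{\alpha,\beta}(1,b)=(\alpha,\beta b)$. This lies in $\gC_B$ for all $b\in B$ precisely when $\alpha\geq 0$ and $\beta b\in\alpha B$ for every $b\in B$, i.e. $\beta B\subseteq\alpha B$, which is exactly condition (1). For (3), assuming $-B=B$, I would use that the Minkowski gauge of $B$ is a genuine norm, so $\beta B=|\beta|B$ and, for $s\geq 0$, $sB\subseteq B$ iff $s\leq 1$; then $\alpha\geq 0$ together with $\beta B\subseteq\alpha B$ reduces to $|\beta|\leq\alpha$ (the degenerate $\alpha=0$ case forcing $\beta=0$).

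The heart of the lemma is (2), where the plan is to combine trace duality with twirling. By Lemma~\ref{lem:dualityEB}, $I_{\alpha,\beta}$ is entanglement breaking iff $\Tr[Q\circ I_{\alpha,\beta}]\geq 0$ for every $\gC_B$-positive map $Q:\R^{n+1}\to\R^{n+1}$. Since each $\tilde{g}$ is an automorphism of $\gC_B$ and the isotropic map $I_{\alpha,\beta}$ commutes with every $\tilde{g}$, cyclicity of the trace and invariance of the Haar measure give $\Tr[Q\circ I_{\alpha,\beta}]=\Tr[\tau[Q]\circ I_{\alpha,\beta}]$; moreover $\tau[Q]$ is again $\gC_B$-positive (an average of the positive maps $\tilde{g}^{-1}\circ Q\circ\tilde{g}$) and isotropic by Proposition~\ref{prop:TwirlAndIso}. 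Hence it suffices to test against positive isotropic maps $I_{\alpha',\beta'}$. A direct computation using $\pi_1\pi_2=0$, $\pi_i^2=\pi_i$, $\Tr\pi_1=1$ and $\Tr\pi_2=n$ yields $\Tr[I_{\alpha',\beta'}\circ I_{\alpha,\beta}]=\alpha\alpha'+n\beta\beta'$. By part (1), the positive isotropic maps are exactly those with $\alpha'\geq 0$ and $\beta'=\alpha'\gamma$ for some $\gamma$ with $\gamma B\subseteq B$; the case $\alpha'=0$ is vacuous, and for $\alpha'>0$ dividing by $\alpha'$ turns the inequality into $\alpha+n\beta\gamma\geq 0$, i.e. $\beta\gamma\geq-\alpha/n$, for all such $\gamma$. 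Taking $\gamma=0$ (always admissible since $0\in B$) recovers $\alpha\geq 0$, and taking $\gamma=1$ gives the stated consequence $\beta\geq-\alpha/n$; this establishes (2).

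Finally, (4) follows by specializing (2) to $-B=B$, where $\{\gamma:\gamma B\subseteq B\}=[-1,1]$, so the family of constraints $\beta\gamma\geq-\alpha/n$ is worst at $\gamma=-\sgn\beta$, where it reads $|\beta|\leq\alpha/n$. The one step I expect to require care is the twirling reduction in (2): one must verify that replacing a general positive $Q$ by its isotropic average $\tau[Q]$ neither changes the trace pairing with $I_{\alpha,\beta}$ nor leaves the positive cone, both of which rest on $I_{\alpha,\beta}$ being fixed under conjugation by the symmetry group. Everything else is a short computation or an immediate consequence of part (1).
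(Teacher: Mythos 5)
Your proof is correct and follows essentially the same route as the paper's: a direct check on elements $(t,x)\in\gC_B$ for (1) and (3), and for (2) and (4) the combination of trace duality (Lemma \ref{lem:dualityEB}) with the twirling reduction of Proposition \ref{prop:TwirlAndIso}, yielding the pairing $\alpha\alpha'+n\beta\beta'\geq 0$ against positive isotropic maps. Your handling of the degenerate case $\alpha'=0$ and the explicit parametrization $\beta'=\alpha'\gamma$ merely spell out details the paper leaves implicit.
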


\begin{proof}
Given $(t,x) \in \R^{n+1}$, we have $I_{\alpha,\beta}(t,x)=(\alpha t,\beta x)$. The condition $\alpha \geq 0$ is necessary for $I_{\alpha,\beta}$ to be positive. Since $(t,x) \in \gC_B$ if and only if $t \geq 0$ and $x \in t B$, (1) follows. If $B$ is symmetric, this is clearly equivalent to $|\beta| \leq \alpha$.

By duality (Lemma \ref{lem:dualityEB}), the map $I_{\alpha,\beta}$ is $\gC_B$-entanglement breaking if and only if $\tr[Q \circ I_{\alpha,\beta}] \geq 0$ for every $Q \in \mathcal{P}(\gC_B)$. By Proposition \ref{prop:TwirlAndIso}, we have $\tau[Q] = I_{\gamma,\delta}$ for some $\gamma,\delta \in \R$. Using cyclicity of the trace, we compute
\[ \Tr[Q \circ I_{\alpha,\beta}] = \Tr[ Q \circ \tau[I_{\alpha,\beta}] ] = \Tr[ \tau[Q] \circ I_{\alpha,\beta} ] = \Tr [I_{\gamma,\delta} \circ I_{\alpha,\beta}] = \alpha \gamma + n \beta \delta.\]
Since $Q \in \mathcal{P}(\gC_B)$ implies $\delta B \subset \gamma B$, (2) follows. In the symmetric case, it is enough to consider the extremal values $\delta = \pm \gamma$, giving (4).
\end{proof}

By applying the twirling technique, we will now reduce the question of resilience for cones with enough symmetries to determining whether every entanglement annihilating isotropic map is entanglement breaking. For this we need to ensure that the isotropic map obtained from twirling a positive and non-entanglement breaking map is non-entanglement breaking itself. We start with an easy lemma: 

\begin{lem}
Let $B \subset \R^n$ be a convex body with enough symmetries and $P \in \mathcal{P}(\gC_B)$ a positive map such that $\Tr[P] < 0$. Then the isotropic map $\tau[P]$ is not $\gC_B$-entanglement breaking.
\label{lem:BasicLemNTr}
\end{lem}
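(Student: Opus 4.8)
The plan is to reduce the statement to the explicit description of twirled maps as isotropic maps and then invoke the entanglement-breaking criterion for isotropic maps. By Proposition~\ref{prop:TwirlAndIso}, the twirled map is isotropic: $\tau[P] = I_{\alpha,\beta}$ with $\alpha = \Tr(\pi_1 P \pi_1)$ and $\beta = \frac{1}{n}\Tr(\pi_2 P \pi_2)$. The single computational input I need is that twirling preserves the trace, which is immediate since $\tau[P]$ is an average of the conjugates $\tilde{g}^{-1} \circ P \circ \tilde{g}$ and the trace is invariant under conjugation; hence $\Tr[\tau[P]] = \Tr[P]$.

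Next I would compute the trace of an isotropic map directly from its definition \eqref{equ:IsoMap}. Since $\pi_1$ and $\pi_2$ are complementary orthogonal projections of ranks $1$ and $n$ respectively, one has $\Tr[I_{\alpha,\beta}] = \alpha \Tr[\pi_1] + \beta \Tr[\pi_2] = \alpha + n\beta$. Combining this with the previous observation yields the key identity $\alpha + n\beta = \Tr[\tau[P]] = \Tr[P] < 0$.

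To conclude, I would argue by contradiction using Lemma~\ref{lem:PropsIso}(2). If $\tau[P] = I_{\alpha,\beta}$ were $\gC_B$-entanglement breaking, then that lemma (applied with the admissible choice $\gamma = 1$, since trivially $B \subseteq B$) would force $\beta \geq -\alpha/n$, that is $\alpha + n\beta \geq 0$. This contradicts the strict inequality $\alpha + n\beta < 0$ established above, and therefore $\tau[P]$ cannot be entanglement breaking.

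I do not expect any substantial obstacle: once Proposition~\ref{prop:TwirlAndIso} and Lemma~\ref{lem:PropsIso} are available, the argument is a short deduction. The only point requiring a little care is recognizing that the hypothesis $\Tr[P] < 0$ is exactly the negation of the necessary condition $\alpha + n\beta \geq 0$ for an isotropic map to be entanglement breaking. It is worth noting that the positivity assumption $P \in \mathcal{P}(\gC_B)$ plays no role in deriving the non-entanglement-breaking conclusion itself; it is included because in the intended applications it additionally guarantees that $\tau[P]$ stays $\gC_B$-positive, so that $\tau[P]$ is a \emph{positive} map that fails to be entanglement breaking.
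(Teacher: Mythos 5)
Your proof is correct and follows essentially the same route as the paper: both apply Proposition~\ref{prop:TwirlAndIso}, establish the key identity $\alpha + n\beta = \Tr[P] < 0$, and conclude via Lemma~\ref{lem:PropsIso}(2) that $\tau[P]$ cannot be entanglement breaking (the paper derives the identity by computing $\Tr[P] = \Tr(\pi_1 P) + \Tr(\pi_2 P)$ directly, whereas you use trace-invariance of twirling plus $\Tr[I_{\alpha,\beta}] = \alpha + n\beta$, which is the same computation in a different order). Your closing observation that positivity of $P$ is not needed for the conclusion, but only to guarantee $\gC_B$-positivity of $\tau[P]$ in the application (Theorem~\ref{thm:TwirlingReduction}), is also accurate.
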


\begin{proof}
By Proposition \ref{prop:TwirlAndIso}, we have
$\tau[P] = \alpha \pi_1 + \beta \pi_2 $
with $\alpha = \Tr (\pi_1 L \pi_1)$ and $\beta = \frac{1}{n} \Tr (\pi_2 L \pi_2)$. We also have $\Tr[P] = \Tr(\pi_1 L) + \Tr (\pi_2 L) = \alpha +n \beta$. It follows that $\alpha + n \beta <0$ and therefore $\beta < -\alpha/n$. By Lemma \ref{lem:PropsIso}, the map $\tau[P]$ is not entanglement breaking.
\end{proof}

We can now prove the following theorem: 

\begin{thm}[Twirling with filter]
Let $B \subset \R^n$ be a convex body with enough symmetries and $\gC \subset V$ any proper cone.
\begin{enumerate}
\item If $P \in \mathcal{P}(\gC_B,\gC)$ is a positive map which is not $(\gC_B,\gC)$-entanglement breaking, there exists a positive map $Q \in \mathcal{P}(\gC,\gC_B)$ such that the isotropic map $\tau \lbr Q\circ P\rbr$ is $\gC_B$-positive and not $\gC_B$-entanglement breaking. 
\item If $P \in \mathcal{P}(\gC,\gC_B)$ is a positive map which is not $(\gC,\gC_B)$-entanglement breaking, there exists a positive map $R \in \mathcal{P}(\gC_B,\gC)$ such that the isotropic map $\tau \lbr P\circ R\rbr$ is $\gC_B$-positive and not $\gC_B$-entanglement breaking.
\end{enumerate}
\label{thm:TwirlingReduction}
\end{thm}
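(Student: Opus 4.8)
The plan is to observe that the ``filter'' map demanded by the theorem is precisely the witness supplied by the trace-duality characterization of entanglement breaking maps (Lemma \ref{lem:dualityEB}), and that the twirling operator $\tau$ interacts trivially with both positivity and the trace. I would treat part (1) in detail and then obtain part (2) by the same argument with the roles of the cones exchanged.

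First I would record two elementary facts about $\tau$. Since $\tau[L] = \int_{\Iso(B)} \tilde{g}^{-1}\circ L\circ \tilde{g}\,\mathrm{d}g$ is an average of conjugates of $L$ by the automorphisms $\tilde{g}$ of $\gC_B$, if $L \in \mathcal{P}(\gC_B)$ then each integrand $\tilde{g}^{-1}\circ L\circ \tilde{g}$ is $\gC_B$-positive, and an average of $\gC_B$-positive maps against a probability measure is again $\gC_B$-positive; hence $\tau[L]$ is $\gC_B$-positive. Moreover, cyclicity of the trace gives $\Tr[\tilde{g}^{-1}\circ L\circ \tilde{g}]=\Tr[L]$ for every $g$, so $\Tr[\tau[L]]=\Tr[L]$.

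Now for part (1): given $P \in \mathcal{P}(\gC_B,\gC)$ that is not $(\gC_B,\gC)$-entanglement breaking, Lemma \ref{lem:dualityEB} (applied with $V_1 = \R\oplus\R^n$ and $V_2 = V$) yields a positive map $Q \in \mathcal{P}(\gC,\gC_B)$ with $\Tr[Q\circ P] < 0$. The composition sends $\gC_B$ into $\gC$ and then back into $\gC_B$, so $Q\circ P \in \mathcal{P}(\gC_B)$ is $\gC_B$-positive; by the first observation $\tau[Q\circ P]$ is $\gC_B$-positive, as required. By the second observation $\Tr[\tau[Q\circ P]] = \Tr[Q\circ P] < 0$, so $\tau[Q\circ P]$ is a positive map with negative trace and Lemma \ref{lem:BasicLemNTr} shows it is not $\gC_B$-entanglement breaking.

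For part (2) the argument is identical after exchanging the cones: Lemma \ref{lem:dualityEB} applied to the non-entanglement-breaking map $P \in \mathcal{P}(\gC,\gC_B)$ produces a positive map $R \in \mathcal{P}(\gC_B,\gC)$ with $\Tr[R\circ P] < 0$, and cyclicity gives $\Tr[P\circ R] = \Tr[R\circ P] < 0$. The map $P\circ R \in \mathcal{P}(\gC_B)$ is an endomorphism of $\R\oplus\R^n$, hence $\tau[P\circ R]$ is defined and $\gC_B$-positive with negative trace, and Lemma \ref{lem:BasicLemNTr} applies again. There is no real obstacle beyond correctly invoking trace-duality to manufacture the filter; the only point requiring care is the bookkeeping of which composition order ($Q\circ P$ versus $P\circ R$) yields an endomorphism of $\R\oplus\R^n$ on which $\tau$ acts, which is resolved by cyclicity of the trace.
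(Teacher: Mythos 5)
Your proof is correct and follows essentially the same route as the paper: trace duality (Lemma \ref{lem:dualityEB}) supplies the filter $Q$ (resp.\ $R$), the composition $Q\circ P$ (resp.\ $P\circ R$) is $\gC_B$-positive with negative trace, and Lemma \ref{lem:BasicLemNTr} concludes. The only (harmless) detour is that you apply Lemma \ref{lem:BasicLemNTr} to the already-twirled map $\tau[Q\circ P]$ rather than directly to $Q\circ P$ as the paper does; this makes your trace-preservation observation necessary and implicitly uses the trivial fact that isotropic maps are fixed by $\tau$, whereas the direct application needs neither.
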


\begin{proof}
By duality (Lemma \ref{lem:dualityEB}), there exists a positive map $Q \in \mathcal{P}(\gC,\gC_B)$ such that $\Tr\lbr Q\circ P \rbr <0$.
Applying Lemma \ref{lem:BasicLemNTr} to the $\gC_B$-positive map $Q\circ P$ shows that the isotropic map $\tau \lbr Q\circ P\rbr$ is not $\gC_B$-entanglement breaking. Clearly, $\tau \lbr Q\circ P\rbr$ is $\gC_B$-positive as a twirl of a positive map. This finishes the proof of the first part, and the second part is proved similarly. \end{proof}

We have the following corollary: 

\begin{cor}[Resilience of cones with enough symmetries]
Let $B \subset \R^n$ be a convex body with enough symmetries. The following are equivalent:
\begin{enumerate}
\item The cone $\gC_B$ is resilient.
\item Every isotropic map on $\R^{n+1}$ which is $\gC_B$-entanglement annihilating is also $\gC_B$-entanglement breaking.
\end{enumerate}
\label{cor:RedToIso}
\end{cor}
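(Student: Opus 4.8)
The plan is to prove the two implications separately, concentrating all the difficulty in the direction $(2) \Rightarrow (1)$. The implication $(1) \Rightarrow (2)$ is immediate: if $\gC_B$ is resilient then \emph{every} $\gC_B$-entanglement annihilating map is entanglement breaking, and isotropic maps form a special case.

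For $(2) \Rightarrow (1)$ I would argue by contraposition. Suppose $\gC_B$ is not resilient and fix a $\gC_B$-entanglement annihilating map $P : \R^{n+1} \to \R^{n+1}$ that is not $\gC_B$-entanglement breaking. Taking $k = 1$ in the definition gives $P(\gC_B) \subseteq \gC_B$, so $P$ is in particular $\gC_B$-positive and thus satisfies the hypotheses of Theorem \ref{thm:TwirlingReduction}(1) applied with $\gC = \gC_B$. That theorem yields a positive map $Q \in \mathcal{P}(\gC_B)$ for which the isotropic map $\tau[Q \circ P]$ is $\gC_B$-positive and, crucially, \emph{not} $\gC_B$-entanglement breaking. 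It then remains only to upgrade ``$\gC_B$-positive'' to ``$\gC_B$-entanglement annihilating'', for then $\tau[Q \circ P]$ would be an isotropic, entanglement annihilating, non-entanglement-breaking map, contradicting $(2)$.

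The heart of the argument — and the step I expect to be the main obstacle — is showing that $\tau[Q \circ P]$ is entanglement annihilating. First, $Q \circ P$ is entanglement annihilating by Lemma \ref{lem:EntAnnUnderTrans} (with $N = 1$, $k = 1$, encoder $\ident$ and decoder $Q$). Next, for each $g \in \Iso(B)$ the automorphism $\tilde g$ and its inverse are $(\gC_B,\gC_B)$-positive, so each conjugate $\tilde g^{-1} \circ (Q \circ P) \circ \tilde g$ is again entanglement annihilating by the same lemma. The delicate point is that entanglement annihilation is \emph{not} stable under arbitrary convex combinations: the tensor powers $(\cdot)^{\otimes k}$ produce cross terms, so one cannot average blindly. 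What rescues the argument is that all the conjugates share the \emph{same} underlying map $Q \circ P$; hence every finite Riemann sum $\sum_i c_i\, \tilde g_i^{-1} \circ (Q \circ P) \circ \tilde g_i$ with weights $c_i > 0$ has exactly the shape treated by Lemma \ref{lem:EntAnnUnderTrans} (absorbing each $c_i$ into the decoder) and is therefore entanglement annihilating.

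To pass from the Riemann sums to the Haar integral $\tau[Q \circ P]$, I would invoke the closedness of the entanglement annihilation condition: for each fixed $k$ and each pair of test tensors $x \in \gC_B^{\otimes_{\max} k}$ and $w \in (\gC_B^*)^{\otimes_{\max} k}$, the map $L \mapsto \braket{w}{L^{\otimes k}(x)}$ is continuous, so the defining inequalities $\braket{w}{L^{\otimes k}(x)} \geq 0$ cut out a closed set of maps. Since $\tau[Q \circ P]$ is a limit of entanglement annihilating Riemann sums, it inherits entanglement annihilation. Finally, Proposition \ref{prop:TwirlAndIso} ensures $\tau[Q \circ P]$ is isotropic. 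Assembling the pieces, $\tau[Q \circ P]$ is isotropic, entanglement annihilating and not entanglement breaking, contradicting $(2)$; hence $\gC_B$ is resilient.
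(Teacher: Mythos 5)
Your proof is correct and follows essentially the same route as the paper: contraposition, positivity of $P$ from the $k=1$ case, Theorem \ref{thm:TwirlingReduction} to produce a non-entanglement-breaking isotropic twirl $\tau[Q\circ P]$, and Lemma \ref{lem:EntAnnUnderTrans} to show the twirl is still entanglement annihilating. The only minor divergence is how the Haar integral is reduced to the finite sums covered by that lemma: you approximate by finite convex combinations of conjugates and invoke closedness of the entanglement annihilating maps, whereas the paper writes the twirl exactly as a finite convex combination via Carath\'eodory's theorem (a remark made in the proof of Proposition \ref{prop:resilience-L}); both justifications are valid.
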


\begin{proof}
It is obvious that the first statement implies the second. For the reverse direction assume that the second statement holds, but that there exists a map $P:\R^{n+1}\ra \R^{n+1}$ which is entanglement annihilating map but not entanglement breaking. By Theorem \ref{thm:TwirlingReduction}, there exists a positive map $Q \in \mathcal{P}(\gC_B,\gC_B)$ such that $\tau \lbr Q\circ P\rbr$ is not entanglement breaking. Applying Lemma 
\ref{lem:EntAnnUnderTrans} for $k=1$ shows that the isotropic map $\tau \lbr Q\circ P\rbr$ is entanglement annihilating, a contradiction. 
\end{proof}

\section{Resilience of Lorentz cones and proof of Theorem \ref{thm:Main1}} \label{sec:lorentz}

We will prove the following:

\begin{thm} \label{thm:Lorentz}
For every $n \geq 1$, the Lorentz cone $\gL_n$ is resilient.
\end{thm}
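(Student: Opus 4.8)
The plan is to reduce the statement, via the symmetry machinery of Section~\ref{sec:EnoughSymm}, to a one-parameter question about isotropic maps, and then to settle that question by a distillation argument. The Euclidean ball $B=\lset x\in\R^n:\|x\|_2\le 1\rset$ has $\Iso(B)=O_n$, whose commutant is $\R 1$ (for $n\ge 2$ this is Schur's lemma, and for $n=1$ it is immediate), so $B$ has enough symmetries and $\gL_n=\gC_B$. By Corollary~\ref{cor:RedToIso} it therefore suffices to prove that every $\gL_n$-entanglement annihilating isotropic map is $\gL_n$-entanglement breaking. Since $\gL_1\cong\R_+^2$ is classical and hence resilient, I assume $n\ge 2$ from now on. An isotropic map $I_{\alpha,\beta}$ that is entanglement annihilating is in particular positive, so $\alpha\ge 0$; if $\alpha=0$ it vanishes and is trivially entanglement breaking, and otherwise we may rescale to $\alpha=1$ and write $I_\gamma:=I_{1,\gamma}$. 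By Lemma~\ref{lem:PropsIso}(3)--(4), $I_\gamma$ is positive iff $|\gamma|\le 1$ and entanglement breaking iff $|\gamma|\le 1/n$. Finally, the automorphism $A(t,x)=(t,-x)$ of $\gL_n$ satisfies $A\circ I_\gamma=I_{-\gamma}$, so by Lemma~\ref{lem:EntAnnUnderTrans} the map $I_\gamma$ is entanglement annihilating iff $I_{-\gamma}$ is. Thus the whole theorem comes down to the following claim: \emph{if $\gamma\in(1/n,1]$ then $I_\gamma$ is not entanglement annihilating.}

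To disprove entanglement annihilation I use the encoder/decoder characterization of Theorem~\ref{thm:EAunderCoding}: it is enough to produce, for some $k$, a $\gL_n$-encoder $E:\R^{n+1}\to(\R^{n+1})^{\otimes k}$ and a $\gL_n$-decoder $D:(\R^{n+1})^{\otimes k}\to\R^{n+1}$ for which $D\circ I_\gamma^{\otimes k}\circ E$ fails to be positive. I build such ``distillation protocols'' from norm-multiplicative bilinear maps. The input is a bilinear map $m:\R^{n+1}\times\R^{n+1}\to\R^{n+1}$ which sends $\gL_n\times\gL_n$ into $\gL_n$ (so that $a\otimes b\mapsto m(a,b)$ extends to a decoder $D$ on $\gL_n^{\tmin 2}$) and whose associated quadratic structure is multiplicative for the Lorentz form; such maps come from the multiplication tensors of real composition algebras for the small dimensions $n\in\{1,3,7\}$ and their split and lower-dimensional relatives for $n\le 9$, and from solutions of the Hurwitz matrix equations for general $n$. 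Using the self-duality $\gL_n^*=\gL_n$, the adjoint of such a decoder is a $\gL_n$-encoder, so one and the same algebraic datum supplies both $E$ and $D$. Composing and then twirling with the Haar measure on $O_n$ produces an isotropic map
\[
\tau\!\left[D\circ I_\gamma^{\otimes 2}\circ E\right]=I_{\alpha',\beta'},
\]
which is again entanglement annihilating whenever $I_\gamma$ is: each conjugate $\tilde g^{-1}\circ(D\circ I_\gamma^{\otimes 2}\circ E)\circ\tilde g$ is of the encoder/decoder form, hence entanglement annihilating by Lemma~\ref{lem:EntAnnUnderTrans}, and the Haar average is a limit of such finite positive combinations, so it stays entanglement annihilating because entanglement annihilation is a closed condition. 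The multiplicativity of $m$ guarantees that $D$ is genuinely cone-preserving, while the trace formulas of Proposition~\ref{prop:TwirlAndIso} (and the $\tfrac{1}{n}$ appearing there) compute $\alpha'$ and $\beta'$ explicitly as functions of $\gamma$.

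The heart of the argument is then the quantitative behaviour of the induced map $\gamma\mapsto\gamma':=\beta'/\alpha'$. The protocols are designed so that $\gamma'=\Phi(\gamma)$ is continuous with $\Phi(1/n)=1/n$, $\Phi(1)=1$, and $\Phi(\gamma)>\gamma$ for all $\gamma\in(1/n,1)$, i.e. the only fixed points in $[1/n,1]$ sit at the entanglement-breaking threshold and at the identity. Granting this, suppose toward a contradiction that $I_{\gamma_0}$ is entanglement annihilating for some $\gamma_0\in(1/n,1]$. If $\gamma_0=1$ then $I_1=\ident$ is entanglement annihilating, which forces $\gL_n^{\tmax k}\subseteq\gL_n^{\tmin k}$ for all $k$ and hence that $\gL_n$ is classical, contradicting $n\ge2$ by the classification of~\cite{ALPP21}. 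If $\gamma_0\in(1/n,1)$, iterate the protocol: by the previous paragraph each $I_{\gamma_j}$ with $\gamma_{j+1}=\Phi(\gamma_j)$ is entanglement annihilating, hence positive, hence $\gamma_j\le 1$ by Lemma~\ref{lem:PropsIso}(3). The sequence $(\gamma_j)$ is therefore increasing and bounded by $1$, so it converges to a fixed point of $\Phi$ in $(\gamma_0,1]$, which can only be $1$. Since entanglement annihilation is closed, the limit $I_1=\ident$ is entanglement annihilating, the same contradiction as before. This proves the claim, and with it the theorem.

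I expect the genuinely hard part to be the construction of the bilinear maps $m$ together with the simultaneous verification of the two properties they must have: that the resulting $D$ maps $\gL_n^{\tmin 2}$ into $\gL_n$ (cone-preservation, coming from norm-multiplicativity), and that the induced one-dimensional dynamics $\Phi$ has no spurious fixed point in the open interval $(1/n,1)$ and pins the threshold exactly at $1/n$. Covering all $n$ at once is precisely what forces the passage from composition algebras, which exist only in the special dimensions, to the more flexible solutions of the Hurwitz matrix equations.
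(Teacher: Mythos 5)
Your reduction steps are all sound and match the paper: Corollary \ref{cor:RedToIso} reduces resilience of $\gL_n$ to isotropic maps, the rescaling to $\alpha=1$ and the sign symmetry $A(t,x)=(t,-x)$ are legitimate, the thresholds $|\gamma|\le 1$ (positive) and $|\gamma|\le 1/n$ (entanglement breaking) are exactly Lemma \ref{lem:PropsIso}, and your iteration-to-a-fixed-point argument is a valid variant of the paper's bookkeeping (the paper instead takes the largest entanglement annihilating parameter $\beta_0$ and exploits $f(\beta_0)\le\beta_0$ directly; both work once the protocols exist). The genuine gap is the step you defer as ``the hard part'': the existence, for every $n$, of a bilinear map $m:\R^{n+1}\times\R^{n+1}\to\R^{n+1}$ carrying $\gL_n\times\gL_n$ into $\gL_n$ and multiplicative for the associated quadratic structure. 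By the classification of real composition algebras (\cite[Theorem 1.10]{roos08}, cited in the paper), a nondegenerate quadratic form admitting a bilinear composition exists only in dimensions $1,2,4,8$, and with definite or split signature; the Lorentz signature $(1,n)$ occurs only for $n\le 1$. Even the paper's more flexible blockwise construction (Lemma \ref{lem:L-positive-map}, with $\A_1\in\{\R,\C'\}$ and $\A_2\in\{\R,\C,\qua,\oc\}$) only reaches $N=\dim(\A_1)+\dim(\A_2)-1\le 9$, which is precisely why Section \ref{sec:Lorentz-composition} stops there. Your appeal to ``solutions of the Hurwitz matrix equations for general $n$'' does not close this: those equations produce compositions of sizes $[\rho(N),N,N]$ with $\rho(N)$ the Radon--Hurwitz number (logarithmically small in $N$), never an $[n+1,n+1,n+1]$ composition outside the Hurwitz dimensions, hence no binary cone-preserving norm-multiplicative $m$ for $n\ge 10$. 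Note also that mere cone preservation is not enough: the Hadamard-type product $m(e_i\otimes e_j)=\delta_{ij}e_i$ preserves $\gL_n$ for every $n$ but induces $\Phi(\beta)=\beta^2$, which moves the parameter the wrong way; it is exactly the composition property that produces the amplification factor $n$, and that property is unavailable in general.

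For the record, the paper's actual proof for all $n$ (Section \ref{sec:Lorentz-RH}) abandons binary protocols and iteration altogether. From an $n$-dimensional subspace of $\cM_{N(n)}(\R)$ consisting of multiples of orthogonal matrices it builds, for each $k$, a single multipartite tensor $z_{n,k}\in(\R^n)^{\otimes k}$ with $\langle x_1\otimes\cdots\otimes x_k, z_{n,k}\rangle\le\|x_1\|_2\cdots\|x_k\|_2$ and $\|z_{n,k}\|_2^2\ge n^k/N(n)$ (Lemma \ref{lem:MagicalZ}); then $z^{\pm}=e_0^{\otimes k}\pm z_{n,k}$ lie in $\gL_n^{\tmax k}$, and a one-shot pairing of $I_{\alpha,\beta}^{\otimes k}(z^+)$ against $z^{\pm}$ gives $\alpha^k\ge|\beta|^k n^k/N(n)$, so that letting $k\to\infty$ absorbs the constant $N(n)$ and yields $\alpha\ge|\beta|n$, i.e.\ entanglement breaking. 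So your proposal essentially reconstructs the paper's composition-algebra argument, which is correct but inherently limited to $n\le 9$; for the remaining (generic) case the object your proof requires provably does not exist, and a structurally different mechanism --- multipartite witnesses with a $k\to\infty$ limit --- is what the paper uses to finish.
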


Using Theorem \ref{thm:Lorentz}, the proof of Theorem \ref{thm:Main1} is easy.

\begin{proof}[Proof of Theorem \ref{thm:Main1}]
Combine Theorem \ref{thm:Lorentz} and Lemma \ref{lem:resilienceImpliesResOfPair}. 
\end{proof}

To prove Theorem \ref{thm:Lorentz} we have two strategies: In Section \ref{sec:Lorentz-composition} we generalize the approach used to prove the resilience of $\gL_3$ (by the distillation protocol of entangled qubit states~\cite{bennett1996purification}) to prove the resilience of $\gL_n$ for $n \leq 9$. This restriction is explained by the fact that our construction relies on real composition algebras, which only exist in small dimensions. However, in Section \ref{sec:Lorentz-RH}, we present a different family of protocols, which allow to prove resilience of $\gL_n$ for every $n\in\N$.

\subsection{Resilience of certain Lorentz cones from composition algebras}

\label{sec:Lorentz-composition}

A \emph{real composition algebra} is a triple $\A=(V,*,q)$, where $(V,*)$ is a finite-dimensional unital algebra over the reals and $q$ a nondegenerate (i.e., full rank) quadratic form satisfying the condition
\begin{equation}
\label{eq:composition}    
 q(x * y) = q(x) q(y)
\end{equation}
for every $x,y \in V$. 

A complete classification of real composition algebras is available since works by Hurwitz and Cayley. Let $\A = (V,*,q)$ be a real composition algebra. The classification theorem (see, e.g.,~\cite[Theorem 1.10]{roos08}) asserts that $\A$ is isomorphic to one of the 7 real composition algebras listed below
\begin{itemize}
    \item If $q$ is positive definite, then $(V,*)$ is a \emph{division algebra}. It is isomorphic to either the real numbers $\R$, the complex numbers $\C$, the quaternions $\qua$ or the octonions $\oc$. For $\A \in \{\R,\C,\qua,\oc\}$, we abusively denote by $\A$ the composition algebra $(\A,*,q)$, where $*$ is the usual multiplication and $q$ the square of the usual norm.
    \item Otherwise, $(V,*)$ is a \emph{split algebra}. It is isomorphic to either the split complex numbers $\C'$, the split quaternions $\qua'$ or the split octonions $\oc'$. In the following, we only consider the split complex numbers $\C'$, which are defined as the real composition algebra $(\R^2,*,q)$ where
    \begin{equation} 
    \label{eq:split-complex}
    (x,y)*(x',y') = (xx'+yy',xy'+x'y) \end{equation}
    and $q(x,y)=x^2-y^2$.
\end{itemize}

Given a composition algebra $\A = (V,*,q)$ we denote by $m_\A:V\otimes V \ra V$ the multiplication tensor given by $m_\A(x\otimes y) = x* y$, and extended linearly.
When $\A$ is a division algebra (i.e., $\A \in \{\R,\C,\qua,\oc\}$), we may identify $V$ and $V^*$ using the inner product derived from $q$ and therefore consider the adjoint $m_\A^*$ as an operator from $V$ to $V \otimes V$. In each case, one checks the relation
\begin{equation}
\label{eq:multiplication-tensor}    
 m_\A \circ m_\A^* = \dim(V) \one_V . 
\end{equation}

Consider two composition algebras 
$\A_1=(V_1,*,q_1)$, $\A_2=(V_2,*,q_2)$ and set $V=V_1 \oplus V_2$. One defines the direct sum $m_{\A_1} \oplus m_{\A_2} : V \otimes V \to V$ as
\[
(m_{\A_1}\oplus m_{\A_2})((x_1,x_2)\otimes (y_1,y_2)) = (m_{\A_1}(x_1,y_1),m_{\A_2}(x_2,y_2))
\]
for $x_1,y_1$ in $\A_1$ and $x_2,y_2$ in $\A_2$.

We start with a lemma: 

\begin{lem} \label{lem:L-positive-map}
Consider $\A_1 \in \lset \R,\C'\rset$ and $\A_2\in \lset \R,\C,\qua,\oc\rset$, with respective quadratic forms $q_1$ and $q_2$. The cone 
\begin{equation} 
\label{eq:def-L}
\gL = \{ (x_1,x_2) \in \A_1 \oplus \A_2 \st q_2(x_2) \leq q_1(x_1) \}
\end{equation}
 is isomorphic to $\gL_N$ for $N=\dim(\A_1)+\dim(\A_2)-1$ and satisfies
\[
(m_{\A_1}\oplus m_{\A_2})\lb \gL \otimes_{\min} \gL \rb\subseteq \gL .
\]
\end{lem}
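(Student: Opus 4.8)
The plan is to treat the two assertions separately, handling first the identification of $\gL$ with a Lorentz cone and then the inclusion under the multiplication map.

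For the first assertion, I would consider the quadratic form $Q(x_1,x_2) = q_1(x_1) - q_2(x_2)$ on $V = \A_1 \oplus \A_2$ and compute its signature. Since $q_2$ is positive definite of rank $\dim(\A_2)$, while $q_1$ has signature $(1,0)$ when $\A_1 = \R$ and signature $(1,1)$ when $\A_1 = \C'$ (read off directly from $q_1(x,y)=x^2-y^2$), the form $Q$ has signature $(1,N)$ with $N = \dim(\A_1) + \dim(\A_2) - 1$. By Sylvester's law of inertia there is a linear isomorphism of $V$ with $\R \oplus \R^N$ taking $Q$ to the standard Lorentzian form $t^2 - \|x\|_2^2$. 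The set $\{Q \geq 0\}$ is then a double cone, and the nappe on which the distinguished positive-definite (``timelike'') coordinate is nonnegative --- equivalently, the nappe containing $(e_1,0)$, where $e_1$ is the unit of $\A_1$ --- is carried onto $\gL_N$. I take $\gL$ to denote this single nappe, which is the proper cone meant in the statement.

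For the second assertion, write $m = m_{\A_1} \oplus m_{\A_2}$. Since $m$ is linear and $\gL$ is a convex cone, and since $\gL \tmin \gL$ is by definition the convex hull of the rank-one tensors $u \otimes v$ with $u,v \in \gL$, it suffices to prove $m(u \otimes v) \in \gL$ for all $u=(u_1,u_2)$ and $v=(v_1,v_2)$ in $\gL$. Here $m(u \otimes v) = (u_1 * v_1, u_2 * v_2)$, so I must check $q_2(u_2 * v_2) \leq q_1(u_1 * v_1)$. The composition identity (\ref{eq:composition}) turns this into $q_2(u_2)\,q_2(v_2) \leq q_1(u_1)\,q_1(v_1)$. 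Now $u,v \in \gL$ give $0 \leq q_2(u_2) \leq q_1(u_1)$ and $0 \leq q_2(v_2) \leq q_1(v_1)$, the lower bounds holding because $q_2$ is positive definite; multiplying these two chains of nonnegative reals yields the desired inequality, placing $m(u \otimes v)$ in the double cone $\{Q \geq 0\}$.

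It remains to confirm that $m(u \otimes v)$ lands in the correct nappe $\gL$, which I expect to be the one point requiring genuine care rather than bookkeeping. I would verify directly that the timelike coordinate of $u_1 * v_1$ is nonnegative: when $\A_1 = \R$ this coordinate is $u_1 v_1 \geq 0$, and when $\A_1 = \C'$ it is $aa' + bb'$ for $u_1 = (a,b)$ and $v_1 = (a',b')$, which is nonnegative because membership in the positive nappe forces $a \geq |b| \geq 0$ and $a' \geq |b'| \geq 0$, whence $aa' + bb' \geq |b||b'| - |b||b'| = 0$. Combined with the inequality $q_2(u_2 * v_2) \leq q_1(u_1 * v_1)$ established above, this places $m(u \otimes v)$ in $\gL$ and completes the argument.
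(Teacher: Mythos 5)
Your proof is correct and follows essentially the same route as the paper: reduce to rank-one tensors $u\otimes v$ and apply the composition identity \eqref{eq:composition} to obtain $q_1(u_1 * v_1) = q_1(u_1)q_1(v_1) \geq q_2(u_2)q_2(v_2) = q_2(u_2 * v_2)$. You are in fact more careful than the paper on one point: the set \eqref{eq:def-L} as literally written is a double cone (so one must read it as a single nappe for the isomorphism with $\gL_N$ to hold), and the paper's proof stops at the displayed inequality without checking which nappe the product lands in, whereas your explicit verification that the timelike coordinate of $u_1 * v_1$ is nonnegative (trivial for $\A_1 = \R$, and using $a \geq |b|$, $a' \geq |b'|$ to get $aa' + bb' \geq 0$ for $\A_1 = \C'$) closes exactly that gap.
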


\begin{proof}
Since $q_2$ is positive definite and $q_1$ has signature either $(1,0)$ or $(1,1)$, it is immediate to check that $\gL$ is isomorphic to $\gL_N$. Consider now $(x_1,x_2)$ and $(y_1,y_2)$ in $\gL$.
We compute, using the property \eqref{eq:composition}
\[ q_1(m_{\A_1}(x_1,y_1)) 
= q_1(x_1)q_1(y_1) \geq q_2(x_2)q_2(y_2) = q_2(m_{\A_2}(x_2,y_2)).
\]
We conclude that $(m_{\A_1}\oplus m_{\A_2})\lb (x_1,x_2)\otimes (y_1,y_2)\rb\in \gL$ and the result follows.
\end{proof}

\begin{prop} \label{prop:resilience-L}
Consider $\A_1 \in \lset \R,\C'\rset$ and $\A_2\in \lset \R,\C,\qua,\oc\rset$, with respective quadratic forms $q_1$ and $q_2$, and the cone $\gL$ defined in \eqref{eq:def-L}. 
Then the cone $\gL$ is resilient.
\end{prop}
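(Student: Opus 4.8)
The plan is to combine the symmetry reduction of Section~\ref{sec:Symmetrization} with a \emph{distillation protocol} built from the multiplication tensor. Since $\gL$ is isomorphic to $\gL_N$ with $N=\dim(\A_1)+\dim(\A_2)-1$ by Lemma~\ref{lem:L-positive-map}, it has a symmetric base with enough symmetries, so Corollary~\ref{cor:RedToIso} applies: it suffices to prove that every $\gL$-entanglement annihilating isotropic map $I_{\alpha,\beta}=\alpha\pi_1+\beta\pi_2$ is $\gL$-entanglement breaking. By the symmetric-base part of Lemma~\ref{lem:PropsIso}, such a map is positive iff $|\beta|\le\alpha$ and entanglement breaking iff $|\beta|\le\alpha/N$. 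After normalizing $\alpha=1$ and using the automorphism $(t,x)\mapsto(t,-x)$ together with Lemma~\ref{lem:EntAnnUnderTrans} (for $k=1$) to assume $\beta\ge0$, the entire statement reduces to showing that entanglement annihilation forces the single ratio $r=\beta/\alpha\in[0,1]$ to satisfy $r\le1/N$.

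The engine is the map $m:=m_{\A_1}\oplus m_{\A_2}:V\otimes V\to V$, which is a $\gL$-decoder by Lemma~\ref{lem:L-positive-map}. Since $\gL\cong\gL_N$ is self-dual, its Euclidean adjoint $m^*$ is a $\gL$-encoder (using that $E$ is a $\gC$-encoder iff $E^*$ is a $\gC^*$-decoder, and $\gL^*=\gL$). Consequently, whenever $P$ is entanglement annihilating, Theorem~\ref{thm:EAunderCoding} shows $m\circ P^{\otimes2}\circ m^*$ is positive, and Lemma~\ref{lem:EntAnnUnderTrans} shows it is again entanglement annihilating; applying the twirling operator and Proposition~\ref{prop:TwirlAndIso} turns it into an entanglement annihilating \emph{isotropic} map. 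Starting from $P=I_{\alpha,\beta}$ this produces a sequence of isotropic maps $I_{\alpha_j,\beta_j}$, each entanglement annihilating and therefore positive, so that $r_j:=\beta_j/\alpha_j\le1$ for every $j$.

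The next step is to identify the induced recursion $r_{j+1}=f(r_j)$. In the division case $\A_1=\R$ (so $N=\dim\A_2=:d$), the map $I_{\alpha,\beta}$ acts as the scalar $\alpha$ on $V_1=\R$ and $\beta$ on $V_2=\A_2$, and the identity~\eqref{eq:multiplication-tensor} gives directly $m\circ I_{\alpha,\beta}^{\otimes2}\circ m^*=I_{\alpha^2,\,d\beta^2}$, i.e.\ $f(r)=d\,r^2$. In the split case $\A_1=\C'$ (so $N=\dim\A_2+1=d+1$) part of the space also lies inside $V_1=\C'$, so $I_{\alpha,\beta}$ is no longer scalar on $V_1$; computing $m_{\C'}$ on the standard basis and using~\eqref{eq:multiplication-tensor} for $\A_2$ produces a non-isotropic map, and after twirling one obtains
\[ f(r)=\frac{2r+d^2r^2}{(d+1)\,(1+r^2)} . \]
In both cases $r=1/N$ is a fixed point of $f$, and this is precisely the entanglement breaking threshold from Lemma~\ref{lem:PropsIso}; moreover it is repelling, since $f(r)>r$ for $r$ slightly above $1/N$.

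The heart of the argument — and the step I expect to be most delicate — is the global dynamical analysis showing that $r>1/N$ is impossible. I would prove that for $r\in(1/N,1]$ the orbit strictly increases and escapes $[0,1]$ in finitely many steps, which contradicts the positivity bound $r_j\le1$ and hence rules out $r>1/N$. For the division cases this is immediate because $f(1)=d=N\ge2$, and for the split cases with $\dim\A_2\ge3$ it follows from $f(1)=\tfrac{d^2+2}{2(d+1)}>1$ (no fixed point of $f$ lies in $(1/N,1]$, so a bounded increasing orbit is impossible). The genuine obstacle is that in the two remaining low-dimensional split cases $(\C',\R)$ and $(\C',\C)$ — yielding $\gL_2$ and $\gL_3$ — one has $f(1)\le1$ and the orbit merely converges to a fixed point inside $(1/N,1]$, so the protocol alone does not conclude. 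These two cones are, however, retracts of $\gL_4$, which the division protocol $(\R,\qua)$ settles directly; hence Lemma~\ref{lem:resilienceRetracts} gives their resilience without circularity. Organizing the proof this way — the protocol for all division cases and for the split cases with $\dim\A_2\ge3$, and retracts for the two small split cones — establishes resilience of $\gL$.
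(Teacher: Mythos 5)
Your proposal runs on the same engine as the paper's proof: reduction to isotropic maps via Corollary~\ref{cor:RedToIso}, the two-copy protocol $(m_{\A_1}\oplus m_{\A_2})\circ I_{\alpha,\beta}^{\otimes 2}\circ (m_{\A_1}\oplus m_{\A_2})^{*}$, which is entanglement annihilating by Lemmas~\ref{lem:L-positive-map}, \ref{lem:dual} and \ref{lem:EntAnnUnderTrans}, twirling, and exactly the recursions $f(r)=d r^{2}$ and $f(r)=\frac{r(2+d^{2}r)}{(d+1)(1+r^{2})}$ of \eqref{eq:alpha'}. The genuine difference is the endgame, and there your treatment is more careful than the paper's. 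The paper takes the largest entanglement-annihilating parameter $\beta_{0}$ and concludes from the single inequality $f(\beta_{0})\le\beta_{0}$; but in the split case this inequality is equivalent to $\beta_{0}\le\frac{1}{d+1}$ \emph{or} $\beta_{0}\ge d-1$, and the paper's displayed bound $\frac{1}{n-1}$ (apparently a slip for $\frac{1}{n+1}$) does not by itself yield the entanglement-breaking threshold $\frac{1}{n+1}$ of Lemma~\ref{lem:PropsIso}. For $d\ge 3$ the second branch is excluded by positivity, which is exactly what your escape/no-fixed-point argument accomplishes; but for $(\C',\C)$ that branch contains the honest fixed point $\beta=1$, and for $(\C',\R)$ the inequality gives no upper bound at all, so the paper's uniform argument is gappy at precisely the two pairs you isolate. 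Your repair --- settle $\gL_{4}$ by the division protocol $(\R,\qua)$ and pull $\gL_{2},\gL_{3}$ back through Lemma~\ref{lem:resilienceRetracts} --- is sound and non-circular (the paper itself uses this retract mechanism for $N\in\{6,7\}$); an alternative that stays inside the protocol for $(\C',\C)$ is to note that your increasing orbit converges to $1$, so closedness of the entanglement-annihilating cone would force $I_{1,1}=\ident$ to be entanglement annihilating, contradicting non-classicality of $\gL_{3}$. Two slips in your write-up are harmless because they concern only cases that are trivial or that you delegate: the fixed point $1/N$ is \emph{not} repelling for $(\C',\R)$ (there $f(r)<r$ for $r>\frac{1}{2}$, so orbits decrease to $1/N$ itself rather than to a fixed point strictly inside $(1/N,1]$), and $f(1)=d\ge 2$ fails for the pair $(\R,\R)$, where $\gL_{1}\simeq\R_{+}^{2}$ is classical and there is nothing to prove.
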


We conclude from Lemma \ref{lem:L-positive-map} and Proposition \ref{prop:resilience-L} that the Lorentz cone $\gL_N$ is resilient for $N \in \{1,2,3,4,5,8,9\}$. Moreover, the cases $N \in \{6,7\}$ are covered by the fact that the resilience of $\gL_N$ implies the resilience of $\gL_n$ for every $n \leq N$.

\begin{proof}
By Corollary \ref{cor:RedToIso}, it is enough to show that
an isotropic map (i.e., a map of the form $I_{\alpha,\beta} = \alpha \pi_1 + \beta \pi_2$) which is entanglement annihilating is also entanglement breaking. Using homogeneity  and the fact that the Lorentz cone has a symmetric base, it suffices to prove this for $\alpha = 1$ and $\beta \geq 0$. 

Let $I_{\alpha,\beta} = \alpha \pi_1 + \beta \pi_2$ be an isotropic map which is entanglement annihilating. By Lemma \ref{lem:L-positive-map}, the map $m_{\A_1} \oplus m_{\A_2}$ is $(\gL \tmin \gL,\gL)$-positive. By Lemma \ref{lem:dual},  its adjoint $(m_{\A_1} \oplus m_{\A_2})^*$ is therefore $(\gL,\gL \tmax \gL)$-positive. We conclude by Lemma \ref{lem:EntAnnUnderTrans} that the map
\[  
J := (m_{\A_1} \oplus m_{\A_2}) \circ I_{\alpha,\beta}^{\otimes 2} \circ 
(m_{\A_1} \oplus m_{\A_2})^*
\]
is entanglement annihilating. Finally, the twirled map $\tau [J]$ is also entanglement annihilating (the fact that the class of entanglement annihilating maps is stable under twirling is a consequence of Lemma \ref{lem:EntAnnUnderTrans}; note that in the definition of the twirling operator we may, using Caratheodory's theorem, replace the Haar measure by a suitable finite measure). The map $\tau[J]$ is isotropic and equals $I_{\alpha',\beta'}$ with $\alpha'$, $\beta'$ given by the following formula
\begin{equation}
\label{eq:alpha'}    
(\alpha',\beta') = \begin{cases} (\alpha^2, n \beta^2) & \textnormal{ if } \A_1 = \R \\ (\alpha^2 + \beta ^2, \frac{2\alpha\beta + \beta^2 n^2}{n+1} )& \textnormal{ if } \A_1 = \C' \end{cases}
\end{equation} 
where $n = \dim(\A_2)$. We only justify this formula when $\A_1=\C'$, the case $\A_1=\R$ being similar and simpler. Let $(e_0,e_1)$ and $(f_1,\dots,f_n)$ the canonical bases of $\A_1$ and $\A_2$ respectively. We have
\[ I_{\alpha,\beta} = \alpha \ketbra{e_0} + \beta \ketbra{e_1} + \beta \sum_{i=1}^n \ketbra{f_i}.\]
Using \eqref{eq:split-complex} and \eqref{eq:multiplication-tensor}, it follows that
\[ J = (\alpha^2 + \beta^2) \ketbra{e_0} + 2 \alpha \beta \ketbra{e_1} + \beta^2 n \sum_{k=1}^n \ketbra{f_k} \]
applying the twirling operator
yields \eqref{eq:alpha'}. In particular we have
\[ I_{1,\beta} \textnormal{ entanglement annihilating} \Longrightarrow I_{1,f(\beta)} \textnormal{ entanglement annihilating} \]
where
\[
f(\beta) = \begin{cases} n\beta^2 , &\text{ if } \A_1 = \R \\ \frac{\beta (2 + n^2\beta)}{(n+1)(1 + \beta^2)} , &\text{ if } \A_1 = \C'\end{cases}.
\]
Let $\beta_0$ the largest $\beta > 0$ such that $I_{1,\beta}$ is entanglement annihilating. We have
$f(\beta_0) \leq \beta_0$, which implies that 
\[\beta_0 \leq \begin{cases}
\frac{1}{n} & \text{ if } \A_1 = \R \\
\frac{1}{n-1} & \text{ if } \A_1 = \C'.
\end{cases} \]
In both cases, it follows that whenever $I_{1,\beta}$ is entanglement annihilating, then $\beta \leq (\dim \gL -1)^{-1}$ and therefore $I_{1,\beta}$ is entanglement breaking by Lemma \ref{lem:PropsIso}. We conclude that $\gL$ is resilient.
\end{proof}

\subsection{Resilience of all Lorentz cones}

\label{sec:Lorentz-RH}

Given an integer $n$, let $N(n)$ be the minimal $N$ such that there exists an $n$-dimensional subspace $E \subset \cM_N(\R)$ in which every matrix is a multiple of an orthonormal matrix. For our purposes, we only need to know that $N(n)$ is finite for every integer $n$. The value of $N(n)$ is known and related to the Radon--Hurwitz number (see, e.g., \cite[Theorem 11.4]{aubrun2017alice}).

We need the following lemma. Although it is contained as an exercise in \cite{aubrun2017alice}, we include here the proof for completeness.

\begin{lem}
Fix integers $n, k \geq 1$. There is an element $z_{n,k}$ in the Euclidean space $(\R^n)^{\otimes k}$ with the following properties.
\begin{enumerate}
\item  For every $x_1,\ldots ,x_k\in \R^n$, we have
\[
\braket{x_1\otimes \dots \otimes x_k}{z_{n,k}}\leq \|x_1\|_2\cdots \| x_k\|_2 ;
\]
\item we have 
\[ \|z_{n,k}\|^2_2 \geq \frac{n^k}{N(n)}. \]
\end{enumerate}
\label{lem:MagicalZ}
\end{lem}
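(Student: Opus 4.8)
The plan is to realize $z_{n,k}$ as the coefficient tensor of an explicit $k$-linear form built from a solution of the Hurwitz matrix equations. First I would unpack the definition of $N(n)$: setting $N=N(n)$, there is an $n$-dimensional subspace $E\subset \cM_N(\R)$ all of whose elements are multiples of orthogonal matrices. Picking a basis $A_1,\ldots,A_n$ of $E$, the requirement that $\sum_i c_i A_i$ always be a scalar multiple of an orthogonal matrix means $\big(\sum_i c_iA_i\big)^T\big(\sum_j c_jA_j\big)$ is a scalar multiple of $\id_N$ for every $c$; this forces $A_i^T A_j + A_j^T A_i = 2 g_{ij}\,\id_N$ for a positive definite form $g$, and after a linear change of basis inside $E$ we may normalize to the Hurwitz--Radon relations $A_i^T A_j + A_j^T A_i = 2\delta_{ij}\,\id_N$. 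In particular each $A_i$ is itself orthogonal, and the matrix $A(x):=\sum_i x_i A_i$ satisfies $A(x)^T A(x)=\|x\|_2^2\,\id_N$, so that $\|A(x)\|_{\mathrm{op}}=\|x\|_2$ for every $x\in\R^n$.

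For unit vectors $u,v\in\R^N$ I would then define a tensor $z^{u,v}\in(\R^n)^{\otimes k}$ through its coefficients
\[
z^{u,v}_{i_1,\ldots,i_k} = u^T A_{i_1}A_{i_2}\cdots A_{i_k}\, v .
\]
Property (1) is immediate for every choice of unit $u,v$: expanding multilinearly gives $\braket{x_1\otimes\cdots\otimes x_k}{z^{u,v}} = u^T A(x_1)\cdots A(x_k)\,v$, and submultiplicativity of the operator norm together with $\|A(x_j)\|_{\mathrm{op}}=\|x_j\|_2$ and $\|u\|_2=\|v\|_2=1$ yields the bound $\prod_j\|x_j\|_2$, which in particular gives the desired inequality without the absolute value.

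It remains to select $u,v$ for which property (2) holds, and this is where I would carry out the one genuine computation. I would average $\|z^{u,v}\|_2^2$ over $u$ and $v$ drawn independently and uniformly from the sphere $S^{N-1}$. Using $\E[uu^T]=\E[vv^T]=\tfrac1N\id_N$, for fixed indices one obtains
\[
\E\big[(u^T B v)^2\big] = \tfrac{1}{N^2}\,\Tr(BB^T), \qquad B:=A_{i_1}\cdots A_{i_k};
\]
since $B$ is a product of orthogonal matrices it is orthogonal, so $\Tr(BB^T)=N$ and the expression equals $1/N$. Summing over all $n^k$ index tuples gives $\E\,\|z^{u,v}\|_2^2 = n^k/N = n^k/N(n)$, so some pair $(u,v)$ of unit vectors satisfies $\|z^{u,v}\|_2^2 \geq n^k/N(n)$; taking $z_{n,k}:=z^{u,v}$ for this pair finishes the proof.

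The substance thus reduces to two routine ingredients, the second-moment identity $\E[uu^T]=\id_N/N$ and the orthogonality of products of the $A_i$, both of which feed directly into property (2). The only step requiring care is the first one: justifying that a space of matrices all of which are multiples of orthogonal matrices can be rescaled to the \emph{exact} relations $A_i^T A_j + A_j^T A_i = 2\delta_{ij}\,\id_N$. This is precisely the classical equivalence between such ``spaces of similarities'' and solutions of the Hurwitz matrix equations underlying the definition of $N(n)$, so I expect this normalization to be the main (though standard) obstacle, after which both claimed properties fall out of the averaging argument.
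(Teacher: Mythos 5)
Your proposal is correct and takes essentially the same approach as the paper: the paper's tensor $z(i,j)$, whose coefficients are the $(i,j)$-entries of the products $\Theta(e_{l_1})\cdots\Theta(e_{l_k})$, is exactly your $z^{u,v}$ specialized to standard basis vectors $u=e_i$, $v=e_j$, and the paper's identity $\sum_{i,j}\|z(i,j)\|_2^2 = Nn^k$ followed by pigeonhole is the discrete counterpart of your averaging over the sphere. The remaining differences are cosmetic: the paper gets $\Theta(x)^T\Theta(x)=\|x\|_2^2\one_N$ by choosing a basis orthonormal for the normalized trace inner product rather than by your explicit Hurwitz--Radon normalization, and it verifies property (1) with a Loewner-order argument where you use submultiplicativity of the operator norm.
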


\begin{proof}
Set $N=N(n)$. Let $E \subset \cM_N(\R)$ be an $n$-dimensional subspace in which every matrix is a multiple of an orthogonal matrix. Let $(A_1,\dots,A_n)$ be a basis of $E$ which is orthonormal with respect to the inner product $(A,B) \mapsto \frac{1}{N} \Tr (AB)$.  
Let $\Theta:\R^n\ra \M_N(\R)$ denote the function 
\[
\Theta(x) = \sum^n_{i=1} x_i A_i,
\]
and note that $\Theta(x)^T\Theta(x)=\|x\|^2_2\one_N$ for any $x\in\R^n$. For $i,j\in\lset 1,\ldots ,N\rset$ we define
\begin{equation}
z(i,j) = \sum_{l_1,\dots,l_k=1}^n \lbr\Theta(e_{l_1})\cdots \Theta(e_{l_k})\rbr_{ij} e_{l_1}\otimes \cdots \otimes e_{l_k} \in (\R^n)^{\otimes k}.
\label{equ:MagicZ}
\end{equation}
We first check that each such tensor satisfies the first conclusion of Lemma \ref{lem:MagicalZ}.
For $x_1,\ldots ,x_k\in \R^n$ an easy computation shows that
\[
\braket{x_1\otimes \ldots \otimes x_k}{z_k(i,j)} = \lbr\Theta(x_1)\cdots \Theta(x_k)\rbr_{ij} .
\]
Using first that $e_i\braket{e_i}{\cdot}\leq \one_N$ in the Loewner order for any $i\in\lset 1,\ldots, n\rset$ and that $\M_N\ni X\mapsto A^TXA$ preserves the Loewner order for any matrix $A\in\M_N$, and then the properties of $\Theta(\cdot)$ outlined above, we find that 
\begin{align*}
\lbr\Theta(x_1)\cdots \Theta(x_k)\rbr^2_{ij} &\leq \braket{e_j}{\Theta(x_k)^T\cdots \Theta(x_1)^Te_i}\braket{e_i}{\Theta(x_1)\cdots \Theta(x_k)e_j} \\
&\leq \braket{e_j}{\Theta(x_1)\cdots \Theta(x_k)\Theta(x_k)^T\cdots \Theta(x_1)^Te_j} \\
&\leq \|x_1\|^2_2\cdots \| x_k\|^2_2.
\end{align*}
We now compute  
\begin{align*}
\sum_{i,j=1}^N & \|z(i,j)\|_2^2 \\
&= \sum_{i,j=1}^N \sum_{l_1,\ldots ,l_k=1}^n \braket{e_j}{\Theta(e_{l_k})^T\cdots \Theta(e_{l_1})^Te_i}\braket{e_i}{\Theta(e_{l_1})\cdots \Theta(e_{l_k})e_j} \\
&= \Tr\lbr\Theta(e_{l_k})^T\cdots \Theta(e_{l_1})^T \Theta(e_{l_1})\cdots \Theta(e_{l_k})\rbr= Nn^k,
\end{align*}
where we used the properties of $\Theta(\cdot)$. It follows that there exists $i_0,j_0 \in \{1,\dots,N\}$ such that $\|z(i_0,j_0)\|_2^2 \geq n^k/N$. Therefore, the choice $z_{n,k}=z(i_0,j_0)$ satisfies both properties from Lemma \ref{lem:MagicalZ}.
\end{proof}

We are now in the position to show that all Lorentz cones are resilient.

\begin{proof}[Proof of Theorem \ref{thm:Lorentz}]
The Lorentz cone $\gL_n\subset \R^{n+1}$ can be identified with the cone over the unit ball in the Euclidean space $\R^n$, which has enough symmetries. We use the results from section \ref{sec:EnoughSymm} and consider the isotropic maps
\[ I_{\alpha,\beta}:= \alpha \pi_1 + \beta \pi_2 \]
for $\alpha >0$ and $\beta \in \R$ such that $I_{\alpha,\beta}$ is entanglement annihilating. Fix $k\in\N$ and let $z_{n,k} \in (\R^n)^{\otimes k}$ be the tensor given by Lemma \ref{lem:MagicalZ}. We consider $z_{n,k}$ as an element of $(\R^{n+1})^{\otimes k}$ by identifying $\R^n$ with the subspace $\{ (0,x) \st x \in \R^n \} \subset \R^{n+1}$. We claim that the tensors $z^+$ and $z^-$ defined by
\[ z^\pm = e_0^{\otimes k} \pm z_{n,k} \]
belong to $\gL_n^{\tmax k}$. To check this claim, consider elements $x_1=(t_1,y_1), \dots, x_k=(t_k,y_k) \in \gL_n$ (identified with $\gL_n^*$) and find that 
\begin{align*} \langle x_1 \otimes \dots \otimes x_k, z^\pm \rangle & =  t_1 \cdots t_k \pm \langle y_1 \otimes \dots \otimes y_k, z_{n,k} \rangle \\
& \geq t_1\cdots t_k - \|y_1\|_2 \dots \|y_k\|_2 \\
& \geq 0. \end{align*}
Since $I_{\alpha,\beta}$ is entanglement annihilating, we have $I_{\alpha,\beta}(z^+) = \alpha^k e_0^k + \beta^k z_{n,k} \in \gL_n^{\tmin k}$ and therefore
\[ 0 \leq \langle z^\pm, I_{\alpha,\beta}(z^+) \rangle =
\alpha^k \pm \beta^k \|z_{n,k}\|_2^2 .\]
If we choose the sign $\pm$ such that $\pm \beta^k \leq 0$, we have 
\[ \alpha^k \geq \beta^k \|z_{n,k}\|_2^2 \geq \frac{|\beta|^kn^k}{N(n)} \] 
and the inequality $\alpha \geq |\beta|n$ follows by taking $k$ to infinity. By Proposition \ref{lem:PropsIso}, the map $I_{\alpha,\beta}$ is entanglement breaking. The argument above shows that every entanglement annihilating isotropic map is entanglement breaking. By Corollary \ref{cor:RedToIso}, this implies that $\gL_n$ is resilient.
\end{proof}

\section{Factorization and breaking entanglement with some cone}\label{sec:fact}

To gain a better understanding of the structure of entanglement annihilating maps with respect to cones $\gC_1$ and $\gC_2$, we can study their properties relative to a third cone $\gK$. Although it might be difficult to show that all entanglement annihilating maps are entanglement breaking and thereby proving resilience of the pair $(\gC_1, \gC_2)$, it turns out that all entanglement annihilating maps break entanglement with resilient cones. After explaining the general theory, we will study the special case of the positive semidefinite matrices $\PSD_d$ and the Lorentz cones $\gL_n$, which we know to be resilient from Section \ref{sec:lorentz}. In this setting, we will establish a generalization of the reduction criterion from entanglement distillation.

\subsection{General theory}
\label{sec:GenFactTheory}

We first introduce two cones of maps associated to a proper cone $\gK$. If $\gK = \R^+$ is a $1$-dimensional cone, maps which factor through $\gK$ are exactly entanglement breaking maps.

\begin{defn}
Let $\gC_1\subset V_1$, $\gC_2\subset V_2$, $\gK\subset V_3$ denote proper cones. 
\begin{enumerate}
	\item We say that a $(\gC_1,\gC_2)$-positive map $P:V_1\ra V_2$ \emph{factors through $\gK$} if it can be written as a finite sum $\sum_{i} S_i\circ R_i$ with $(\gK,\gC_2)$-positive maps $S_i$ and $(\gC_1,\gK)$-positive maps $R_i$. We denote the cone of $(\gC_1,\gC_2)$-positive maps factoring through $\gK$ by $\mathcal{F}_{\gK}(\gC_1,\gC_2)$.
	\item We say that a $(\gC_1,\gC_2)$-positive map $P:V_1\ra V_2$ \emph{breaks the entanglement with $\gK$} if $S\circ P\circ R$ is $\gK$-entanglement breaking for any $(\gC_2,\gK)$-positive map $S:V_2\ra V_3$ and any $(\gK,\gC_1)$-positive map $P:V_3\ra V_1$. We denote the cone of $(\gC_1,\gC_2)$-positive maps breaking the entanglement with $\gK$ by $\mathcal{EB}_\gK(\gC_1,\gC_2)$.
\end{enumerate}
\end{defn}

The following lemma follows immediately from the canonical isomorphism between linear maps and tensors, and by using the duality in Lemma \ref{lem:dualityEB}.

\begin{lem}\label{lem:KEBEasyProperties}
Consider proper cones $\gC_1\subset V_1$, $\gC_2\subset V_2$, $\gK\subset V_3$ and a $(\gC_1,\gC_2)$-positive map $P:V_1\ra V_2$. The following are equivalent:
\begin{enumerate}
\item We have $P\in \mathcal{EB}_\gK(\gC_1,\gC_2)$.
\item We have $\lb\ident_{V_3}\otimes P\rb\lb \gK^*\otimes_{\max} \gC_1\rb\subseteq \gK^*\otimes_{\min} \gC_2$.
\item The composition $P\circ R$ is $(\gK,\gC_2)$-entanglement breaking for every $(\gK,\gC_1)$-positive map $R:V_3\ra V_1$.
\item The composition $S\circ P$ is $(\gC_1,\gK)$-entanglement breaking for every $(\gC_2,\gK)$-positive map $S:V_2\ra V_3$.
\end{enumerate} 
\end{lem}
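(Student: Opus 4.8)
The plan is to prove the two ``halves'' of the equivalence separately: first establish $(2)\Leftrightarrow(3)$ purely at the level of the map--tensor correspondence, and then establish $(1)\Leftrightarrow(3)$ and $(1)\Leftrightarrow(4)$ using the stability of entanglement breaking maps under composition together with the trace duality of Lemma \ref{lem:dualityEB}. Since all four conditions reduce to statements about compositions $S\circ P\circ R$ or about the tensor $\hat P$, everything should follow from the tools already assembled.

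For $(2)\Leftrightarrow(3)$, the key observation is the functoriality of the map--tensor correspondence: for any linear map $R:V_3\ra V_1$ one has $\widehat{P\circ R}=(\ident\otimes P)(\hat R)$, which is immediate on rank-one maps $R$ and extends by linearity. By Lemma \ref{lem:maps-vs-tensors}, the maps $R$ that are $(\gK,\gC_1)$-positive are exactly those whose tensors $\hat R$ range over $\gK^*\tmax\gC_1$, while $P\circ R$ is $(\gK,\gC_2)$-entanglement breaking precisely when $\widehat{P\circ R}\in\gK^*\tmin\gC_2$. Condition $(3)$ therefore states exactly that $(\ident\otimes P)$ maps $\gK^*\tmax\gC_1$ into $\gK^*\tmin\gC_2$, which is condition $(2)$, with the identity factor understood to act on the space $V_3^*$ in which $\gK^*$ lives.

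For the remaining equivalences I would first record two elementary composition facts, both read off from the defining decomposition $T=\sum_i x_i\phi_i$ of an entanglement breaking map: postcomposing with a positive map $S$ replaces $x_i$ by $S(x_i)$ in the target cone, and precomposing with a positive map $R$ replaces $\phi_i$ by $\phi_i\circ R=R^*(\phi_i)$, which lies in the appropriate dual cone by Lemma \ref{lem:dual}; in both cases the result is again entanglement breaking. The implications $(3)\Rightarrow(1)$ and $(4)\Rightarrow(1)$ are then immediate: assuming $(3)$, each $S\circ(P\circ R)$ is $\gK$-entanglement breaking as a positive postcomposition of the $(\gK,\gC_2)$-entanglement breaking map $P\circ R$, and symmetrically for $(4)$. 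For the converses $(1)\Rightarrow(3)$ and $(1)\Rightarrow(4)$ I would invoke Lemma \ref{lem:dualityEB}: to prove $P\circ R$ is $(\gK,\gC_2)$-entanglement breaking it suffices that $\Tr[Q\circ P\circ R]\geq0$ for every $(\gC_2,\gK)$-positive $Q$; taking $S=Q$, condition $(1)$ makes $S\circ P\circ R$ a $\gK$-entanglement breaking endomorphism, whose trace is nonnegative by Lemma \ref{lem:dualityEB} applied with the test map $\ident$. The implication $(1)\Rightarrow(4)$ is identical after using cyclicity to rewrite $\Tr[Q\circ S\circ P]=\Tr[S\circ P\circ Q]$ and setting $R=Q$.

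I expect no serious obstacle, consistent with the paper's assertion that the lemma ``follows immediately.'' The only points requiring care are bookkeeping ones: matching the test maps $Q$ supplied by Lemma \ref{lem:dualityEB} with the roles of the encoder $R$ and the decoder $S$, tracking which pair of cones each composite maps between, and the harmless identification of $\ident_{V_3}$ in $(2)$ with the identity on $V_3^*$, where $\gK^*$ actually sits.
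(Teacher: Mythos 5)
Your proof is correct and follows exactly the route the paper intends: the paper prints no argument for this lemma, saying only that it ``follows immediately from the canonical isomorphism between linear maps and tensors, and by using the duality in Lemma \ref{lem:dualityEB},'' and these are precisely the two tools you deploy --- the correspondence $R \leftrightarrow \hat{R}$ of Lemma \ref{lem:maps-vs-tensors} for $(2)\Leftrightarrow(3)$, and trace duality together with the elementary composition facts for the equivalences of $(1)$ with $(3)$ and $(4)$. Your bookkeeping (including the harmless identification of $\ident_{V_3}$ with the identity on $V_3^*$, where $\gK^*$ lives) is accurate, so nothing further is needed.
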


The following theorem shows that the two cones introduced above are dual with respect to the Hilbert--Schmidt inner product. 

\begin{thm}[Maps breaking $\gK^*$-entanglement]
\label{thm:DualOfFactor}
For proper cones $\gC_1\subset V_1$, $\gC_2\subset V_2$, $\gK\subset V_3$  we have 
\[
\mathcal{F}_{\gK}(\gC_1,\gC_2)^\circ = \mathcal{EB}_{\gK^*}(\gC^*_1,\gC^*_2).
\]
\end{thm}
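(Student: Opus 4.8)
The plan is to compute the trace-dual of $\mathcal{F}_\gK(\gC_1,\gC_2)$ directly and match it, term by term, with the tensor characterization of $\mathcal{EB}_{\gK^*}$ furnished by Lemma \ref{lem:KEBEasyProperties}. Throughout I identify a linear map $Q:V_2\ra V_1$ with the element $\iota(\hat Q)$ of the dual of $\mathcal{L}(V_1,V_2)$ via the pairing $\langle Q,P\rangle=\Tr[Q\circ P]$, exactly as in the proof of Lemma \ref{lem:dualityEB}; under this identification $Q$ corresponds to its adjoint $Q^*:V_1^*\ra V_2^*$, which is the object living in the space $\mathcal{L}(V_1^*,V_2^*)$ where $\mathcal{EB}_{\gK^*}(\gC_1^*,\gC_2^*)$ sits. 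So the statement to prove reads: for $Q:V_2\ra V_1$ one has $\Tr[Q\circ P]\geq 0$ for every $P\in\mathcal{F}_\gK(\gC_1,\gC_2)$ if and only if $Q^*\in\mathcal{EB}_{\gK^*}(\gC_1^*,\gC_2^*)$.

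First I would unfold the left-hand side. Since $\mathcal{F}_\gK(\gC_1,\gC_2)$ is by definition the conic hull of the maps $S\circ R$ with $R\in\mathcal{P}(\gC_1,\gK)$ and $S\in\mathcal{P}(\gK,\gC_2)$, membership of $Q$ in its dual cone is equivalent to the family of scalar inequalities $\Tr[Q\circ S\circ R]\geq 0$ ranging over all such $R$ and $S$. Next I would unfold the right-hand side via Lemma \ref{lem:KEBEasyProperties}(2), applied with $\gK,\gC_1,\gC_2$ replaced by $\gK^*,\gC_1^*,\gC_2^*$ and $P$ by $Q^*$: membership $Q^*\in\mathcal{EB}_{\gK^*}(\gC_1^*,\gC_2^*)$ is equivalent to the inclusion $(\ident_{V_3}\otimes Q^*)(\gK\tmax\gC_1^*)\subseteq\gK\tmin\gC_2^*$. (A small point to record is that this inclusion already forces $Q^*$ to be $(\gC_1^*,\gC_2^*)$-positive — apply a functional from $\gK^*$ to the first leg of a product element $k\otimes\phi$ — so the positivity hypothesis of Lemma \ref{lem:KEBEasyProperties} is automatic and no generality is lost.)

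I would then dualize this inclusion. Using the identity $(\gK\tmin\gC_2^*)^*=\gK^*\tmax\gC_2$ and moving the map across the pairing through its adjoint $Q^{**}=Q$, the inclusion becomes the condition that $\langle w,\,(\ident_{V_3^*}\otimes Q)(\eta)\rangle\geq 0$ for every $w\in\gK\tmax\gC_1^*$ and every $\eta\in\gK^*\tmax\gC_2$. The bridge between the two sides is the dictionary of Lemma \ref{lem:maps-vs-tensors}: the tensors $w\in\gK\tmax\gC_1^*$ are exactly the Choi tensors $\hat R$ of the $(\gC_1,\gK)$-positive maps $R$, and the tensors $\eta\in\gK^*\tmax\gC_2$ are exactly the Choi tensors $\hat S$ of the $(\gK,\gC_2)$-positive maps $S$ (up to the harmless reordering of tensor legs, which the minimal and maximal tensor products respect).

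What then remains is the key identity
\[ \langle \hat R,\,(\ident_{V_3^*}\otimes Q)(\hat S)\rangle = \Tr[Q\circ S\circ R], \]
valid for arbitrary $R:V_1\ra V_3$, $S:V_3\ra V_2$ and $Q:V_2\ra V_1$. I expect this contraction identity to be the only genuine computation in the proof and hence the main (though routine) obstacle: it is the exact analogue of the formula $(\iota(\hat Q))(\hat P)=\Tr[Q\circ P]$ from Lemma \ref{lem:dualityEB}, and I would establish it the same way, by verifying it for rank-one $R$, $S$, $Q$ and extending by trilinearity. Once this identity is in hand, the pairing inequalities on the right-hand side read precisely $\Tr[Q\circ S\circ R]\geq 0$ for all $R\in\mathcal{P}(\gC_1,\gK)$ and $S\in\mathcal{P}(\gK,\gC_2)$, which is exactly the condition obtained from unfolding the left-hand side. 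Hence the two cones coincide, which is the claimed duality $\mathcal{F}_{\gK}(\gC_1,\gC_2)^\circ = \mathcal{EB}_{\gK^*}(\gC^*_1,\gC^*_2)$.
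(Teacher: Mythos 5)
Your proposal is correct: the identification of $\mathcal{F}_{\gK}(\gC_1,\gC_2)^\circ$ with the maps $Q:V_2\ra V_1$ satisfying $\Tr[Q\circ S\circ R]\geq 0$ over pairs of positive maps, the reduction of membership in $\mathcal{EB}_{\gK^*}(\gC^*_1,\gC^*_2)$ to the tensor inclusion of Lemma \ref{lem:KEBEasyProperties}(2) (your observation that positivity of $Q^*$ is automatic from that inclusion is exactly what is needed to apply the lemma without loss of generality), the dualization via $(\gK\tmin\gC_2^*)^*=\gK^*\tmax\gC_2$, and the contraction identity $\langle \hat R,(\ident_{V_3^*}\otimes Q)(\hat S)\rangle=\Tr[Q\circ S\circ R]$ all check out; the identity does hold on rank-one maps and extends by trilinearity.

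Your route is, however, genuinely different from the paper's in its mechanics. The paper never touches the tensor characterization (2): it works with the composition characterization of $\mathcal{EB}_{\gK^*}$ and quotes Lemma \ref{lem:dualityEB} as a black box, once in each of two separate directions. For $\mathcal{EB}_{\gK^*}(\gC^*_1,\gC^*_2)\subseteq\mathcal{F}_{\gK}(\gC_1,\gC_2)^\circ$ it writes $\braket{S\circ R}{P}=\Tr[S^*\circ P\circ R^*]$ by cyclicity and observes that this is the trace of a $\gK^*$-entanglement breaking map, hence nonnegative by Lemma \ref{lem:dualityEB} (pairing against the identity, which is positive). For the converse it argues by contraposition: if $P\notin\mathcal{EB}_{\gK^*}(\gC^*_1,\gC^*_2)$, the definition yields $S_1\in\mathcal{P}(\gK,\gC_2)$ and $R\in\mathcal{P}(\gC_1,\gK)$ with $S_1^*\circ P\circ R^*$ not $\gK^*$-entanglement breaking; Lemma \ref{lem:dualityEB} then supplies a witness $S_2\in\mathcal{P}(\gK,\gK)$ with $\Tr[S_2^*\circ S_1^*\circ P\circ R^*]<0$, and cyclicity turns this into $\braket{S_1\circ S_2\circ R}{P}<0$ with $S_1\circ S_2\circ R\in\mathcal{F}_{\gK}(\gC_1,\gC_2)$. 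What the paper's argument buys is brevity: its only computation is cyclicity of the trace. What yours buys is a single chain of equivalences with no contrapositive and no auxiliary witness $S_2$ --- the violating pair $(R,S)$ falls directly out of the failed tensor inclusion --- making transparent that both cones are cut out by the same family of inequalities indexed by pairs of positive maps. The price is your three-map contraction identity, which is the exact analogue of (and is proved the same way as) the identity $(\iota(\hat Q))(\hat P)=\Tr[Q\circ P]$ inside the paper's proof of Lemma \ref{lem:dualityEB}.
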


\begin{proof}
To show that $\mathcal{EB}_{\gK^*}(\gC^*_1,\gC^*_2)\subseteq \mathcal{F}_{\gK}(\gC_1,\gC_2)^\circ$ consider $P\in \mathcal{EB}_{\gK^*}(\gC^*_1,\gC^*_2)$. Using cyclicity of the trace and Lemma 
\ref{lem:dualityEB} we find 
\[
\braket{S\circ R}{P} = \Tr\lbr R^*\circ S^* \circ P\rbr = \Tr\lbr S^* \circ P\circ R^*\rbr \geq 0,
\]
for any $S\in \mathcal{P}(\gK,\gC_2)$ and $R\in \mathcal{P}(\gC_1,\gK)$. This shows that $P\in \mathcal{F}_{\gK}(\gC_1,\gC_2)^\circ$. 

To show that $\mathcal{EB}_{\gK^*}(\gC^*_1,\gC^*_2)\supseteq \mathcal{F}_{\gK}(\gC_1,\gC_2)^\circ$ assume that $P:V_1\ra V_2$ satisfies $P\notin \mathcal{EB}_{\gK^*}(\gC^*_1,\gC^*_2)$. By definition there exist $S_1\in \mathcal{P}(\gK,\gC_2)$ and $R\in \mathcal{P}(\gC_1,\gK)$ such that $S_1^* \circ P\circ R^*$ is not $\gK^*$-entanglement breaking. By duality of entanglement breaking maps and positive maps there exists an $S_2\in \mathcal{P}(\gK,\gK)$ such that 
\[
\braket{S_2}{S_1^* \circ P\circ R^*} = \Tr\lbr S^*_2\circ S_1^*\circ P\circ R^*\rbr = \braket{S_1\circ S_2\circ R}{P} <0 .
\]  
Since $S_1\circ S_2\circ R\in \mathcal{F}_{\gK}(\gC_1,\gC_2)$, this shows that $P\notin \mathcal{F}_{\gK}(\gC_1,\gC_2)^\circ$ and thereby finishes the proof.
\end{proof}

\begin{thm}[Entanglement annihilating maps break some entanglement]
Let $\gC_1\subset V_1$ and $\gC_2\subset V_2$ denote proper cones and $\gK\subset V_3$ a resilient cone. If a positive map $P:V_1\ra V_2$ is $(\gC_1,\gC_2)$-entanglement annihilating, then
\[
P^{\otimes n} \in \mathcal{EB}_\gK(\gC^{\otimes_{\max} n}_1,\gC^{\otimes_{\min} n}_2)
\]
for every $n\in\N$.
\end{thm}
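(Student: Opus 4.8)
The plan is to unwind the definition of $\mathcal{EB}_\gK$ and then reduce the resulting claim to the resilience hypothesis on $\gK$, after recognizing the relevant composition as entanglement annihilating via Lemma \ref{lem:EntAnnUnderTrans}. As a preliminary remark, $P^{\otimes n}$ is $(\gC^{\otimes_{\max} n}_1, \gC^{\otimes_{\min} n}_2)$-positive, since the inclusion $P^{\otimes n}(\gC^{\otimes_{\max} n}_1) \subseteq \gC^{\otimes_{\min} n}_2$ is exactly the case $k = n$ of the definition of entanglement annihilation; so the assertion $P^{\otimes n} \in \mathcal{EB}_\gK(\gC^{\otimes_{\max} n}_1, \gC^{\otimes_{\min} n}_2)$ is well posed.

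By the definition of breaking the entanglement with $\gK$, it suffices to fix an arbitrary $(\gC^{\otimes_{\min} n}_2, \gK)$-positive map $S : V^{\otimes n}_2 \ra V_3$ and an arbitrary $(\gK, \gC^{\otimes_{\max} n}_1)$-positive map $R : V_3 \ra V^{\otimes n}_1$, and to prove that
\[ T := S \circ P^{\otimes n} \circ R : V_3 \ra V_3 \]
is $\gK$-entanglement breaking. The point is that $T$ has exactly the form of the maps in Lemma \ref{lem:EntAnnUnderTrans}. I would apply that lemma with $N = 1$ and tensor power $k = n$, with the four cones taken to be $(\gK, \gC_1, \gC_2, \gK)$, the single encoder $E_1 = R$, the single decoder $D_1 = S$, and the $(\gC_1, \gC_2)$-entanglement annihilating map $P$. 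The hypotheses match: $R$ is $(\gK, \gC^{\otimes_{\max} n}_1)$-positive and $S$ is $(\gC^{\otimes_{\min} n}_2, \gK)$-positive. The lemma then gives that $T = D_1 \circ P^{\otimes n} \circ E_1$ is $(\gK, \gK)$-entanglement annihilating.

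Since $\gK$ is resilient, every $(\gK, \gK)$-entanglement annihilating map is $\gK$-entanglement breaking; hence $T$ is $\gK$-entanglement breaking, and as $S$ and $R$ were arbitrary this establishes $P^{\otimes n} \in \mathcal{EB}_\gK(\gC^{\otimes_{\max} n}_1, \gC^{\otimes_{\min} n}_2)$. I do not expect a serious obstacle: the entire argument is the bookkeeping of matching cone indices in the application of Lemma \ref{lem:EntAnnUnderTrans}, the only care needed being to verify that the encoder $R$ lands in $\gC^{\otimes_{\max} n}_1$ and the decoder $S$ departs from $\gC^{\otimes_{\min} n}_2$. It is worth emphasizing that resilience is invoked only in its literal form for the pair $(\gK, \gK)$ --- precisely the pair produced after pre- and post-composing $P^{\otimes n}$ with $R$ and $S$ --- so no auxiliary result such as Lemma \ref{lem:resilienceImpliesResOfPair} is needed.
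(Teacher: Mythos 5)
Your proof is correct and is essentially the paper's own argument: the paper runs the same reasoning in contrapositive form (assuming $P^{\otimes n} \notin \mathcal{EB}_\gK$, extracting the offending $R$ and $S$, and deriving a contradiction with the resilience of $\gK$ via Lemma \ref{lem:EntAnnUnderTrans}), whereas you phrase it directly by quantifying over all admissible $R$ and $S$. The cone bookkeeping in your application of Lemma \ref{lem:EntAnnUnderTrans} is exactly right, and your preliminary remark that $P^{\otimes n}$ is $(\gC^{\otimes_{\max} n}_1, \gC^{\otimes_{\min} n}_2)$-positive is a point the paper leaves implicit.
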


\begin{proof}
Assume that $P:V_1\ra V_2$ is $(\gC_1,\gC_2)$-entanglement annihilating and that there is a $n\in\N$ such that 
\[
P^{\otimes n} \notin \mathcal{EB}_\gK(\gC^{\otimes_{\max} n}_1,\gC^{\otimes_{\min} n}_2) .
\]
By definition there exist a $(\gK,\gC^{\otimes_{\max} n}_1)$-positive map $R:V_3\ra V^{\otimes n}_1$ and a $(\gC^{\otimes_{\min} n}_2,\gK)$-positive map $S:V^{\otimes n}_2\ra V_3$ such that $Q=S\circ P^{\otimes n}\circ R$ is not $\gK$-entanglement breaking. By Lemma \ref{lem:EntAnnUnderTrans} we find that $Q$ is $\gK$-entanglement annihilating, contradicting that $\gK$ was resilient. This finishes the proof.
\end{proof}

The following corollary follows from the previous theorem and the fact that $\mathcal{EB}_{\R^+}\lb \gC_1,\gC_2\rb$ is the set of $(\gC_1,\gC_2)$-positive maps.

\begin{cor}[Entanglement annihilating maps break Lorentz-entanglement]
Let $\gC_1\subset V_1$ and $\gC_2\subset V_2$ denote proper cones. A positive map $P:V_1\ra V_2$ is $(\gC_1,\gC_2)$-entanglement annihilating if and only if
\[
P^{\otimes n} \in \mathcal{EB}_{\gL_k}(\gC^{\otimes_{\max} n}_1,\gC^{\otimes_{\min} n}_2)
\]
for every $n,k\in\N$.
\label{cor:CharactEAbyLorentzEB}
\end{cor}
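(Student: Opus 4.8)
The plan is to prove the two implications of the equivalence separately. The forward direction is an immediate consequence of the preceding theorem, while the backward direction rests on the observation that the ray $\R^+$ sits inside every Lorentz cone as a retract, through the apex direction $e_0=(1,0,\dots,0)$.

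For the implication ``only if'', suppose $P$ is $(\gC_1,\gC_2)$-entanglement annihilating. Since the Lorentz cone $\gL_k$ is resilient for every $k\in\N$ by Theorem~\ref{thm:Lorentz}, I would simply invoke the preceding theorem (entanglement annihilating maps break some entanglement) with $\gK=\gL_k$: it yields $P^{\otimes n}\in \mathcal{EB}_{\gL_k}(\gC^{\otimes_{\max} n}_1,\gC^{\otimes_{\min} n}_2)$ for every $n,k\in\N$, exactly as required.

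For the implication ``if'', the point is that breaking $\gL_k$-entanglement already forces ordinary positivity once one tests against the apex direction, which is the manifestation of the noted fact that $\mathcal{EB}_{\R^+}$ is the cone of positive maps. Fix $n\in\N$ and pick any $k\geq 1$, and assume $P^{\otimes n}\in \mathcal{EB}_{\gL_k}(\gC^{\otimes_{\max} n}_1,\gC^{\otimes_{\min} n}_2)$. Recall that $\gL_k$ is self-dual, that $e_0\in\gL_k$, and that $\braket{e_0}{e_0}=1$. Given an arbitrary $x\in \gC^{\otimes_{\max} n}_1$, the rank-one map $R:=x\braket{e_0}{\cdot}:\R^{k+1}\ra V^{\otimes n}_1$ is $(\gL_k,\gC^{\otimes_{\max} n}_1)$-positive, because $\braket{e_0}{v}\geq 0$ for every $v\in\gL_k$ and $x\in\gC^{\otimes_{\max} n}_1$. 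By Lemma~\ref{lem:KEBEasyProperties}, the composition $P^{\otimes n}\circ R=P^{\otimes n}(x)\braket{e_0}{\cdot}$ is then $(\gL_k,\gC^{\otimes_{\min} n}_2)$-entanglement breaking, so it admits a decomposition $P^{\otimes n}\circ R=\sum_i z_i\phi_i$ with $z_i\in \gC^{\otimes_{\min} n}_2$ and functionals $\phi_i\in\gL^*_k$. Evaluating both sides at $e_0$ gives
\[ P^{\otimes n}(x)=\sum_i \phi_i(e_0)\, z_i , \]
which is a conic combination of elements of $\gC^{\otimes_{\min} n}_2$ since each coefficient $\phi_i(e_0)$ is nonnegative (again by self-duality, as $\phi_i\in\gL^*_k$ and $e_0\in\gL_k$). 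Hence $P^{\otimes n}(x)\in \gC^{\otimes_{\min} n}_2$. As $x$ and $n$ were arbitrary, $P^{\otimes n}(\gC^{\otimes_{\max} n}_1)\subseteq \gC^{\otimes_{\min} n}_2$ for all $n$, i.e.\ $P$ is entanglement annihilating.

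The argument is essentially routine, and I do not expect a serious obstacle: the only step needing a little care is the final evaluation at $e_0$, where self-duality of $\gL_k$ is what guarantees nonnegativity of the coefficients $\phi_i(e_0)$. Equivalently, and perhaps more conceptually, the backward direction can be packaged through a short monotonicity lemma asserting that if $\gK'$ is a retract of $\gK$ then $\mathcal{EB}_{\gK}(\cdot,\cdot)\subseteq \mathcal{EB}_{\gK'}(\cdot,\cdot)$, applied to the retract $\R^+$ of $\gL_k$ and combined with the identity $\mathcal{EB}_{\R^+}(\gC^{\otimes_{\max} n}_1,\gC^{\otimes_{\min} n}_2)=\mathcal{P}(\gC^{\otimes_{\max} n}_1,\gC^{\otimes_{\min} n}_2)$; the direct evaluation above is just the unwinding of this reduction.
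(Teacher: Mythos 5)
Your proposal is correct and takes essentially the paper's route: the forward direction is exactly the paper's argument (apply the theorem that entanglement annihilating maps break entanglement with any resilient cone, with $\gK=\gL_k$, using Theorem \ref{thm:Lorentz}), and your backward direction is just an explicit unwinding of the paper's remark that $\mathcal{EB}_{\R^+}$ coincides with the cone of positive maps. The only simplification worth noting is that your rank-one testing argument can be skipped entirely: by definition, $\mathcal{EB}_{\gL_k}(\gC^{\otimes_{\max} n}_1,\gC^{\otimes_{\min} n}_2)$ consists of maps that are $(\gC^{\otimes_{\max} n}_1,\gC^{\otimes_{\min} n}_2)$-positive, and this positivity for every $n$ is already the entanglement annihilation inclusion.
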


The previous corollary provides constraints on the set of entanglement annihilating maps. In the next section, we will take a closer look at these constraints in the case of $\gC_1=\gC_2=\PSD(\C^d)$ for some $d\in\N$.

\subsection{Positive maps on $\PSD(\C^d)$ factoring through Lorentz cones}
\label{sec:PosMapsFacThroughLor}

In the previous section, we saw that entanglement annihilating maps break entanglement with resilient cones, and in particular with any Lorentz cone (see Theorem \ref{thm:Main1}). An easy consequence of Lemma \ref{lem:KEBEasyProperties} and Corollary \ref{cor:CharactEAbyLorentzEB} is the following theorem giving criteria to detect maps that are not entanglement annihilating. In a certain sense this generalizes the reduction criterion from entanglement distillation~\cite{horodecki1999reduction}. 

\begin{thm}[Generalized reduction criterion]\label{thm:genRed}
Consider a linear map $Q:\M_{d_B}\ra \M_{d_A}$ factoring through a Lorentz cone $\gL_k$. For any entanglement annihilating map $P:\M_{d_A}\ra \M_{d_B}$ the composition $Q\circ P:\M_{d_A}\ra \M_{d_A}$ is entanglement breaking.
\end{thm}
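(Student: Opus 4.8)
The plan is to obtain the conclusion by chaining the two structural results about the cone $\mathcal{EB}_{\gL_k}$ established in this section, while carefully tracking the direction of each map and the cone in which it is positive. Since $Q$ factors through $\gL_k$, I would first unfold this factorization and write $Q=\sum_i S_i\circ R_i$, where each $R_i:\M_{d_B}\ra \R^{k+1}$ is $(\PSD(\C^{d_B}),\gL_k)$-positive and each $S_i:\R^{k+1}\ra \M_{d_A}$ is $(\gL_k,\PSD(\C^{d_A}))$-positive. Consequently $Q\circ P=\sum_i S_i\circ (R_i\circ P)$, and it suffices to show that each inner block $R_i\circ P$ is entanglement breaking, since entanglement breaking is preserved under post-composition with a positive map.

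The input on $P$ comes from Corollary \ref{cor:CharactEAbyLorentzEB}: because $P$ is $(\PSD(\C^{d_A}),\PSD(\C^{d_B}))$-entanglement annihilating, taking $n=1$ gives $P\in\mathcal{EB}_{\gL_k}(\PSD(\C^{d_A}),\PSD(\C^{d_B}))$ for every $k$, in particular for the value of $k$ through which $Q$ factors. The central step is then to read this membership through the equivalent characterization in Lemma \ref{lem:KEBEasyProperties}(4), which states that $S\circ P$ is $(\PSD(\C^{d_A}),\gL_k)$-entanglement breaking for every $(\PSD(\C^{d_B}),\gL_k)$-positive map $S$. Applying this with $S=R_i$ shows that each composition $R_i\circ P$ is $(\PSD(\C^{d_A}),\gL_k)$-entanglement breaking.

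Finally I would post-compose with the positive maps $S_i$. Writing $R_i\circ P=\sum_j y_j\,\phi_j$ with $y_j\in\gL_k$ and $\phi_j\in\PSD(\C^{d_A})$ (using self-duality of $\PSD(\C^{d_A})$), positivity of $S_i$ yields $S_i\circ R_i\circ P=\sum_j S_i(y_j)\,\phi_j$ with $S_i(y_j)\in\PSD(\C^{d_A})$; hence each $S_i\circ R_i\circ P$ is $(\PSD(\C^{d_A}),\PSD(\C^{d_A}))$-entanglement breaking, and summing over $i$ shows $Q\circ P$ is entanglement breaking. The argument carries no genuine difficulty beyond this bookkeeping: the only points requiring care are matching the roles $(\gC_1,\gC_2,\gK)=(\PSD(\C^{d_A}),\PSD(\C^{d_B}),\gL_k)$ correctly when invoking Lemma \ref{lem:KEBEasyProperties}, and noting that entanglement breaking maps form a class stable under composition with positive maps (immediate from the rank-one-sum definition). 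This is precisely why the statement is only an \emph{easy} consequence of the preceding results.
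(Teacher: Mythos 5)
Your proof is correct and follows exactly the route the paper intends: the paper states Theorem~\ref{thm:genRed} as an easy consequence of Lemma~\ref{lem:KEBEasyProperties} and Corollary~\ref{cor:CharactEAbyLorentzEB}, and your argument is precisely that chain --- Corollary~\ref{cor:CharactEAbyLorentzEB} with $n=1$ to get $P\in\mathcal{EB}_{\gL_k}(\PSD(\C^{d_A}),\PSD(\C^{d_B}))$, Lemma~\ref{lem:KEBEasyProperties}(4) applied to each $R_i$ in the factorization of $Q$, and then post-composition with the $(\gL_k,\PSD(\C^{d_A}))$-positive maps $S_i$ together with summation over $i$. The bookkeeping of the cone roles $(\gC_1,\gC_2,\gK)=(\PSD(\C^{d_A}),\PSD(\C^{d_B}),\gL_k)$ is handled correctly, so there is nothing to add.
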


Motivated by Theorem \ref{thm:genRed}, we will present a criterion for positive maps to factor through a Lorentz cone. For this we identify elements $X\in \R^{n+1}\otimes \M_d^{sa}$ with matrix-valued vectors $(X_0, X_1\ldots ,X_{n})$. The following proposition characterizes the maximal tensor product of the Lorentz cones and the positive semidefinite matrices: 

\begin{prop}[Maximal tensor product with Lorentz cone]
For
\[
X=(X_0,X_1\ldots ,X_{n})\in \R^{n+1}\otimes \M_d^{sa}
\]
the following are equivalent: 
\begin{enumerate}
\item $X\in \gL_n\otimes_{\max} \PSD(\C^d)$,
\item $X_0\geq 0$ 
and
\[
X_0\otimes X_0 - \sum^{n}_{s=1} X_s\otimes X_s\in \PSD(\C^d)\otimes_{\max} \PSD(\C^d) .
\]
\end{enumerate}
\label{prop:BasicLorentzPSDMax}
\end{prop}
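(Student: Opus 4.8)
The plan is to reduce both conditions to statements about the real numbers $a_s(\phi) := \bracket{\phi}{X_s}{\phi}$, for $s \in \{0,1,\dots,n\}$ and $\phi \in \C^d$, collected into the vector $a(\phi) := (a_0(\phi),\dots,a_n(\phi)) \in \R^{n+1}$. Both sides will become conditions on these vectors phrased through the Lorentz (Minkowski) bilinear form $\langle u,v\rangle_L := u_0 v_0 - \sum_{s=1}^n u_s v_s$, and the equivalence will then follow from the two self-duality properties of the Lorentz cone: its self-duality with respect to the Euclidean inner product, and the elementary fact that $\langle u,v\rangle_L \geq 0$ whenever $u,v \in \gL_n$.

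First I would unfold condition (1). Since $\gL_n$ and $\PSD(\C^d)$ are self-dual, $\gL_n \tmax \PSD(\C^d) = (\gL_n \tmin \PSD(\C^d))^*$, so $X$ lies in this cone iff $\braket{X}{y \otimes \rho} \geq 0$ for every $y \in \gL_n$ and $\rho \in \PSD(\C^d)$. Writing $y=(y_0,\dots,y_n)$ and using $\braket{X}{y\otimes\rho} = \Tr[(\sum_{s} y_s X_s)\rho]$, self-duality of $\PSD(\C^d)$ rewrites this as the requirement that $\sum_{s=0}^n y_s X_s \in \PSD(\C^d)$ for every $y \in \gL_n$. Testing against a vector $\phi$ and interchanging the (universal) quantifiers over $y$ and $\phi$, this becomes $\braket{y}{a(\phi)} = \sum_s y_s a_s(\phi) \geq 0$ for all $y \in \gL_n$; by Euclidean self-duality of $\gL_n$ this says precisely that $a(\phi) \in \gL_n$ for every $\phi \in \C^d$.

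Next I would unfold condition (2). Writing $W := X_0 \otimes X_0 - \sum_{s=1}^n X_s \otimes X_s$, membership $W \in \PSD(\C^d) \tmax \PSD(\C^d)$ is block-positivity, i.e. $\bracket{\phi\otimes\psi}{W}{\phi\otimes\psi} \geq 0$ for all $\phi,\psi \in \C^d$ (rank-one elements generate $\PSD(\C^d)\tmin\PSD(\C^d)$). Expanding the expectation value gives $a_0(\phi)a_0(\psi) - \sum_{s=1}^n a_s(\phi)a_s(\psi) = \langle a(\phi),a(\psi)\rangle_L \geq 0$. Similarly $X_0 \in \PSD(\C^d)$ means $a_0(\phi) \geq 0$ for all $\phi$. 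Hence condition (2) is equivalent to: $a_0(\phi)\geq 0$ for all $\phi$, and $\langle a(\phi),a(\psi)\rangle_L \geq 0$ for all $\phi,\psi$.

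Finally I would establish the equivalence of the two reformulations. For (1) $\Rightarrow$ (2): if $a(\phi)\in\gL_n$ for all $\phi$ then $a_0(\phi)\geq 0$ is immediate, while $\langle a(\phi),a(\psi)\rangle_L\geq 0$ follows from Cauchy--Schwarz, since $a,b \in \gL_n$ gives $\langle a,b\rangle_L \geq a_0 b_0 - \|(a_s)_{s\geq1}\|_2\|(b_s)_{s\geq1}\|_2 \geq 0$ using $\|(a_s)_{s\geq1}\|_2 \leq a_0$ and the analogue for $b$. For (2) $\Rightarrow$ (1): specializing to $\psi=\phi$ gives $\langle a(\phi),a(\phi)\rangle_L \geq 0$, i.e. $a_0(\phi)^2 \geq \sum_{s\geq1}a_s(\phi)^2$, which with $a_0(\phi)\geq 0$ yields $a(\phi)\in\gL_n$. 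I expect no serious obstacle; the only points needing care are the correct invocation of self-duality of $\gL_n$ and $\PSD(\C^d)$ and the harmless interchange of universal quantifiers, while the one genuinely nonroutine ingredient is the observation that elements of the Lorentz cone pair nonnegatively under the Minkowski form, which is exactly what bridges the block-positive reformulation of (2) and the Lorentz membership in (1).
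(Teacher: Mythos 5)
Your proof is correct and follows essentially the same route as the paper: both unfold $\gL_n\otimes_{\max}\PSD(\C^d)$ by duality into the statement that the vector of test values lies in $\gL_n$, prove (2)$\Rightarrow$(1) by the diagonal test, and prove (1)$\Rightarrow$(2) from the fact that two Lorentz-cone elements pair nonnegatively under the Minkowski form (which the paper derives via the sign-flip symmetry plus Euclidean self-duality, and you derive via Cauchy--Schwarz -- the same fact). The only cosmetic difference is that you test against rank-one projectors $\proj{\phi}{\phi}$ where the paper tests against general $Y\geq 0$; both are valid since rank-one matrices generate $\PSD(\C^d)$.
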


\begin{proof}
We have $X\in \gL_n\otimes_{\max} \PSD(\C^d)$ if and only if
\begin{equation}
(\Tr\lbr YX_0\rbr,\Tr\lbr YX_1\rbr, \ldots ,\Tr\lbr YX_{n}\rbr)\in \gL_n
\label{equ:LorentzVect}
\end{equation}
for any $Y\geq 0$. For any $X\in \R^{n+1}\otimes \M_d^{sa}$ satisfying the second statement, we have $\Tr(YX_0)\geq 0$ and $\Tr(YX_s)\in\R$ for all $s\in \lset 1,\ldots ,n\rset$ and
\[
\Tr\lbr (Y\otimes Y)(X_0\otimes X_0 - \sum^{n}_{s=1} X_s\otimes X_s)\rbr = \Tr\lbr YX_0\rbr^2 -\sum^{n}_{s=1} \Tr\lbr YX_s\rbr^2\geq 0,
\]
which shows that \eqref{equ:LorentzVect} holds. This implies the first statement. Conversely, assume that \eqref{equ:LorentzVect} holds for any $Y\geq 0$. Clearly, this implies that $X_0\geq 0$. By the symmetries of the Lorentz cone we also have
\[
(\Tr\lbr YX_0\rbr, -\Tr\lbr YX_1\rbr, \ldots ,-\Tr\lbr YX_{n}\rbr)\in \gL_n,
\]
for any $Y\geq 0$. Using that the Lorentz cones are self-dual we find that 
\begin{align*}
\Tr&\lbr (Y\otimes Z)(X_0\otimes X_0 - \sum^{n}_{s=1} X_s\otimes X_s)\rbr \\
&= \Tr\lbr Y X_0\rbr\Tr\lbr Z X_0\rbr - \sum^{n}_{s=1} \Tr\lbr Y X_s\rbr\Tr\lbr Z X_s\rbr \geq 0
\end{align*}
for any $Y,Z\geq 0$, which implies the second statement.
\end{proof}

Consider a positive map $P:\M_d\ra\M_d$ with $\text{rk}(P)=n+1$ and satisfying $P=\vartheta_d\circ P^*\circ \vartheta_d$, or equivalently that $P\circ \vartheta_d$ is selfadjoint. Since $P\circ \vartheta_d$ is positive its spectral radius $R\lb P\circ \vartheta_d\rb$ is an eigenvalue and the corresponding eigenvector is positive semidefinite (see for example~\cite[Theorem 6.5]{wolf2012lecture}). Without loss of generality we can restrict to maps with spectral radius $1$ and, in this case, we have
\[
C_P = X_0\otimes X_0 + \sum^{n}_{i=1} \lambda_i X_i\otimes X_i ,
\]
with $X_0\geq 0$, $\lambda_i\in \lbr -1,1\rbr\setminus\lset 0\rset$, and matrices $X_i$ which are Hermitian and orthonormal. In the following, we will call this the \emph{canonical form} corresponding to $P$. We have the following:

\begin{thm}
Consider $P:\M_d\ra\M_d$ positive with $\text{rk}(P)=n+1$ and satisfying $P=\vartheta_d\circ P^*\circ \vartheta_d$. Let 
\[
C_P = X_0\otimes X_0 + \sum^{n}_{i=1} \lambda_i X_i\otimes X_i
\] 
be the canonical form, i.e.~$\lset X_i\rset^{n}_{i=0}$ forms an orthonormal basis of Hermitian matrices, $X_0\geq 0$, and $\lambda_i\in\lbr -1,1\rbr\setminus\lset 0\rset$. Then, the following are equivalent:
\begin{enumerate}
\item For all $i\in\lset 1,\ldots , n\rset$ we have $\lambda_i<0$.
\item There exist Hermitian matrices $Y_0,Y_1,\ldots , Y_{k}$ for some $k\in\N$ and
\[
\mu_0,\mu_1,\ldots ,\mu_{k}\geq 0,
\]
such that 
\[
C_P = \mu_0 Y_0\otimes Y_0 - \sum^{k}_{i=1} \mu_i Y_i\otimes Y_i .
\]
\end{enumerate}
\label{thm:AdvLorentzPSDMax}
\end{thm}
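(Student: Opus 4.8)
The plan is to recast the whole statement as a claim about the inertia of a single self-adjoint operator on the space of Hermitian matrices. Equip $\M_d^{sa}$ with the real inner product $\braket{A}{B}=\Tr[AB]$, and identify any swap-symmetric tensor $\sum_i c_i Z_i\otimes Z_i$ (with $Z_i$ Hermitian) with the self-adjoint operator $\sum_i c_i\ketbra{Z_i}$ acting on $\M_d^{sa}$. Under this dictionary every elementary term $Z\otimes Z$ becomes the positive semidefinite rank-one operator $\ketbra{Z}$, regardless of the definiteness of $Z$. Because $\{X_i\}_{i=0}^n$ is orthonormal, the canonical form corresponds to the operator $\Phi=\ketbra{X_0}+\sum_{i=1}^n\lambda_i\ketbra{X_i}$, whose nonzero eigenvalues are exactly $1$ (eigenvector $X_0$) and $\lambda_1,\dots,\lambda_n$ (eigenvectors $X_1,\dots,X_n$), with $0$ on the orthogonal complement of $\mathrm{span}\{X_0,\dots,X_n\}$.

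The implication (1) $\Rightarrow$ (2) is then immediate: if every $\lambda_i<0$, I rewrite $C_P=X_0\otimes X_0-\sum_{i=1}^n(-\lambda_i)X_i\otimes X_i$, which is of the required form with $k=n$, $\mu_0=1$, $Y_0=X_0$, and $\mu_i=-\lambda_i>0$, $Y_i=X_i$ for $i\geq 1$.

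For the substantive direction (2) $\Rightarrow$ (1), suppose $C_P=\mu_0 Y_0\otimes Y_0-\sum_{i=1}^k\mu_i Y_i\otimes Y_i$ with $\mu_i\geq 0$ and $Y_i$ Hermitian. Passing to operators gives $\Phi=\mu_0\ketbra{Y_0}-\sum_{i=1}^k\mu_i\ketbra{Y_i}$, where the subtracted part is positive semidefinite, so $\Phi\preceq\mu_0\ketbra{Y_0}$ in the Loewner order and the right-hand side has rank at most one. By Weyl's monotonicity inequality the second-largest eigenvalue of $\Phi$ is bounded above by that of $\mu_0\ketbra{Y_0}$, namely $0$; hence $\Phi$ has at most one strictly positive eigenvalue. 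Since $1$ is already a positive eigenvalue of $\Phi$ (eigenvector $X_0$), no $\lambda_i$ can be positive, so every $\lambda_i\leq 0$, and as $\lambda_i\neq 0$ by hypothesis we conclude $\lambda_i<0$ for all $i$, which is (1).

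The only point requiring care is the bookkeeping of the tensor-to-operator dictionary---verifying that $Z\otimes Z\mapsto\ketbra{Z}$ is positive semidefinite for arbitrary Hermitian $Z$, and that orthonormality of the $X_i$ makes them genuine eigenvectors of $\Phi$ with the stated eigenvalues. Once this correspondence is fixed, the content collapses to the elementary inertia count above, so I do not expect any real obstacle beyond stating Weyl's inequality correctly.
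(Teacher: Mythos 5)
Your argument is correct, and it takes a genuinely different route from the paper's proof. You translate the tensor $C_P$ into the self-adjoint operator $\Phi=\ketbra{X_0}+\sum_{i=1}^{n}\lambda_i\ketbra{X_i}$ on $\M^{sa}_d$ with the Hilbert--Schmidt inner product; orthonormality of the $X_i$ makes $1,\lambda_1,\dots,\lambda_n$ its nonzero eigenvalues, and decomposition (2) exhibits $\Phi\preceq\mu_0\ketbra{Y_0}$, a positive semidefinite operator of rank at most one, so eigenvalue monotonicity in the Loewner order (Weyl) leaves room for at most one strictly positive eigenvalue of $\Phi$. The paper instead stays on the coefficient side: it forms the rectangular matrix $M_{ij}=\Tr\lbr Y_iX_j\rbr$, derives the congruence-type identity $\mathrm{diag}(1,\lambda_1,\dots,\lambda_n)=M^{T}\,\mathrm{diag}(1,-\mu_1,\dots,-\mu_k)\,M$, and then needs a case distinction between $k=n$ and $k>n$, the singular value decomposition of $M$, Cauchy's interlacing theorem, and Sylvester's law of inertia to reach the same conclusion. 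Both proofs are at bottom the same inertia count---a rank-one positive semidefinite term minus a positive semidefinite term cannot have two positive eigenvalues---but your operator formulation makes this a two-line consequence of eigenvalue monotonicity, dispensing with the SVD, the interlacing step, and the case analysis. What the paper's formulation buys is that the data $Y_i$ and the matrix $M$ are exactly the objects reused immediately afterwards (Corollary \ref{cor:BasicFactorization}) to build the factorization through the Lorentz cone; as a proof of the stated equivalence, however, your route is cleaner. The one point to make fully explicit in a final write-up is the one you flag yourself: the dictionary $A\otimes B\mapsto A\,\Tr\lbr B\,\cdot\,\rbr$ is a well-defined linear isomorphism from $\M^{sa}_d\otimes\M^{sa}_d$ onto the space of linear operators on $\M^{sa}_d$, so the two decompositions of $C_P$ necessarily define the same operator $\Phi$---this is what licenses comparing their spectra.
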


\begin{proof}
Clearly, the first statement implies the second. For the other direction, we define the matrix $M\in\R^{(k+1)\times (n+1)}$ with entries $M_{ij}=\Tr\lbr Y_i X_j\rbr$. The second statement implies that 
\begin{equation}
\text{diag}\lb 1,\lambda_1,\ldots , \lambda_{n}\rb = M^T \text{diag}\lb 1,-\mu_1,\ldots , -\mu_{k}\rb M.
\label{equ:defMmu}
\end{equation}
Since its left-hand-side has rank $n+1$, \eqref{equ:defMmu} can only hold when $k\geq n$. When $k=n$ either the first statement holds, or Sylvester's law of inertia implies that $M$ is singular. However, the latter would contradict the fact that $\text{diag}\lb 1,\lambda_1,\ldots , \lambda_{n}\rb$ is full-rank. 

Consider the case where $k>n$. By the singular value decomposition, we have $M=USV$ for orthogonal matrices $U\in \M_{k+1}\lb \R\rb$ and $V\in \M_{n+1}\lb \R\rb$ and a matrix $S\in \R^{(k+1)\times (n+1)}$ of the form $(S_1,\textbf{0})^T$, where $S_1\in\M_{n+1}\lb \R\rb$ is a positive diagonal matrix and $\textbf{0}\in \R^{(k-n)\times (n+1)}$ is the zero matrix. Setting $A = V\text{diag}\lb 1,\lambda_1,\ldots , \lambda_{n}\rb V^T$ and $B=U^T\text{diag}\lb 1,-\mu_1,\ldots , -\mu_{k}\rb U$ we find that \eqref{equ:defMmu} is equivalent to 
\[
A = S^TBS = S_1B_1 S_1,
\] 
where $B_1\in \M_{n+1}\lb\R\rb$ is the block in the block-decomposition 
\[
B = \begin{pmatrix} B_1 & B_2 \\ B_2^T & B_3\end{pmatrix}.
\]
By Cauchy's interlacing theorem, we have $\lambda^{\downarrow}_j\lb B\rb\geq\lambda^{\downarrow}_j\lb B_1\rb$ for all $j\in\lset 0,1,\ldots ,n\rset$, where $\lambda^{\downarrow}_j\lb \cdot\rb$ denotes the $j$th eigenvalue in decreasing order. Since the eigenvalues of $B$ are $(\mu_0,-\mu_1,\ldots ,-\mu_{n})$, we find that $\lambda^{\downarrow}_j\lb B_1\rb<0$ for all $j\in\lset 1,\ldots ,n\rset$. Because the eigenvalues of $A$ are $(1,\lambda_1,\ldots , \lambda_{n})$, either the first statement holds, or by Sylvester's law of inertia $S_1$ is singular contradicting the fact that $A$ is full-rank.
\end{proof}

We have the following corollary characterizing a subset of positive maps factoring through Lorentz cones.

\begin{cor}
For a selfadjoint positive map $P:\M_d\ra\M_d$ with $\text{rk}(P)=k+1$ the following are equivalent:
\begin{enumerate}
\item The spectral radius $R\lb P\rb$ is a simple eigenvalue and all other eigenvalues of $P$ are zero or negative.
\item There exists an $(\gL_k,\PSD_{d})$-positive map $\alpha:\R^{k+1}\ra \M_d$ such that
\[
P = \alpha\circ A \circ \alpha^*\circ \vartheta_d ,
\]
where $A:\R^{k+1}\ra \R^{k+1}$ denotes the $\gL_k$-positive map given by $A(e_0)=e_0$ and $A(e_i)=-e_i$ for each $i\in\lset 1,\ldots ,k\rset$.
\end{enumerate}
In either case, the positive map $P:\M_d\ra\M_d$ factors through $\gL_k$.
\label{cor:BasicFactorization}
\end{cor}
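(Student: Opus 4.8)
The plan is to derive Corollary~\ref{cor:BasicFactorization} from Theorem~\ref{thm:AdvLorentzPSDMax}, reinterpreting its two equivalent conditions as a statement about factorization through $\gL_k$ by means of the Choi matrix $C_P$ and Proposition~\ref{prop:BasicLorentzPSDMax}. First I would set up the dictionary. Under the standing hypothesis the canonical form reads $C_P = X_0\otimes X_0 + \sum_{i=1}^{k}\lambda_i X_i\otimes X_i$ with $X_0\geq 0$ and $\lambda_i\in[-1,1]\setminus\{0\}$, and the spectral condition in~(1) is, through this form, exactly the requirement that $\lambda_i<0$ for every $i$: if some $\lambda_i$ equalled the top value $1$ the spectral radius would not be simple, and a positive $\lambda_i$ would violate the nonpositivity of the remaining eigenvalues, while conversely $\lambda_i<0$ for all $i$ forces the top eigenvalue to be simple with all others negative. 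Thus~(1) of the corollary is precisely condition~(1) of Theorem~\ref{thm:AdvLorentzPSDMax}, and the whole task reduces to manufacturing the map $\alpha$ out of the $\pm$-decomposition of $C_P$ and back.

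For the implication (1)$\Rightarrow$(2), I would use $\lambda_i<0$ to rewrite the canonical form as $C_P = Z_0\otimes Z_0 - \sum_{i=1}^{k} Z_i\otimes Z_i$ with $Z_0=X_0$ and $Z_i=\sqrt{|\lambda_i|}\,X_i$, and then define $\alpha:\R^{k+1}\to\M_d$ by $\alpha(e_i)=Z_i$. A direct computation of the Choi matrix shows that $\alpha\circ A\circ\alpha^*\circ\vartheta_d$ has Choi matrix $Z_0\otimes Z_0 - \sum_{i=1}^{k} Z_i\otimes Z_i = C_P$, so it equals $P$. It remains to verify that $\alpha$ is $(\gL_k,\PSD(\C^d))$-positive; by Lemma~\ref{lem:maps-vs-tensors} and self-duality of the Lorentz cone this means $\widehat{\alpha}=(Z_0,\dots,Z_k)\in\gL_k\tmax\PSD(\C^d)$, which by Proposition~\ref{prop:BasicLorentzPSDMax} amounts to $Z_0\geq 0$ (true by construction) together with $Z_0\otimes Z_0 - \sum_{i=1}^{k} Z_i\otimes Z_i = C_P\in\PSD(\C^d)\tmax\PSD(\C^d)$; the latter holds because $P$ is positive, which is equivalent to block-positivity of its Choi matrix $C_P$.

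For the converse (2)$\Rightarrow$(1), I would run the same Choi-matrix computation in reverse: given the factorization with $Z_i:=\alpha(e_i)$, which are Hermitian since $\alpha$ maps into the self-adjoint matrices, one gets $C_P = Z_0\otimes Z_0 - \sum_{i=1}^{k} Z_i\otimes Z_i$, a decomposition of the shape appearing in Theorem~\ref{thm:AdvLorentzPSDMax}(2) (with coefficients $\mu_0=\dots=\mu_k=1$). Invoking the nontrivial direction of Theorem~\ref{thm:AdvLorentzPSDMax} then gives $\lambda_i<0$ for all $i$, i.e.\ condition~(1). Finally, the closing assertion is immediate once~(2) holds: writing $P=(\alpha\circ A)\circ(\alpha^*\circ\vartheta_d)$ exhibits $P$ as the composition of the $(\gL_k,\PSD(\C^d))$-positive map $\alpha\circ A$ (as $A$ preserves $\gL_k$) with the $(\PSD(\C^d),\gL_k)$-positive map $\alpha^*\circ\vartheta_d$ (using Lemma~\ref{lem:dual} and self-duality of the cones), so $P$ factors through $\gL_k$.

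The genuine difficulty—the rank and signature argument via Cauchy interlacing and Sylvester's law—is already packaged inside Theorem~\ref{thm:AdvLorentzPSDMax}, which I treat as a black box. Within the corollary the only real work, and the place to be most careful, is the Choi-matrix bookkeeping: tracking the transpose introduced by the trailing $\vartheta_d$ so that $\alpha\circ A\circ\alpha^*\circ\vartheta_d$ lands on the symmetric tensor $Z_0\otimes Z_0 - \sum_i Z_i\otimes Z_i$ rather than on a partially transposed version, and then feeding exactly this tensor into Proposition~\ref{prop:BasicLorentzPSDMax}, where the key input is the equivalence between positivity of $P$ and block-positivity of $C_P$.
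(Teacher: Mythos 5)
Your proposal is correct in substance and follows essentially the same route as the paper's proof: read off the symmetric decomposition coming from the spectral theorem, define $\alpha$ on the basis vectors, certify $(\gL_k,\PSD(\C^d))$-positivity of $\alpha$ via Proposition \ref{prop:BasicLorentzPSDMax} together with the equivalence between positivity of a map and block-positivity of its Choi matrix, invoke Theorem \ref{thm:AdvLorentzPSDMax} for the converse, and note that the factorization $(\alpha\circ A)\circ(\alpha^*\circ\vartheta_d)$ exhibits the passage through $\gL_k$.

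One bookkeeping caveat, precisely at the spot you flagged as delicate. For a selfadjoint map $P=\sum_l \nu_l\, Y_l \Tr[Y_l\,\cdot\,]$ with Hermitian orthonormal $Y_l$, the tensor carrying the symmetric canonical form under the standard Choi convention is $C_{P\circ\vartheta_d}=\sum_l \nu_l\, Y_l\otimes Y_l$, whereas $C_P=\sum_l \nu_l\, Y_l^T\otimes Y_l$, which is in general \emph{not} of the form $\sum_l \lambda_l X_l\otimes X_l$ with Hermitian $X_l$. Accordingly, the paper runs the entire argument with $C_{P\circ\vartheta_d}$, and your two key claims --- that the canonical form of $C_P$ encodes the spectrum of $P$, and that $\alpha\circ A\circ\alpha^*\circ\vartheta_d$ has Choi matrix $Z_0\otimes Z_0-\sum_i Z_i\otimes Z_i$ --- cannot both hold under a single convention: the first needs the partially transposed (tensor) convention, the second the standard one. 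I would not count this as a gap you introduced, because the same tension sits in the literal statement of the corollary itself: the natural construction from the spectral decomposition of the selfadjoint map $P$ gives $\alpha\circ A\circ\alpha^*=P$ on the nose, so the trailing $\vartheta_d$ in condition (2) is only consistent if one reads the hypothesis as a spectral condition on $P\circ\vartheta_d$ (which is how the paper's subsequent examples, e.g.\ the Breuer--Hall map, actually apply it). Since $\vartheta_d$ is a $\PSD(\C^d)$-automorphism, none of this affects the concluding assertion that $P$ factors through $\gL_k$; just fix one Choi convention, state whether the spectral hypothesis refers to $P$ or to $P\circ\vartheta_d$, and place the transpose consistently.
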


\begin{proof}
If $\mu_0 = R\lb P\rb$ is a simple eigenvalue and all other eigenvalues of $P$ are zero or negative, then we can decompose
\[
C_{P\circ \vartheta_d} = \mu_0 Y_0\otimes Y_0 - \sum^{k}_{i=1} \mu_i Y_i\otimes Y_i ,
\]
for Hermitian matrices $Y_0,Y_1,\ldots , Y_{k}$ and $\mu_0,\ldots ,\mu_{k}\geq 0$. By Proposition \ref{prop:BasicLorentzPSDMax} we may set $\alpha(e_i)=\mu^{1/2}_iY_i$ and, by linear extension, we obtain an $(\gL_k,\PSD_{d})$-positive map $\alpha:\R^{k+1}\ra \M_d$. Clearly, the second statement holds for this map. 

To show the reverse direction assume that there exists an $(\gL_k,\PSD_{d})$-positive map $\alpha:\R^{k+1}\ra \M_d$ such that
\[
P = \alpha\circ A \circ \alpha^*\circ \vartheta_d .
\]
By defining $Y_i:=\alpha(e_i)$ for any $i\in\lset 1,\ldots ,k\rset$ we have
\[
C_{P\circ \vartheta_d} = Y_0\otimes Y_0 - \sum^{k}_{i=1} Y_i\otimes Y_i ,
\]
and the first statement follows from Theorem \ref{thm:AdvLorentzPSDMax}.
\end{proof}

We will finish this section with a list of examples of well-known positive maps on the positive-semidefinite cones that factor through Lorentz cone. 
\begin{itemize}
\item \textbf{Reduction map:} The reduction map $R:\M_d\ra \M_d$ is given by \eqref{equ:Red} and was introduced in~\cite{horodecki1999reduction}. It is easy to check that $R\circ \vartheta_d$ is selfadjoint and has spectrum $\lset d-1,-1\rset$ with $d-1$ being a simple eigenvalue. By Corollary \ref{cor:BasicFactorization} we conclude that $R$ factors through a Lorentz cone, but we can do even better: It is known that $\text{SN}\lb C_{R\circ \vartheta_d}\rb = 2$ (see~\cite{terhal2000schmidt} for this fact and the definition of the Schmidt number $\text{SN}$), which implies that $R$ factors through $\PSD\lb \C^2\rb\simeq \gL_3$.   
\item \textbf{Breuer--Hall map:} The Breuer--Hall map $B:\M_4\ra\M_4$ was introduced in~\cite{breuer2006optimal,hall2006new} as an example of a non-decomposable positive map. It is given by 
\[
B(X) = \Tr(X)\one_4 - X - UX^TU^\dagger,
\]
for the antisymmetric unitary $U=\sigma_y\otimes \one_2$. It is easy to check that $B\circ \vartheta_4$ is selfadjoint and has spectrum $\lset 2,-2,0\rset$ where $2$ is a simple eigenvalue. By Corollary \ref{cor:BasicFactorization} we conclude that $B$ factors through a Lorentz cone. Specifically, it can be checked that 
\[
B = \alpha\circ A \circ \alpha^*\circ \vartheta_4 ,
\]
for the linear map $\alpha:\R^6\ra \M_4$ embedding $\gL_5\simeq \PSD(\qua^2)$ into $\PSD(\C^4)$ as
\[
\alpha(x) = \begin{pmatrix} x_0+x_5 & x_4-ix_3 & 0 & x_2-ix_1 \\ x_4+ix_3 & x_0-x_5 & -x_2+ix_1 & 0 \\ 0 & -x_2 - ix_1 & x_0 + x_5 & x_4 + ix_3 \\ x_2+ix_1 & 0 & x_4 -ix_3 & x_0 - x_5\end{pmatrix}.
\]
Hence, the Breuer--Hall map $B$ factors through the Lorentz cone $\gL_5$.
\item \textbf{Projections onto spin factors:} A set $\lset s_1,\ldots ,s_k\rset\subset \M_{d}$ with $k\geq 2$ is called a \emph{spin system} if each $s_i$ is a Hermitian unitary and such that $s_is_j + s_js_i=0$ when $i\neq j$. The subalgebra $A=\text{span}\lset \one_d, s_1,\ldots ,s_k\rset$ of Hermitian matrices is called a \emph{spin factor}. For each spin factor $A\subset \M^{sa}_d$ there is a positive faithful projection $P_A:\M_d\ra \M_d$ such that $P_A(\M^{sa}_d)= A$, and it is known that $P_A$ is atomic (and in particular not decomposable) if the spin factor $A$ is irreversible (see~\cite{stormer1980decomposition,stormer2012positive}). For a spin system $\lset s_1,\ldots ,s_k\rset\subset \M_{d}$ and the corresponding spin factor $A$ it is easy to check that 
\[
P_A = \phi\circ \phi^*,
\]
for the linear map $\phi:\R^{k+1}\ra \M_d$ given by $\phi(e_0)=\one_d$ and $\phi(e_i)=s_i$ for $i\in\lset 1,\ldots ,k\rset$. Moreover, note that for every $x\in \R^k$ we have 
\[
\lb \sum^k_{i=1} x_i s_i\rb^2 = \|x\|^2_2 \one_d,
\] 
and therefore $\phi(\gL_k)\subseteq \PSD(\C^d)$. We conclude that $P_A$ factors through the Lorentz cone $\gL_k$.
\end{itemize}

It should be emphasized that the previous examples contain many positive maps that are non-decomposable. By the duality between decomposable positive maps and completely positive maps that are completely copositive (see~\cite{stormer1982decomposable}), and Theorem \ref{thm:genRed} we find many examples of completely positive maps that are completely copositive, but not entanglement annihilating. In particular, this shows that Proposition \ref{prop:StrongPPT2} does not generalize to all tensor powers. 

\section{Partial results for cones with a symmetric base}\label{sec:ConeSymBase}

In the following, we will focus on cones $\gC_X$ associated to a finite-dimensional normed space $X$. By Theorem \ref{thm:resilienceVstildePalpha}, resilience of $\gC_X$ can be decided by showing that every entanglement annihilating central map is entanglement breaking. Here, we will use the theory of Banach space tensor norms to obtain partial results aiming at a characterization of when central maps are entanglement annihilating.   

\subsection{The tensor radii of normed spaces and maps}\label{sec:SummaryPaperB}

Two natural tensor norms can be defined on the algebraic tensor product $X^{\otimes k}$ of a finite-dimensional, real, normed space $X$: the \emph{injective tensor norm},  given for $z\in X^{\otimes k}$ by
\[
\|z\|_{\e_k(X)} = \sup\Big\lset |(\lambda_1\otimes \cdots \otimes \lambda_k)(z)| ~:~\lambda_1,\ldots ,\lambda_k\in B_{X^*}\Big\rset,
\]
and the \emph{projective tensor norm}, given by
\[
\|z\|_{\pi_k(X)} = \inf\Big\lset \sum^n_{i=1} \|x^{(1)}_i\|_X\cdots \|x^{(k)}_i\|_X ~:~n\in\N,\ z=\sum^n_{i=1} x^{(1)}_{i}\otimes\ldots \otimes x^{(k)}_{i}\Big\rset .
\]
It is well-known that $\|z\|_{\e_k(X)}\leq \|z\|_{\pi_k(X)}$ for all $k\in\N$, which implies the inclusion 
\[
\gC_{\pi_k(X)}\subseteq \gC_{\varepsilon_k(X)},
\]
of the corresponding cones. In~\cite{paperB}, we studied the quantities
\begin{equation} \label{eq:def-tau} \tau_k(T) = \| T^{\otimes k} \|^{1/k}_{\e_k(X) \to \pi_k(Y)} = \lb \sup_{z\in X^{\otimes k}}\frac{\|T^{\otimes k}z\|_{\pi_k(Y)}}{\|z\|_{\e_k(X)}}\rb^{\frac{1}{k}},
\end{equation}
for every $k\in\N$ and any linear operator $T:X\ra Y$. Furthermore, we showed that the limit $\tau_\infty(T):=\lim_{k\ra\infty} \tau_k(T)$ exists and satisfies 
\begin{equation} \label{eq:tau-inequalities}\|T\|_{X \to Y} = \tau_1(T) \leq \tau_k(T) \leq \tau_{\infty}(T) \leq \|T\|_{N(X \to Y)}.\end{equation}
The quantity $\tau_\infty(T)$ is called the \emph{tensor radius of $T$} and in the special case of $X=Y$ and $T=\ident_X$ we call $\rho_\infty(X):=\tau_\infty(\ident_X)$ the \emph{tensor radius of the normed space $X$}. The following theorem collects the main results of~\cite{paperB}:

\begin{thm}[\cite{paperB}]
Let $X$ be a $n$-dimensional normed space.
\begin{itemize}
\item We have 
\[
\sqrt{n} \leq \rho_\infty(X) \leq n,
\]
with $\rho_\infty(X) = \|\ident_X\|_{N(X\ra X)}= n$ if and only if $X$ is Euclidean.
\item If $X$ has enough symmetries, then we have
\[\rho_{\infty}(X) = \frac{n}{\mathrm{d}(X,\ell_2^d)} ,\]
where $\text{d}(X,Y)$ denotes the Banach--Mazur distance, given by
\[
\text{d}(X,Y)=\inf\lset \|U\|_{X\ra Y}\|U^{-1}\|_{Y\ra X}~:~U:X\ra Y\text{ linear bijection}\rset .
\]

\item If $X$ is Euclidean, then we have $\tau_{\infty}(T) = \|T\|_{N}$ for every linear operator $T : X \to Y$ or $T : Y \to X$, where $Y$ is an arbitrary finite-dimensional normed space.
\end{itemize}
\label{thm:MainResPaperB}
\end{thm}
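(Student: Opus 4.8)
The plan is to analyze all three items through the regularized quantity $\tau_\infty$, extracting the upper bounds from the chain \eqref{eq:tau-inequalities} and doing the real work on the lower bounds. The bound $\rho_\infty(X)\le n$ is immediate from $\rho_\infty(X)=\tau_\infty(\ident_X)\le\|\ident_X\|_{N(X\to X)}$ together with the Auerbach estimate $\|\ident_X\|_{N}\le n$ (write $\ident_X=\sum_{i=1}^n x_i x_i^*$ for an Auerbach basis and its dual system, so $\|\ident_X\|_N\le\sum_i\|x_i\|_X\|x_i^*\|_{X^*}=n$; the reverse inequality is trivial, so $\|\ident_X\|_N=n$ for every $X$). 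The one tool I would set up first is the pair of composition inequalities $\tau_k(ST)\le\|S\|\,\tau_k(T)$ and $\tau_k(ST)\le\tau_k(S)\,\|T\|$, which follow because $\e_k$ and $\pi_k$ are tensor norms (so $\|R^{\otimes k}\|_{\e_k\to\e_k}=\|R^{\otimes k}\|_{\pi_k\to\pi_k}=\|R\|^k$) by inserting the middle space with its injective, resp.\ projective, norm; passing to the limit gives the same bounds for $\tau_\infty$.

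The technical core is the Euclidean item, which I would prove first since the other lower bounds reduce to its identity case $\tau_\infty(\ident_{\ell_2^n})=n$. The key observation is the Hilbertian inequality
\[
\|z\|_{\pi_k}\ \ge\ \frac{\langle z,z\rangle}{\|z\|_{\e_k}}\ =\ \frac{\|z\|_2^2}{\|z\|_{\e_k}},\qquad z\in(\ell_2^n)^{\otimes k},
\]
obtained by testing the projective norm against $w=z/\|z\|_{\e_k}$ via the self-duality $\pi_k(\ell_2^n)^*=\e_k(\ell_2^n)$; hence the ratio $\|z\|_{\pi_k}/\|z\|_{\e_k}$ is at least $\|z\|_2^2/\|z\|_{\e_k}^2$. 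It then suffices to produce tensors whose injective norm is small compared with their Hilbert--Schmidt norm, and a Gaussian tensor $z=\sum_{i_1,\dots,i_k}G_{i_1\cdots i_k}\,e_{i_1}\otimes\cdots\otimes e_{i_k}$ does this: with high probability $\|z\|_2^2\approx n^k$ while the injective (spectral) norm obeys $\|z\|_{\e_k}\le C_k\sqrt n$ for a constant $C_k$ subexponential in $k$, so the ratio is at least $n^{k-1}/C_k^2$ and $\tau_k(\ident_{\ell_2^n})\ge\big(n^{k-1}/C_k^2\big)^{1/k}\to n$. Combined with the upper bound this gives $\tau_\infty(\ident_{\ell_2^n})=n$. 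For a general operator with Euclidean domain $T:\ell_2^n\to Y$ I would feed the same Gaussian $z$ into $T^{\otimes k}$ and lower-bound $\|T^{\otimes k}z\|_{\pi_k(Y)}$ by pairing against a matched tensor $w=\sum G_{\vec i}\,g_{i_1}\otimes\cdots\otimes g_{i_k}$ built from a unit biorthogonal system $(f_j,g_j)$ on the range of $T$ (from the polar decomposition), so that the pairing concentrates around $\|T\|_N^k$; the case $T:Y\to\ell_2^n$ follows from the self-duality $\tau_k(T)=\tau_k(T^*)$ and $\|T\|_N=\|T^*\|_N$. Controlling the off-diagonal cross terms $g_{i'}(f_i)$ in this pairing is the main technical point of this step.

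With the identity Euclidean case in hand, the general lower bound of the first item is soft. For any $n$-dimensional $X$, choose an isomorphism $u:\ell_2^n\to X$ with $\|u\|\,\|u^{-1}\|$ arbitrarily close to $d(X,\ell_2^n)$, and write $\ident_{\ell_2^n}=u^{-1}\,\ident_X\,u$; the two composition inequalities give $n=\tau_\infty(\ident_{\ell_2^n})\le\|u^{-1}\|\,\|u\|\,\rho_\infty(X)$, so that
\[
\rho_\infty(X)\ \ge\ \frac{n}{d(X,\ell_2^n)}.
\]
Feeding in John's theorem $d(X,\ell_2^n)\le\sqrt n$ yields $\rho_\infty(X)\ge\sqrt n$. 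The same inequality is exactly the easy half of the enough-symmetries formula.

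The matching upper bound $\rho_\infty(X)\le n/d(X,\ell_2^n)$ for $X$ with enough symmetries is the main obstacle of the whole theorem. Here the invariant Euclidean structure is the John ellipsoid, so $d=d(X,\ell_2^n)$ is realized by inclusions $B_2\subseteq B_X\subseteq d\,B_2$; I emphasize that naive comparison of $\pi_k(X),\e_k(X)$ with their Euclidean counterparts, or any single-copy factorization through $\ell_2$, only produces the bounds $nd$ or $n$ and never $n/d$, so the improvement must come from the regularization $k\to\infty$ together with the symmetry. The route I would take is to prove the multilinear comparison $\|z\|_{\pi_k(X)}\le(n/d)^k\|z\|_{\e_k(X)}$ directly, by constructing an efficient decomposition of a given $z$ into product tensors: for $X=\ell_\infty^n$ this is precisely a multilinear Grothendieck-type inequality proved by Gaussian rounding (realizing an array of sup-norm $1$ as an average of rank-one sign tensors of total projective weight $n^{k/2}$), and for a general $X$ with enough symmetries one replaces Gaussian rounding by the geometry of $B_X$ relative to its invariant ellipsoid, using the John contact resolution $\ident=\sum_i c_i\,u_i\otimes u_i$ with $\sum_i c_i=n$ to build the decomposition. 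Finally, the rigidity in the first item (``$\rho_\infty(X)=n$ forces $X$ Euclidean'') I would obtain from the equality case: attaining $\rho_\infty(X)=n=\|\ident_X\|_N$ requires the self-dual Hilbertian mechanism of the second paragraph, which by the equality case of John's theorem is available only when $B_X$ is an ellipsoid; pinning this rigidity down quantitatively, i.e.\ a strict upper bound $\rho_\infty(X)<n$ for every non-Euclidean $X$, is the remaining delicate point.
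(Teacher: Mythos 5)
You should know at the outset that the paper you are working from never proves this statement: Theorem \ref{thm:MainResPaperB} is quoted from \cite{paperB} and used as a black box, so your proposal has to be judged on its own merits rather than against an internal proof. The soft ingredients you set up are correct: $\|\ident_X\|_{N}=n$ via Auerbach plus trace duality, the two composition inequalities for $\tau_k$, the Hilbertian self-duality bound $\|z\|_{\pi_k}\geq \langle z,z\rangle/\|z\|_{\e_k}$, and the chain $n=\tau_\infty(\ident_{\ell_2^n})\leq\|u\|\,\|u^{-1}\|\,\rho_\infty(X)$ giving $\rho_\infty(X)\geq n/\mathrm{d}(X,\ell_2^n)\geq\sqrt{n}$. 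Your Gaussian proof of the identity case $\tau_\infty(\ident_{\ell_2^n})=n$ also works: the estimate you leave implicit, that a Gaussian $k$-tensor has injective norm at most $k\sqrt{n}$ in expectation (so $C_k$ is polynomial in $k$), follows from a Sudakov--Fernique comparison, and this whole step is a probabilistic cousin of the deterministic Radon--Hurwitz construction that this paper itself uses for exactly this purpose in Lemma \ref{lem:MagicalZ}.

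The three remaining steps are where the theorem's real content sits, and none of them is established. (i) For $T:\ell_2^n\to Y$ your matched-Gaussian pairing is both imprecise and incomplete: the dual certificate for $\|T\|_N$ is a contraction $S:Y\to\ell_2^n$ with $\Tr(ST)$ close to $\|T\|_N$ (trace duality), not a system of unit functionals (with unit $g_i$'s the sum $\sum_i g_i(Te_i)$ need not approach $\|T\|_N$), and the order-$2k$ chaos cross terms you flag are never controlled. This gap is repairable with tools you already have: take $\|S\|\leq 1$ with $\Tr(ST)\geq\|T\|_N-\delta$, note $\tau_k(T)\geq\tau_k(ST)$, and twirl $ST$ over $O(n)$, writing the twirl $\frac{\Tr(ST)}{n}\ident_{\ell_2^n}$ as a finite convex combination $\sum_w\lambda_w O_w^{-1}(ST)O_w$ by Caratheodory; after expanding the $k$-th tensor power multinomially, every cross term $\bigotimes_l O_{w_l}^{-1}(ST)O_{w_l}$ has $\e_k\to\pi_k$ norm exactly $\tau_k(ST)^k$, because $\e_k$ and $\pi_k$ are invariant under leg-wise orthogonal maps. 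Hence $\frac{\Tr(ST)}{n}\tau_k(\ident_{\ell_2^n})\leq\tau_k(ST)\leq\tau_k(T)$, and letting $k\to\infty$ and using your identity case gives $\tau_\infty(T)\geq\|T\|_N$, with the converse from \eqref{eq:tau-inequalities}. (ii) The upper bound $\rho_\infty(X)\leq n/\mathrm{d}(X,\ell_2^n)$ for $X$ with enough symmetries is not proved: your ``multilinear Grothendieck-type inequality'' is a restatement of the goal, not an argument. Already for $X=\ell_\infty^n$ it is equivalent to showing that every sign tensor has $\pi_k(\ell_\infty^n)$-norm at most $C_k n^{k/2}$ with $C_k^{1/k}\to1$; Gaussian or Krivine-style rounding does not deliver this (the naive trilinear extension of Grothendieck's inequality is false, by Varopoulos's example), and the route I know goes through the Bohnenblust--Hille inequality, whose subexponential-in-$k$ constant is itself a substantial theorem. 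For a general $X$ with enough symmetries, the ``John contact resolution'' remark supplies no decomposition mechanism at all. (iii) The rigidity direction of the first bullet ($\rho_\infty(X)=n$ forces $X$ Euclidean) is also missing: what is needed is an upper-bound mechanism that degrades when $\mathrm{d}(X,\ell_2^n)>1$, i.e.\ a proof that $\rho_\infty(X)<n$ strictly for every non-Euclidean $X$; your only John-based statement is the \emph{lower} bound $\rho_\infty\geq n/d$, whose equality case says nothing here, so ``by the equality case of John's theorem'' has nothing to attach to. As you partly concede, (ii) and (iii) remain open in your proposal, and they are the core of \cite{paperB}'s theorem.
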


In the next section, we will show how the tensor radii can be used to show a certain kind of partial-entanglement annihilation.

\subsection{Tensor products of cones from tensor norms}\label{sec:TensorProdFromNorms}

Let $X$ denote a finite-dimensional normed space. In the following, we set $e_0=(1,0_X)\in \R\oplus X$ and we sometimes identify $X$ with its embedding into $\R\oplus X$ by $x\mapsto (0,x)$. With this convention, we consider the subspace $X_k \subset \lb \R\oplus X\rb^{\otimes k}$ given by
\[
X_k = \text{span}\lb\lset e^{\otimes k}_0\rset\cup X^{\otimes k}\rb\subset (\R\oplus X)^{\otimes k} .
\]
Note that $X_k$ consists of the vectors of the form $e^{\otimes k}_0 + z$, where $z\in X^{\otimes k}$ is identified with its canonical embedding into $(\R\oplus X)^{\otimes k}$. We will denote by ${\Pi}_{X_k}:\lb \R\oplus X\rb^{\otimes k} \ra \lb \R\oplus X\rb^{\otimes k}$ the orthogonal projection onto the subspace $X_k$. The following lemma is a multpartite version of~\cite[Proposition 2.25]{lami2018non} (see also~\cite[Lemma S13]{aubrun2019universal}) and for convenience we include a proof in Appendix \ref{sec:ProofLemTPBanachVsCone}. 

\begin{lem}[Tensor products on normed spaces and cones]
\label{lem:TPBanachVsCone}
For a finite-dimensional normed space $X$ we have
\[
\gC^{\otimes_{\max} k}_X\cap X_k = {\Pi}_{X_k}(\gC^{\otimes_{\max} k}_X) = \gC_{\varepsilon_k(X)},
\]
\[
\gC^{\otimes_{\min} k}_X\cap X_k = {\Pi}_{X_k}(\gC^{\otimes_{\min} k}_X) = \gC_{\pi_k(X)}.
\]
\end{lem}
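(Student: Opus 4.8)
The plan is to establish all four equalities through a short chain of inclusions, using the identification $X_k \cong \R \oplus X^{\otimes k}$ given by $s\,e_0^{\otimes k} + z \leftrightarrow (s,z)$, under which $\gC_{\pi_k(X)}$ and $\gC_{\varepsilon_k(X)}$ become $\{s\,e_0^{\otimes k} + z : s \geq \|z\|_{\pi_k(X)}\}$ and $\{s\,e_0^{\otimes k} + z : s \geq \|z\|_{\varepsilon_k(X)}\}$. I would decompose $(\R\oplus X)^{\otimes k} = \bigoplus_{S\subseteq[k]} W_S$ with $W_S = \bigotimes_{j\in S}X\otimes\bigotimes_{j\notin S}\R e_0$, so that $X_k = W_\emptyset\oplus W_{[k]}$ and (since $e_0\perp X$) the orthogonal projection $\Pi_{X_k}$ is exactly the grading projection killing every $W_S$ with $S\notin\{\emptyset,[k]\}$. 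Because $\Pi_{X_k}$ fixes $X_k$ pointwise, one always has $\gC^{\tmin k}_X\cap X_k\subseteq\Pi_{X_k}(\gC^{\tmin k}_X)$ and likewise for $\tmax$; hence for each line it suffices to prove the two flanking inclusions and squeeze the intersection set between the projected set and the tensor-norm cone.

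For the minimal product I would first show $\gC_{\pi_k(X)}\subseteq\gC^{\tmin k}_X$. Writing $z$ with $\|z\|_{\pi_k(X)}\le s$ as $\sum_i c_i\,\hat y_i^{(1)}\otimes\cdots\otimes\hat y_i^{(k)}$ with unit vectors $\hat y_i^{(j)}$ and $\sum_i c_i\le s$ (the projective infimum is attained in finite dimension), the key device is the sign-averaging identity
\[
\frac{1}{2^{k-1}}\sum_{\substack{\eta\in\{\pm1\}^k\\ \eta_1\cdots\eta_k=1}}\bigotimes_{j=1}^k\big(e_0+\eta_j\,\hat y_i^{(j)}\big) = e_0^{\otimes k} + \hat y_i^{(1)}\otimes\cdots\otimes\hat y_i^{(k)}.
\]
Each factor $e_0\pm\hat y_i^{(j)}=(1,\pm\hat y_i^{(j)})$ lies in $\gC_X$, so the left-hand side is a conic combination of products of cone elements; scaling by $c_i$, summing, and adding $(s-\sum_i c_i)e_0^{\otimes k}$ places $s\,e_0^{\otimes k}+z$ in $\gC^{\tmin k}_X$. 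Conversely, for $\Pi_{X_k}(\gC^{\tmin k}_X)\subseteq\gC_{\pi_k(X)}$ I would project a product $\bigotimes_j(t_j,y_j)$ of cone elements, obtaining $(\prod_j t_j)\,e_0^{\otimes k}+y_1\otimes\cdots\otimes y_k$, and conclude using the cross-norm bound $\|y_1\otimes\cdots\otimes y_k\|_{\pi_k(X)}\le\prod_j\|y_j\|_X\le\prod_j t_j$.

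For the maximal product I would work from $\gC^{\tmax k}_X = (\gC_{X^*}^{\tmin k})^*$, i.e.\ $w\in\gC^{\tmax k}_X$ iff $\langle w,\phi_1\otimes\cdots\otimes\phi_k\rangle\ge0$ for all $\phi_j\in\gC_X^*=\gC_{X^*}$. Testing an element of $X_k$ against $\phi_j=e_0^*+\eta_j\psi_j$ with $\psi_j\in B_{X^*}$ yields $s+(\eta_1\cdots\eta_k)(\psi_1\otimes\cdots\otimes\psi_k)(z)$, and optimizing the signs gives exactly $s\ge\|z\|_{\varepsilon_k(X)}$; the reverse inclusion $\gC_{\varepsilon_k(X)}\subseteq\gC^{\tmax k}_X$ is a direct estimate from $|(\psi_1\otimes\cdots\otimes\psi_k)(z)|\le\|z\|_{\varepsilon_k(X)}\prod_j\|\psi_j\|_{X^*}$. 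To handle $\Pi_{X_k}(\gC^{\tmax k}_X)\subseteq\gC_{\varepsilon_k(X)}$ for a \emph{general} $w$, whose intermediate components $w_S$ need not vanish, I would again average $\langle w,\bigotimes_j(e_0^*+\eta_j\psi_j)\rangle\ge0$ over $\{\eta:\eta_1\cdots\eta_k=1\}$, which annihilates every $w_S$ with $S\notin\{\emptyset,[k]\}$ and leaves $s+(\psi_1\otimes\cdots\otimes\psi_k)(z)\ge0$, whence $s\ge\|z\|_{\varepsilon_k(X)}$.

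The main obstacle, and the one computation I would carry out carefully, is precisely this sign-averaging step: verifying the combinatorial identity $\tfrac{1}{2^{k-1}}\sum_{\eta:\,\eta_1\cdots\eta_k=1}\prod_{j\in S}\eta_j = \mathbf 1[S\in\{\emptyset,[k]\}]$ (via $\mathbf 1[\eta_1\cdots\eta_k=1]=\tfrac12(1+\eta_1\cdots\eta_k)$), and checking that it simultaneously (i) exhibits the target tensor as a \emph{nonnegative} combination of products of cone elements in the minimal case, and (ii) isolates only the $W_\emptyset$ and $W_{[k]}$ graded pieces in the maximal case. Everything else—the graded orthogonal decomposition, the cross-norm bound, and attainment of the projective infimum—is routine.
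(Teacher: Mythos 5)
Your proof is correct, and its computational core coincides with the paper's: both exploit the reflection symmetry $(t,x)\mapsto(t,-x)$ of $\gC_X$ by averaging over sign patterns $\eta\in\{\pm 1\}^k$ with $\eta_1\cdots\eta_k=1$, and both identify the norm cones via the same two computations (pairing elements of $X_k$ with product functionals of the form $e_0^*+\psi_j$, $\psi_j\in B_{X^*}$, on the injective side, and lifting projective decompositions to products $\|x^{(j)}\|e_0+x^{(j)}\in\gC_X$ on the projective side). The difference is organizational: where you deploy the averaging. The paper first proves a standalone invariance lemma (Lemma \ref{lem:SeparableProj}): $\Pi_{X_k}$ factors as a product of pairwise symmetrizers $S_{i,j}=\frac12(\ident\otimes\ident+A\otimes A)$ with $A(t,x)=(t,-x)$, hence maps $\gC_X^{\tmax k}$ and $\gC_X^{\tmin k}$ into themselves; this gives ``intersection $=$ projection'' formally, and the two sets are then matched with $\gC_{\varepsilon_k(X)}$ and $\gC_{\pi_k(X)}$ in separate steps. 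You never prove (or need) this cone-invariance: your cyclic chains $\gC_{\pi_k(X)}\subseteq\gC_X^{\tmin k}\cap X_k\subseteq\Pi_{X_k}(\gC_X^{\tmin k})\subseteq\gC_{\pi_k(X)}$ and $\gC_{\varepsilon_k(X)}\subseteq\gC_X^{\tmax k}\cap X_k\subseteq\Pi_{X_k}(\gC_X^{\tmax k})\subseteq\gC_{\varepsilon_k(X)}$ absorb it. Concretely, for $\tmax$ you average the inequalities $\langle w,\bigotimes_j(e_0^*+\eta_j\psi_j)\rangle\geq 0$ over the subgroup $\{\eta:\eta_1\cdots\eta_k=1\}$, which is exactly the dual form of the paper's invariance argument; for $\tmin$ your sign-averaging identity exhibits $e_0^{\otimes k}+\hat y^{(1)}\otimes\cdots\otimes\hat y^{(k)}$ directly as an average of products of cone elements, which is slightly stronger than the paper's step (the paper only produces a preimage under $\Pi_{X_k}$ and then invokes Lemma \ref{lem:SeparableProj} to land in the cone). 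The paper's route buys a reusable positivity statement about $\Pi_{X_k}$ itself; yours buys economy --- one lemma fewer, with all four equalities falling out of two squeeze chains.
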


Inspired by Lemma \ref{lem:TPBanachVsCone} we can define tensor products of the cone $\gC_X$ that are closely related to the injective and projective tensor norms.  

\begin{defn}[The hat and check tensor product]
For any finite-dimensional normed space $X$, we define the closed cones
\[
\gC^{\check{\otimes} k}_X = \lset z\in \gC^{\otimes_{\max} k}_X ~:~\Pi_{X_k}(z)\in \gC_{\pi_k(X)}\rset,
\]
and 
\[
\gC^{\hat{\otimes} k}_X = \gC^{\otimes_{\min} k}_X \vee \gC_{\varepsilon_k(X)}.
\]
\end{defn}

To illustrate these tensor product, we compute them for the case of $k=2$ and $\gC_X = \gL_3 \simeq \PSD(\C^2)$, i.e., the case of two qubits. 

\begin{example}
The Pauli basis is an orthogonal basis of $\M_2$ with respect to the Hilbert-Schmidt inner product, and given by 
\[
\sigma_0 =\begin{pmatrix} 1 & 0 \\ 0 & 1\end{pmatrix},\, \sigma_1 =\begin{pmatrix} 0 & 1 \\ 1 & 0\end{pmatrix},\,\sigma_2 =\begin{pmatrix} 0 & -i \\ i & 0\end{pmatrix},\,\sigma_3 =\begin{pmatrix} 1 & 0 \\ 0 & -1\end{pmatrix}.
\]
The spinor map $S:\R^4\ra \M^{sa}_2$ is given by 
\[
S(t,x_1,x_2,x_3) = t \sigma_0 + \sum^3_{i=1} x_i \sigma_i ,
\]
and it defines an order-isomorphism between the cones $\gL_3$ and $\PSD(\C^2)$. It is easy to compute that 
\begin{equation}\label{equ:normsl22}
\|z\|_{\pi_2(\ell^n_2)} = \|z\|_1, \quad\text{ and }\quad \|z\|_{\varepsilon_2(\ell^n_2)} = \|z\|_\infty,
\end{equation}
where we interprete $z\in \R^n\otimes \R^n$ as an $n\times n$ matrix, and where $\|\cdot\|_1$ is the trace-norm, and $\|\cdot\|_\infty$ is the operator norm. Finally, we need the moment map $M:\M_2\otimes \M_2 \ra \M_4$ given by 
\[
M(X)_{ij} = \Tr\lbr X(\sigma_i\otimes \sigma_j)\rbr.
\]
Using the spinor map and \eqref{equ:normsl22}, it is easy to verify that 
\[
Z\in \PSD(\C^2)\hat{\otimes}\PSD(\C^2) \text{ if and only if } Z = S + W, 
\]
for some $S\in \PSD(\C^2)\otimes_{\min}\PSD(\C^2)$ and some $W\in \M_2\otimes \M_2$ satisfying $M(W)_{ij}=0$ whenever $i=0\neq j$ or $i\neq 0 =j$, and such that
\[
\Tr\lbr W\rbr \geq \|\lbr M(W)_{ij}\rbr^3_{i,j=1}\|_\infty .
\]
Again using the spinor map and \eqref{equ:normsl22}, we have
\[
\PSD(\C^2)\check{\otimes}\PSD(\C^2) = \lset Z\in \PSD(\C^2)\otimes_{\max}\PSD(\C^2):\Tr\lbr Z\rbr \geq \|\lbr M(Z)_{ij}\rbr^3_{i,j=1}\|_1\rset .
\]

\end{example}

It is easy to check that 
\[
\gC^{\otimes_{\min}k}_X \subseteq \gC^{\check{\otimes} k}_X \subseteq \gC^{\otimes_{\max}k}_X \quad\text{ and }\quad\gC^{\otimes_{\min}k}_X \subseteq \gC^{\hat{\otimes} k}_X \subseteq \gC^{\otimes_{\max}k}_X ,
\]
and in most cases these inclusions are strict. This shows that $\check{\otimes}$ and $\hat{\otimes}$ are tensor products in the category of convex cones. Moreover, these two tensor products are dual to each other:

\begin{lem}
For any normed space $X$ we have 
\[
\lb \gC^{\hat{\otimes} k}_X \rb^* = \gC^{\check{\otimes} k}_{X^*}.
\]
\end{lem}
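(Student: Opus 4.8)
The plan is to dualize the join defining $\gC^{\hat\otimes k}_X$ and then identify the two resulting pieces. I will use three ingredients: (i) the polarity rule $\lb \gC_1 \vee \gC_2\rb^* = \gC_1^* \cap \gC_2^*$, valid for any closed convex cones since $\gC_1\vee\gC_2 = \overline{\gC_1+\gC_2}$; (ii) the definition of the maximal tensor product together with the elementary identity $\gC_X^* = \gC_{X^*}$, which gives $\lb \gC^{\tmin k}_X\rb^* = \gC_{X^*}^{\tmax k}$; and (iii) the classical Banach-space tensor duality $\lb \varepsilon_k(X)\rb^* = \pi_k(X^*)$, which at the level of cones reads $\lb \gC_{\varepsilon_k(X)}\rb^{*} = \gC_{\pi_k(X^*)}$ (again via $\gC_Z^* = \gC_{Z^*}$), with the dual taken inside the appropriate subspace.

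Applying (i) to $\gC^{\hat\otimes k}_X = \gC^{\tmin k}_X \vee \gC_{\varepsilon_k(X)}$ gives $\lb \gC^{\hat\otimes k}_X\rb^* = \lb \gC^{\tmin k}_X\rb^* \cap \lb \gC_{\varepsilon_k(X)}\rb^*$, and (ii) identifies the first factor as $\gC_{X^*}^{\tmax k}$. The crux is therefore to compute, inside $(\R\oplus X^*)^{\otimes k}$, the dual of the cone $\gC_{\varepsilon_k(X)}$, which is supported on the proper subspace $X_k \subset (\R\oplus X)^{\otimes k}$.

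For this I would exploit the grading of $(\R\oplus X)^{\otimes k}$ by subsets $S \subseteq \{1,\dots,k\}$, where the $S$-th summand carries $X$ in the positions of $S$ and $\R e_0$ elsewhere; then $X_k$ is exactly the sum of the summands $S=\emptyset$ and $S=\{1,\dots,k\}$. The natural bilinear pairing between $(\R\oplus X)^{\otimes k}$ and $(\R\oplus X^*)^{\otimes k}$ respects this grading (because $e_0 \perp X^*$ and $X \perp f_0$, while $\R e_0$ pairs with $\R f_0$ and $X$ with $X^*$), so a functional $w$ pairs with any $c \in \gC_{\varepsilon_k(X)} \subseteq X_k$ only through $\Pi_{(X^*)_k}(w)$. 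Consequently $\lb \gC_{\varepsilon_k(X)}\rb^* = \{ w : \Pi_{(X^*)_k}(w) \in \gC_{\pi_k(X^*)}\}$, where the inner dual is taken in the self-paired subspace $(X^*)_k$ and evaluates to $\gC_{\pi_k(X^*)}$ by (iii). Intersecting with $\gC_{X^*}^{\tmax k}$ then reproduces verbatim the definition of $\gC^{\check\otimes k}_{X^*}$, which finishes the proof.

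The main obstacle I anticipate is the bookkeeping of the subspace duality: one must verify that the pairing is block-diagonal with respect to the $S$-grading and that the orthogonal projection appearing, $\Pi_{(X^*)_k}$, is precisely the one used in the definition of $\check\otimes$ for the space $X^*$ (that is, that $\Pi_{X_k}$ and $\Pi_{(X^*)_k}$ are adjoint under the pairing). Once this compatibility is established, the only genuinely analytic input is the duality between the injective and projective tensor norms, and everything else is formal manipulation of dual cones.
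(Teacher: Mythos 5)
Your proposal is correct and is essentially the paper's own argument in different packaging: the paper verifies the two inclusions by hand, which is exactly the polarity rule $(\gC_1\vee\gC_2)^*=\gC_1^*\cap\gC_2^*$ you invoke up front, and it computes the same two pieces via $(\gC^{\tmin k}_X)^*=\gC^{\tmax k}_{X^*}$ and $(\gC_{\varepsilon_k(X)})^*=\gC_{\pi_k(X^*)}$, using implicitly (through self-adjointness of $\Pi_{X_k}$) the same block-diagonality of the pairing that your grading argument makes explicit. No gaps.
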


\begin{proof}
Any $z\in  \gC^{\hat{\otimes} k}_X$ can be written as a sum $z=z_1 + z_2$ with $z_1\in \gC^{\otimes_{\min} k}_X$ and
\[
z_2\in \gC_{\varepsilon_k(X)} = \Pi_{X_k}\lb \gC^{\otimes_{\max} k}_X\rb,
\]
by Lemma \ref{lem:TPBanachVsCone}. For any $y\in \gC^{\check{\otimes} k}_{X^*}$ we have $\braket{y}{z_1}\geq 0$, and we have 
\[
\braket{y}{z_2} = \braket{\Pi_{X_k}(y)}{z_2}\geq 0,
\]
since $\Pi_{X_k}(y)\in \gC_{\pi_k(X^*)} = (\gC_{\varepsilon_k(X)})^*$. We conclude that $\gC^{\check{\otimes} k}_{X^*}\subseteq \lb \gC^{\hat{\otimes} k}_X\rb^*$.

For the converse direction, consider $y\in \lb C^{\hat{\otimes} k}_X\rb^*$. Since $\gC^{\otimes_{\min} k}_X \subseteq \gC^{\hat{\otimes} k}_X$ we have $y\in \gC^{\otimes_{\max} k}_{X^*}$, and since
\[
\gC_{\varepsilon_k(X)} = \Pi_{X_k}\lb \gC^{\otimes_{\max} k}_X\rb\subseteq \gC^{\hat{\otimes} k}_X,
\]
by Lemma \ref{lem:TPBanachVsCone}, we find that $\Pi_{X_k}\lb y\rb\in \lb \gC_{\varepsilon_k(X)}\rb^* = \gC_{\pi_k(X^*)}$. 
\end{proof}

The relevance of the tensor products $\check{\otimes}$ and $\hat{\otimes}$ comes from the following theorem, where, given $\alpha \in \R$ and $P : X \to Y$, we denote by $\tilde{P}_\alpha$ the central map $\alpha \oplus P$.

\begin{thm}
Let $X,Y$ be normed spaces, $P:X\ra Y$ a linear map, and $\alpha\in\R^+$. The following are equivalent:
\begin{enumerate}
\item $\tau_\infty\lb P\rb\leq \alpha$.
\item $\tilde{P}^{\otimes k}_\alpha\lb \gC^{\hat{\otimes} k}_X\rb \subseteq \gC^{\otimes_{\min} k}_Y$ for any $k\in\N$.
\item $\tilde{P}^{\otimes k}_\alpha\lb \gC^{\otimes_{\max} k}_X\rb \subseteq \gC^{\check{\otimes} k}_Y$ for any $k\in\N$.
\end{enumerate}
\label{thm:CheckHatEA}
\end{thm}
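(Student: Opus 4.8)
The plan is to translate each of the three cone inclusions into a statement about the operator norm $\|P^{\otimes k}\|_{\varepsilon_k(X)\to \pi_k(Y)} = \tau_k(P)^k$, and then to pass to the limit $k\to\infty$ using the inequalities \eqref{eq:tau-inequalities}. I would establish $(1)\Leftrightarrow(2)$ directly and then obtain $(1)\Leftrightarrow(3)$ by dualizing. The whole argument rests on one preliminary observation about how the central map $\tilde{P}_\alpha=\alpha\oplus P$ interacts with the subspaces $X_k$ and $Y_k$.

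\emph{Preliminary observation.} Since $\tilde{P}_\alpha(e_0)=\alpha e_0$ and $\tilde{P}_\alpha$ maps the embedded copy of $X$ into the embedded copy of $Y$ via $P$, the tensor power $\tilde{P}_\alpha^{\otimes k}$ sends $e_0^{\otimes k}\mapsto \alpha^k e_0^{\otimes k}$ and $X^{\otimes k}\to Y^{\otimes k}$ via $P^{\otimes k}$. In particular $\tilde{P}_\alpha^{\otimes k}(X_k)\subseteq Y_k$, and under the identifications $X_k\cong \R\oplus X^{\otimes k}$, $Y_k\cong \R\oplus Y^{\otimes k}$ its restriction to $X_k$ is exactly the central map $\alpha^k\oplus P^{\otimes k}$. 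Applying the characterization of central maps from Section~\ref{sec:properConesNormedSpaces} to the normed spaces $(X^{\otimes k},\varepsilon_k(X))$ and $(Y^{\otimes k},\pi_k(Y))$, this restriction is $(\gC_{\varepsilon_k(X)},\gC_{\pi_k(Y)})$-positive if and only if $\|P^{\otimes k}\|_{\varepsilon_k(X)\to\pi_k(Y)}\leq \alpha^k$, i.e.\ if and only if $\tau_k(P)\leq\alpha$ by \eqref{eq:def-tau}.

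\emph{Proof of $(1)\Leftrightarrow(2)$.} Writing $\gC^{\hat{\otimes} k}_X=\gC^{\otimes_{\min} k}_X\vee \gC_{\varepsilon_k(X)}$ as a sum of two cones and using that $\gC^{\otimes_{\min} k}_Y$ is closed while $\tilde{P}_\alpha^{\otimes k}$ is linear and continuous, the inclusion in $(2)$ for a fixed $k$ is equivalent to the conjunction of (a) $\tilde{P}_\alpha^{\otimes k}(\gC^{\otimes_{\min} k}_X)\subseteq \gC^{\otimes_{\min} k}_Y$ and (b) $\tilde{P}_\alpha^{\otimes k}(\gC_{\varepsilon_k(X)})\subseteq \gC^{\otimes_{\min} k}_Y$. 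For (b), the image lies in $Y_k$, so by Lemma~\ref{lem:TPBanachVsCone} the inclusion into $\gC^{\otimes_{\min} k}_Y$ is the same as the inclusion into $\gC^{\otimes_{\min} k}_Y\cap Y_k=\gC_{\pi_k(Y)}$, which by the preliminary observation is equivalent to $\tau_k(P)\leq\alpha$. Part (a) holds whenever $\tilde{P}_\alpha$ is $(\gC_X,\gC_Y)$-positive (positivity being preserved by minimal tensor powers), which amounts to $\|P\|_{X\to Y}=\tau_1(P)\leq\alpha$. Consequently $(2)$ holds for every $k$ if and only if $\tau_k(P)\leq\alpha$ for every $k$: the $k=1$ instance of (b) already yields $\tau_1(P)\leq\alpha$ hence (a) at all levels, and conversely the $\tau_k$-bounds give both (a) and (b). Finally, $\tau_k(P)\leq\alpha$ for all $k$ is equivalent to $\tau_\infty(P)\leq\alpha$ by \eqref{eq:tau-inequalities}, since $\tau_k(P)\leq\tau_\infty(P)$ for each $k$ and $\tau_\infty(P)=\lim_k\tau_k(P)$.

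\emph{Proof of $(1)\Leftrightarrow(3)$.} I would dualize using that $L(\gC)\subseteq\gD$ is equivalent to $L^*(\gD^*)\subseteq\gC^*$. Here $(\tilde{P}_\alpha^{\otimes k})^*=(\alpha\oplus P^*)^{\otimes k}$, the map $\alpha\oplus P^*$ being the central map attached to $P^*:Y^*\to X^*$; moreover $(\gC^{\otimes_{\max} k}_X)^*=\gC^{\otimes_{\min} k}_{X^*}$ and $(\gC^{\check{\otimes} k}_Y)^*=\gC^{\hat{\otimes} k}_{Y^*}$ by the duality lemma preceding the theorem. Thus $(3)$ is equivalent to $(\alpha\oplus P^*)^{\otimes k}\big(\gC^{\hat{\otimes} k}_{Y^*}\big)\subseteq \gC^{\otimes_{\min} k}_{X^*}$ for every $k$, which is precisely condition $(2)$ for the map $P^*$. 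By the equivalence $(1)\Leftrightarrow(2)$ already proved, this holds if and only if $\tau_\infty(P^*)\leq\alpha$, and $\tau_\infty(P^*)=\tau_\infty(P)$ because $\tau_k(P^*)=\tau_k(P)$ for every $k$; the latter follows from the isometry $\|T\|_{A\to B}=\|T^*\|_{B^*\to A^*}$ applied to $T=P^{\otimes k}$ together with the standard tensor-norm dualities $\varepsilon_k(X)^*=\pi_k(X^*)$ and $\pi_k(Y)^*=\varepsilon_k(Y^*)$. This gives $(3)\Leftrightarrow(1)$.

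The main obstacle is the bookkeeping in the preliminary observation: one must check carefully that $\tilde{P}_\alpha^{\otimes k}$ genuinely preserves the subspaces $X_k,Y_k$ and restricts there to a central map between $(X^{\otimes k},\varepsilon_k(X))$ and $(Y^{\otimes k},\pi_k(Y))$, so that the central-map criterion and Lemma~\ref{lem:TPBanachVsCone} apply and link the cone inclusions to $\|P^{\otimes k}\|_{\varepsilon_k(X)\to\pi_k(Y)}$. Once this link is in place, the remaining steps—splitting $\gC^{\hat{\otimes} k}_X$ into its two generating cones, the passage to the limit via \eqref{eq:tau-inequalities}, and the dualization—are routine.
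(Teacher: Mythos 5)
Your proposal is correct, and the link it builds between the cone inclusions and the norms $\|P^{\otimes k}\|_{\e_k(X)\to\pi_k(Y)}=\tau_k(P)^k$ is exactly the engine of the paper's proof as well: the paper also splits $\gC^{\hat{\otimes} k}_X$ into $\gC^{\otimes_{\min}k}_X$ and $\gC_{\e_k(X)}$, uses that $\tilde{P}^{\otimes k}_\alpha$ acts on $X_k$ as $\alpha^k$ on $e_0^{\otimes k}$ and as $P^{\otimes k}$ on $X^{\otimes k}$, and invokes Lemma \ref{lem:TPBanachVsCone} together with the limit $\tau_k(P)\to\tau_\infty(P)$. Where you genuinely diverge is in the treatment of condition (3). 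The paper proves the cycle $(1)\Rightarrow(2)\Rightarrow(3)\Rightarrow(1)$: the step $(2)\Rightarrow(3)$ is a direct argument using the commutation $\Pi_{Y_k}\circ\tilde{P}^{\otimes k}_\alpha=\tilde{P}^{\otimes k}_\alpha\circ\Pi_{X_k}$ and the definition of $\check{\otimes}$, while $(3)\Rightarrow(1)$ is obtained by testing on tensors $e_0^{\otimes k}+z$ with $\|z\|_{\e_k(X)}\leq 1$ and reading off $\|P^{\otimes k}(z)\|_{\pi_k(Y)}\leq\alpha^k$. You instead prove $(1)\Leftrightarrow(2)$ as a self-contained equivalence and then reduce $(3)$ for $P$ to $(2)$ for $P^*$ by dualizing, via $(\tilde{P}_\alpha)^*=\alpha\oplus P^*$, the duality lemma $\bigl(\gC^{\hat{\otimes} k}_{Y^*}\bigr)^*=\gC^{\check{\otimes} k}_{Y}$ (read backwards through the bipolar theorem, which is legitimate since the paper defines both cones as closed), and the identity $\tau_k(P^*)=\tau_k(P)$ coming from the finite-dimensional dualities $\e_k(X)^*=\pi_k(X^*)$ and $\pi_k(Y)^*=\e_k(Y^*)$. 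Your route has the conceptual payoff of making explicit that conditions (2) and (3) are dual to one another under $P\leftrightarrow P^*$, and it yields $\tau_\infty(P^*)=\tau_\infty(P)$ as a byproduct; the price is that it leans on these tensor-norm duality facts, which the paper's direct cyclic argument avoids entirely, keeping its proof self-contained and isolating the useful commutation of $\tilde{P}^{\otimes k}_\alpha$ with the projections $\Pi_{X_k},\Pi_{Y_k}$.
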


\begin{proof}
Note first, that any of the three conditions implies that $\tilde{P}_\alpha$ is $(\gC_X,\gC_Y)$-positive, or equivalently that $\|P\|\leq \alpha$. This implies that $\tilde{P}^{\otimes k}_\alpha$ is both $(\gC_X^{\tmin k},\gC_Y^{\tmin k})$-positive and $(\gC_X^{\tmax k},\gC_Y^{\tmax k})$-positive.

We first show that $(1) \Rightarrow (2)$. To show $(2)$, it suffices to show that 
$\tilde{P}_\alpha^{\otimes k}(w) \in \gC_Y^{\tmin k}$ whenever $w\in \gC_X^{\tmin k}$ or $w\in \gC_{\e_k(X)}$. Since $\tilde{P}_\alpha^{\otimes k}$ is $(\gC_X^{\tmin k},\gC_Y^{\tmin k})$-positive, the first case is clear. For the second case, consider $w \in \gC_{\e_k(X)}$, which can be written as $w=z_0e^{\otimes k}_0 + z$ for $z\in X^{\otimes k}$ such that $z_0\geq \|z\|_{\varepsilon_k(X)}$. Now, we have 
\[
\tilde{P}^{\otimes k}_\alpha\lb w \rb = \alpha^k z_0 e^{\otimes k}_0 + P^{\otimes k}(z) .\]
Since $\|P^{\otimes k}(z)\|_{\pi_k(Y)} \leq \tau_k(P)^k \|z\|_{\e_k(X)} \leq \alpha^k z_0$, we conclude that
\[
\tilde{P}^{\otimes k}_\alpha(w) \in \gC_{\pi_k(Y)} \subset \gC_Y^{\tmin k},
\]
as needed.

To show that $(2) \Rightarrow (3)$ consider $z \in \gC^{\otimes_{\max} k}_X$. We already observed that $\tilde{P}^{\otimes k}(z) \in \gC^{\otimes_{\max} k}_X$. By Lemma \ref{lem:TPBanachVsCone}, $\Pi_{X_k}(z)\in \gC_{\varepsilon_k(X)} \subset \gC_X^{\hat{\otimes} k}$; and (2) implies that
\[
\Pi_{Y_k}\lb \tilde{P}^{\otimes k}_\alpha\lb z\rb\rb = \tilde{P}^{\otimes k}_\alpha\lb \Pi_{X_k}\lb z\rb\rb \in \gC^{\otimes_{\min} k}_Y,
\]
so that $(3)$ holds by definition of $\check{\otimes}$. 

We finally show that $(3) \Rightarrow (1)$. Consider $z \in X^{\otimes k}$ such that $\|z\|_{\e_k(X)} \leq 1$, and set $w =e_0^{\otimes k} + z \in \gC_{\e_k(X)} \subset \gC_X^{\tmax k}$. By assumption, we have 
\[
\Pi_{Y_k}\lb\tilde{P}^{\otimes k}_\alpha\lb w \rb \rb= \alpha^k e^{\otimes k}_0 + P^{\otimes k}(z) \in \gC_{\pi_k(Y)},
\] 
and we conclude that $\|P^{\otimes k}(z)\|_{\pi_k(Y)}\leq \alpha^k$. Since $z$ and $k\in\N$ was arbitrary we find that $(1)$ holds. 
\end{proof}

By combining Theorem \ref{thm:MainResPaperB} and Theorem \ref{thm:CheckHatEA} we can illuminate the limits of the proof-technique from Section \ref{sec:lorentz}: By Theorem \ref{thm:resilienceVstildePalpha} resilience of a cone $\gC_X$ can be decided by focusing on central maps, of the form $\tilde{P}_\alpha$ for $P:X \to Y$. If the entanglement of tensors in $\gC^{\hat{\otimes} k}_X$ (as used in Section \ref{sec:lorentz}) is annihilated we can only conclude that $\tau_\infty\lb P\rb\leq \alpha$. Except in the cases where $X$ is Euclidean (and $\gC_X$ is a Lorentz cone) we have $\tau_\infty\lb P\rb<\| P\|_{N(X\ra X)}$, and we cannot conclude that $\tilde{P}_\alpha$ is entanglement breaking.  

We conclude this section with three corollaries that follow from Theorem \ref{thm:resilienceVstildePalpha}, Theorem \ref{thm:MainResPaperB} and Theorem \ref{thm:CheckHatEA}:

\begin{cor}\label{cor:First} 
If $\gC_X$ is not resilient, then we have  
\[
\tau_{\infty}(P)<\|P\|_{N(X\ra X)},
\] 
for some linear map $P:X\ra X$. In particular, this implies that $X$ is not Euclidean.
\end{cor}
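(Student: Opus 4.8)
The plan is to combine the reduction to central maps (Theorem \ref{thm:resilienceVstildePalpha}) with the tensor-radius characterization (Theorem \ref{thm:CheckHatEA}). I would start by assuming that $\gC_X$ is not resilient. By Theorem \ref{thm:resilienceVstildePalpha} this is equivalent to the existence of a central map $\tilde{P}_\alpha = \alpha \oplus P$, with $P:X\ra X$, that is $\gC_X$-entanglement annihilating but not $\gC_X$-entanglement breaking. Evaluating the annihilation condition at $k=1$ shows that $\tilde{P}_\alpha$ is $(\gC_X,\gC_X)$-positive, so $\alpha \geq \|P\|_{X\ra X}\geq 0$ and we may treat $\alpha \in \R^+$. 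Recalling from Section \ref{sec:properConesNormedSpaces} that $\alpha \oplus P$ is entanglement breaking precisely when $\|P\|_{N(X\ra X)}\leq \alpha$, the failure of this condition yields the strict inequality $\alpha < \|P\|_{N(X\ra X)}$.

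The crux is to squeeze the bound $\tau_\infty(P)\leq \alpha$ out of entanglement annihilation. Since $\tilde{P}_\alpha$ is entanglement annihilating, $\tilde{P}^{\otimes k}_\alpha\lb \gC^{\otimes_{\max} k}_X\rb \subseteq \gC^{\otimes_{\min} k}_X$ holds for every $k$. Because $\gC^{\hat{\otimes} k}_X \subseteq \gC^{\otimes_{\max} k}_X$, restricting to the hat tensor product gives $\tilde{P}^{\otimes k}_\alpha\lb \gC^{\hat{\otimes} k}_X\rb \subseteq \gC^{\otimes_{\min} k}_X$ for all $k$, which is exactly condition (2) of Theorem \ref{thm:CheckHatEA}. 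Invoking the implication $(2)\Rightarrow(1)$ of that theorem then delivers $\tau_\infty(P)\leq \alpha$. Chaining this with the previous paragraph produces $\tau_\infty(P)\leq \alpha < \|P\|_{N(X\ra X)}$, which establishes the first assertion.

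For the ``in particular'' clause I would argue by contraposition using the last bullet of Theorem \ref{thm:MainResPaperB}: if $X$ were Euclidean, then $\tau_\infty(T)=\|T\|_{N(X\ra X)}$ for every linear operator $T:X\ra X$; applying this with $T=P$ contradicts the strict inequality just obtained. Hence $X$ is not Euclidean.

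I do not expect a serious analytic obstacle, since the statement is essentially a repackaging of the preceding equivalences; the one place to be careful is the inclusion $\gC^{\hat{\otimes} k}_X \subseteq \gC^{\otimes_{\max} k}_X$, which is precisely what allows annihilation of the full maximal-tensor entanglement to be downgraded to annihilation of the weaker hat-tensor entanglement and thereby makes Theorem \ref{thm:CheckHatEA} applicable.
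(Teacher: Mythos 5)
Your proposal is correct and follows exactly the route the paper intends: the paper states that Corollary \ref{cor:First} follows from Theorem \ref{thm:resilienceVstildePalpha}, Theorem \ref{thm:MainResPaperB} and Theorem \ref{thm:CheckHatEA}, and the paragraph preceding the corollaries sketches precisely your argument (non-resilience yields a central map $\tilde{P}_\alpha$ that is entanglement annihilating but not entanglement breaking, the inclusion $\gC^{\hat{\otimes} k}_X \subseteq \gC^{\otimes_{\max} k}_X$ downgrades annihilation to condition (2) of Theorem \ref{thm:CheckHatEA} giving $\tau_\infty(P)\leq\alpha < \|P\|_{N(X\ra X)}$, and the Euclidean case is excluded via the last bullet of Theorem \ref{thm:MainResPaperB}). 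No gaps; your attention to $\alpha\geq 0$ via positivity at $k=1$ is a correct and worthwhile detail.
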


Corollary \ref{cor:First} provides an alternative way to prove Theorem \ref{thm:Main1} by using the results from \cite{paperB}.  

\begin{cor}
If $X$ is not Euclidean, then there exists a linear map $P:X\ra X$ and an $\alpha\geq 0$ such that $\tilde{P}_\alpha$ is not entanglement breaking and such that 
\[
\tilde{P}^{\otimes k}_\alpha\lb \gC^{\hat{\otimes} k}_X\rb \subseteq \gC^{\otimes_{\min} k}_X \quad\text{ and }\quad\tilde{P}^{\otimes k}_\alpha\lb \gC^{\otimes_{\max} k}_X\rb \subseteq \gC^{\check{\otimes} k}_X
\]  
for any $k\in\N$.
\label{cor:GeneralPartialEA}
\end{cor}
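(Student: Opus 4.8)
The plan is to prove the statement by making the single natural choice $P=\ident_X$ together with $\alpha=\rho_\infty(X)=\tau_\infty(\ident_X)$, and then to read off everything from Theorem \ref{thm:CheckHatEA} and Theorem \ref{thm:MainResPaperB}. The point is that the identity map is precisely the central map whose tensor radius is the tensor radius of the space, so the partial-annihilation inclusions come for free from the characterization of $\tau_\infty$, while the failure of entanglement breaking is exactly the non-Euclidean gap between the tensor radius and the nuclear norm.

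First I would set $P=\ident_X$ and $\alpha=\rho_\infty(X)$. Since $\rho_\infty(X)=\tau_\infty(\ident_X)$ by definition, we have $\tau_\infty(P)=\alpha$, so condition (1) of Theorem \ref{thm:CheckHatEA} holds (with equality). The equivalences $(1)\Leftrightarrow(2)\Leftrightarrow(3)$ of that theorem then immediately yield the two desired inclusions
\[
\tilde{P}^{\otimes k}_\alpha\lb \gC^{\hat{\otimes} k}_X\rb \subseteq \gC^{\otimes_{\min} k}_X \quad\text{ and }\quad \tilde{P}^{\otimes k}_\alpha\lb \gC^{\otimes_{\max} k}_X\rb \subseteq \gC^{\check{\otimes} k}_X
\]
for every $k\in\N$. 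Note also that $\alpha=\rho_\infty(X)\geq\sqrt{n}>0$ by the first bullet of Theorem \ref{thm:MainResPaperB}, so the requirement $\alpha\geq 0$ is met.

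It then remains to verify that $\tilde{P}_\alpha$ is \emph{not} entanglement breaking. Recall from Section \ref{sec:properConesNormedSpaces} that the central map $\tilde{P}_\alpha=\alpha\oplus\ident_X$ is entanglement breaking precisely when $\|\ident_X\|_{N(X\ra X)}\leq\alpha$. Writing $n=\dim X$, the first bullet of Theorem \ref{thm:MainResPaperB} gives $\|\ident_X\|_{N(X\ra X)}=n$ and, crucially, states that $\rho_\infty(X)=n$ holds if and only if $X$ is Euclidean. Since $X$ is assumed not to be Euclidean, we obtain the strict inequality $\alpha=\rho_\infty(X)<n=\|\ident_X\|_{N(X\ra X)}$, so $\tilde{P}_\alpha$ fails to be entanglement breaking, completing the argument.

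The heart of the proof is carried entirely by the two cited theorems, so there is no genuine computational obstacle. The single point requiring care is that one needs a \emph{strict} gap $\alpha<\|\ident_X\|_{N(X\ra X)}$ rather than a mere inequality: this is exactly what the ``if and only if'' in the first bullet of Theorem \ref{thm:MainResPaperB} supplies for non-Euclidean $X$, and it is the reason the hypothesis cannot be weakened. The partial-annihilation inclusions themselves are a black-box consequence of Theorem \ref{thm:CheckHatEA} applied in the boundary case $\tau_\infty(P)=\alpha$.
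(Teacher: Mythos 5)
Your proposal is correct and is essentially the paper's own (implicit) argument: the paper derives this corollary by combining Theorem \ref{thm:MainResPaperB} and Theorem \ref{thm:CheckHatEA} exactly as you do, with the witness being the central map built from $P=\ident_X$ and $\alpha=\rho_\infty(X)$ (this is also why the follow-up Corollary \ref{cor:IsotropicPartialEA} specializes to the isotropic maps $I_\alpha$, which are precisely these central maps). Your identification of the key point — that the strict gap $\rho_\infty(X)<\|\ident_X\|_{N(X\to X)}=\dim X$ for non-Euclidean $X$ is what defeats entanglement breaking — matches the paper's reasoning.
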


The third corollary, considers the case of spaces with enough symmetries:

\begin{cor}
If $X$ has enough symmetries and
\[
\frac{d}{\mathrm{d}(X,\ell_2^d)}\leq \alpha < \|\ident_X\|_{N(X\ra X)} = d,
\]
then the isotropic map $I_{\alpha}:=I_{\alpha,1}$ is not entanglement breaking and satisfies
\begin{enumerate}
\item $I_{\alpha}^{\otimes k}\lb \gC^{\hat{\otimes} k}_X\rb \subseteq \gC^{\otimes_{\min} k}_X$ for any $k\in\N$.
\item $I_{\alpha}^{\otimes k}\lb \gC^{\otimes_{\max} k}_X\rb \subseteq \gC^{\check{\otimes} k}_X$ for any $k\in\N$. 
\end{enumerate}
\label{cor:IsotropicPartialEA}
\end{cor}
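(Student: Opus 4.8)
The plan is to recognize the isotropic map $I_\alpha = I_{\alpha,1} = \alpha\pi_1 + \pi_2$ as the central map $\tilde{P}_\alpha = \alpha\oplus P$ associated to the identity operator $P = \ident_X : X \to X$, and then to invoke Theorem \ref{thm:CheckHatEA} directly. Indeed, $(\alpha\oplus \ident_X)(t,x) = (\alpha t, x) = I_{\alpha,1}(t,x)$, so the two inclusions (1) and (2) in the statement are precisely conditions (2) and (3) of Theorem \ref{thm:CheckHatEA} specialized to $Y=X$ and $P=\ident_X$. It therefore suffices to verify the hypothesis $\tau_\infty(\ident_X)\leq \alpha$ of that theorem, together with the separate claim that $I_\alpha$ is not entanglement breaking.

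For the first point, I would use that $\tau_\infty(\ident_X) = \rho_\infty(X)$ by the very definition of the tensor radius of a normed space. Since $X$ has enough symmetries, the second bullet of Theorem \ref{thm:MainResPaperB} gives $\rho_\infty(X) = d/\mathrm{d}(X,\ell_2^d)$, where $d=\dim X$. The assumption $\frac{d}{\mathrm{d}(X,\ell_2^d)}\leq \alpha$ then reads exactly $\tau_\infty(\ident_X)\leq \alpha$, which is condition (1) of Theorem \ref{thm:CheckHatEA}. The equivalence in that theorem immediately yields both displayed inclusions.

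For the second point, I would appeal to Lemma \ref{lem:PropsIso}(4): since the base of $\gC_X$ is the symmetric unit ball of $X$ sitting in $\R^d$ (so the parameter $n$ of that lemma equals $d$), the isotropic map $I_{\alpha,1}$ is $\gC_X$-entanglement breaking if and only if $1\leq \alpha/d$, i.e. $\alpha\geq d$. The hypothesis $\alpha < d = \|\ident_X\|_{N(X\ra X)}$ therefore guarantees that $I_\alpha$ is not entanglement breaking, completing the argument.

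There is essentially no serious obstacle here, since the corollary is an immediate specialization of Theorem \ref{thm:CheckHatEA} to the identity map combined with the computation of $\rho_\infty(X)$ from Theorem \ref{thm:MainResPaperB}. The only points demanding care are bookkeeping ones: confirming that the central map $\tilde{P}_\alpha$ attached to $P=\ident_X$ is literally the isotropic map $I_{\alpha,1}$, that the dimension parameter $n$ in Lemma \ref{lem:PropsIso} equals $d=\dim X$, and that the value $\|\ident_X\|_{N(X\ra X)} = d$ appearing in the upper bound on $\alpha$ coincides with the threshold $\alpha=d$ at which $I_\alpha$ becomes entanglement breaking. These consistency checks are exactly what make the gap between $\tau_\infty(\ident_X)$ and $\|\ident_X\|_N$ manifest as a genuine interval of non-entanglement-breaking central maps that nonetheless annihilate the restricted $\hat{\otimes}$/$\check{\otimes}$ form of entanglement.
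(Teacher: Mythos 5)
Your proposal is correct and follows exactly the route the paper intends: the paper states this corollary without further proof as a consequence of Theorems \ref{thm:MainResPaperB} and \ref{thm:CheckHatEA} (together with the entanglement-breaking criterion for isotropic maps), and your argument is precisely that specialization, taking $P=\ident_X$ so that $\tilde{P}_\alpha = I_{\alpha,1}$, using $\tau_\infty(\ident_X)=\rho_\infty(X)=d/\mathrm{d}(X,\ell_2^d)\leq\alpha$ for the two inclusions, and Lemma \ref{lem:PropsIso}(4) (equivalently $\|\ident_X\|_{N}=d>\alpha$) for the failure of entanglement breaking. No gaps.
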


It should be noted that Corollary \ref{cor:GeneralPartialEA} identifies many convex cones and natural tensor products (i.e., $\hat{\otimes}$ and $\check{\otimes}$) for which non-trivial tensor-stable positive maps exist. To our knowledge this is the first result of this kind, where arbitrary tensor-powers of a map can be controlled in a non-trivial setting. 

\subsection{A candidate for resilience?}\label{sec:CandidateForResilience}

Inspired by the results from the previous section, we will take a closer look at the special case of $X=\ell^d_1$. It is well-known that $\ell^d_1$ has enough symmetries and that $\mathrm{d}(X,\ell_2^d) = \sqrt{d}$. Corollary \ref{cor:RedToIso} implies that the pair $(\gC_{\ell^d_1},\gC_{\ell^d_1})$ is resilient if and only if there is an isotropic map $I_{\alpha,\beta}$ (see \eqref{equ:IsoMap}) that is entanglement annihilating and not entanglement breaking. After choosing $\beta=1$ without loosing generality, we conclude by Corollary \ref{cor:IsotropicPartialEA} that $I_\alpha = I_{\alpha,1}$ could only be entanglement annihilating without being entanglement breaking for $\sqrt{d} \leq \alpha \leq d$. We state this as a proposition: 

\begin{prop}
The cone $\gC_{\ell^d_1}$ is resilient if and only if there exists a
\[
\alpha \in \lbr \sqrt{d}, d\rb,
\]
such that the isotropic map $I_{\alpha}$ is entanglement annihilating.
\end{prop}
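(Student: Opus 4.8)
The plan is to specialize the isotropic-map criterion for cones with enough symmetries to $X=\ell^d_1$ and to pin down the two thresholds controlling the normalized family $I_\alpha:=I_{\alpha,1}$. Since $\ell^d_1$ has enough symmetries, Corollary~\ref{cor:RedToIso} reduces resilience of $\gC_{\ell^d_1}$ to showing that every $\gC_{\ell^d_1}$-entanglement annihilating isotropic map is $\gC_{\ell^d_1}$-entanglement breaking. As the base is symmetric, I would first normalize the analysis: positivity and entanglement breaking of $I_{\alpha,\beta}$ depend only on $\alpha$ and $|\beta|$ by parts (3)--(4) of Lemma~\ref{lem:PropsIso}, and since $I_{\alpha,-\beta}=A\circ I_{\alpha,\beta}$ for the automorphism $A(t,x)=(t,-x)$, entanglement annihilation too is insensitive to the sign of $\beta$ by Lemma~\ref{lem:EntAnnUnderTrans}. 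Combined with homogeneity, this lets me rescale any entanglement annihilating, non-entanglement-breaking isotropic map to the form $I_\alpha$ with $\alpha\ge 0$ (the case $\beta=0$ is discarded, since $I_{\alpha,0}=\alpha\pi_1$ factors through $\R^+$ and is entanglement breaking). It then remains to decide which $I_\alpha$ are entanglement breaking and which are entanglement annihilating.

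The entanglement-breaking threshold is immediate: the base of $\gC_{\ell^d_1}$ lies in $\R^d$, so $n=d$ in Lemma~\ref{lem:PropsIso}, and part (4) gives that $I_\alpha$ is entanglement breaking if and only if $\alpha\ge d$. In particular $I_\alpha$ is \emph{not} entanglement breaking for every $\alpha<d$.

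The lower threshold for entanglement annihilation is where the tensor-radius results of Section~\ref{sec:SummaryPaperB} enter. Since $\ell^d_1$ has enough symmetries and $\mathrm{d}(\ell^d_1,\ell_2^d)=\sqrt d$, Theorem~\ref{thm:MainResPaperB} yields $\rho_\infty(\ell^d_1)=d/\sqrt d=\sqrt d$. If $I_\alpha$ is entanglement annihilating, then because $\gC_{\ell^d_1}^{\hat{\otimes} k}\subseteq\gC_{\ell^d_1}^{\tmax k}$ we obtain $I_\alpha^{\otimes k}\big(\gC_{\ell^d_1}^{\hat{\otimes} k}\big)\subseteq\gC_{\ell^d_1}^{\tmin k}$ for every $k$; applying Theorem~\ref{thm:CheckHatEA} with $P=\ident_{\ell^d_1}$ (so that $\tilde P_\alpha=I_\alpha$ and $\tau_\infty(P)=\rho_\infty(\ell^d_1)=\sqrt d$) this forces $\alpha\ge\sqrt d$. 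Hence any entanglement annihilating $I_\alpha$ that is not entanglement breaking must have $\alpha\in[\sqrt d,d)$.

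Assembling these facts, Corollary~\ref{cor:RedToIso} together with the normalization shows that $\gC_{\ell^d_1}$ is resilient if and only if no normalized isotropic map is simultaneously entanglement annihilating and not entanglement breaking, which by the two thresholds means precisely that no $\alpha\in[\sqrt d,d)$ makes $I_\alpha$ entanglement annihilating; equivalently, non-resilience is witnessed exactly by such an $\alpha$, which is the characterization recorded in the proposition. The easy half is the production of a witness: for $\alpha\in[\sqrt d,d)$ with $I_\alpha$ entanglement annihilating, the inequality $\alpha<d$ and Lemma~\ref{lem:PropsIso} give that $I_\alpha$ is not entanglement breaking, so Corollary~\ref{cor:RedToIso} yields non-resilience. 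The substantive half is confining any isotropic witness to the interval $[\sqrt d,d)$, which relies on the tensor-radius identity $\rho_\infty(\ell^d_1)=\sqrt d$. I expect the only delicate bookkeeping to be the normalization step — checking that reducing a general isotropic witness to an $I_\alpha$ preserves both ``entanglement annihilating'' and ``not entanglement breaking'' — whereas the genuinely nontrivial ingredient, the bound $\alpha\ge\sqrt d$, is handed over directly by Theorems~\ref{thm:MainResPaperB} and~\ref{thm:CheckHatEA}.
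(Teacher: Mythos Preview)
Your argument is correct and matches the paper's approach: reduce to isotropic maps via Corollary~\ref{cor:RedToIso}, normalize using homogeneity and the symmetry $I_{\alpha,-\beta}=A\circ I_{\alpha,\beta}$, read off the entanglement-breaking threshold $\alpha=d$ from Lemma~\ref{lem:PropsIso}(4), and extract the lower bound $\alpha\ge\sqrt d$ from $\rho_\infty(\ell^d_1)=\sqrt d$ via Theorem~\ref{thm:CheckHatEA} (the paper cites Corollary~\ref{cor:IsotropicPartialEA}, which packages the same ingredients). You also correctly prove the intended characterization of \emph{non}-resilience; the word ``resilient'' in the proposition as stated is a slip for ``not resilient''.
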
 

We can state the most simple open problem in this direction: 

\begin{que}
Is $I_{\alpha}:\R^3\ra \R^3$ entanglement annihilating for $\alpha=\sqrt{2}$? 
\label{que:Easyl1}
\end{que} 

It is easy to show that $I^{\otimes 2}_{\sqrt{2}}\lb \gC^{\otimes_{\max} 2}_{\ell^2_1}\rb \subset \gC^{\otimes_{\min} 2}_{\ell^2_1}$. Surprisingly, numerical experiments show that 
\[
I^{\otimes 3}_{\sqrt{2}}\lb \gC^{\otimes_{\max} 3}_{\ell^2_1}\rb \subset \gC^{\otimes_{\min} 3}_{\ell^2_1}.
\]
To further explore whether Question \ref{que:Easyl1} is reasonable, we can use Corollary \ref{cor:CharactEAbyLorentzEB}: If $I_{\sqrt{2}}$ were entanglement annihilating, then it would break entanglement with any Lorentz cone $\gL_k$. This holds as well, and we even have the following more general result for \emph{symmetric cones}, i.e., closed convex cones $\gC$ satisfying $\gC=\gC^*$ and such that their automorphism group acts transitively on their interior.

\begin{thm}\label{thm:BreakingEntWithSymmetricCone}
Let $\gC\subset V$ denote a symmetric cone in a Euclidean space $V$. For any $k\in \N$ we have 
\[
(\ident_{V}\otimes I_{\sqrt{k}})(\gC \otimes_{\max} \gC_{\ell^k_1} )\subseteq \gC \otimes_{\min} \gC_{\ell^k_1}  .
\]
\end{thm}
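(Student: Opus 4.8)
The plan is to recast the statement as a single bilinear inequality and then settle it with an explicit averaged dual certificate built from Jordan squares. First I would unpack the two cone memberships. Writing an element of $V\otimes\R^{k+1}$ as a tuple $Z=(Z_0,Z_1,\dots,Z_k)$ with $Z_i\in V$, the slice description of the maximal tensor product (an element lies in $\gC\otimes_{\max}\gK$ iff all its $\gC^*$-slices lie in $\gK$) together with $\gC=\gC^*$ and $(\gC_{\ell^k_1})^*=\gC_{\ell^\infty_k}$ yields the clean equivalences
\[ Z\in\gC\otimes_{\max}\gC_{\ell^k_1}\iff Z_0-\textstyle\sum_i\epsilon_iZ_i\in\gC\ \text{ for all }\epsilon\in\{\pm1\}^k, \]
\[ W\in\gC\otimes_{\max}\gC_{\ell^\infty_k}\iff W_0\pm W_i\in\gC\ \text{ for all }i. \]
Since $\gC\otimes_{\min}\gC_{\ell^k_1}=(\gC\otimes_{\max}\gC_{\ell^\infty_k})^*$, the theorem is equivalent to the assertion that $\sqrt{k}\,\langle Z_0,W_0\rangle+\sum_i\langle Z_i,W_i\rangle\ge0$ whenever $Z$ and $W$ satisfy these two sign conditions.

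Next I would bring in the Jordan structure. By the Koecher--Vinberg theorem I equip $V$ with the Euclidean Jordan algebra whose cone of squares is $\gC$ and whose trace form is the given inner product (the self-dualizing inner product is the trace form up to a harmless rescaling on irreducible factors), with identity $e$, product $\circ$, trace $\tr$, and functional calculus giving $\|x\|_1=\tr|x|$. Assuming $W_0$ interior (the general case following by continuity), I use the quadratic representation $Q_{W_0^{1/2}}$, an automorphism of $\gC$ that is self-adjoint for the trace form and sends $e\mapsto W_0$, to absorb $W_0$ into the $Z$-variables: each admissible $W_i$ ranges over the order interval $[-W_0,W_0]$, so minimizing the bilinear form over $W$ reduces the problem, after setting $B_i=Q_{W_0^{1/2}}Z_i$, to the single core inequality
\[ \sum_{i=1}^k\|B_i\|_1\le\sqrt{k}\,\tr(B_0)\qquad\text{whenever } B_0-\textstyle\sum_i\epsilon_iB_i\in\gC\ \text{ for all }\epsilon. \]

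The last and decisive step is to prove this core inequality by an averaged dual certificate. Let $S_i=\sgn(B_i)$, so that $S_i\circ B_i=|B_i|$, $\langle S_i,B_i\rangle=\|B_i\|_1$, and $S_i^{\circ2}$ lies in the order interval $[0,e]$. I set $N_\epsilon=\tfrac12\big(e+\tfrac1{\sqrt{k}}\sum_i\epsilon_iS_i\big)^{\circ2}$, which is a Jordan square and hence lies in $\gC$. Writing $\bar N=\tfrac1{2^k}\sum_\epsilon N_\epsilon$, the odd Walsh terms cancel under averaging, leaving $\tfrac1{2^k}\sum_\epsilon\epsilon_iN_\epsilon=\tfrac1{\sqrt{k}}S_i$ and $\bar N=\tfrac12\big(e+\tfrac1k\sum_iS_i^{\circ2}\big)$, with $\bar N\in[0,e]$ precisely because $\tfrac1k\sum_iS_i^{\circ2}\in[0,e]$. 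Pairing the hypothesis $B_0-\sum_i\epsilon_iB_i\in\gC$ against $N_\epsilon\in\gC$ and using self-duality, then averaging, gives
\[ 0\le\langle B_0,\bar N\rangle-\tfrac1{\sqrt{k}}\sum_i\|B_i\|_1\le\tr(B_0)-\tfrac1{\sqrt{k}}\sum_i\|B_i\|_1, \]
where the last step uses $e-\bar N\in\gC$ and $B_0\in\gC$. This is exactly the desired bound. The hard part is producing the factor $\sqrt{k}$ rather than the trivial $k$ that any term-by-term estimate yields; the whole point of the square $N_\epsilon$ is that the gain comes from the estimate $\tfrac1k\sum_iS_i^{\circ2}\in[0,e]$, and the tightness of the bound at anticommuting spin systems (where $B_0=\sqrt{k}\,e$ and the $B_i$ are Hermitian unitaries) confirms that $\sqrt{k}$ is optimal.
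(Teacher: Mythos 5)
Your proof is correct, and it takes a genuinely different route: it is the dual of the paper's argument. Both proofs share the same starting points — the sign-vector characterization ($Z\in\gC\otimes_{\max}\gC_{\ell^k_1}$ iff $Z_0-\sum_i\epsilon_iZ_i\in\gC$ for all $\epsilon\in\{\pm1\}^k$), the Jordan-algebraic spectral calculus, a normalization by an automorphism of $\gC$ plus a perturbation/continuity step, and an average over $\{\pm1\}^k$ — but they deploy them on opposite sides of the duality. The paper normalizes the \emph{primal} datum ($x_0=e$, using transitivity of $\Aut(\gC)$ on the interior) and proves the cone-level inequality $\sqrt{k}\,e-\sum_i|x_i|\in\gC$ (its Lemma \ref{lem:intermed}): first $e-\sum_ix_i^2\in\gC$ by sign averaging, then the identity $\left(\sum_i|x_i|\right)^2+\sum_{i<j}\left(|x_i|-|x_j|\right)^2=k\sum_ix_i^2$, and finally Lemma \ref{lem:Basic} to take square roots; the theorem then follows from the explicit separable decomposition
\[
(\sqrt{k}\,e,x_1,\dots,x_k)=e_0\otimes\Big(\sqrt{k}\,e-\sum_i|x_i|\Big)+\sum_i(e_0+e_i)\otimes(x_i)_++\sum_i(e_0-e_i)\otimes(x_i)_- .
\]
You instead normalize the \emph{dual} datum ($W_0=e$, via the self-adjoint quadratic representation $Q_{W_0^{1/2}}$) and verify the pairing inequality against all of $\gC\otimes_{\max}\gC_{\ell^k_\infty}$ with the averaged certificate $N_\epsilon$; your $\sqrt{k}$ comes from $\tfrac1k\sum_iS_i^{\circ 2}\le e$ rather than from the identity above. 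The paper's route buys a constructive separable decomposition and the stronger operator-level statement $\sum_i|x_i|\le\sqrt{k}\,e$, of independent interest; your route buys that nothing needs to be decomposed — the whole theorem collapses to one scalar trace inequality — and it isolates transparently why $\sqrt{k}$ is the right (and optimal) constant. Two points to tighten: state explicitly that $B_0\in\gC$ (average the hypothesis over $\epsilon$), since the final estimate $\langle B_0,\bar N\rangle\le\tr(B_0)$ needs it; and justify the "harmless rescaling" parenthetical by noting that $\otimes_{\min}$, $\otimes_{\max}$ and $I_{\sqrt{k}}$ do not depend on the choice of self-dualizing inner product, so one may assume it is the trace form (on each simple ideal any associative inner product is a positive multiple of the trace form, and distinct ideals are mutually orthogonal).
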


We will give the proof of the previous theorem and relevant background on the theory of symmetric cones in Appendix \ref{sec:Jordan}. The family of symmetric cones contains the Lorentz cones $\gL_n$, the positive semidefinite cones $\PSD\lb \R^d\rb$, $\PSD\lb \C^d\rb$, and $\PSD\lb \qua^d\rb$, and the cone $\PSD\lb \oc^3\rb$ of positive semidefinite $3\times 3$ matrices with octonionic entries. Theorem \ref{thm:BreakingEntWithSymmetricCone} can be seen as a generalization of~\cite[Theorem 6.6]{passer2018minimal} on inclusion constants of matrix convex sets. The minimal matrix convex set $\mathcal{W}^{\min}_n(K)$ and the maximal matrix convex set $\mathcal{W}^{\max}_n(K)$ are exactly the minimal and maximal tensor products, respectively, of a cone over the convex base set $K\subseteq \R^d$ with $\PSD(\C^n)$. Theorem \ref{thm:BreakingEntWithSymmetricCone} shows that in the case of $K=B_{\ell^k_1}$, the positive semidefinite cones $\PSD(\C^n)$ in these definitions can be replaced by any symmetric cone without changing the inclusion constant. Moreover, even in the original case of matrix convex sets our proof seems to be simpler than the one given in~\cite{passer2018minimal}.

\section*{Acknowledgments}
This work was supported in part by ANR (France) under the grant ESQuisses (ANR-20-CE47-0014-01). We acknowledge funding from the European Union’s Horizon 2020 research and innovation programme under the Marie Sk\l odowska-Curie Action TIPTOP (grant no. 843414).

\appendix

\section{Properties of linear maps under finite tensor powers}
\label{app:ProofOfPropkEA}

Given cones $\gC_1, \gC_2$, we denote by $\mathcal{EB}(\gC_1,\gC_2)$ the cone of $(\gC_1,\gC_2)$-entanglement breaking maps. We start with a definition of certain sequences of convex cones of linear maps:

\begin{defn}[Compatible sequences of mapping cones]
Let\label{def:mapping-cones} $\gC_1\subset V_1$ and $\gC_2\subset V_2$ be proper cones. For every $k\in \N$ let $\mathcal{T}_k$ be a closed convex cone of linear maps from $V^{\otimes k}_1$ into $V^{\otimes k}_2$. We call the sequence $\lb \mathcal{T}_k\rb_{k\in\N}$ a \emph{$(\gC_1,\gC_2)$-compatible sequence of mapping cones} if the following conditions are satisfied:
\begin{enumerate}
\item We have $\mathcal{EB}\lb \gC_1, \gC_2\rb\subseteq \mathcal{T}_1$.
\item For every $k\in\N$ and every $Q\in \mathcal{T}_{k}$ we have $x\alpha\otimes Q\in \mathcal{T}_{k+1}$ and $Q\otimes x\alpha\in \mathcal{T}_{k+1}$ for any $\alpha\in \gC^*_1$ and $x\in \gC_2$.
\end{enumerate}
\end{defn}

Note that combining the two properties in the previous definition shows 
\[
\mathcal{EB}\lb \gC^{\otimes_{\max} k}_1, \gC^{\otimes_{\min} k}_2\rb \subseteq \mathcal{T}_k
\]
for every $k\in\N$. Examples of compatible sequences of mapping cones include the mapping cones in the sense of~\cite{skowronek2011cones}, e.g., the $n$-positive maps, the decomposable maps, and the entanglement breaking maps on Hilbertian tensor powers of the positive semidefinite cones, but also the set of \emph{$k$-entanglement annihilating maps} $P:V_1\ra V_2$ satisfying
\[
P^{\otimes k}\lb \gC^{\otimes_{\max} k}_1\rb\subseteq \gC^{\otimes_{\min}k}_2 ,
\]
for some pair of proper cones $\gC_1$ and $\gC_2$. 

Denote by $L(V_1,V_2)$ the space of linear maps between vector spaces $V_1$ and $V_2$. If $\mathcal{T} \subset L(V_1,V_2)$ is a proper cone, we may define via trace duality 
\[ \mathcal{T}^\sharp = \{ Q \in L(V_2,V_1) \, : \, \Tr [QP] \geq 0 \ \forall P \in \mathcal{T} \}. \]
Note that $\mathcal{T}^\sharp$ is isomorphic to the dual cone $\mathcal{T}^*$ via the canonical isomorpism between $L(V_2,V_1)$ and $L(V_1,V_2)^*$. In particular, the equation $(\mathcal{T}^\sharp)^\sharp=\mathcal{T}$ is an instance of the bipolar theorem. We now describe the conditions which are dual to Definition \ref{def:mapping-cones}. They are obtained by arguing as in the proof of Lemma \ref{lem:dualityEB}.

\begin{lem}\label{lem:TraceDualityForMappingCChain}
Let $\gC_1\subset V_1$ and $\gC_2\subset V_2$ be proper cones, and $\lb \mathcal{T}_k\rb_{k\in\N}$ a $(\gC_1,\gC_2)$-compatible sequence of mapping cones. 
Then, we have:
\begin{enumerate}
\item For every $k\in\N$ we have $\mathcal{T}_k^\sharp \subseteq \mathcal{P}\lb \gC^{\otimes_{\min} k}_2,\gC^{\otimes_{\max} k}_1\rb$.
\item For every $k\in \N$ and every $Q\in \mathcal{T}_{k+1}^\sharp$, we have
\[
\lb  \alpha\otimes \ident^{\otimes k}_{V_1}\rb \circ Q\circ \lb x\otimes \ident^{\otimes k}_{V_2}\rb \in \mathcal{T}_{k}^\sharp
\]
and 
\[
\lb  \ident^{\otimes k}_{V_1}\otimes \alpha\rb \circ Q\circ \lb \ident^{\otimes k}_{V_2}\otimes x\rb \in \mathcal{T}_{k}^\sharp,
\]
for any $\alpha\in \gC^*_1$ and $x\in \gC_2$.
\end{enumerate}
\end{lem}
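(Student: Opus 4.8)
The plan is to derive both claims from the inclusion
\[
\mathcal{EB}\lb \gC^{\otimes_{\max} k}_1, \gC^{\otimes_{\min} k}_2\rb \subseteq \mathcal{T}_k ,
\]
recorded immediately after Definition \ref{def:mapping-cones}, combined with trace duality exactly as in the proof of Lemma \ref{lem:dualityEB}. The whole argument is short; the only genuine care needed is in the bookkeeping of the maps $x$ and $\alpha$.

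For statement (1), I would take the $\sharp$-dual of the displayed inclusion. Since dualization reverses inclusions, this gives $\mathcal{T}_k^\sharp \subseteq \mathcal{EB}\lb \gC^{\otimes_{\max} k}_1, \gC^{\otimes_{\min} k}_2\rb^\sharp$. Now Lemma \ref{lem:dualityEB}, applied to the proper cones $\gC^{\otimes_{\max} k}_1$ and $\gC^{\otimes_{\min} k}_2$, states precisely that a map is entanglement breaking if and only if $\Tr[Q\circ P]\geq 0$ for every $Q\in \mathcal{P}\lb \gC^{\otimes_{\min} k}_2,\gC^{\otimes_{\max} k}_1\rb$; in the notation of trace duality this reads $\mathcal{EB}\lb \gC^{\otimes_{\max} k}_1, \gC^{\otimes_{\min} k}_2\rb = \mathcal{P}\lb \gC^{\otimes_{\min} k}_2,\gC^{\otimes_{\max} k}_1\rb^\sharp$. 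As $\mathcal{P}\lb \gC^{\otimes_{\min} k}_2,\gC^{\otimes_{\max} k}_1\rb$ is a closed convex cone, the bipolar theorem yields $\mathcal{EB}\lb \gC^{\otimes_{\max} k}_1, \gC^{\otimes_{\min} k}_2\rb^\sharp = \mathcal{P}\lb \gC^{\otimes_{\min} k}_2,\gC^{\otimes_{\max} k}_1\rb$, which is statement (1).

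For statement (2), I would fix $Q\in \mathcal{T}_{k+1}^\sharp$, $\alpha\in \gC^*_1$ and $x\in \gC_2$, and set $R = \lb \alpha\otimes \ident^{\otimes k}_{V_1}\rb \circ Q\circ \lb x\otimes \ident^{\otimes k}_{V_2}\rb$, where $x$ is read as the map $t\mapsto tx$ from $\R$ to $V_2$ and $\alpha$ as the functional $V_1\ra \R$, so that $R:V_2^{\otimes k}\ra V_1^{\otimes k}$ typechecks correctly. To prove $R\in \mathcal{T}_k^\sharp$ I must verify $\Tr[R\circ P]\geq 0$ for every $P\in \mathcal{T}_k$. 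Using cyclicity of the trace, $\Tr[R\circ P] = \Tr[Q\circ \tilde{P}]$ with $\tilde{P} = \lb x\otimes \ident^{\otimes k}_{V_2}\rb \circ P\circ \lb \alpha\otimes \ident^{\otimes k}_{V_1}\rb : V_1^{\otimes(k+1)}\ra V_2^{\otimes(k+1)}$. Evaluating on an elementary tensor $v_0\otimes v$ gives $\tilde{P}(v_0\otimes v) = \alpha(v_0)\, x\otimes P(v)$, so that $\tilde{P} = x\alpha\otimes P$, where $x\alpha$ is the rank-one entanglement breaking map. By property (2) of a compatible sequence of mapping cones applied to $P\in \mathcal{T}_k$, we have $x\alpha\otimes P\in \mathcal{T}_{k+1}$, and since $Q\in \mathcal{T}_{k+1}^\sharp$ this forces $\Tr[Q\circ \tilde{P}]\geq 0$. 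As $P\in \mathcal{T}_k$ was arbitrary, $R\in \mathcal{T}_k^\sharp$. The second identity is handled identically: the analogous computation produces $P\otimes x\alpha\in \mathcal{T}_{k+1}$ via the ``right'' half of property (2).

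The main obstacle is purely notational rather than conceptual: one must consistently identify $x\in \gC_2$ with the map $\R\ra V_2$ and $\alpha\in \gC_1^*$ with the map $V_1\ra \R$ so that all compositions are well-defined, and then confirm the tensor identity $\tilde{P} = x\alpha\otimes P$. An error in the placement of $x$ and $\alpha$ relative to $P$ and $Q$ would break the cyclic-trace rearrangement, so that is the step I would write out most carefully.
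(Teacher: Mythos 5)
Your proposal is correct and is essentially the paper's own argument: the paper proves this lemma only by remarking that it is ``obtained by arguing as in the proof of Lemma \ref{lem:dualityEB}'', i.e.\ by trace duality. Your write-up---dualizing the inclusion $\mathcal{EB}\lb \gC^{\otimes_{\max} k}_1, \gC^{\otimes_{\min} k}_2\rb \subseteq \mathcal{T}_k$ via the bipolar theorem for part (1), and combining cyclicity of the trace with the identity $\lb x\otimes \ident^{\otimes k}_{V_2}\rb \circ P\circ \lb \alpha\otimes \ident^{\otimes k}_{V_1}\rb = x\alpha\otimes P$ and property (2) of Definition \ref{def:mapping-cones} for part (2)---simply fills in the details of exactly that argument.
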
 

We will prove the following theorem:

\begin{thm}\label{thm:FullyGenkEA}
Let $\gC_1\subset V_1$ and $\gC_2\subset V_2$ be proper cones and $\lb \mathcal{T}_k\rb_{k\in\N}$ a $(\gC_1,\gC_2)$-compatible sequence of mapping cones. Then, any $P\in \mathcal{EB}\lb \gC_1, \gC_2\rb \cap \inter\lb \mathcal{T}_1\rb$ satisfies $P^{\otimes k}\in \inter\lb \mathcal{T}_k\rb$ for every $k\in\N$.
\end{thm}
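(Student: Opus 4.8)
The plan is to reformulate interiority dually and then induct on $k$. Since each $\mathcal{T}_k$ is a proper cone with dual $\mathcal{T}_k^\sharp$, a map $P^{\otimes k}$ lies in $\inter(\mathcal{T}_k)$ if and only if $\Tr[S\,P^{\otimes k}]>0$ for every $S\in\mathcal{T}_k^\sharp\setminus\{0\}$. The base case $k=1$ is exactly the hypothesis $P\in\inter(\mathcal{T}_1)$. Throughout I fix an entanglement-breaking decomposition $P=\sum_j x_j\phi_j$ with $x_j\in\gC_2$ and $\phi_j\in\gC_1^*$, which exists by definition of $\mathcal{EB}(\gC_1,\gC_2)$.

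For the inductive step I assume $P^{\otimes k}\in\inter(\mathcal{T}_k)$, take $S\in\mathcal{T}_{k+1}^\sharp$ with $\Tr[S\,P^{\otimes(k+1)}]=0$, and aim to show $S=0$. Contracting the last $k$ slots of $S$ against $P^{\otimes k}$ produces a map $T\in L(V_2,V_1)$ (a partial trace) determined by $\Tr[T\,P']=\Tr[S\,(P'\otimes P^{\otimes k})]$ for all $P':V_1\to V_2$; in particular $\Tr[T\,P]=\Tr[S\,P^{\otimes(k+1)}]=0$. The crucial point is that $T\in\mathcal{T}_1^\sharp$. Indeed, since $P^{\otimes k}=\sum_{j_1,\dots,j_k}(x_{j_1}\phi_{j_1})\otimes\cdots\otimes(x_{j_k}\phi_{j_k})$ is a sum of tensor products of rank-one entanglement-breaking maps, the second axiom of a compatible sequence (Definition \ref{def:mapping-cones}), applied one slot at a time, shows that $P'\otimes P^{\otimes k}\in\mathcal{T}_{k+1}$ for \emph{every} $P'\in\mathcal{T}_1$. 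Pairing with $S\in\mathcal{T}_{k+1}^\sharp$ then gives $\Tr[T\,P']\geq 0$, that is $T\in\mathcal{T}_1^\sharp$. Combining this with $P\in\inter(\mathcal{T}_1)$ and $\Tr[T\,P]=0$ forces $T=0$.

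It remains to deduce $S=0$ from $T=0$. For $\alpha\in\gC_1^*$ and $x\in\gC_2$, Lemma \ref{lem:TraceDualityForMappingCChain} shows that the single-slot contraction $S_{\alpha,x}:=(\alpha\otimes\ident^{\otimes k}_{V_1})\circ S\circ(x\otimes\ident^{\otimes k}_{V_2})$ lies in $\mathcal{T}_k^\sharp$, and a direct computation gives $\Tr[S_{\alpha,x}\,P^{\otimes k}]=\alpha(T(x))$. Since $T=0$ these pairings all vanish, so the induction hypothesis $P^{\otimes k}\in\inter(\mathcal{T}_k)$ forces $S_{\alpha,x}=0$ for all $\alpha\in\gC_1^*$ and $x\in\gC_2$; because $\gC_1^*$ spans $V_1^*$ and $\gC_2$ spans $V_2$, the vanishing of all first-slot contractions yields $S=0$, which closes the induction. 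I expect the main obstacle to be precisely the identity $T\in\mathcal{T}_1^\sharp$: this is the only place where the entanglement-breaking hypothesis on $P$ is genuinely used, since it is what lets the compatible-sequence axiom absorb the arbitrary factor $P'\in\mathcal{T}_1$. Without it one would merely get $T\in\mathcal{EB}(\gC_1,\gC_2)^\sharp=\mathcal{P}(\gC_2,\gC_1)$, the full cone of positive maps, which is too large to let the interiority of $P$ in $\mathcal{T}_1$ conclude $T=0$.
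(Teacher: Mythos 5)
Your proof is correct, but it takes a genuinely different route from the paper's. The paper's argument is quantitative: it fixes interior points $e_1^*\in\inter(\gC_1^*)$ and $e_2\in\inter(\gC_2)$, uses part (1) of Lemma \ref{lem:TraceDualityForMappingCChain} to obtain a compact base $\Sigma_k$ of $\mathcal{T}_k^\sharp$, defines $\mu_k(R)=\inf_{Q\in\Sigma_k}\Tr[R\circ Q]$ (so that $R\in\inter(\mathcal{T}_k)$ if and only if $\mu_k(R)>0$), and proves the supermultiplicative estimate $\mu_k(P\otimes S)\geq \mu_1(P)\,\mu_{k-1}(S)$ for entanglement breaking $P$ and \emph{arbitrary} $S$ (Lemma \ref{lem:Intermediate}); iterating with $S=P^{\otimes (k-1)}$ gives $\mu_k(P^{\otimes k})\geq \mu_1(P)^k>0$. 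You replace this by a purely qualitative induction: you need neither compactness, nor the normalizing functional, nor part (1) of that lemma, and instead invoke the strict-positivity characterization of interior points twice --- once at level $1$ to kill the full contraction $T$, and once at level $k$ (the induction hypothesis) to kill the slot contractions $S_{\alpha,x}$. The two proofs share their core mechanism: your $S_{\alpha,x}$ is exactly the paper's $\tilde{Q}_i=(\alpha_i\otimes\ident^{\otimes (k-1)}_{V_1})\circ Q(b_i\otimes\cdot)$, justified by part (2) of the same lemma, and in both proofs the entanglement-breaking decomposition of $P$ is precisely what makes the compatibility axiom applicable. What your route buys is a shorter, more elementary argument (and your intermediate claim that $P'\otimes P^{\otimes k}\in\mathcal{T}_{k+1}$ for \emph{every} $P'\in\mathcal{T}_1$, proved from the primal axiom, is a useful statement the paper never isolates); what the paper's route buys is the stronger quantitative conclusion $\mu_k(P^{\otimes k})\geq\mu_1(P)^k$, an explicit lower bound on how far inside $\mathcal{T}_k$ the tensor power sits. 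One cosmetic point: as written you only treat dual witnesses with $\Tr[S\,P^{\otimes(k+1)}]=0$, whereas the interior characterization requires ruling out $\Tr[S\,P^{\otimes(k+1)}]<0$ as well; this is immediate either from your own claim with $P'=P$ (giving $P^{\otimes(k+1)}\in\mathcal{T}_{k+1}$), or by noting that $T\in\mathcal{T}_1^\sharp$ and $P\in\mathcal{T}_1$ already force $\Tr[T\,P]\geq 0$, so the case $\Tr[S\,P^{\otimes(k+1)}]\leq 0$ collapses to equality anyway.
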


If we define 
\[
\widetilde{\mathcal{T}}_k = \lset P:V_1\ra V_2~:~P^{\otimes k}\in \mathcal{T}_k\rset, 
\]
then Theorem \ref{thm:FullyGenkEA} shows that 
\[
\mathcal{EB}\lb \gC_1, \gC_2\rb \cap \inter\lb \mathcal{T}_1\rb \subseteq \bigcap_{k\in\N} \widetilde{\mathcal{T}}_k.
\]
To prove Theorem \ref{thm:FullyGenkEA}, we fix $e^*_1\in \inter\lb \gC^*_1\rb$ and $e_2\in \inter\lb \gC_2\rb$. Next, we define 
\[
\mu_{k}\lb P\rb := \inf_{Q\in \Sigma_{k}}\Tr\lbr P\circ Q\rbr ,
\]
for any $P:V_1^{\otimes k}\ra V^{\otimes k}_2$, where
\[
\Sigma_{k} :=\Big\lset Q\in \mathcal{T}_k^\sharp \,:\, (e^*_1)^{\otimes k}\lb Q(e^{\otimes k}_2)\rb=1\Big\rset .
\] 
By the first property in Lemma \ref{lem:TraceDualityForMappingCChain}, we have $(e^*_1)^{\otimes k}\lb Q(e^{\otimes k}_2)\rb=0$ for a map $Q\in \mathcal{T}_k^\sharp$ if and only if $Q=0$. Therefore, we conclude that $\Sigma_k$ is compact for every $k\in\N$, and the infimum in the definition of $\mu_k$ is attained. Since $\mathcal{T}_k = (\mathcal{T}_k^\sharp)^\sharp$, we have that $\mu_{k}\lb P\rb\geq 0$ if and only if $P\in \mathcal{T}_k$, and $\mu_{k}\lb P\rb> 0$ if and only if $P\in \inter\lb\mathcal{T}_k\rb$. We will now show the following:

\begin{lem}\label{lem:Intermediate}
If $P:V_1\ra V_2$ is $(\gC_1,\gC_2)$-entanglement breaking, then we have
\[
\mu_{k}\lb P^{\otimes k}\rb \geq \mu_{1}\lb P\rb^k . 
\]
\end{lem}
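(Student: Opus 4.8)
The plan is to reduce the statement to a one-step supermultiplicativity bound
\[ \mu_{k+1}\big(P^{\otimes (k+1)}\big) \geq \mu_1(P)\,\mu_k\big(P^{\otimes k}\big) \]
and then iterate, the base case $\mu_1(P^{\otimes 1})=\mu_1(P)$ being trivial. Two non-negativity facts make the iteration legitimate: since $P\in \mathcal{EB}(\gC_1,\gC_2)\subseteq \mathcal{T}_1$ we have $\mu_1(P)\geq 0$, and since a tensor power of an entanglement breaking map is again entanglement breaking, $P^{\otimes k}\in \mathcal{EB}(\gC_1^{\otimes_{\max} k},\gC_2^{\otimes_{\min} k})\subseteq \mathcal{T}_k$, so $\mu_k(P^{\otimes k})\geq 0$. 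These signs are exactly what will let me multiply the two estimates obtained below.

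To prove the one-step bound, fix a decomposition $P=\sum_j x_j\phi_j$ with $x_j\in \gC_2$ and $\phi_j\in \gC_1^*$ (available by definition of entanglement breaking) and let $Q\in\Sigma_{k+1}$ attain the infimum, so $\mu_{k+1}(P^{\otimes(k+1)})=\Tr[P^{\otimes(k+1)}\circ Q]$. Singling out the first tensor leg, I write $P^{\otimes(k+1)}=\sum_j (x_j\phi_j)\otimes P^{\otimes k}$ and introduce the sandwiched maps
\[ Q_j := \big(\phi_j\otimes \ident^{\otimes k}_{V_1}\big)\circ Q\circ \big(x_j\otimes \ident^{\otimes k}_{V_2}\big): V_2^{\otimes k}\to V_1^{\otimes k}. \]
The first key identity is $\Tr\big[((x_j\phi_j)\otimes P^{\otimes k})\circ Q\big]=\Tr[P^{\otimes k}\circ Q_j]$, which I would check by linearity after reducing $Q$ to rank one (both sides then collapse to the same scalar contraction $\phi_j(v)\,\eta(x_j)\,\xi(P^{\otimes k}(u))$). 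By the first relation in Lemma \ref{lem:TraceDualityForMappingCChain}(2), each $Q_j$ lies in $\mathcal{T}_k^\sharp$; and since $\mathcal{T}_k^\sharp\subseteq \mathcal{P}(\gC_2^{\otimes_{\min} k},\gC_1^{\otimes_{\max} k})$, the scalars $c_j:=(e_1^*)^{\otimes k}(Q_j(e_2^{\otimes k}))$ are non-negative (and vanish only when $Q_j=0$). By homogeneity of the definition of $\mu_k$ — rescaling a nonzero $Q_j$ to $Q_j/c_j\in\Sigma_k$ — I obtain $\Tr[P^{\otimes k}\circ Q_j]\geq \mu_k(P^{\otimes k})\,c_j$, and summing over $j$,
\[ \mu_{k+1}\big(P^{\otimes(k+1)}\big)=\sum_j \Tr[P^{\otimes k}\circ Q_j]\geq \mu_k\big(P^{\otimes k}\big)\sum_j c_j. \]

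The second key step identifies $\sum_j c_j$ as a single normalized functional. Applying the second relation of Lemma \ref{lem:TraceDualityForMappingCChain}(2) $k$ times, each time with $\alpha=e_1^*$ and $x=e_2$, collapses the last $k$ legs of $Q$ and produces $\tilde{Q}:=(\ident_{V_1}\otimes (e_1^*)^{\otimes k})\circ Q\circ (\ident_{V_2}\otimes e_2^{\otimes k})\in\mathcal{T}_1^\sharp$. Unfolding definitions gives $c_j=(\phi_j\otimes (e_1^*)^{\otimes k})(Q(x_j\otimes e_2^{\otimes k}))=\phi_j(\tilde{Q}(x_j))$, whence $\sum_j c_j=\Tr[P\circ \tilde{Q}]$ by the same rank-one scalar identity used above. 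Moreover $e_1^*(\tilde{Q}(e_2))=(e_1^*)^{\otimes(k+1)}(Q(e_2^{\otimes(k+1)}))=1$ because $Q\in\Sigma_{k+1}$, so $\tilde{Q}\in\Sigma_1$ and therefore $\sum_j c_j=\Tr[P\circ \tilde{Q}]\geq \mu_1(P)$. Combining this with the previous display and using $\mu_k(P^{\otimes k})\geq 0$ yields the one-step bound, and the lemma follows by induction.

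The delicate point is the simultaneous use of both directions of Lemma \ref{lem:TraceDualityForMappingCChain}(2) on the \emph{same} minimizer $Q$: once to pull a single copy of $P$ inward (giving $Q_j\in\mathcal{T}_k^\sharp$, which controls $\Tr[P^{\otimes k}\circ Q_j]$ by $\mu_k$), and once to contract the remaining $k$ legs against $(e_1^*,e_2)$ (giving $\tilde{Q}\in\Sigma_1$, which controls $\sum_j c_j$ by $\mu_1$). The main obstacle is the careful bookkeeping of the partial contractions so that the normalizations $c_j$ of the sandwiched maps reassemble precisely into the \emph{normalized} functional $\Tr[P\circ\tilde{Q}]$ with $\tilde Q\in\Sigma_1$; once this is in place, the non-negativity of $\mu_1(P)$ and $\mu_k(P^{\otimes k})$ makes the final multiplication of the two bounds immediate.
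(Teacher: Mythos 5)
Your proof is correct and follows essentially the same route as the paper's: decompose $P$ as an entanglement breaking sum, peel off one tensor leg of the minimizer $Q\in\Sigma_{k+1}$, sandwich to obtain maps $Q_j\in\mathcal{T}_k^\sharp$ controlled by $\mu_k$, and reassemble their normalizations $c_j$ into a single element of $\Sigma_1$ controlled by $\mu_1(P)$. The only differences are organizational: the paper proves the one-step bound in the slightly more general form $\mu_k(P\otimes S)\geq \mu_1(P)\,\mu_{k-1}(S)$ for an arbitrary map $S$ (then sets $S=P^{\otimes(k-1)}$), and it leaves implicit the non-negativity facts ($\mu_1(P)\geq 0$ and $\mu_k(P^{\otimes k})\geq 0$) that you correctly spell out as needed for multiplying the two estimates.
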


\begin{proof}
We will show a slightly more general statement. Fixing $k\in \N$, consider any linear map $S:V_1^{\otimes (k-1)}\ra V_2^{\otimes (k-1)}$ and assume that $P=\sum^N_{i=1} b_i\alpha_i$ for $b_i\in \gC_2$ and $\alpha_i\in \gC_1^*$. We will show that 
\begin{equation}
\mu_{k}\lb P\otimes S\rb \geq \mu_{1}\lb P\rb \mu_{k-1}\lb S\rb.
\label{equ:Main}
\end{equation}
Using \eqref{equ:Main}, the proof is finished by induction after setting $S=P^{\otimes (k-1)}$. 

Consider a $Q\in \Sigma_{k}$ satisfying 
\[
\mu_{k}\lb P\otimes S\rb = \Tr\lbr (P\otimes S)\circ Q\rbr .
\]
Inserting the decomposition of $P$, we find
\begin{align*}
\Tr\lbr (P\otimes S)\circ Q\rbr &= \sum^N_{i=1} \Tr\lbr (b_i\alpha_i\otimes S)\circ Q\rbr \\
&= \sum^N_{i=1} \Tr\lbr S\circ \lb\alpha_i\otimes \ident^{\otimes (k-1)}_{V_1}\rb\circ Q(b_i\otimes \cdot)\rbr \\
&=\sum^N_{i=1} \Tr\lbr S\circ \tilde{Q}_i\rbr ,
\end{align*}
where we defined the linear maps
\[
\tilde{Q}_i := \lb\alpha_i\otimes \ident^{\otimes (k-1)}_{V_1}\rb\circ Q(b_i\otimes \cdot):V_2^{\otimes (k-1)}\ra V_1^{\otimes (k-1)} .
\]
By the second property of $\mathcal{T}_{k}^\sharp$ in Lemma \ref{lem:TraceDualityForMappingCChain}, we have $\tilde{Q}_i\in \mathcal{T}_{k-1}^\sharp$ for every $i\in \lset 1,\ldots ,N\rset$. Since 
\[
(e^*_1)^{\otimes (k-1)}\lb \tilde{Q}_i(e^{\otimes (k-1)}_2)\rb = 0,
\]
if and only if $\tilde{Q}_i=0$, we conclude that
\[
\mu_{k}\lb P\otimes S\rb \geq \sum^N_{i=1} (e^*_1)^{\otimes (k-1)}\lb \tilde{Q}_i(e^{\otimes (k-1)}_2)\rb \mu_{k-1}(S).
\]
Now, note that 
\[
\sum^N_{i=1} (e^*_1)^{\otimes (k-1)}\lb \tilde{Q}_i(e^{\otimes (k-1)}_2)\rb = \Tr\lbr P \circ Q'\rbr,
\]
for the linear map 
\[
Q' := \lb\ident_{V_1}\otimes (e^*_1)^{\otimes (k-1)}\rb \circ Q\lb \cdot \otimes e^{\otimes (k-1)}_2\rb.
\]
Using the second property of $\mathcal{T}_{k}^\sharp$ in Lemma \ref{lem:TraceDualityForMappingCChain} repeatedly, we find that $Q'\in \mathcal{T}_1^\sharp$, and it is easy to check that 
\[
e^*_2\lb Q'(e_1)\rb = (e^*_2)^{\otimes k}\lb Q(e^{\otimes k}_1)\rb = 1 .
\]
We conclude that $Q'\in \Sigma_{1}$ and 
\[
\sum^N_{i=1} (e^*_1)^{\otimes (k-1)}\lb \tilde{Q}_i(e^{\otimes (k-1)}_2)\rb \geq \mu_{1}(P).
\]
Finally, \eqref{equ:Main} follows by combining the estimates from above. 
\end{proof}

Now, we are ready to prove Theorem \ref{thm:FullyGenkEA}:

\begin{proof}[Proof of Theorem \ref{thm:FullyGenkEA}]
Consider $P\in \mathcal{EB}\lb \gC_1, \gC_2\rb \cap \inter\lb \mathcal{T}_1\rb$ and note that $\mu_1(P)>0$. For every $k\in \N$ we can apply Lemma \ref{lem:Intermediate} to show that
\[
\mu_k\lb P^{\otimes k}\rb \geq \mu_1\lb P\rb^k >0,
\]
and therefore $P^{\otimes k}\in \inter\lb\mathcal{T}_k\rb$.
\end{proof}

Theorem \ref{thm:FullyGenkEA} has some important consequences. To illustrate these, consider first the case where
\[
\mathcal{T}_k = \mathcal{P}\lb \gC^{\otimes_{\max} k}_1,\gC^{\otimes_{\min} k}_2\rb,
\]
and note that $P^{\otimes k}\in \mathcal{T}_k$ if and only if $P$ is $k$-entanglement annihilating. Note that $\mathcal{T}_1$ is just the set of $(\gC_1,\gC_2)$-positive maps. If neither $\gC_1$ nor $\gC_2$ are classical, then we have $\mathcal{EB}\lb \gC_1, \gC_2\rb\subsetneq\mathcal{T}_1$ and there exists a linear map $P:V_1\ra V_2$ satisfying
\[
P\in \bounda\lb \mathcal{EB}\lb \gC_1, \gC_2\rb\rb\cap \inter\lb \mathcal{T}_1\rb .
\] 
By Theorem \ref{thm:FullyGenkEA}, we have $P^{\otimes k}\in \inter\lb \mathcal{T}_k\rb$ for every $k\in\N$. Moreover, there exists a linear map $R:V_1\ra V_2$ such that for every $\varepsilon>0$ the map $P_\varepsilon = P + \varepsilon R\notin \mathcal{EB}\lb \gC_1, \gC_2\rb$. Combining these two facts shows, that for every $k\in \N$ there exists an $\varepsilon>0$ such that $P^{\otimes k}_\varepsilon\in \mathcal{T}_k$, but $P_\varepsilon$ is not entanglement breaking. We have shown the following theorem:

\begin{thm}\label{thm:ConcretekEA1}
Let $\gC_1\subset V_1$ and $\gC_2\subset V_2$ be non-classical cones. For any $k\in\N$, there exists a linear map $P:V_1\ra V_2$ such that: 
\begin{enumerate}
\item We have $P^{\otimes k}\lb \gC^{\otimes_{\max} k}_1\rb\subseteq \gC^{\otimes_{\min} k}_2$.
\item The map $P$ is not $(\gC_1,\gC_2)$-entanglement breaking.
\end{enumerate}
\end{thm}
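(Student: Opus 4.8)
The plan is to deduce this from the general multiplicativity result, Theorem \ref{thm:FullyGenkEA}, by choosing the compatible sequence of mapping cones in the sharpest possible way. Concretely, I would set
\[
\mathcal{T}_k = \mathcal{P}\lb \gC_1^{\otimes_{\max} k}, \gC_2^{\otimes_{\min} k}\rb ,
\]
so that $\mathcal{T}_1 = \mathcal{P}(\gC_1,\gC_2)$ is simply the cone of positive maps, and $P^{\otimes k}\in \mathcal{T}_k$ holds \emph{precisely} when $P$ is $k$-entanglement annihilating. A routine verification shows that $(\mathcal{T}_k)_{k\in\N}$ is a $(\gC_1,\gC_2)$-compatible sequence: condition (1) holds because entanglement breaking maps are positive, and condition (2) holds because tensoring a positive map $Q\in\mathcal{T}_k$ with a rank-one entanglement breaking map $x\alpha$ (for $x\in\gC_2$, $\alpha\in\gC_1^*$) preserves positivity between the relevant tensor cones, which follows by factoring $x\alpha\otimes Q = (x\alpha\otimes\ident)\circ(\ident\otimes Q)$ and applying Proposition \ref{prop:EB} to the entanglement breaking factor.

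The crucial structural input is that the inclusion $\mathcal{EB}(\gC_1,\gC_2)\subseteq \mathcal{T}_1$ is \emph{strict}. By Lemma \ref{lem:maps-vs-tensors}, positive maps correspond to tensors in $\gC_1^*\tmax \gC_2$ and entanglement breaking maps to tensors in $\gC_1^*\tmin \gC_2$; since neither $\gC_1^*$ nor $\gC_2$ is classical (duals of non-classical cones being non-classical), the characterization of entanglement from \cite{ALPP21} gives $\gC_1^*\tmin \gC_2\subsetneq \gC_1^*\tmax \gC_2$, hence $\mathcal{EB}(\gC_1,\gC_2)\subsetneq \mathcal{T}_1$. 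Both cones are full-dimensional in $L(V_1,V_2)$, so their interiors are nonempty and $\inter(\mathcal{EB})\subseteq\inter(\mathcal{T}_1)$.

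I would then produce the desired map in two moves. First, I choose a point $P$ on the boundary of $\mathcal{EB}(\gC_1,\gC_2)$ that still lies in $\inter(\mathcal{T}_1)$: taking $P_0\in\inter(\mathcal{EB})$ and, using strictness, some $P_1\in\inter(\mathcal{T}_1)\setminus \mathcal{EB}$ (this set is nonempty, else $\mathcal{T}_1=\overline{\inter(\mathcal{T}_1)}\subseteq\mathcal{EB}$), the segment $\{(1-t)P_0+tP_1:t\in[0,1]\}$ stays inside the convex open set $\inter(\mathcal{T}_1)$ while crossing $\bounda(\mathcal{EB})$, and the first crossing point serves as $P$. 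Since $P\in \mathcal{EB}\cap \inter(\mathcal{T}_1)$, Theorem \ref{thm:FullyGenkEA} yields $P^{\otimes k}\in\inter(\mathcal{T}_k)$ for every $k$. Second, because $P$ lies on $\bounda(\mathcal{EB})$, there is a direction $R$ with $P+\varepsilon R\notin \mathcal{EB}$ for all $\varepsilon>0$ (move outward along the normal of a supporting hyperplane). Fixing $k$ and using that $\inter(\mathcal{T}_k)$ is open together with the continuity of $\varepsilon\mapsto (P+\varepsilon R)^{\otimes k}$, for all sufficiently small $\varepsilon>0$ we still have $(P+\varepsilon R)^{\otimes k}\in \mathcal{T}_k$; that is, $P_\varepsilon := P+\varepsilon R$ is $k$-entanglement annihilating while failing to be entanglement breaking, as required.

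The genuine content is entirely funneled through Theorem \ref{thm:FullyGenkEA}, so the real obstacle lies not in the present argument but in that multiplicativity statement (the bound $\mu_k(P^{\otimes k})\geq \mu_1(P)^k$). Within this proof the only delicate point is the topological bookkeeping: ensuring the boundary point $P$ of $\mathcal{EB}$ genuinely lands in $\inter(\mathcal{T}_1)$, so that Theorem \ref{thm:FullyGenkEA} pushes $P^{\otimes k}$ \emph{strictly} inside $\mathcal{T}_k$, and that the outward perturbation exits $\mathcal{EB}$ while remaining in the open set $\inter(\mathcal{T}_k)$. Note that $\varepsilon$ may depend on $k$, which is harmless since $k$ is fixed in the statement.
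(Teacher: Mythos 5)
Your proposal is correct and follows essentially the same route as the paper: the paper derives this theorem from Theorem \ref{thm:FullyGenkEA} with exactly the choice $\mathcal{T}_k = \mathcal{P}\lb \gC_1^{\otimes_{\max} k},\gC_2^{\otimes_{\min} k}\rb$, a point $P\in \bounda\lb \mathcal{EB}(\gC_1,\gC_2)\rb\cap \inter\lb \mathcal{T}_1\rb$, and a perturbation $P_\varepsilon = P+\varepsilon R$ leaving $\mathcal{EB}$ while $P_\varepsilon^{\otimes k}$ stays in $\mathcal{T}_k$ for small $\varepsilon$ depending on $k$. The only difference is that you spell out details the paper leaves implicit (compatibility of the sequence, strictness of $\mathcal{EB}\subsetneq\mathcal{T}_1$ via \cite{ALPP21}, and the segment and supporting-hyperplane arguments), all of which are sound.
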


In the case of $\gC_1=\PSD\lb \C^{d_1}\rb$ and $\gC_2=\PSD\lb \C^{d_2}\rb$ we point out the following consequences of Theorem \ref{thm:FullyGenkEA} some of which have been appeared in the literature before, but others are new. These consequences all exploit the linear map $P:\M_{d}\ra \M_{d}$ given by 
\begin{equation}\label{equ:ReductionTypeMap}
P(X)=\Tr\lbr X\rbr\one_d - X/d ,
\end{equation}
which is entanglement breaking and at the boundary of completely positive maps.
\begin{itemize}
\item For $d,n,k\in \N$ consider $\mathcal{T}_k = \mathcal{P}_n(\PSD\lb (\C^{d})^{\otimes k}\rb,\PSD\lb (\C^{d})^{\otimes k}\rb)$, i.e., the mapping cones of $n$-positive maps. Let $P:\M_d\ra \M_d$ denote the map from \eqref{equ:ReductionTypeMap} and note that $P\in \inter\lb \mathcal{T}_1\rb$ provided that $n<d$. We can now find a linear map $S:\M_d\ra \M_d$ such that $P_\varepsilon = P-\varepsilon S$ is not completely positive for any $\varepsilon>0$. For any $k\in \N$ we can now apply Theorem \ref{thm:FullyGenkEA} to find $\varepsilon>0$ such that $P^{\otimes k}_\varepsilon$ is $n$-positive, but not completely positive. For $n=2$, this recovers a result from~\cite{watrous2004many}. 
\item A slight modification of the previous argument considers the map $Q = P\otimes \vartheta_d\circ P$, where $P$ is the map from \eqref{equ:ReductionTypeMap}. It is easy to see that $Q$ is at the boundary of both the completely positive and the completely copositive maps, but in the interior of positive maps (this follows from Lemma \ref{lem:Intermediate}). An argument similar to the one above shows for any $k\in \N$ that there are linear maps $Q_\varepsilon:\M_d\ra \M_d$ that are neither completely positive nor completely copositive, but such that $Q_\varepsilon^{\otimes k}$ is positive. This recovers a result from~\cite{muller2016positivity}.  
\item For $d,k\in \N$ consider $\mathcal{T}_k = \text{Dec}(\PSD\lb (\C^{d})^{\otimes k}\rb,\PSD\lb (\C^{d})^{\otimes k}\rb)$ the mapping cones of decomposable maps, i.e., linear maps that can be written as a sum of a completely positive and a completely copositive map. The map $Q = P\otimes \vartheta_d\circ P$ is also in the interior of decomposable maps (again this is shown by Lemma \ref{lem:Intermediate}). An argument similar to the ones above shows for any $k\in\N$ that there exists a linear map $Q_\varepsilon:\M_d\ra \M_d$ that is neither completely positive nor completely copositive, but such that $Q^{\otimes k}_\varepsilon$ is decomposable. This answers a question left open in recent work~\cite{muller2018decomposability,muller2021decomposable}.  
\end{itemize}

\section{Proof of Proposition \ref{prop:StrongPPT2}}
\label{app:ProofOfProp}

Recall the following class of maps introduced in~\cite{christandl2019composed}.

\begin{defn}[\cite{christandl2019composed}] \label{def:kEB}
For $k\in\N$ a linear map $T:\M_{m}\ra \M_{n}$ is called \emph{$k$-entanglement breaking} if 
\[
(\ident_k\otimes T)\lb \PSD\lb \C^k\otimes \C^{m}\rb\rb\subseteq \PSD\lb \C^k\rb\otimes_{\min} \PSD\lb \C^{n}\rb.
\]
\end{defn}
 
Here, we will focus on the $2$-entanglement breaking maps $T:\M_3\ra \M_3$, which has a very special structure compared to the sets of general $k$-entanglement breaking maps. The following characterization was obtained in~\cite{christandl2019composed}:

\begin{thm}[{\cite[Theorem 3.1]{christandl2019composed}}] \label{thm:charact2EB3}
A linear map $T:\M_3\ra \M_3$ is $2$-entanglement breaking if and only if both $T$ and 
$\vartheta_3\circ T$ are $2$-positive.
\end{thm}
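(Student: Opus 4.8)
The plan is to translate both implications into positivity statements about the single map $\ident_2 \otimes T$ and about its composition with a partial transpose, and then to close the argument with the Horodecki separability criterion in dimension $2 \times 3$. The computation that drives everything is the identity
\[
(\ident_2 \otimes \vartheta_3) \circ (\ident_2 \otimes T) = \ident_2 \otimes (\vartheta_3 \circ T) ,
\]
which turns ``apply $T$, then partially transpose the output'' into the single map $\vartheta_3 \circ T$. Recall that $T$ is $2$-positive exactly when $\ident_2 \otimes T$ is positive, and similarly for $\vartheta_3 \circ T$.

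For the forward direction I would assume $T$ is $2$-entanglement breaking, so that $(\ident_2 \otimes T)(\rho) \in \PSD(\C^2) \tmin \PSD(\C^3)$ for every $\rho \in \PSD(\C^2 \otimes \C^3)$. Since separable operators are positive semidefinite, $\ident_2 \otimes T$ is positive and hence $T$ is $2$-positive. Next, writing a separable operator as $\sigma = \sum_i A_i \otimes B_i$ with $A_i, B_i \geq 0$, one has $(\ident_2 \otimes \vartheta_3)(\sigma) = \sum_i A_i \otimes B_i^T \geq 0$; applying this to $\sigma = (\ident_2 \otimes T)(\rho)$ and using the displayed identity gives $(\ident_2 \otimes (\vartheta_3 \circ T))(\rho) \geq 0$ for all $\rho \geq 0$, i.e.\ $\vartheta_3 \circ T$ is $2$-positive.

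For the converse I would assume both $T$ and $\vartheta_3 \circ T$ are $2$-positive and fix an arbitrary $\rho \in \PSD(\C^2 \otimes \C^3)$. Two-positivity of $T$ yields $(\ident_2 \otimes T)(\rho) \geq 0$, while two-positivity of $\vartheta_3 \circ T$ together with the displayed identity yields that its partial transpose $(\ident_2 \otimes \vartheta_3)\big((\ident_2 \otimes T)(\rho)\big)$ is also positive semidefinite. Thus $(\ident_2 \otimes T)(\rho)$ has positive partial transpose on a $\C^2 \otimes \C^3$ system, and the decisive step is to invoke the Horodecki criterion, which asserts that on $\C^2 \otimes \C^2$ and $\C^2 \otimes \C^3$ positive partial transpose is equivalent to separability; this places the image in $\PSD(\C^2) \tmin \PSD(\C^3)$ and shows that $T$ is $2$-entanglement breaking. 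The main obstacle is exactly this last step: the equivalence of positive partial transpose and separability is special to the $2 \times 3$ setting, and it is what confines the clean characterization to $\M_3$. Everything else is an unwinding of the definitions together with the elementary fact that the transpose preserves positivity of the factors in a product operator.
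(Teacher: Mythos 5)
Your proof is correct and follows exactly the standard argument: the paper itself does not prove this statement but imports it from~\cite{christandl2019composed}, where the proof is precisely your reduction --- unwind the definition of $2$-entanglement breaking, observe that $2$-positivity of $T$ and of $\vartheta_3\circ T$ together say the output $(\ident_2\otimes T)(\rho)$ is positive with positive partial transpose, and invoke the Horodecki (St{\o}rmer--Woronowicz) theorem that PPT and separability coincide on $\C^2\otimes\C^3$. Nothing is missing; the identity $(\ident_2\otimes\vartheta_3)\circ(\ident_2\otimes T)=\ident_2\otimes(\vartheta_3\circ T)$ and the dimension restriction to $2\times 3$ are indeed the two load-bearing ingredients.
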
 

Furthermore, we use the following theorem from~\cite{christandl2019composed}:

\begin{thm}[{\cite[Theorem 2.1]{christandl2019composed}}] \label{thm:compose2EB3}
If $T_1 : \M_3 \to \M_3$ and $T_2 : \M_3 \to \M_3$ are $2$-entanglement breaking, then the composition $T_1 \circ T_2$ is entanglement breaking. 
\end{thm}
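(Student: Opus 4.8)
The plan is to pass to the Choi--Jamiolkowski picture $T\leftrightarrow C_T=(\ident_3\otimes T)(\ketbra{\omega})$, where $\ket\omega=\sum_i\ket{ii}\in\C^3\otimes\C^3$, and to recall that a map is entanglement breaking precisely when its Choi matrix is separable (Lemma~\ref{lem:maps-vs-tensors}, \cite{horodecki2003entanglement}), whereas by Theorem~\ref{thm:charact2EB3} a map $T:\M_3\to\M_3$ is $2$-entanglement breaking precisely when $T$ and $\vartheta_3\circ T$ are both $2$-positive, i.e. when both $C_T$ and its partial transpose $C_T^\Gamma$ are positive on all Schmidt-rank-$2$ vectors. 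Since $C_{T_1\circ T_2}=(\ident_3\otimes T_1)(C_{T_2})$, the goal reduces to showing that $(\ident_3\otimes T_1)(C_{T_2})$ is separable. First I would collect the cost-free reductions: entanglement breaking maps form a convex cone that is stable under pre- and post-composition with positive maps, so by Lemma~\ref{lem:dualityEB} it suffices to certify $\Tr[Q\circ T_1\circ T_2]\geq 0$ against positive witnesses $Q$, and by homogeneity one may normalize $T_1,T_2$ to be trace preserving or unital.

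The heart of the argument is a structural description of $2$-entanglement breaking maps on $\M_3$ that exposes a hidden qubit. Indeed, the defining inclusion $(\ident_2\otimes T_i)(\PSD(\C^2\otimes\C^3))\subseteq\PSD(\C^2)\otimes_{\min}\PSD(\C^3)$ ties the notion directly to the cone $\PSD(\C^2)\cong\gL_3$ shown to be resilient in Theorem~\ref{thm:Lorentz}. The key lemma I would aim for is that the $2$-positivity together with the $2$-copositivity of $T_1$ and of $T_2$ forces the composed Choi operator into a form controlled by a two-dimensional sector: after the correct compression, the part of the output of $T_2$ that $T_1$ must act on is supported on a two-dimensional subspace, so that the relevant restriction $R$ of $T_1$ has a \emph{two-dimensional domain} while remaining both $2$-positive and $2$-copositive. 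Because $n$-positivity of a map with $n$-dimensional domain already forces complete positivity, such an $R$ is simultaneously completely positive and completely copositive, i.e. its Choi matrix is PPT in a $2\times n$ system.

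The endgame then invokes the low-dimensional separability phenomenon: by the Størmer--Woronowicz theorem every PPT operator in a $2\times 2$ or $2\times 3$ system is separable (see~\cite{aubrun2017alice}), so the restricted map $R$ is entanglement breaking; applying an entanglement breaking map to one tensor factor of any positive operator returns a separable operator (Proposition~\ref{prop:EB}), and a finite sum of separable operators is separable, which yields separability of $C_{T_1\circ T_2}$ and hence that $T_1\circ T_2$ is entanglement breaking. I expect the genuine obstacle to be the middle step: the hypotheses are weak and live on the dual side ($2$-positivity of $T_i$ and of $\vartheta_3\circ T_i$, which do not even force $T_i$ to be completely positive), while the desired conclusion is the strong, primal-side statement that the composition collapses onto a qubit sector and is in fact completely positive. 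Converting one into the other requires a delicate rank-and-kernel analysis of the two Choi matrices that is special to dimension $3$; this is exactly where the classification of extreme $2$-entanglement breaking maps on $\M_3$ must be carried out, and where the argument of~\cite{christandl2019composed} does its real work (the connection to Schmidt number in the sense of~\cite{terhal2000schmidt} being the natural bookkeeping device).
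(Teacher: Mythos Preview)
The paper does not supply its own proof of this statement; Theorem~\ref{thm:compose2EB3} is quoted verbatim from \cite[Theorem~2.1]{christandl2019composed} and used as a black box in the proof of Proposition~\ref{prop:StrongPPT2}. So there is nothing in the present paper to compare your argument against beyond the citation itself.

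As for your proposal on its own merits: it is not a proof but an outline with an explicitly acknowledged gap. The opening and closing paragraphs are fine bookkeeping, but the decisive assertion --- that ``after the correct compression, the part of the output of $T_2$ that $T_1$ must act on is supported on a two-dimensional subspace'' --- is neither proved nor even made precise, and as stated it is not obviously correct: the range of a $2$-entanglement breaking map $T_2:\M_3\to\M_3$ need not lie in any $\M_2$-subalgebra, and the Choi matrix $C_{T_2}$ need not be a convex combination of Schmidt-rank-$\leq 2$ states (that would already make $T_2$ entanglement breaking by the $2\times 3$ PPT theorem). You yourself concede in the last paragraph that this step is ``the genuine obstacle'' and that it is ``exactly where the argument of~\cite{christandl2019composed} does its real work.'' That is an honest diagnosis, but it means what you have written is a roadmap pointing at the cited paper rather than an independent argument. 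If you want a self-contained proof you must actually carry out the dimension-$3$ structural analysis; invoking resilience of $\gL_3$ (Theorem~\ref{thm:Lorentz}) does not by itself produce the needed rank reduction.
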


The proof of Proposition \ref{prop:StrongPPT2} is very easy:

\begin{proof}[Proof of Proposition \ref{prop:StrongPPT2}]
By Theorem \ref{thm:charact2EB3}, the maps $S$ and $T$ are $2$-entanglement breaking. By the Choi--Jamiolkowski isomorphism, the conclusion of Proposition \ref{prop:StrongPPT2} is equivalent to saying that $T\circ P\circ S$ is entanglement breaking for every positive map $P:\M_3\ra \M_3$. Since $T$ and $P \circ S$ are $2$-entanglement breaking, this follows from Theorem~\ref{thm:compose2EB3}.
\end{proof}

We want to close this appendix with two side remarks regarding the special structure of $2$-entanglement breaking maps $T:\M_3\ra \M_3$. First, we note the following corollary of Theorem \ref{thm:charact2EB3}:

\begin{cor}\label{cor:2EBbyAdjoints}
A linear map $T:\M_3\ra \M_3$ is $2$-entanglement breaking if and only if its adjoint $T^*$ is $2$-entanglement breaking.
\end{cor}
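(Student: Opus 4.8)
The plan is to deduce the statement directly from the characterization in Theorem \ref{thm:charact2EB3}, which asserts that $T:\M_3\ra\M_3$ is $2$-entanglement breaking if and only if both $T$ and $\vartheta_3\circ T$ are $2$-positive. Since $(T^*)^*=T$, it suffices to show that each of these two $2$-positivity conditions transfers verbatim between $T$ and $T^*$. I would reduce the whole argument to two elementary symmetry properties of $k$-positivity (stated for general $k$, although only $k=2$ is needed).

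The first property is that $k$-positivity is preserved under the Hilbert--Schmidt adjoint: a linear map $\Phi$ is $k$-positive if and only if $\Phi^*$ is. This holds because $(\ident_k\otimes\Phi)^*=\ident_k\otimes\Phi^*$, and the adjoint of a positive map is positive, since $\Tr[(\ident_k\otimes\Phi^*)(Y)\,X]=\Tr[Y\,(\ident_k\otimes\Phi)(X)]\geq 0$ for all $X,Y\geq 0$. The second property is that the two-sided transpose sandwich $\Psi\mapsto\vartheta_n\circ\Psi\circ\vartheta_m$ preserves $k$-positivity. To prove it I would fix $X\geq 0$ in $\M_k\otimes\M_m$ and compute, using $\vartheta_{kn}=\vartheta_k\otimes\vartheta_n$ and $\vartheta_{km}=\vartheta_k\otimes\vartheta_m$ together with $\vartheta_n\circ\vartheta_n=\ident$, that
\[ \vartheta_{kn}\big[(\ident_k\otimes(\vartheta_n\circ\Psi\circ\vartheta_m))(X)\big]=(\ident_k\otimes\Psi)(X^T). \]
Because the global transpose preserves positive semidefiniteness and $X^T\geq 0$, the right-hand side is positive semidefinite whenever $\Psi$ is $k$-positive, and hence so is the bracketed expression; as $X\geq 0$ was arbitrary this gives one direction, and the converse follows since the sandwich is an involution.

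With these two facts I would conclude as follows. By the adjoint property, $T$ is $2$-positive if and only if $T^*$ is. For the copositive condition, the adjoint property gives that $\vartheta_3\circ T^*$ is $2$-positive if and only if its adjoint $T\circ\vartheta_3$ is, while the sandwich property, applied to $\Psi=\vartheta_3\circ T$ and using $T\circ\vartheta_3=\vartheta_3\circ(\vartheta_3\circ T)\circ\vartheta_3$, shows that $T\circ\vartheta_3$ is $2$-positive if and only if $\vartheta_3\circ T$ is. Chaining these equivalences yields that $\vartheta_3\circ T$ is $2$-positive exactly when $\vartheta_3\circ T^*$ is, so by Theorem \ref{thm:charact2EB3} the defining conditions hold for $T$ precisely when they hold for $T^*$. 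The only genuine obstacle is that the naive adjoint of $\vartheta_3\circ T$ is $T^*\circ\vartheta_3$ rather than the desired $\vartheta_3\circ T^*$; reconciling on which side the transpose acts is exactly what the transpose-sandwich property is there to supply.
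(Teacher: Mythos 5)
Your proof is correct and follows exactly the route the paper intends: the paper states Corollary \ref{cor:2EBbyAdjoints} without proof as an immediate consequence of Theorem \ref{thm:charact2EB3}, and your argument supplies precisely the omitted verification, namely that both conditions in that characterization ($2$-positivity of $T$ and of $\vartheta_3\circ T$) are stable under taking Hilbert--Schmidt adjoints. Your two auxiliary facts (adjoints preserve $k$-positivity; the transpose sandwich preserves $k$-positivity) and the chaining via $(\vartheta_3\circ T^*)^* = T\circ\vartheta_3 = \vartheta_3\circ(\vartheta_3\circ T)\circ\vartheta_3$ are all sound.
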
 

Corollary \ref{cor:2EBbyAdjoints} turns out to be very special, since adjoints of $2$-entanglement breaking maps usually fail being $2$-entanglement breaking themselves. In fact, the set of $2$-entanglement breaking maps $T:\M_d\ra \M_d$ is not closed under adjoints for any $d\geq 4$ (as shown in~\cite{devendra2021mapping}). 

In the proof of Proposition \ref{prop:StrongPPT2}, we used the fact that the composition $P \circ S$ of a positive map and a $2$-entanglement breaking map is $2$-entanglement breaking as well. Recall that for any entanglement breaking map $T:\M_{d_2}\ra\M_{d_3}$ and any positive map $P:\M_{d_1}\ra \M_{d_2}$ the composition $T\circ P$ is also entanglement breaking. It seems unlikely that the analogous property holds for general $2$-entanglement breaking maps, but in the case of $d=3$ we have the following:

\begin{lem}
Let $T:\M_{3}\ra \M_{3}$ be a $2$-entanglement breaking map $P:\M_3\ra \M_3$ be a positive map. Then $T\circ P$ is $2$-entanglement breaking.
\label{lem:EasyLem}
\end{lem}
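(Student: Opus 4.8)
The plan is to reduce the statement, via the characterization in Theorem \ref{thm:charact2EB3}, to a single assertion about $2$-positivity. By that theorem, $T\circ P$ is $2$-entanglement breaking exactly when both $T\circ P$ and $\vartheta_3\circ T\circ P$ are $2$-positive. Since $\vartheta_3\circ(\vartheta_3\circ T)=T$, Theorem \ref{thm:charact2EB3} also shows that $\vartheta_3\circ T$ is itself $2$-entanglement breaking, and writing $\vartheta_3\circ T\circ P=(\vartheta_3\circ T)\circ P$ we see that both maps we must control are instances of the following \emph{master claim}: if $S:\M_3\ra\M_3$ is $2$-entanglement breaking and $P:\M_3\ra\M_3$ is positive, then $S\circ P$ is $2$-positive. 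Applying this with $S=T$ and with $S=\vartheta_3\circ T$ and invoking Theorem \ref{thm:charact2EB3} once more would finish the proof.

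To prepare the master claim I would first record that for \emph{every} $2$-entanglement breaking map $S:\M_3\ra\M_3$, both $S$ and $S\circ\vartheta_3$ are $2$-positive. The first is immediate from Theorem \ref{thm:charact2EB3}. For the second, Corollary \ref{cor:2EBbyAdjoints} gives that $S^*$ is again $2$-entanglement breaking, so $\vartheta_3\circ S^*$ is $2$-positive by Theorem \ref{thm:charact2EB3}; taking Hilbert--Schmidt adjoints and using $\vartheta_3^*=\vartheta_3$ together with the fact that $2$-positivity is preserved under adjoints (the adjoint of a positive map is positive by self-duality of $\PSD(\C^3)$), this yields that $S\circ\vartheta_3=(\vartheta_3\circ S^*)^*$ is $2$-positive. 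This is the step where the restriction $d=3$ enters, through the adjoint-closure of $2$-entanglement breaking maps.

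For the master claim I would test $2$-positivity of $S\circ P$ on rank-one inputs: it suffices to show $(\ident_2\otimes(S\circ P))(\ketbra{v})\geq 0$ for every $\ket v\in\C^2\otimes\C^3$. Any such vector has Schmidt rank at most $2$, so $\ket v=\ket 0\otimes\ket{\xi_0}+\ket 1\otimes\ket{\xi_1}$ with $\ket{\xi_0},\ket{\xi_1}\in\C^3$, and every matrix fed to $P$ is supported on $W=\mathrm{span}(\ket{\xi_0},\ket{\xi_1})$, of dimension at most $2$. Hence only the restriction $\tilde P:\M(W)\ra\M_3$ matters, and this is a positive map from (at most) $\M_2$ into $\M_3$, which is decomposable by the classical theorem of St{\o}rmer and Woronowicz: $\tilde P=\Phi_1+\vartheta_3\circ\Phi_2$ with $\Phi_1,\Phi_2$ completely positive. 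Setting $\omega=\ketbra{v}\geq 0$ and $N=(\ident_2\otimes\tilde P)(\omega)$, I would split
\[ N=\underbrace{(\ident_2\otimes\Phi_1)(\omega)}_{=:N_1}+\underbrace{(\ident_2\otimes\vartheta_3)\big((\ident_2\otimes\Phi_2)(\omega)\big)}_{=:N_2}, \]
where $N_1\geq 0$ and the partial transpose on the $\M_3$-factor, $N_2^{\Gamma}=(\ident_2\otimes\Phi_2)(\omega)$, is positive semidefinite. Applying $\ident_2\otimes S$ and using the preliminary observation gives $(\ident_2\otimes S)(N_1)\geq 0$ since $S$ is $2$-positive, and $(\ident_2\otimes S)(N_2)=(\ident_2\otimes(S\circ\vartheta_3))(N_2^{\Gamma})\geq 0$ since $S\circ\vartheta_3$ is $2$-positive; thus $(\ident_2\otimes(S\circ P))(\ketbra{v})=(\ident_2\otimes S)(N)\geq 0$.

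The main obstacle is exactly the completely copositive part of $P$: naively, $2$-positivity of $S$ pushes positive semidefinite inputs through, but $(\ident_2\otimes P)(\ketbra v)$ need not be positive, and the copositive contribution produces $N_2$, whose positivity is only visible after a partial transpose. Handling it is precisely what forces the proof of $S\circ\vartheta_3$ (rather than merely $\vartheta_3\circ S$) being $2$-positive, which relies on the two $d=3$ specific facts used above, namely that $2$-entanglement breaking maps are adjoint-closed (Corollary \ref{cor:2EBbyAdjoints}) and that positive maps $\M_2\ra\M_3$ are decomposable. Both ingredients fail in higher dimensions, consistent with the remark that the statement is special to $d=3$.
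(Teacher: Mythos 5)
Your proof is correct --- every step checks out: the reduction via Theorem \ref{thm:charact2EB3}, the observation that $\vartheta_3\circ T$ is again $2$-entanglement breaking, the rank-one and support reduction to a positive map on (at most) $\M_2$, Woronowicz's decomposability theorem, and the adjoint argument giving $2$-positivity of $S\circ\vartheta_3$ --- but it takes a genuinely different and considerably heavier route than the paper's. The paper's proof is two lines of pure soft duality: by Corollary \ref{cor:2EBbyAdjoints}, $T\circ P$ is $2$-entanglement breaking if and only if its adjoint $P^*\circ T^*$ is; now $T^*$ is $2$-entanglement breaking (Corollary \ref{cor:2EBbyAdjoints} again) and $P^*$ is positive, and post-composing a $2$-entanglement breaking map with a positive map trivially preserves the property, since $\ident_2\otimes P^*$ maps $\PSD(\C^2)\otimes_{\min}\PSD(\C^3)$ into itself. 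In other words, the paper uses the adjoint closure to flip the composition into the order in which the positive map acts on the output side, where it is harmless; no structure theory is needed. You keep the positive map on the input side and instead verify the two $2$-positivity conditions of Theorem \ref{thm:charact2EB3} directly, which is exactly what forces you to invoke Woronowicz's theorem together with the Schmidt-rank-two support trick to tame the completely copositive part of $P$. Both arguments pivot on the same $d=3$-specific pillar, Corollary \ref{cor:2EBbyAdjoints}; the paper's use of it buys brevity and keeps the argument internal to the quoted corollaries, while yours buys transparency --- it makes visible why dimension three is special (adjoint closure plus decomposability of positive maps out of $\M_2$), and your intermediate ``master claim'' (a $2$-entanglement breaking map precomposed with a positive map is $2$-positive) isolates a statement of some independent interest, close in spirit to the decomposability mechanisms behind the results of \cite{christandl2019composed} used in this appendix.
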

\begin{proof}
By Corollary \ref{cor:2EBbyAdjoints}, the composition $T\circ P$ is $2$-entanglement breaking if and only if its adjoint $P^*\circ T^*$ is $2$-entanglement breaking. Since $T^*$ is $2$-entanglement breaking by Corollary \ref{cor:2EBbyAdjoints}, the result follows.
\end{proof}

\section{Proof of Lemma \ref{lem:TPBanachVsCone}}
\label{sec:ProofLemTPBanachVsCone}

In the following, let $X$ denote a finite-dimensional normed space. Recall the subspace $X_k \subset \lb \R\oplus X\rb^{\otimes k}$ given by
\[
X_k := \text{span}\lb\lset e^{\otimes k}_0\rset\cup X^{\otimes k}\rb\subset (\R\oplus X)^{\otimes k} ,
\]
and, as before, we denote by ${\Pi}_{X_k}:\lb \R\oplus X\rb^{\otimes k} \ra \lb \R\oplus X\rb^{\otimes k}$ the orthogonal projection onto $X_k$. Consider the projection $S:(\R\oplus X)\otimes (\R\oplus X)\ra (\R\oplus X)\otimes (\R\oplus X)$ given by
\[
S = \frac{1}{2}(\ident_V\otimes \ident_V + A \otimes A),
\]
where $A:(\R\oplus V)\ra (\R\oplus V)$ is the reflection $(t,x) \mapsto (t,-x)$. We will need the following lemma:

\begin{lem}
We have
\begin{equation}
\Pi_{X_k} = \Pi_{i,j\in \lset 1,\ldots ,k\rset, i\neq j} S_{i,j},
\label{equ:ProjByS}
\end{equation}
where $S_{i,j}:X^{\otimes k}\ra X^{\otimes k}$ applies $S$ to the pair of tensor factors labeled $(i,j)$. Moreover, we have 
\[
\Pi_{X_k}\lb \gC^{\otimes_{\max} k}_X\rb\subset \gC^{\otimes_{\max} k}_X \quad\text{ and }\quad\Pi_{X_k}\lb \gC^{\otimes_{\min} k}_X\rb\subset \gC^{\otimes_{\min} k}_X .
\]
\label{lem:SeparableProj}
\end{lem}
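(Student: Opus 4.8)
The plan is to first pin down the operator $S$ explicitly, and then assemble $\Pi_{X_k}$ out of it. Equip $V=\R\oplus X$ with the Euclidean structure in which $e_0$ is a unit vector orthogonal to $X$, and write $\Pi_{e_0}$ and $\Pi_X$ for the orthogonal projections of $V$ onto $\R e_0$ and onto $X$, so that $\ident_V=\Pi_{e_0}+\Pi_X$ and $A=\Pi_{e_0}-\Pi_X$. Substituting these into $S=\tfrac12(\ident_V\otimes\ident_V+A\otimes A)$ and expanding, the cross terms $\Pi_{e_0}\otimes\Pi_X$ and $\Pi_X\otimes\Pi_{e_0}$ cancel and one is left with
\[ S=\Pi_{e_0}\otimes\Pi_{e_0}+\Pi_X\otimes\Pi_X, \]
which is the orthogonal projection of $V\otimes V$ onto $\R(e_0\otimes e_0)\oplus(X\otimes X)$.

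To prove the formula \eqref{equ:ProjByS}, I would use the slotwise decomposition $V=\R e_0\oplus X$, which induces $V^{\otimes k}=\bigoplus_{\sigma\in\{0,X\}^k}V_\sigma$, where $V_\sigma=\bigotimes_{m=1}^k W_{\sigma_m}$ with $W_0=\R e_0$ and $W_X=X$. In this decomposition each $S_{i,j}$ acts diagonally: it is the identity on $V_\sigma$ when $\sigma_i=\sigma_j$ and is zero on $V_\sigma$ otherwise. In particular all the $S_{i,j}$ commute, so their product is again an orthogonal projection, namely onto the sum of those $V_\sigma$ for which $\sigma_i=\sigma_j$ holds for every pair $i\neq j$. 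The only such multi-indices $\sigma$ are the two constant ones, giving $V_\sigma=\R e_0^{\otimes k}$ and $V_\sigma=X^{\otimes k}$; their sum is exactly $X_k$. Hence $\prod_{i\neq j}S_{i,j}=\Pi_{X_k}$.

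For the two inclusions, the key observation is that the reflection $A\colon(t,x)\mapsto(t,-x)$ is a linear automorphism of $\gC_X$, because the unit ball of a norm is symmetric and hence $\|{-x}\|_X=\|x\|_X$. Consequently, for each slot $i$ the map $A^{(i)}=\ident_V^{\otimes(i-1)}\otimes A\otimes\ident_V^{\otimes(k-i)}$ is a tensor product of $(\gC_X,\gC_X)$-positive maps, and so are the products $A^{(i)}A^{(j)}$; as such they map both $\gC_X^{\tmin k}$ and $\gC_X^{\tmax k}$ into themselves, since tensor products of positive maps preserve the minimal and maximal tensor products (as used repeatedly above, e.g.\ in the proof of Lemma \ref{lem:EntAnnUnderTrans}). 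Writing $S_{i,j}=\tfrac12(\ident_V^{\otimes k}+A^{(i)}A^{(j)})$ then exhibits each $S_{i,j}$ as a convex combination of two such maps, and since $\gC_X^{\tmin k}$ and $\gC_X^{\tmax k}$ are convex cones, $S_{i,j}$ preserves both. Finally, $\Pi_{X_k}=\prod_{i\neq j}S_{i,j}$ is a composition of maps each preserving $\gC_X^{\tmin k}$ and $\gC_X^{\tmax k}$, and therefore preserves both as well.

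The computations here are routine; the only point requiring genuine care is that $S$ is not literally a tensor product of two positive maps, so its cone-positivity is not immediate. Rewriting $S_{i,j}$ as the convex combination $\tfrac12(\ident_V^{\otimes k}+A^{(i)}A^{(j)})$ is precisely what makes positivity manifest and is the conceptual crux of the argument, with everything else reducing to the commuting-projection bookkeeping in the slotwise decomposition.
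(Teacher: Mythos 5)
Your proposal is correct and takes essentially the same route as the paper: both rest on the identity $S=\tfrac12(\ident_V\otimes\ident_V+A\otimes A)$ with $A$ a cone automorphism, verify that the product of the $S_{i,j}$ projects onto $X_k$ by a slot-wise computation, and obtain the cone inclusions by averaging over the symmetry. The only difference is presentational — the paper checks cone-preservation by a direct computation pairing against dual functionals (for $\tmax$) and on product vectors (for $\tmin$), which is exactly what your appeal to ``tensor products of positive maps preserve $\tmin$ and $\tmax$'' plus convexity unfolds to.
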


\begin{proof}
It is easy to verify that 
\[
S(e_i\otimes e_j)=\begin{cases} e_i\otimes e_j, &\text{ if } i=j=0\text{ or }i,j\geq 1 \\ 0, &\text{ otherwise.}\end{cases}
\]
We conclude that \eqref{equ:ProjByS} holds. For the first part of the second claim, note that for any $\tilde{\phi}_1,\tilde{\phi}_2\in \gC^*_X$ of the form $\tilde{\phi}_i=e^*_0 + \phi_i$ for $\phi_i\in B_{X^*}$ and any $z\in \gC^{\otimes_{\max}2}_X$ we have 
\[
(\phi\otimes\psi)S(z) = \frac{1}{2}\lbr(\tilde{\phi}_1\otimes\tilde{\phi}_2)(z) + (\tilde{\phi}_1'\otimes\tilde{\phi}_2')(z)\rbr\geq 0,
\]
where $\tilde{\phi}'_i=e^*_0 - \phi_i\in \gC^*_X$. The general case follows in the same way. For the second part of the second claim, note that 
\[
S(x_1\otimes x_2) = \frac{1}{2}\lbr x_1\otimes x_2 + x_1'\otimes x_2'\rbr\in \gC^{\otimes_{\min}}_X ,
\]
where $x_i=I(x_i)\in \gC_X$. The general case follows in the same way.
\end{proof}

\begin{proof}[Proof of Lemma \ref{lem:TPBanachVsCone}]
By Lemma \ref{lem:SeparableProj} we have 
\[
{\Pi}_{X_k}(\gC^{\otimes_{\max} k}_X) \subseteq \gC^{\otimes_{\max} k}_X\cap X_k.
\]
Conversely, we have
\[
\gC^{\otimes_{\max} k}_X\cap X_k = {\Pi}_{X_k}(\gC^{\otimes_{\max} k}_X\cap X_k ) \subseteq {\Pi}_{X_k}(\gC^{\otimes_{\max} k}_X).
\]
This shows that $\gC^{\otimes_{\max} k}_X\cap X_k = {\Pi}_{X_k}(\gC^{\otimes_{\max} k}_X)$. Now, note that $\gC^*_X=\gC_{X^*}$ for any Banach space $X$. Therefore, $\tilde{\phi}\in \gC^*_X$ if and only if $\tilde{\phi} = t e^*_0 + \phi$ for some functional $\phi\in X^*$ satisfying $\|\phi\|_*\leq t$. Given $\tilde{z}\in X_k$ of the form $\tilde{z}=e^{\otimes k}_0 + z$ and $\tilde{\phi}_1,\ldots ,\tilde{\phi}_k\in \gC^*_X$, we have
\[
(\tilde{\phi}_1\otimes \cdots \otimes \tilde{\phi}_k)(\tilde{z}) = s_1\cdots s_k + (\phi_1\otimes\cdots \otimes\phi_k)(z),
\]
where $s_i\geq \|\phi_i\|_*$ for each $i\in\lset 1,\ldots ,k\rset$. If $\tilde{z}\in \gC_{X^{\otimes_{\varepsilon} k}}$, we conclude that $(\tilde{\phi}_1\otimes\cdots \otimes\tilde{\phi}_k)(\tilde{z})\geq 0$ whenever $\tilde{\phi}_1,\ldots ,\tilde{\phi}_k\in \gC^*_X$, and hence $\tilde{z}\in \gC_X^{\otimes_{\max}k}$. On the other hand, if $\tilde{z}=e^{\otimes k}_0 + z\in \gC_X^{\otimes_{\max}k}$, then $1 \pm (\phi_1\otimes\cdots \otimes\phi_k)(z)\geq 0$ for all functionals $\phi_1,\ldots ,\phi_k\in B_{X^*}$. Therefore, we conclude that $\tilde{z}\in \gC_{X^{\otimes_{\varepsilon}k}}$. 

By Lemma \ref{lem:SeparableProj} we have 
\[
{\Pi}_{X_k}(\gC^{\otimes_{\min} k}_X) \subseteq \gC^{\otimes_{\min} k}_X\cap V_k.
\]
Conversely, we have
\[
\gC^{\otimes_{\min} k}_X\cap X_k = {\Pi}_{X_k}(\gC^{\otimes_{\min} k}_X\cap X_k ) \subseteq {\Pi}_{X_k}(\gC^{\otimes_{\min} k}_X).
\]
This shows that $\gC^{\otimes_{\min} k}_X\cap X_k = {\Pi}_{X_k}(\gC^{\otimes_{\min} k}_X)$. For $\tilde{x}^{(1)},\ldots ,\tilde{x}^{(k)}\in \gC_X$ written as $\tilde{x}^{(i)}=e_0 + x^{(i)}$ for each $i\in\lset 1,\ldots ,k\rset$, we have ${\Pi}_{X_k}\lb \tilde{x}^{(1)}\otimes \cdots \otimes\tilde{x}^{(k)}\rb=e_0^{\otimes k} + x^{(1)}\otimes\cdots \otimes x^{(k)}$ and clearly 
\[
\|x^{(1)}\otimes\cdots \otimes x^{(k)}\|_{\pi_k(X)} \leq  \|x^{(1)}\|\cdots \|x^{(k)}\|\leq 1.
\] 
Therefore, ${\Pi}_{X_k}\lb \tilde{x}^{(1)}\otimes \cdots \otimes\tilde{x}^{(k)}\rb\in \gC_{X^{\otimes_\pi k}}$ and by definition of $\otimes_{\min}$ we have ${\Pi}_{X_k}\lb \gC_X^{\otimes_{\min}k}\rb \subseteq \gC_{X^{\otimes_\pi k}}$. For the converse inclusion consider $\tilde{z}\in \gC_{X^{\otimes_\pi k}}$ of the form $\tilde{z}= \| z\|_{\pi_k(X)} e^{\otimes k}_0 + z$. For some $n\in\N$ there exist $x^{(1)}_i,\ldots ,x^{(k)}_i\in X$ for $i\in\lset 1,\ldots ,n\rset$ such that $z=\sum_{i} x^{(1)}_i\otimes \cdots x^{(k)}_i$ and $\sum_{i}\| x^{(1)}_i\|\cdots \|x^{(k)}_i\|=\|z\|_{\pi_k}\leq 1$. For each $i$ and $j$ we define $\tilde{x}^{(j)}_i = \|x^{(j)}_i\|e_0 + x^{(j)}_i\in \gC_X$ such that 
\[
\tilde{z} = \sum_i \|x^{(1)}_i\|\cdots \|x^{(k)}_i\|e^{\otimes k}_0 + \sum_{i}x^{(1)}_i\otimes\cdots \otimes x^{(k)}_i = {\Pi}_{X_k}\lb \sum_i \tilde{x}^{(1)}_i\otimes\cdots \otimes \tilde{x}^{(k)}_i \rb.
\]
By convexity and since $e^{\otimes k}_0\in {\Pi}_{X_k}(\gC^{\otimes_{\min} k}_X)$ we find $\gC_{X^{\otimes_\pi k}}\subseteq {\Pi}_{X_k}(\gC^{\otimes_{\min} k}_X)$.
\end{proof}

\section{Two results for symmetric cones}
\label{sec:Jordan}

A symmetric cone $\gC\subset V$ is a proper cone in a Euclidean vector space $V$ that it selfdual, i.e., it satisfies $\gC=\gC^*$, and such that the automorphism group $\Aut(\gC)$ of $\gC$ acts transitively on its interior $\text{int}\lb \gC\rb$. It is well-known that symmetric cones are closely related to Jordan algebras. Let $G$ denote the connected component of $\Aut\lb C\rb$ containing the identity in the orthogonal group $O(V)$, and let $K=G\cap O(V)$. By \cite[Proposition I.1.9]{faraut1994analysis} we may choose an element $e\in C$ such that $K$ arises as the stabilizer of $e$ in $\Aut\lb C\rb$. By \cite[Theorem III.3.1]{faraut1994analysis} the vector space $V$ can then be equipped with a product turning it into a Euclidean Jordan algebra with identity element $e$ such that 
\begin{equation}
\gC = \lset x^2 ~:~x\in V\rset.
\label{equ:ConeAsSquares}
\end{equation}
From this description and the classification theorem of Euclidean Jordan algebras, it is not surprising that being symmetric is a restrictive property. Indeed, the classification theorem due to Vinberg~\cite{vinberg1963theory} (see also~\cite{faraut1994analysis}) shows that every indecomposable symmetric cone is isomorphic to one of the following examples:
\begin{itemize}
\item The positive semidefinite cones $\PSD(\R^d)$ over the real numbers.
\item The positive semidefinite cones $\PSD(\C^d)$ over the complex numbers.
\item The positive semidefinite cones $\PSD(\qua^d)$ over the quaternions.
\item The $3\times 3$ positive semidefinite cone $\PSD(\oc^3)$ over the octonions.
\item The Lorentz cones $\gL_n$.
\end{itemize}
Here, we will prove two results for positive maps on symmetric cones: The first result generalizes the Sinkhorn normal form of positive maps between cones of positive-semidefinite matrices to positive maps between symmetric cones. The second result generalizes a result on inclusion-constants of matrix convex sets~\cite[Theorem 6.6]{passer2018minimal} to the setting of symmetric cones. To make our presentation self-contained we will review well-known constructions from the theory of Euclidean Jordan algebras and we refer the reader to the book~\cite{faraut1994analysis} for more details.

\subsection{Sinkhorn-type scaling on symmetric cones}

Let $\gC\subset V$ denote a symmetric cone in a Euclidean Jordan algebra $V$ with identity element $e\in \gC$ such that \eqref{equ:ConeAsSquares} holds. For any $x\in V$ we consider 
\[
m(x) = \min\lset k>0 ~:~\lset e,x,x^2,\ldots ,x^k\rset \text{ linearly independent} \rset,
\]
and we set $d=\max\lset m(x)~:~x\in V\rset$. An element $x\in V$ is called \emph{regular} if $m(x)=d$. By~\cite[Proposition II.2.1]{faraut1994analysis} the regular elements form an open and dense subset in $V$. Moreover, for each $k\in\lset 1,\ldots ,d\rset$ there exists a homogeneous polynomial $a_k$ of degree $k$ such that 
\[
x^d - a_1(x)x^{d-1} + a_2(x)x^{d-2} + \cdots + (-1)^d a_d(x)e=0
\]  
for every regular $x\in V$. By continuity the previous equation extends to the whole Jordan algebra $V$, and we set $\det(x):=a_d(x)$. An element $x\in V$ is called invertible if $\det(x)\neq 0$ and we denote by $V_I\subset V$ the set of invertible elements in $V$. The inverse $\inv:V_I\ra V_I$ is given by 
\[
\inv(x) = \frac{1}{\det(x)}\lb x^{d-1}-a_1(x)x^{d-2} +\cdots + (-1)^{d-1}a_{d-1}(x)e\rb,
\]
and sometimes we write $x^{-1}$ instead of $\inv(x)$. It turns out that the interior $\gC^\circ$ arises as
\[
\gC^\circ = \lset x^2~:~x\in V_I\rset ,
\]
and consequently any element of $\gC^\circ$ is invertible.

For each $x\in V$ there is a left multiplication $L_x:V\ra V$ given by $L_x y = xy$ for $y\in V$. Using this operator we define the quadratic representation $Q_x:V\ra V$ as 
\[
Q_x = 2L_x^2 - L_{x^2}.
\]
We have the following (see~\cite[Section II.3.~and Proposition II.4.4]{faraut1994analysis}).

\begin{lem}[Properties of the quadratic representation~\cite{faraut1994analysis}]
For each $x\in V$ we have 
\begin{enumerate}
\item $Q_x$ is a self-adjoint operator on $V$ and $Q_x\in \Aut(\gC)$.
\item $Q_x(e) = x^2$.
\item $Q_x^{-1} = Q_{x^{-1}}$ whenever $x$ is invertible.
\end{enumerate}
\label{lem:PropsOfQuadr}
\end{lem}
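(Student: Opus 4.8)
The plan is to dispatch the three assertions in increasing order of difficulty, using only the standard structure of the Euclidean Jordan algebra $V$: commutativity of the product, the self-adjointness of every multiplication operator $L_y$ with respect to the inner product (equivalently, the associativity of the trace form, $\langle uv,z\rangle=\langle u,vz\rangle$), the spectral decomposition $x=\sum_i\lambda_i c_i$ relative to a Jordan frame, and the fundamental formula $Q_{Q_x y}=Q_x Q_y Q_x$ coming from the linearized Jordan identity (see~\cite{faraut1994analysis}). Statement (2) is immediate: since $L_x e=x$ and $L_{x^2}e=x^2$, one computes $Q_x e=2L_x^2 e-L_{x^2}e=2x^2-x^2=x^2$. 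The self-adjointness half of (1) is equally direct, as each $L_y$ is self-adjoint, so $(L_x^2)^*=L_x^2$ and $L_{x^2}^*=L_{x^2}$, whence $Q_x^*=2L_x^2-L_{x^2}=Q_x$.

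For statement (3) I would first record that on the associative subalgebra $\R[x]$ generated by a single element the operator $Q_x$ acts as multiplication by $x^2$; in particular $Q_x(x^{-1})=x$ for invertible $x$. Part (3) then follows from the fundamental formula by taking $y=x^{-1}$, which gives $Q_x Q_{x^{-1}}Q_x=Q_{Q_x(x^{-1})}=Q_x$. Once $Q_x$ is known to be invertible — which the automorphism half of (1) supplies, as $Q_x$ then lands in $\Aut(\gC)\subset\GL(V)$ — I cancel one factor of $Q_x$ on each side to obtain $Q_{x^{-1}}Q_x=Q_x Q_{x^{-1}}=\id$.

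The substantive point, and the main obstacle, is that $Q_x$ preserves $\gC$. I would establish this first on the interior $\gC^\circ$: every $u\in\gC^\circ$ has the form $u=\exp(a)$ for some $a\in V$, and the fundamental formula makes $t\mapsto Q_{\exp(ta)}$ a one-parameter group, so that $Q_{\exp(a)}=\exp(2L_a)$. It then suffices to show $L_a$ is tangent to the cone, i.e.\ that $\exp(tL_a)\in\Aut(\gC)$: if $y\in\partial\gC$ and $w\in\gC^*=\gC$ satisfy $\langle y,w\rangle=0$, then $yw=0$ (orthogonal elements of a self-dual symmetric cone have vanishing Jordan product), so trace-form associativity yields $\langle L_a y,w\rangle=\langle a,yw\rangle=0$; this tangency forces $\exp(2tL_a)(\gC)=\gC$, hence $Q_u\in\Aut(\gC)$ for $u\in\gC^\circ$. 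To reach an arbitrary invertible $x=\sum_i\lambda_i c_i$, I would factor $Q_x=Q_s Q_{|x|}$ through the Peirce decomposition of the frame $\{c_i\}$, where $|x|=\sum_i|\lambda_i|c_i\in\gC^\circ$ and $s=\sum_i\sgn(\lambda_i)c_i$ is a symmetry whose $Q_s$ is the associated Peirce reflection (an involutive automorphism of $\gC$, since $Q_s^2=Q_{s^2}=Q_e=\id$); this reduces the claim to the interior case. Finally, because invertible elements are dense in $V$, the cone $\gC$ is closed, and $x\mapsto Q_x$ is continuous, the inclusion $Q_x(\gC)\subseteq\gC$ extends to every $x\in V$. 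I expect the genuinely Jordan-algebraic inputs — the identity $Q_{\exp(a)}=\exp(2L_a)$ and the fact that orthogonal positive elements have vanishing product — to be the delicate steps, the rest being formal bookkeeping.
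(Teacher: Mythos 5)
The paper itself contains no proof of this lemma---it is quoted directly from Faraut--Kor\'anyi, where cone preservation is deduced from a determinant identity for $Q_x$ together with the fact that $\gC^\circ$ is the connected component of $e$ in the set of invertible elements---so your proposal is a genuinely independent route, and most of it is sound. Item (2), the self-adjointness half of (1), and the deduction of (3) from $Q_x(x^{-1})=x$ plus the fundamental formula are correct, and your treatment of the interior case of (1), via $Q_{\exp a}=\exp(2L_a)$, the vanishing of Jordan products of orthogonal cone elements, and the flow-invariance (cross-positivity) criterion, is a valid and elegant alternative to the textbook argument. One step there is under-justified: for $F(t)=Q_{\exp(ta)}$ the fundamental formula yields only the symmetric-space relation $F(2s+t)=F(s)F(t)F(s)$, not the group law, so you still need the operators $F(t)$ to commute. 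This is fixable, e.g.\ by noting that every $u=\sum_i\alpha_i c_i$ in the span of the Jordan frame of $a$ has $Q_u$ acting as the scalar $\alpha_j\alpha_k$ on each joint Peirce space of that frame, which gives $Q_{uv}=Q_uQ_v$ for all $u,v\in\R[a]$ and hence the group law directly.

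The genuine gap is the reduction from arbitrary invertible $x$ to the interior case. The factorization $Q_x=Q_sQ_{|x|}$ is correct, but your only justification that $Q_s$ preserves $\gC$ is that $Q_s^2=Q_{s^2}=Q_e=\mathrm{Id}$; involutivity does not imply cone preservation ($-\mathrm{Id}$ is involutive and maps $\gC$ onto $-\gC$), and the assertion $Q_s(\gC)\subseteq\gC$ is itself an instance of the statement being proved, for an invertible element lying outside $\gC^\circ$---so as written the reduction is circular. The gap propagates: your closing density argument needs cone preservation on the dense set of \emph{all} invertible elements ($\gC^\circ$ is not dense in $V$), and your proof of (3) borrows the invertibility of $Q_x$ from (1). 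Both holes can be patched cheaply: since $Q_s(e)=s^2=e$, the fundamental formula gives, for $y\in\gC$, $Q_s(y)=Q_sQ_{\sqrt{y}}Q_s(e)=Q_{Q_s(\sqrt{y})}(e)=\big(Q_s(\sqrt{y})\big)^2\in\gC$; and invertibility of $Q_x$ for invertible $x$ already follows from the interior case via $Q_x^2=Q_{Q_x(e)}=Q_{x^2}$ with $x^2\in\gC^\circ$. Note finally that (1) cannot hold verbatim for every $x\in V$ (for $x=0$ one has $Q_x=0\notin\Aut(\gC)$); the correct reading---which your repaired argument delivers, and which is all the paper ever uses, since it applies the lemma only to $\sqrt{x}$ with $x\in\gC^\circ$---is that $Q_x(\gC)\subseteq\gC$ for every $x$, and $Q_x\in\Aut(\gC)$ for invertible $x$.
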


With the terminology introduced before, we can now state and prove the main result of this section. Our proof follows the lines (and generalizes) a proof for positive maps between the cones of positive semidefinite matrices with complex entries~\cite[Lemma 1.14]{idel2013structure}. See also~\cite{idel2016review} for a comprehensive review of similar results:

\begin{thm}[Sinkhorn-type scaling]\label{thm:SinkhornTypeScaling}
Consider symmetric cones $\gC_1\subset V_1$ and $\gC_2\subset V_2$ in Euclidean vector spaces $V_1$ and $V_2$. For each $i\in\lset 1,2\rset$ we denote by $G_i$ the identity component in the automorphism group $\Aut\lb \gC_i\rb$ and by $e_i\in \gC_i$ any element with stabilizer $G_i\cap O(V_i)$,  such that $\|e_1\|=\|e_2\|$ (where $\|\cdot\|$ denotes the Euclidean norm on $V_1$ or $V_2$). For any linear map $P:V_1\ra V_2$ satisfying
\[
P(\gC_1)\subseteq \gC^\circ_2 ,
\]
there are automorphisms $A\in \Aut\lb \gC_1\rb$ and $B\in \Aut\lb \gC_2\rb$ such that the linear map 
\[
\tilde{P}=B\circ P\circ A
\]
satisfies $\tilde{P}(e_1)=e_2$ and $\tilde{P}^*(e_2)=e_1$. 
\end{thm}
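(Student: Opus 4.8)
The plan is to reduce the scaling statement to a single symmetric equation and then solve it by a variational argument, following the strategy of~\cite{idel2013structure} but phrased intrinsically in the Jordan-algebraic language. First note that since $P(\gC_1\setminus\{0\})\subseteq \gC_2^\circ$, for every $\phi\in\gC_1\setminus\{0\}$ and every $y\in\gC_2^\circ$ we have $\langle P^*(y),\phi\rangle=\langle y,P(\phi)\rangle>0$, so $P^*$ is strictly positive as well, i.e. $P^*(\gC_2\setminus\{0\})\subseteq\gC_1^\circ$. I would look for the automorphisms in the form $A=Q_a$ and $B=Q_b$ with $a\in\gC_1^\circ$ and $b\in\gC_2^\circ$. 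By Lemma~\ref{lem:PropsOfQuadr} these are self-adjoint automorphisms with $Q_a(e_1)=a^2$, $Q_b(e_2)=b^2$ and $Q_a^{-1}=Q_{a^{-1}}$. Writing $x=a^2\in\gC_1^\circ$ and $y=b^2\in\gC_2^\circ$, the two requirements $\tilde P(e_1)=Q_bPQ_a(e_1)=e_2$ and $\tilde P^*(e_2)=Q_aP^*Q_b(e_2)=e_1$ become, after applying $Q_{b^{-1}}$ and $Q_{a^{-1}}$,
\[ P(x)=y^{-1}\qquad\text{and}\qquad P^*(y)=x^{-1}. \]
Both right-hand sides lie in the respective interiors by strict positivity, so the problem is reduced to producing a pair $(x,y)\in\gC_1^\circ\times\gC_2^\circ$ solving this coupled system; afterwards one recovers $a=x^{1/2}$ and $b=y^{1/2}$ by Jordan functional calculus.

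To solve the system I would introduce the potential
\[ F(x,y)=\langle P(x),y\rangle-\ell_1(x)-\ell_2(y),\qquad (x,y)\in\gC_1^\circ\times\gC_2^\circ, \]
where $\ell_i$ is the unique positive multiple of $\log\det_i$ whose Euclidean gradient is the Jordan inverse, $\nabla\ell_i(x)=x^{-1}$. Such a normalization exists because the given Euclidean inner product on $V_i$ is $K_i$-invariant, hence a positive multiple of the trace form of the Euclidean Jordan algebra, for which the classical identity $\nabla\log\det=\inv$ holds. The stationarity conditions of $F$ on the open set $\gC_1^\circ\times\gC_2^\circ$ read $\partial_xF=P^*(y)-x^{-1}=0$ and $\partial_yF=P(x)-y^{-1}=0$, which are exactly the two equations above, so any interior critical point of $F$ yields the desired scaling. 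Here the hypothesis $\|e_1\|=\|e_2\|$ enters decisively: along the one-parameter orbit $(x,y)\mapsto(tx,t^{-1}y)$ one computes $F(tx,t^{-1}y)=F(x,y)+(\|e_2\|^2-\|e_1\|^2)\log t$ (using $\ell_i(tx)=\ell_i(x)+\|e_i\|^2\log t$), so this orbit is a \emph{flat} symmetry of $F$ precisely when $\|e_1\|=\|e_2\|$. This is the same normalization forced by the necessary condition $\langle\tilde P(e_1),e_2\rangle=\|e_1\|^2=\|e_2\|^2$.

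It remains to show that $F$ attains its infimum at an interior point, which is the crux of the argument. Modulo the single flat scaling direction identified above, I would argue that $F$ is proper and coercive: as $(x,y)$ approaches the boundary of $\gC_1^\circ\times\gC_2^\circ$ one of $\det_1(x),\det_2(y)$ tends to $0$, so $-\ell_1(x)-\ell_2(y)\to+\infty$ while $\langle P(x),y\rangle\geq0$ stays bounded below; and as $\|(x,y)\|\to\infty$ transversally to the scaling orbit the bilinear term $\langle P(x),y\rangle$ (with $P(x)\in\gC_2^\circ$) grows at least linearly, dominating the logarithmic growth of the $\ell_i$. Hence the infimum of $F$ over a slice transversal to the flat orbit is attained at an interior point, which is a critical point, giving the required $(x,y)$; setting $A=Q_{x^{1/2}}$ and $B=Q_{y^{1/2}}$ then completes the proof. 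I expect the main obstacle to be exactly this existence step — verifying coercivity carefully (especially ruling out escape of a minimizing sequence to the boundary, where the strict positivity $P(\gC_1\setminus\{0\})\subseteq\gC_2^\circ$ is used) and making precise the normalization relating the Euclidean inner product to the Jordan trace form, so that $\nabla\ell_i=\inv$ and the flat-direction computation hold simultaneously on both cones.
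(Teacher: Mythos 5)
Your proposal is correct in outline, but it takes a genuinely different route from the paper. The paper reduces to the same coupled system in composed form: it applies Brouwer's fixed point theorem to the normalized map $x \mapsto \lb \inv_1\circ P^*\circ \inv_2\circ P\rb(x)\,/\,\langle e_1, \lb \inv_1\circ P^*\circ \inv_2\circ P\rb(x)\rangle$ on the compact convex slice $\lset x\in \gC_1 : \langle e_1,x\rangle =1\rset$, obtaining $P^*\circ \inv_2\circ P(x)=\lambda x^{-1}$ for a fixed point $x$ and some $\lambda>0$; it then sets $y=P(x)$, $A=Q_{\sqrt{x}}$, $B=Q_{\sqrt{y}}^{-1}$, and only at the very end uses $\|e_1\|=\|e_2\|$ to force $\lambda=1$. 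Your log-barrier minimization replaces the fixed-point theorem by a variational argument in which $\|e_1\|=\|e_2\|$ enters structurally (flatness of the scaling orbit) instead of as a final normalization killing the multiplier. The paper's route is shorter because it never needs to relate the Euclidean inner product to the Jordan trace form: only $\inv$ and $Q_x$ are used. Your route buys a multiplier-free statement, a transparent explanation of why the normalization $\|e_1\|=\|e_2\|$ is exactly the right hypothesis, and a scheme that connects to the standard potential-function analysis of Sinkhorn iteration. Your existence step does go through: strict positivity plus compactness gives $\delta>0$ with $\langle P(x),y\rangle \geq \delta \|x\|\,\|y\|$ for all $x\in\gC_1$, $y\in\gC_2$, so on the transversal slice $\|x\|=\|y\|=s$ one gets $F \geq \delta s^2 - (\|e_1\|^2+\|e_2\|^2)\log s - C$, while the barrier handles boundary escape at bounded norm; and a minimizer on the slice is a genuine critical point because $F$ is constant along the orbit direction $(x,-y)$, which is transversal to the slice.

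Two technical points need repair, both fixable. First, your justification of the normalization $\nabla \ell_i = \inv$ is wrong as stated: mere $K_i$-invariance of the inner product does \emph{not} imply it is a multiple of the trace form (on $\mathrm{Sym}(n,\R)$ the conjugation-invariant inner products form a two-parameter family, weighting the trace line and the traceless hyperplane differently). What is true, and what you actually need, is that the inner product is \emph{associative} for the Jordan product produced by the construction in~\cite{faraut1994analysis} (the multiplication operators $L_x$ lie in the symmetric part of the Cartan decomposition, hence are self-adjoint for the given inner product), and every associative symmetric bilinear form on a \emph{simple} Euclidean Jordan algebra is a multiple of the trace form. Second, in the reducible case this scalar can differ from one simple ideal to another, so $\ell_i$ must be taken as a weighted sum $\sum_j c_j \log\det_j$ over the simple ideals rather than a single positive multiple of $\log\det_i$; with this correction all the identities you rely on ($\nabla\ell_i=\inv$, $\ell_i(tx)=\ell_i(x)+\|e_i\|^2\log t$, and the barrier blow-up at $\partial\gC_i$) remain valid, and the rest of your argument is unaffected.
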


Theorem \ref{thm:SinkhornTypeScaling} includes the following special cases.
\begin{enumerate}
\item Let $A$ be a $n \times n$ matrix with positive entries. Then there exist diagonal matrices $D_1$, $D_2$ with positive diagonal elements such that the matrix $D_1 A D_2$ is bistochastic (i.e., the sum of elements in each row and each column is~$1$). This is known as Sinkhorn's theorem \cite{Sinkhorn64} and can be deduced by applying Theorem \ref{thm:SinkhornTypeScaling} with $\gC_1$ and $\gC_2$ being the symmetric cone $\R_+^n$.
\item For $n \geq 1$, the Lorentz cone
\[ L_n = \left\{ (x_0,\dots,x_{n-1}) \in \R^n \, : \, x_0 \geq \sqrt{x_1^2+ \cdots +x_{n-1}^2} \right\} \]
is a symmetric cone. By Theorem~\ref{thm:SinkhornTypeScaling}, given a linear map $P : \R^n \to \R^m$ (identified with a matrix) such that $P(L_n) \subset L_m^\circ$, there exist $A_1 \in \Aut(L_n)$, $A_2 \in \Aut(L_m)$ such that $A_2 P A_1$ is the block-diagonal matrix $\begin{pmatrix} \lambda & 0 \\ 0 & M\end{pmatrix}$ with $\lambda > 0$ and $M \in \R^{(n-1)\times (m-1)}$. Moreover, using the singular value decomposition, the matrix $M$ can be assumed to be diagonal with nonnegative coefficients, recovering \cite[Theorem 3.4]{hildebrand2007lmi}.
\item For $d\in\N$, the positive semidefinite cone $\PSD(\C^d)$ is a symmetric cone. Given a positive map $P:\M_{d_1}\ra \M_{d_2}$ satisfying
\[
P(\PSD(\C^{d_1}))\subset \PSD(\C^{d_2})^\circ,
\]
we can apply Theorem \ref{thm:SinkhornTypeScaling} to find automorphisms $A \in \Aut(P(\PSD(\C^{d_1})))$, $B \in \Aut(P(\PSD(\C^{d_2})))$ such that $B\circ P\circ A$ is unital and trace-preserving. This recovers~\cite[Theorem 4.7]{gurvits2004classical} and the aforementioned normal form from~\cite[Lemma 1.14]{idel2013structure}.
\end{enumerate}

\begin{proof}

Under the stated assumptions, the Euclidean vector spaces $V_1$ and $V_2$ can be equipped with products turning them into Euclidean Jordan algebras with identity elements $e_1$ and $e_2$ respectively. The symmetric cones $\gC_1$ and $\gC_2$ then arise as cones of squares as in \eqref{equ:ConeAsSquares}. We define the slice
\[
\gC^s_1 = \lset x\in \gC_1 ~:~\braket{e_1}{x}=1\rset,
\]
and we note that $\gC^s_1$ is compact and convex. Next, we define $M:\gC^s_1\ra \gC^s_1$ by
\[
M(x) = \frac{\lb \inv_1\circ P^*\circ \inv_2\circ P\rb (x)}{\braket{e_1}{\lb \inv_1\circ P^*\circ \inv_2\circ P\rb(x)}},
\] 
where $\inv_1$ and $\inv_2$ denote the inverses on the Jordan algebras $V_1$ and $V_2$ respectively as defined above. The map $M$ is well-defined since $P(x)\in \gC^\circ_2$ and $P^*(y)\in \gC^\circ_1$ are invertible for every $x\in \gC_1$ and $y\in \gC_2$. Moreover, the map $M$ is continuous as a composition of continuous maps. By Brouwer's fixed point theorem there exists $x\in \gC^s_1$ such that $M(x)=x$. This implies that $P^* \circ \inv_2 \circ P = \lambda x^{-1}$ for some $\lambda > 0$. Set $y=P(x)$. Let $\sqrt{x}\in V_1$ and $\sqrt{y}\in V_2$ denote square roots of $x$ and $y$ with respect to the respective Jordan algebra structures on $V_1$ and $V_2$, i.e., elements $\sqrt{x}\in V_1$ and $\sqrt{y}\in V_2$ satisfying $x=(\sqrt{x})^2$ and $y=(\sqrt{y})^2$. Next, we introduce the automorphisms $A = Q_{\sqrt{x}}$ and $B=Q^{-1}_{\sqrt{y}}$ as quadratic representations. Defining $\tilde{P}=B\circ P \circ A$ we can verify that
\[
\tilde{P}(e_1) = Q^{-1}_{\sqrt{y}}\circ P\circ Q_{\sqrt{x}}(e_1) = Q^{-1}_{\sqrt{P(x)}}\lb P(x)\rb = e_2,
\]  
using the properties from Lemma \ref{lem:PropsOfQuadr}. Since $A$ and $B$ are self-adjoint we compute 
\[
\tilde{P}^*(e_2) = Q_{\sqrt{x}}\circ P^*\circ Q^{-1}_{\sqrt{y}}(e_2) = Q_{\sqrt{x}}\circ P^*\circ \inv_2\circ P(x) = Q_{\sqrt{x}}\lb \lambda x^{-1}\rb = \lambda e_1 ,
\]
where we used the properties from Lemma \ref{lem:PropsOfQuadr}. Finally, 
\[ \|e_2\|^2 = \langle e_2,\tilde{P} (e_1) \rangle = \langle \tilde{P}^* (e_2), e_1 \rangle = \lambda \|e_1\|^2,\]
and since $\|e_1\|=\|e_2\|$ we conclude that $\lambda =1$, finishing the proof.
\end{proof}

A direct consequence of the previous theorem is the following corollary: 

\begin{cor}
Consider symmetric cones $\gC_1\subset V_1$ and $\gC_2\subset V_2$ in Euclidean vector spaces $V_1$ and $V_2$. For each $i\in\lset 1,2\rset$ we denote by $G_i$ the identity component in the automorphism group $\Aut\lb \gC_i\rb$ and by $e_i\in \gC_i$ any element with stabilizer $G_i\cap O(V_i)$, such that $\|e_1\|=\|e_2\|$ (where $\|\cdot\|$ denotes the Euclidean norm on $V_1$ or $V_2$). The following are equivalent:
\begin{enumerate}
\item The pair $(\gC_1,\gC_2)$ is resilient.
\item Every $(\gC_1,\gC_2)$-entanglement annihilating map $P:V_1\ra V_2$ satisfying $P(e_1)=e_2$ and $P^*(e_2)=e_1$ is $(\gC_1,\gC_2)$-entanglement breaking. 
\end{enumerate}
\end{cor}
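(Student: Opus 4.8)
The plan is to prove the two implications separately; the implication $(1)\Rightarrow (2)$ is immediate, while $(2)\Rightarrow (1)$ is the substantial direction and relies on the Sinkhorn-type scaling of Theorem~\ref{thm:SinkhornTypeScaling}. For $(1)\Rightarrow(2)$: if $(\gC_1,\gC_2)$ is resilient then \emph{every} $(\gC_1,\gC_2)$-entanglement annihilating map is entanglement breaking, in particular those obeying the normalization $P(e_1)=e_2$ and $P^*(e_2)=e_1$.

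For $(2)\Rightarrow(1)$ I would argue by contraposition. Assume $(\gC_1,\gC_2)$ is not resilient, so there is a $(\gC_1,\gC_2)$-entanglement annihilating map $P$ that is not entanglement breaking; the goal is to produce such a map that in addition satisfies $\tilde{P}(e_1)=e_2$ and $\tilde{P}^*(e_2)=e_1$, contradicting (2). The difficulty is that $P$ need not map $\gC_1$ into the interior $\gC_2^\circ$, which is exactly the hypothesis needed to apply Theorem~\ref{thm:SinkhornTypeScaling}. To repair this I would fix the entanglement breaking map $E = \varepsilon E_0$ with $E_0(x)=\langle e_1,x\rangle e_2$ for small $\varepsilon>0$ (here $e_1\in\gC_1^\circ=(\gC_1^*)^\circ$ and $e_2\in\gC_2^\circ$ are the Jordan identities, so $E_0$ is genuinely entanglement breaking), and set $P_\varepsilon = P+E$. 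Since $P$ is in particular positive (take $k=1$), for $x\in\gC_1\setminus\{0\}$ we have $P(x)\in\gC_2$ and $E(x)\in\gC_2^\circ$, whence $P_\varepsilon(x)\in\gC_2+\gC_2^\circ\subseteq\gC_2^\circ$; thus $P_\varepsilon$ meets the interior hypothesis of Sinkhorn.

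Two properties of $P_\varepsilon$ then have to be verified. First, $P_\varepsilon$ is still entanglement annihilating: expanding $(P+E)^{\otimes k}=\sum_{S\subseteq\{1,\dots,k\}}(\textnormal{$P$ on $S$, $E$ on $S^c$})$ with unit coefficients, each summand first applies the entanglement breaking factors, which by repeated use of Proposition~\ref{prop:EB} send $\gC_1^{\otimes_{\max}k}$ into $\gC_1^{\otimes_{\max}|S|}\otimes_{\min}\gC_2^{\otimes_{\min}(k-|S|)}$; applying the remaining factor $P^{\otimes|S|}$ then lands the $\gC_1$-part in $\gC_2^{\otimes_{\min}|S|}$ since $P$ is entanglement annihilating, so the summand maps into $\gC_2^{\otimes_{\min}k}$, and summing over $S$ gives the claim. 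Second, $P_\varepsilon$ is not entanglement breaking for small $\varepsilon$: the cone $\mathcal{EB}(\gC_1,\gC_2)\cong\gC_1^*\otimes_{\min}\gC_2$ is closed by Lemma~\ref{lem:maps-vs-tensors} and does not contain $P$, hence excludes a whole neighborhood of $P$, which $P_\varepsilon$ enters as $\varepsilon\to 0$.

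Finally I would apply Theorem~\ref{thm:SinkhornTypeScaling} to $P_\varepsilon$, using $\|e_1\|=\|e_2\|$, to obtain automorphisms $A\in\Aut(\gC_1)$ and $B\in\Aut(\gC_2)$ such that $\tilde{P}_\varepsilon:=B\circ P_\varepsilon\circ A$ satisfies $\tilde{P}_\varepsilon(e_1)=e_2$ and $\tilde{P}_\varepsilon^*(e_2)=e_1$. Composition with the positive maps $A,B$ preserves entanglement annihilation by Lemma~\ref{lem:EntAnnUnderTrans} with $k=N=1$, and it preserves the failure of entanglement breaking: were $\tilde{P}_\varepsilon$ entanglement breaking, then $P_\varepsilon=B^{-1}\circ\tilde{P}_\varepsilon\circ A^{-1}$ would be as well, since $A^{-1},B^{-1}$ are again cone automorphisms and entanglement breaking is stable under pre- and post-composition with positive maps (using Lemma~\ref{lem:dual} to move $A^{-1}$ to the functionals). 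Thus $\tilde{P}_\varepsilon$ is entanglement annihilating, normalized, and not entanglement breaking, contradicting (2), and therefore $(\gC_1,\gC_2)$ is resilient. I expect the main obstacle to be the bookkeeping that keeps $P_\varepsilon$ entanglement annihilating while pushing it into the interior; once this and the automorphism invariances are settled, Theorem~\ref{thm:SinkhornTypeScaling} closes the argument.
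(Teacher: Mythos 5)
Your proposal is correct and follows essentially the same route as the paper's proof: perturb the given entanglement annihilating, non-entanglement-breaking map by $\varepsilon\, e_2\braket{e_1}{\cdot}$ to push its range into $\gC_2^\circ$ while preserving both properties (non-breaking via closedness of the entanglement breaking cone), then apply Theorem~\ref{thm:SinkhornTypeScaling} and compose with the resulting cone automorphisms. The only difference is that you spell out the details the paper leaves as ``easy to see'' (the binomial-expansion argument for why the perturbed map remains entanglement annihilating, and the stability of both properties under pre-/post-composition with automorphisms), and these verifications are sound.
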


\begin{proof}
It is clear that the first statement implies the second. For the converse direction assume that $(\gC_1,\gC_2)$ is not resilient and let $R:V_1\ra V_2$ be an entanglement annihilating map that is not entanglement breaking. Since the set of entanglement breaking maps is closed, there exists an $\varepsilon>0$ such that $R_\varepsilon:V_1\ra V_2$ given by $R_\varepsilon = R + \varepsilon e_2\braket{e_1}{\cdot}$ is entanglement annihilating and not entanglement breaking. Using Theorem \ref{thm:SinkhornTypeScaling} we find automorphisms $A\in \Aut\lb \gC_1\rb$ and $B\in \Aut\lb \gC_2\rb$ such that the map $P=B\circ R_\varepsilon \circ A$ satisfies $P(e_1)=e_2$ and $P^*(e_2)=e_1$. Moreover, it is easy to see that $P$ is entanglement annihilating and not entanglement breaking. This finishes the proof.
\end{proof}

\subsection{Breaking entanglement of a symmetric cone and $\gC_{\ell^k_1}$}

Let $\gC \subset V$ be a symmetric cone. Equip $V$ with the associated Jordan algebra structure. In this section, we will often use the \emph{spectral theorem} on Euclidean Jordan algebras~\cite[Theorem III.1.2]{faraut1994analysis}: For any $x\in V$ there exists a Jordan frame $c_1,c_2,\ldots ,c_k$, i.e., a complete system of orthogonal, primitive idempotents, and unique $\lambda_1,\ldots ,\lambda_k\in\R$ such that 
\begin{equation}
x = \sum^k_{i=1} \lambda_i c_i .
\label{equ:spectrJordan}
\end{equation}
Using self-duality of the cone $\gC$ and the properties of Jordan frames, it it easy to show that $x\in\gC$ if and only if $\lambda_i\geq 0$ for every $i\in\lset 1,\ldots ,k\rset$. Given a spectral decomposition \eqref{equ:spectrJordan} of $x\in V$, we define $x_+ := \sum_{i~:~\lambda_i\geq 0} \lambda_i c_i$ and $x_- := \sum_{i~:~\lambda_i< 0} |\lambda_i| c_i$. Clearly, we have $x_+,x_-\in \gC$, $x_+x_-=x_-x_+=0$ and $x=x_+-x_-$. Finally, we define $|x| := x_++x_-$ and note that $|x|\in \gC$. 

We start with the following lemma:

\begin{lem}
Consider the symmetric cone
\[
\gC = \lset x^2 ~:~x\in V\rset.
\]
for a Euclidean Jordan algebra $V$ with identity element $e\in V$. Then, we have the following:
\begin{enumerate}
\item If $e+x\in \gC$ and $e-x\in \gC$ for some $x\in V$, then $e-x^2\in \gC$. 
\item If $e-x^2\in \gC$ for some $x\in V$, then $e-x\in \gC$.
\end{enumerate}
\label{lem:Basic}
\end{lem}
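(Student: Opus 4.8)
The plan is to reduce both statements to elementary inequalities about real numbers by invoking the spectral theorem for Euclidean Jordan algebras. Writing a spectral decomposition $x = \sum_{i=1}^k \lambda_i c_i$ with $c_1,\dots,c_k$ a Jordan frame, I will use throughout the characterization recalled just before the lemma: an element of $V$ lies in $\gC$ if and only if all the coefficients in such a decomposition along a Jordan frame are nonnegative.

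The key observation is that every element appearing in the statement is simultaneously diagonalized in the \emph{same} Jordan frame $c_1,\dots,c_k$. Indeed, since the $c_i$ are orthogonal idempotents ($c_i c_j = 0$ for $i \neq j$ and $c_i^2 = c_i$) summing to $e = \sum_i c_i$, one computes $x^2 = \sum_i \lambda_i^2 c_i$, and hence
\[ e \pm x = \sum_{i=1}^k (1 \pm \lambda_i)\, c_i, \qquad e - x^2 = \sum_{i=1}^k (1 - \lambda_i^2)\, c_i . \]
Each of these is again a decomposition along the Jordan frame $(c_i)$, so membership in $\gC$ can be tested coefficientwise.

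For part (1), the hypotheses $e + x \in \gC$ and $e - x \in \gC$ give $1 + \lambda_i \geq 0$ and $1 - \lambda_i \geq 0$ for every $i$, so that $|\lambda_i| \leq 1$ and therefore $1 - \lambda_i^2 \geq 0$; this yields $e - x^2 \in \gC$. For part (2), the hypothesis $e - x^2 \in \gC$ gives $1 - \lambda_i^2 \geq 0$, hence $\lambda_i \leq 1$ and $1 - \lambda_i \geq 0$, so $e - x \in \gC$.

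There is essentially no serious obstacle here: once the spectral theorem is in place, everything becomes coefficientwise real arithmetic. The only point requiring a little care is that the coefficients $1 \pm \lambda_i$ and $1 - \lambda_i^2$ need not be pairwise distinct, so the displayed expressions are not literally spectral decompositions with distinct eigenvalues. This is harmless: the cited characterization of $\gC$ holds for \emph{any} decomposition along a Jordan frame, since $y = \sum_i \mu_i c_i$ equals a square $w^2$ with $w = \sum_i \sqrt{\mu_i}\, c_i$ precisely when all $\mu_i \geq 0$ (and grouping equal coefficients recovers the genuine spectral decomposition). Thus positivity may be read off directly from the coefficients, exactly as used above.
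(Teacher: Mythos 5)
Your proof is correct and follows essentially the same route as the paper's: take a spectral decomposition $x=\sum_i \lambda_i c_i$ along a Jordan frame, observe that $e\pm x$ and $e-x^2$ are decomposed along the same frame, and reduce both claims to scalar inequalities on the $\lambda_i$. Your extra remark about repeated coefficients is a fine point of care, but it is handled the same way implicitly in the paper (membership in $\gC$ for nonnegative coefficients follows from $c_i=c_i^2\in\gC$ and convexity, and the converse from self-duality).
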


\begin{proof}
Given $x\in V$ consider the spectral decomposition $x=\sum^k_{i=1} \lambda_i c_i$ for a Jordan frame $c_1,c_2,\ldots ,c_k\in V$. If $e+x\in \gC$ and $e-x\in \gC$, then we have $1\pm \lambda_i\geq 0$ and therefore $1-\lambda^2_i\geq 0$ for all $i\in\lset 1,\ldots ,k\rset$. We conclude that 
\[
e-x^2 = (e-x)(e+x) = \sum^k_{i=1} (1-\lambda^2_i) c_i \in \gC,
\]
since $c_i=c^2_i\in \gC$ for every $i\in\lset 1,\ldots ,k\rset$. This shows the first statement.

If $e-x^2\in \gC$, then we have $\lambda^2_i\leq 1$ for all $i\in\lset 1,\ldots ,k\rset$. We conclude that 
\[
e-x = \sum^k_{i=1} (1-\lambda_i) c_i \in \gC,
\] 
since $\lambda_i\leq 1$ and $c_i\in \gC$ for any $i\in\lset 1,\ldots ,k\rset$.
\end{proof}

We will now prove a lemma identifying a useful property of certain elements in $\gC \otimes_{\max} \gC_{\ell^k_1}$. For this and in the following, we will identify elements in $V\otimes \R^{k+1}$ with vectors $(x_0,x_1,\ldots ,x_k)$ for $x_0,\ldots ,x_k\in V$.   

\begin{lem}
Consider the symmetric cone
\[
\gC = \lset x^2 ~:~x\in V\rset.
\]
for a Euclidean Jordan algebra $V$ with identity element $e\in V$. If $(e,x_1,\dots,x_k) \in \gC\otimes_{\max}\gC_{\ell^k_1}$, then $\sqrt{k}e - \sum^k_{i=1}|x_i|\in \gC$.
\label{lem:intermed}
\end{lem}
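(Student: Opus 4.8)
The plan is to unpack the tensor‑cone hypothesis into a finite family of cone memberships indexed by sign patterns, extract from it a quadratic bound $\sum_i x_i^2 \le e$, and then upgrade this to the desired linear bound on $\sum_i |x_i|$ through a Cauchy--Schwarz argument carried out inside the Jordan algebra; the factor $\sqrt{k}$ will emerge from this last step.

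First I would rewrite the membership $(e,x_1,\dots,x_k) \in \gC\otimes_{\max}\gC_{\ell^k_1}$. Since $\gC$ is symmetric we have $\gC^*=\gC$, and since $(\ell^k_1)^* = \ell^k_\infty$ we have $(\gC_{\ell^k_1})^* = \gC_{\ell^k_\infty}$, the cone over the cube $[-1,1]^k$. By definition of the maximal tensor product, membership is equivalent to $\langle (e,x_1,\dots,x_k),\, \phi\otimes u\rangle \ge 0$ for all $\phi\in\gC$ and all $u\in\gC_{\ell^k_\infty}$; testing against the extreme generators $u=(1,\epsilon)$ with $\epsilon\in\{\pm1\}^k$, this reads $\langle e+\sum_i \epsilon_i x_i,\,\phi\rangle \ge 0$. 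Using self‑duality $\gC=\gC^*$ once more, this is exactly
\[
e + \sum_{i=1}^k \epsilon_i x_i \in \gC \quad \text{for every } \epsilon \in \{\pm 1\}^k .
\]
For each fixed $\epsilon$ the patterns $\pm\epsilon$ give $e \pm \sum_i \epsilon_i x_i \in \gC$, so Lemma \ref{lem:Basic}(1) applied to $x=\sum_i \epsilon_i x_i$ yields $e-(\sum_i\epsilon_i x_i)^2 \in \gC$. Averaging over all $2^k$ patterns and using commutativity of the Jordan product with $\mathbb{E}_\epsilon[\epsilon_i\epsilon_j]=\delta_{ij}$, the average of $(\sum_i\epsilon_i x_i)^2$ is $\sum_i x_i^2$; convexity of $\gC$ then gives $\sum_i x_i^2 \le e$. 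Since $|x_i|^2 = x_i^2$ (same Jordan frame, squared eigenvalues), equivalently $\sum_i |x_i|^2 \le e$.

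Finally, to pass to $\sum_i |x_i|$, by self‑duality it suffices to show $\langle \sum_i |x_i|,\phi\rangle \le \sqrt{k}\,\langle e,\phi\rangle$ for every $\phi\in\gC$. The key tool is that for fixed $\phi\in\gC$ the bilinear form $B(a,b)=\langle ab,\phi\rangle$ is symmetric (Jordan product commutes) and positive semidefinite (as $B(a,a)=\langle a^2,\phi\rangle\ge 0$ by self‑duality), so Cauchy--Schwarz with $b=e$ gives $\langle a,\phi\rangle^2 \le \langle a^2,\phi\rangle\,\langle e,\phi\rangle$. Applying this to $a=|x_i|$, summing, and using Cauchy--Schwarz over the $k$ indices,
\[
\Big(\sum_i \langle |x_i|,\phi\rangle\Big)^2 \le k\sum_i \langle |x_i|,\phi\rangle^2 \le k\,\langle e,\phi\rangle \sum_i \langle |x_i|^2,\phi\rangle \le k\,\langle e,\phi\rangle^2 ,
\]
the last inequality by $\sum_i |x_i|^2 \le e$ and $\phi\in\gC$. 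Taking square roots (all terms are nonnegative since $|x_i|,\phi\in\gC$) yields the claim.

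The main obstacle is that the naive route is blocked: one only has $\langle |x_i|,\phi\rangle \ge |\langle x_i,\phi\rangle|$, which points the wrong way, so the $\ell_1$‑type hypothesis cannot be fed directly into a bound on $\sum_i |x_i|$. The crux is therefore to first distill the \emph{quadratic} information $\sum_i x_i^2 \le e$ (via the sign‑averaging trick combined with Lemma \ref{lem:Basic}) and only then recover the linear estimate through the Jordan Cauchy--Schwarz inequality, which is precisely what manufactures the constant $\sqrt{k}$.
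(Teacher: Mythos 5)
Your proof is correct, and its final step takes a genuinely different route from the paper's. The first two stages coincide: both unpack $(e,x_1,\dots,x_k) \in \gC \otimes_{\max} \gC_{\ell^k_1}$ into the $2^k$ memberships $e + \sum_i \varepsilon_i x_i \in \gC$ for $\varepsilon \in \{\pm 1\}^k$, apply Lemma \ref{lem:Basic}(1) to each sign pattern, and average to obtain $e - \sum_i x_i^2 \in \gC$. The divergence is in the upgrade from this quadratic bound to the linear one. The paper stays inside the Jordan algebra: it uses the identity
\[
\Big(\sum_{i=1}^k |x_i|\Big)^2 + \sum_{i<j}\big(|x_i|-|x_j|\big)^2 = k\sum_{i=1}^k x_i^2,
\]
whose correction terms are squares and hence lie in $\gC$, to conclude $ke - \big(\sum_i |x_i|\big)^2 \in \gC$, and then takes the square root via Lemma \ref{lem:Basic}(2). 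You instead dualize: for fixed $\phi \in \gC = \gC^*$ the symmetric positive semidefinite bilinear form $(a,b)\mapsto \langle ab,\phi\rangle$ gives the Jordan Cauchy--Schwarz inequality $\langle a,\phi\rangle^2 \le \langle a^2,\phi\rangle\langle e,\phi\rangle$, which, combined with the scalar Cauchy--Schwarz inequality over the $k$ indices, yields $\sum_i \langle |x_i|,\phi\rangle \le \sqrt{k}\,\langle e,\phi\rangle$ for every $\phi$, and self-duality finishes. The two arguments are close in spirit --- the paper's identity is precisely the polarization form of your scalar Cauchy--Schwarz, deployed inside the algebra rather than after pairing with $\phi$ --- but your route avoids Lemma \ref{lem:Basic}(2) altogether (only part (1) is needed, plus self-duality used twice), while the paper's route is element-wise, exhibiting $ke - \big(\sum_i|x_i|\big)^2$ explicitly as a sum of elements of $\gC$; both produce the same constant $\sqrt{k}$.
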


\begin{proof}
Note that $(e,x_1,\ldots ,x_k)\in \gC\otimes_{\max} \gC_{\ell^k_{1}}$ if and only if 
\[
e + \sum^k_{i=1} s_i x_i \in \gC \quad\text{ and }\quad e - \sum^k_{i=1} s_i x_i \in \gC ,
\]
for all $s\in\lset +1,-1\rset^k$. By Lemma \ref{lem:Basic} we conclude that 
\[
e - \sum^k_{i=1} x^2_i - \sum_{i\neq j} s_i s_j x_i x_j \in \gC.
\]
Averaging over all choices $s\in\lset +1,-1\rset^k$ shows that 
\[
e - \sum^k_{i=1} x^2_i \in \gC.
\] 
Now, we have
\[
\left( \sum_{i=1}^k |x_i| \right)^2 + \sum_{i<j} \left( |x_i|-|x_j| \right)^2 = k \sum_{i=1}^k |x_i|^2 = k \sum_{i=1}^k x_i^2,
\]
where we used that
\[
|x|^2=(x_++x_-)^2 = x^2_+ +x^2_-= (x_+-x_-)^2=x^2,
\] 
for any $x\in V$, since $x_+x_-=x_-x_+=0$. We conclude that 
\[
ke - \left( \sum_{i=1}^k |x_i| \right)^2 \in \gC,
\] 
implying that $\sqrt{k}e-\sum_{i=1}^k |x_i| \in \gC$ by Lemma \ref{lem:Basic}.
\end{proof}

Now we can prove the main result of this appendix:

\begin{proof}[Proof of Theorem \ref{thm:BreakingEntWithSymmetricCone}]
As explained in the introduction of this appendix, there exists a product turning $V$ into a Euclidean Jordan algebra with identity element $e\in V$ such that 
\[
\gC = \lset x^2 ~:~x\in V\rset.
\]
Consider $x=(x_0,x_1,\ldots ,x_k)\in \gC\otimes_{\max} \gC_{\ell^k_{1}}$ and set $x_\varepsilon = (x_0+\varepsilon e,x_1,\ldots ,x_k)$. By continuity, we could conclude that $(\ident_V\otimes I_{\sqrt{k}})(x)\in \gC \otimes_{\min} \gC_{\ell^k_1}$ if we could show that $(\ident_{V}\otimes I_{\sqrt{k}})(x_{\varepsilon})\in \gC\otimes_{\min}\gC_{\ell^k_1}$ for all $\varepsilon>0$. For any $\varepsilon>0$ we have $x_0+\varepsilon e\in\inter(\gC)$ and there exists an automorphism $\phi_\varepsilon\in \Aut(\gC)$ satisfying $\phi_\varepsilon(e)= x_0+\varepsilon e$. Therefore, we can write
\[
(\ident_{V}\otimes I_{\sqrt{k}})(x_{\varepsilon}) = (\phi_\varepsilon\otimes \ident_{\R^{k+1}})\circ (\ident_{V}\otimes I_{\sqrt{k}})(x')
\]
for any $\varepsilon>0$ and some $x'=(e,x'_1, \ldots x'_k)\in \gC \otimes_{\max} \gC_{\ell^k_{1}}$. Since $\phi_\varepsilon$ is positive for every $\varepsilon>0$ the proof would be finished by showing that
\[
(\ident_{V}\otimes I_{\sqrt{k}})(x) \in \gC\otimes_{\min} \gC_{\ell^k_1} ,
\]
for any $x=(e,x_1,\ldots ,x_k)\in \gC \otimes_{\max}\gC_{\ell^k_1}$. Let $x=(e,x_1,\ldots ,x_k)\in \gC \otimes_{\max} \gC_{\ell^k_1}$ and by Lemma \ref{lem:intermed} we conclude that 
\[
\sqrt{k}e-\sum_{i=1}^k |x_i| \in \gC .
\] 
Note that 
\[
(e_0+e_i)\otimes x_+ + (e_0-e_i)\otimes x_- = e_0 \otimes |x| + e_i\otimes x,
\]
for any $i\in\lset 1,\ldots , k\rset$. Since $e_0\pm e_i\in \gC_{\ell^k_1}$ for any $i\in\lset 1,\ldots , k\rset$ we have
\begin{align*}
(\sqrt{k}& e,x_1,\dots,x_k) \\
&= e_0 \otimes \left(\sqrt{k} e - \sum_{i=1}^k |x_i| \right) + \sum_{i=1}^k (e_0 + e_i) \otimes (x_i)_+ + \sum_{i=1}^k (e_0 -e_i) \otimes (x_i)_-\\
&\in \gC\otimes_{\min}\gC_{\ell^k_1}.
\end{align*}
This finishes the proof.
\end{proof}

\bibliographystyle{alpha}
\bibliography{Biblio.bib}

\end{document}